\pgfplotsset{compat=1.18}
\theoremstyle{definition}
\newtheorem{Def}{Definition}
\theoremstyle{remark}
\newtheorem{Ex}{Example}
\newtheorem{Rmk}{Remark}
\newtheorem{Not}{Notation}
\theoremstyle{plain}
\newtheorem{Cor}{Corollary}
\newtheorem{Lem}{Lemma}
\newtheorem{Thm}{Theorem}
\newtheorem{Task}{Task}
\numberwithin{equation}{section}
\numberwithin{Def}{section}
\numberwithin{Alg}{section}
\numberwithin{Ex}{section}
\numberwithin{Rmk}{section}
\numberwithin{Not}{section}
\numberwithin{Cor}{section}
\numberwithin{Lem}{section}
\numberwithin{Thm}{section}
\numberwithin{Task}{section}
\numberwithin{Obs}{section}
\definecolor{codegreen}{rgb}{0,0.6,0}
\definecolor{codegray}{rgb}{0.5,0.5,0.5}
\definecolor{codepurple}{rgb}{0.58,0,0.82}
\definecolor{backcolour}{rgb}{0.95,0.95,0.95}
\lstdefinestyle{mystyle}{
    backgroundcolor=\color{backcolour},   
    commentstyle=\color{codegreen},
    keywordstyle=\color{magenta},
    numberstyle=\tiny\color{codegray},
    stringstyle=\color{codepurple},
    basicstyle=\ttfamily\footnotesize,
    breakatwhitespace=false,         
    breaklines=true,                 
    captionpos=b,                    
    keepspaces=true,                 
    numbers=left,                    
    numbersep=5pt,                  
    showspaces=false,                
    showstringspaces=false,
    showtabs=false,                  
    tabsize=2
}
\tikzset{
    individual/.style={fill, circle, inner sep = 0.8mm, anchor = center},
    inheritance/.style={->, >=Stealth, thick},
    survival/.style={->, >=Stealth, thick, dashed},
    death/.style={->, >=Rays, thick, dashed},
    generation description/.style={below = 5mm of #1,
                                   anchor = east,
                                   font = \footnotesize},
    relationship/.style={decorate,
                         decoration = {brace,
                                       mirror,
                                       raise = 2mm,
                                       amplitude = 2mm}},
    relationship description/.style={midway, below, yshift = -4mm},
    gene pass description/.style={midway, above, sloped, font = \footnotesize\sffamily},
    queen/.style={fill, circle, inner sep = 0.8mm, anchor = center},
    replacement queen/.style={draw, circle, thick, inner sep = 0.8mm, anchor = center},
    drone/.style={draw, inner sep = 0.8mm, anchor = center},
    group/.style={draw, circle, inner sep = 2.7mm, anchor = center},
    group base/.style={fill = white, circle, inner sep = 3mm, anchor = center},
    worker group/.style={draw, pattern = crosshatch dots, circle, inner sep = 2.7mm, anchor = center},
    drones/.pic={
        \path (0,0) coordinate (c)
            node[above right = 1.4mm and 1.4mm of c.center, drone] {}
            node[above left = 1.4mm and 1.4mm of c.center, drone] {}
            node[below right = 1.4mm and 1.4mm of c.center, drone] {}
            node[below left = 1.4mm and 1.4mm of c.center, drone] {}
        ;
    },
    queens/.pic={
        \path (0,0) coordinate (c)
            node[above right = 1.4mm and 1.4mm of c.center, queen] {}
            node[above left = 1.4mm and 1.4mm of c.center, queen] {}
            node[below right = 1.4mm and 1.4mm of c.center, queen] {}
            node[below left = 1.4mm and 1.4mm of c.center, queen] {}
        ;
    },
    mating/.style={->, >=Stealth, thick, dashed},
    not considered/.style={black!40},
}
\title{Optimum Contribution Selection for Honeybees}
\author{Manuel Du\textsuperscript{1}*, Richard Bernstein\textsuperscript{1}, Andreas Hoppe\textsuperscript{1}}
\date{}
\begin{document}

\maketitle

\noindent{} \textsuperscript1 Institute for Bee Research Hohen Neuendorf, Friedrich-Engels-Str.~32, 16540~Hohen~Neuendorf, Germany

\noindent{} * manuel.du@hu-berlin.de
\bigskip

\bigskip\bigskip


\bigskip

\newpage{}

\section*{On this manuscript}

In 1997, T. H. E. Meuwissen published a groundbreaking article titled '\emph{Maximizing the response of selection with a predefined rate of inbreeding}' \citep{meuwissen97}, in which he provided an optimized solution for the trade-off between genetic response and inbreeding avoidance in animal breeding. Evidently, this issue is highly relevant for the honeybee with its small breeding population sizes. However, the genetic peculiarities of bees have thus far prevented an application of the theory to this species. The present manuscript intends to fill this desideratum. It develops the necessary bee-specific theory and introduces a small R script that implements Optimum Contribution Selection (OCS) for honeybees. 

While researching for this manuscript, we found it rather cumbersome that even though Meuwissen's theory is 28 years old and has sparked research in many new directions, to our knowledge, there is still no comprehensive textbook on the topic. Instead, all relevant information had to be extracted from several articles, leading to a steep learning curve. We anticipate that many honeybee breeding scientists with a putative interest in OCS for honeybees have little to no experience with classical OCS. Thus, we decided to embed our new derivations into a general introduction to OCS that then specializes more and more to the honeybee case. The result are these {\ztotpages} pages, of which we hope that at least the first sections can also be of use for breeding theorists concerned with other species than honeybees.\\[1cm]
Hohen Neuendorf, April 2025\\
\emph{Manuel Du}

\paragraph{Acknowledgments}
This work was supported by the Deutsche Forschungsgemeinschaft (DFG, German Research Foundation) – (Grant no. 462225818 to M.\,D.).

\newpage

\tableofcontents

\newpage{}

\section{Introduction}
Breeding endeavors in both animals and plants follow one fundamental idea: If a trait of interest is at least partly influenced by genetics, then it can be improved over generations by selecting only the best individuals for reproduction \citep{lush37}.
For most economically interesting traits, the role of genetics in the determination of phenotypes is only a partial one, because phenotypes are also influenced by environmental effects and further residual effects. The Animal Model, and, in combination with it, breeding value estimation by BLUP (best linear unbiased prediction), was introduced by \citet{henderson75best} and allows for a separation of these effects. Thereby, breeders are enabled to select the \emph{genetically} best rather than just the \emph{phenotypically} best individuals for reproduction -- at least with some statistical accuracy.

However, a genetic foundation of a trait will also often lead to the situation that the best individuals are closely related because related individuals share large portions of their genetic information. If only closely related individuals reproduce, inbreeding effects will occur, articulating themselves in a loss of genetic variance and other forms of inbreeding depression \citep{bienefeld89, gutierrezreinoso22}. Sustainable breeding schemes therefore have to create good genetic revenue while at the same time keeping inbreeding at an acceptable level. On a theoretical basis, an (at least in certain aspects) \emph{optimal} solution to this problem was presented by \citet{meuwissen97}. Meuwissen's fundamental idea was to maximize the genetic revenue in each generation under the restriction that the overall increase in inbreeding per generation is limited. This strategy, named \emph{Optimum Contribution Selection} (OCS), has soon gained attention by numerous scientists in plant and animal breeding and many articles on this idea and variations thereof have been published \citep{henryon15, wang17, wellmann19optimum}.

The honeybee, however, is an organism for which there is no theory of OCS thus far. Honeybees come with a number of genetic and biological peculiarities that prevent a straightforward application of Meuwissen's theory to this species. These peculiarities include the fact that honeybees are haplo-diploid, express phenotypes as colonies rather than individuals, and that honeybee queens mate only once in their lives -- with multiple drones.

This text aims to transfer Meuwissen's theory of OCS to the honeybee. To do so, it will proceed in several steps. Firstly, Section~\ref{sec::dipl} will review traditional OCS for diploids, at least in those aspects that will become important later on. Secondly, Section~\ref{sec::hbpec} will recapitulate the peculiarities of the honeybee that prevent a direct application of OCS and the following Section~\ref{sec::ocshb} will develop an adequate theory of OCS for this species. Section~\ref{sec::solvetask} will show, how  solvers for honeybee specific OCS can be implemented and, finally, Section~\ref{sec::demo} will demonstrate our implementation using two examples. Even though parts of the presentation in Sections~\ref{sec::dipl} and~\ref{sec::hbpec} may be non-standard and possibly new, these two sections are generally written in the spirit of a review. The later Sections~\ref{sec::ocshb} to~\ref{sec::demo} present original research.

\section{Optimum Contribution Selection in diploids}\label{sec::dipl}
We start by recapitulating how OCS works in diploid species. At times, we may slightly deviate from the usual presentation of the topic but use equivalent formulations that will facilitate the transfer to the honeybee case discussed in Section~\ref{sec::ocshb}.
We will discuss the topic in increasing complexity, meaning that we start with the case of discrete generations and a monoecious population (Section~\ref{sec::monoec}), then pass on to diecious populations (Section~\ref{sec::diec}), and finally treat overlapping generations (Section~\ref{sec::overl}).

\subsection{Discrete generations}

\begin{Not}
    \begin{enumerate}[label = (\roman*)]
        \item For a finite set $\mathcal A$ we denote the number of its elements by $|\mathcal A|$. The empty set is denoted by $\varnothing$, i.\,e. $|\varnothing|=0$.
        \item For two sets $\mathcal A$ and $\mathcal B$, we denote their union by $\mathcal A\cup\mathcal B$, their intersection by $\mathcal A\cap\mathcal B$, and their difference by $\mathcal A\backslash\mathcal B$. 
        \item If $\mathcal A$ and $\mathcal B$ are disjoint (i.\,e. $\mathcal A\cap\mathcal B=\varnothing$), we may write $\mathcal A\sqcup\mathcal B$ instead of $\mathcal A\cup\mathcal B$ in order to emphasize this property.
        \item For the union, intersection, or disjoint union of multiple sets $\mathcal A_i$ with $i$ from an index set $I$, we write $\bigcup\limits_{i\in I}\mathcal A_i$, $\bigcap\limits_{i\in I}\mathcal A_i$, and $\bigsqcup\limits_{i\in I}\mathcal A_i$, respectively.
    \end{enumerate}
\end{Not}

Throughout this section, we assume a population $\mathcal P=\bigsqcup\limits_{t\in\mathbb N}\mathcal P_t$ that spans over several disjoint generations. Each generation $\mathcal P_t$ ($t\in\mathbb N$) is a finite set of individuals, which produce the next generation $\mathcal P_{t+1}$ in such a way that each individual $I\in\mathcal P_{t+1}$ has two parents from $\mathcal P_t$.

\begin{Not}
    We denote the number of individuals in generation $\mathcal P_t$ by
        \[N_t:=\left|\mathcal P_t\right|.\]
\end{Not}

We further assume that each individual is equipped with diploid genetics according to the additive Infinitesimal Model of \citet{fisher18}, meaning that each individual receives half of its genetic information from each of its parents and that the expected genetic value of an offspring is the average genetic value of its parents.

We further assume that the genetic values of individuals are accessible in the sense that an unbiased estimate of the breeding value of each individual $I\in\mathcal P_t$ exists. In practice, such estimated breeding values are usually derived by a BLUP procedure.

\begin{Not}
    We denote the estimated breeding value of an individual $I\in\mathcal P_t$ by $\hat u_I$.
\end{Not}

\begin{Def}\label{def::grppp}
    For a group $\mathcal I\subseteq\mathcal P_t$ of individuals, we define the estimated breeding value of $\mathcal I$ to be the average of the individuals' breeding values:
        \[\hat u_{\mathcal I} := \frac1{|\mathcal I|}\sum_{I\in\mathcal I}\hat u_I.\]
\end{Def}

\begin{Rmk}\label{rmk::nobars}
    \begin{enumerate}[label = (\roman*)]
        \item In Definition~\ref{def::grppp}, we speak of a \emph{group} $\mathcal I$ of individuals. This is not to be understood in the algebraic sense of the term \emph{group}. Mathematically precise would be to call $\mathcal I$ a \emph{set}. However, in everyday language, it is more common to speak of a group of animals rather than a set of animals which is why this expression was chosen.

        \item Following Definition~\ref{def::grppp}, the average estimated breeding value of the population in generation $\mathcal P_t$ is denoted $\hat u_{\mathcal P_t}$.

        \item \label{item::nobars} In general, it may help the reader to note that when (lower) indices appear in calligraphic font ($\mathcal I$, $\mathcal P$, etc.), it usually means that averages are taken.
    \end{enumerate}
\end{Rmk}

We may even take this one step further:

\begin{Def}\label{def::setset}
    For a finite set $\mathfrak I=\{\mathcal I_1,...,\mathcal I_{|\mathfrak I|}\}$ of groups of individuals, we define the estimated breeding value of $\mathfrak I$ to be the average of the breeding values of the groups in $\mathfrak I$:
        \[\hat u_{\mathfrak I} := \frac1{|\mathfrak I|}\sum_{\mathcal I\in\mathfrak I}\hat u_{\mathcal I}.\]
\end{Def}

\begin{Rmk}
    Note that in this average, all groups $\mathcal I\in\mathfrak I$ obtain equal weight, independent of their size. Thus, if we write $\tilde{\mathcal I}:=\bigcup\limits_{\mathcal I\in\mathfrak I}\mathcal I$, we in general have
        \[\hat u_{\mathfrak I}\neq\hat u_{\tilde{\mathcal I}}.\]
\end{Rmk}

\subsubsection{Monoecious populations} \label{sec::monoec}

We first discuss a monoecious population, meaning that any two individuals from a generation may produce offspring together and different sexes do not play a role. This is the case in several plant species \citep{lewis42}. We further assume that \emph{selfing} is allowed, meaning that an individual can also produce offspring with itself or, differently put, that the two parents of an individual may be identical.

\begin{Rmk}
    Although this setting is theoretically easier than the diecious case with distinguished male and female individuals, it was not treated in the original derivation of OCS by \citet{meuwissen97}. Probably, the first treatment of this case is that by \citet{kerr98}.
\end{Rmk}

We are in the situation that the $N_t$ individuals of generation $\mathcal P_t$ are to pass on their genes to the next generation $\mathcal P_{t+1}$. In general, the contributions of different individuals $I\in\mathcal P_t$ to the next generation $\mathcal P_{t+1}$ will not be equal; individuals with a greater number of offspring have a higher genetic contribution to the next generation.

\begin{Not}\label{not::corig}
    The fraction of the gene pool of generation $\mathcal P_{t+1}$ that was passed on from individual $I\in\mathcal P_t$ is denoted by $c_{I}\in [0,1]$.
\end{Not}

\begin{Ex}
    Assume a population $\mathcal P$ with $\mathcal P_1=\left\{A_1,B_1,C_1\right\}$ and $\mathcal P_2=\left\{A_2,B_2,C_2\right\}$. Individuals $A_1$ and $B_1\in\mathcal P_1$ are the parents of $A_2\in\mathcal P_2$, $A_1$ and $C_1\in\mathcal P_1$ are the parents of $B_2\in\mathcal P_2$, and $C_2\in\mathcal P_2$ is the result of a selfing of $C_1\in\mathcal P_1$.
    \begin{center}
        \begin{tikzpicture}
            \path (0,2) node[anchor = east] (P1) {generation $\mathcal P_1$}
                  (0,0) node[anchor = east] (P2) {generation $\mathcal P_2$};
            \path (1,2) node[individual] (A1) {}
                            node[above right] {$A_1$}
                  (3,2) node[individual] (B1) {}
                            node[above right] {$B_1$}
                  (5,2) node[individual] (C1) {}
                            node[above right] {$C_1$}
                  (1,0) node[individual] (A2) {}
                            node[below right] {$A_2$}
                  (3,0) node[individual] (B2) {}
                            node[below right] {$B_2$}
                  (5,0) node[individual] (C2) {}
                            node[below right] {$C_2$};
            \draw[inheritance] (A1) -- (A2);
            \draw[inheritance] (A1) -- (B2);
            \draw[inheritance] (B1) -- (A2);
            \draw[inheritance] (C1) -- (B2);
            \draw[inheritance] (C1) to[out=300,in=60] (C2);
            \draw[inheritance] (C1) to[out=240,in=120] (C2);
        \end{tikzpicture}
    \end{center}
    If we fix a locus, there are a total of six ($=3\cdot2$) alleles present in generation $\mathcal P_2$. Of these, two are inherited from $A_1\in\mathcal P_1$, one is inherited from $B_1\in\mathcal P_1$ and three are inherited from $C_1\in\mathcal P_1$. Consequently, we have
        \[c_{A_1}=\frac26=\frac13,\quad
          c_{B_1}=\frac16,\quad
          \text{and}\quad
          c_{C_1}=\frac36=\frac12.\]
\end{Ex}

\begin{Rmk}
    \begin{enumerate}[label = (\roman*)]
        \item We always assume a closed population, which means that all genetic information of generation $\mathcal P_{t+1}$ comes from the individuals in generation $\mathcal P_t$. Consequently, their contributions need to add up to unity:
        \begin{equation}\label{eq::csum}
            \sum_{I\in\mathcal P_t}c_{I}=1.
        \end{equation}

        \item We assume additive genetics, which means that the relative genetic contribution $c_{I}$ of individual $I\in\mathcal P_t$ to the next generation equals its relative contribution to the next generation's average breeding value. While the value of the genetics passed on from individual $I\in\mathcal P_t$ may deviate from its own breeding value due to Mendelian sampling, its expectation is precisely $\hat u_{I}$, meaning that
        \begin{equation}\label{eq::eupt}
            \mathbb E\left[\hat{u}_{\mathcal P_{t+1}}\right]
            =\sum_{I\in\mathcal P_t}c_{I}\hat{u}_{I}.
        \end{equation}
    \end{enumerate}
\end{Rmk}

The goal of a breeding program is to maximize the expectation of $\hat{u}_{\mathcal P_{t+1}}$ given the estimated breeding values $\hat{u}_I$ of individuals $I\in\mathcal P_t$.

\begin{Rmk}\label{rmk::noconstr}
    Without any further constraints, the way to maximize $\mathbb E\left[\hat{u}_{\mathcal P_{t+1}}\right]$ is to choose the individual $I^{\ast}\in\mathcal P_t$ with the maximum estimated breeding value
        \[\hat{u}_{I^{\ast}}
        =\max\left\{\hat{u}_{I}:I\in\mathcal{P}_t\right\},\]
    and let $I^{\ast}$ have all the offspring via selfing. This would mean that
        \[c_{I}=\begin{cases}1,&\text{if }I=I^{\ast}\\
                             0,&\text{otherwise}\end{cases},\]
    and we would have
        \[\mathbb E\left[\hat{u}_{\mathcal P_{t+1}}\right]=\hat{u}_{I^{\ast}}.\]
\end{Rmk}

However, this strategy would not only maximize $\mathcal P_{t+1}$'s average estimated breeding value but also its inbreeding coefficients. Inbreeding is a measure for the relatedness of parents. Identity of parents, as in selfing, can be seen as the closest form of relationship. In the long run, high inbreeding in a breeding program will lead to depletion of genetic variance and possibly further depression effects. Therefore, while a certain degree of inbreeding is unavoidable in a finite closed population, excessively high inbreeding rates are generally to be avoided.

\begin{Def}\label{def::inbr}
    The inbreeding coefficient $f_{I}$ of an individual $I\in\mathcal P_t$ is defined as the probability for the two alleles at a random locus of $I$ to be identical by descent (ibd).
\end{Def}

While this definition works on the scale of individuals, we are interested in a global measure of how inbred an entire population is. At first glance, a strategy to assess the inbreeding of a whole population may be to simply calculate the average inbreeding coefficient
    \[f_{\mathcal P_t}:=\frac1{N_t}\sum_{I\in\mathcal P_t}f_{I}\]
of its individuals. However, it turns out that this measure is not stably transported over generations. This is illustrated by the following example:

\begin{Ex}\label{ex::avginbr}
    Assume a very small population with only two individuals per generation. Generation $\mathcal P_t$ consists of two fully inbred (i.\,e. homozygous by descent at all loci) but unrelated individuals $A_t$ and $B_t$. Generation $\mathcal P_{t+1}$ consists of two common children $A_{t+1}$ and $B_{t+1}$ of $A_t$ and $B_t$. Finally, generation $\mathcal P_{t+2}$ consists of two common children $A_{t+2}$ and $B_{t+2}$ of $A_{t+1}$ and $B_{t+1}$.
    \begin{center}
        \begin{tikzpicture}[every node/.style={draw=none},node distance=2cm and 2cm]
            \path (0,4) coordinate (Pt0)
                        node[left = 0mm of Pt0]
                            {generation $\mathcal P_t$}
                        node[generation description = Pt0]
                                 {fully inbred, unrelated}
                        coordinate[below = of Pt0] (Pt1)
                        node[left = 0mm of Pt1]
                            {generation $\mathcal P_{t+1}$}
                        coordinate[below = of Pt1] (Pt2)
                        node[left = 0mm of Pt2]
                            {generation $\mathcal P_{t+2}$}
                        node[right = of Pt0, individual] (At0) {}
                        node[left = 0mm of At0] {$A_{t}$}
                        node[right = of At0, individual] (Bt0) {}
                        node[right = 0mm of Bt0]
                            {$B_{t}$}
                        node[right = of Pt1, individual] (At1) {}
                        node[left = 0mm of At1]
                            {$A_{t+1}$}
                        node[right = of At1, individual] (Bt1) {}
                        node[right = 0mm of Bt1]
                            {$B_{t+1}$}
                        node[right = of Pt2, individual] (At2) {}
                        node[left = 0mm of At2]
                            {$A_{t+2}$}
                        node[right = of At2, individual] (Bt2) {}
                        node[right = 0mm of Bt2]
                            {$B_{t+2}$};
            \draw[inheritance] (At0) -- (At1);
            \draw[inheritance] (At0) -- (Bt1);
            \draw[inheritance] (Bt0) -- (At1);
            \draw[inheritance] (Bt0) -- (Bt1);
            \draw[inheritance] (At1) -- (At2);
            \draw[inheritance] (At1) -- (Bt2);
            \draw[inheritance] (Bt1) -- (At2);
            \draw[inheritance] (Bt1) -- (Bt2);
        \end{tikzpicture}
    \end{center}
    \begin{enumerate}[label = (\roman*)]
        \item In generation $\mathcal P_t$, both individuals have the inbreeding coefficient $f_{A_t}=f_{B_t}=1$ because they are fully inbred. So,
            \[f_{\mathcal P_t}=\frac12\left(f_{A_t}+f_{B_t}\right)=1.\]
        \item An individual $I\in\mathcal P_{t+1}$ has inherited its two alleles at any given locus from its two unrelated parents from generation $\mathcal P_t$. The probability of these alleles to be ibd is thus 0 and we conclude
            \[f_{\mathcal P_{t+1}}=\frac12\left(f_{A_{t+1}}+f_{B_{t+1}}\right)=0.\]
        \item If we look at an individual $I\in\mathcal P_{t+2}$, we see that the first allele at a given locus ultimately stems from either $A_t\in\mathcal P_t$ or $B_t\in\mathcal P_t$, and the same holds independently for the second allele. The two alleles are ibd if and only if they come from the same grandparent, whence we conclude that
        $f_{A_{t+2}}=f_{B_{t+2}}=\frac12$ and in consequence
            \[f_{\mathcal P_{t+2}}=\frac12\left(f_{A_{t+2}}+f_{B_{t+2}}\right)=\frac12.\]
    \end{enumerate}
    Thus, if we want to judge the population in terms of inbreeding, we get very different results, depending on the generation we look at, even though the actual risk of losing genetic diversity may not have changed so drastically over the generations.
\end{Ex}

Instead, a more stable measure and better indicator for the genetic variance that remains in generation $\mathcal P_t$ of the population is the average kinship $k_{\mathcal P_t,\mathcal P_t}$.

Typically, kinships are calculated between individuals. However, the notion is easily extended to finite groups of individuals. The following definition can for example be found in \citep{jimenezmena16}.

\begin{Def}\label{def::grpkin}
    For two groups $\mathcal I, \mathcal J\subseteq\mathcal P_t$, we fix a locus and sample one of the $2\cdot|\mathcal I|$ alleles that are assembled at this locus in group $\mathcal I$ and one of the $2\cdot|\mathcal J|$ alleles that are assembled at this locus in group $\mathcal J$. The kinship $k_{\mathcal I,\mathcal J}$ between groups $\mathcal I$ and $\mathcal J$ is defined as the probability of the two sampled alleles to be ibd.
\end{Def}

\begin{Rmk}
    If $\mathcal I$ and $\mathcal J$ comprise common individuals ($\mathcal I\cap\mathcal J\neq\varnothing$), the drawing process has to be thought of as "with replacement".
\end{Rmk}

\begin{Def}
    \begin{enumerate}[label = (\roman*)]
        \item For individuals $I,J\in\mathcal P_t$, their kinship can then simply be defined as the kinship between the single-elemented groups containing $I$ and $J$, respectively:
            \[k_{I,J}=k_{\{I\},\{J\}}.\]

        \item The kinship between an individual $I\in\mathcal P_t$ and a set of individuals $\mathcal J\subseteq\mathcal P_t$ is defined as
            \[k_{I,\mathcal J}=k_{\{I\},\mathcal J}.\]
    \end{enumerate}
\end{Def}

\begin{Rmk}\label{rmk::kinsh}
    \begin{enumerate}[label = (\roman*)]
        \item Our definition of kinship between individuals via single-elemented sets is equivalent to the standard definition of kinship by \citet{malecot48}.

        \item \label{item::relation}The notion of \emph{relationship coefficients}, as introduced by \citet{wright22}, is very closely related to Malécot's kinship coefficients: The relationship between two individuals is twice their kinship.
    \end{enumerate}
\end{Rmk}

As the following lemma shows, the kinship between two groups of individuals is the average kinship between the individuals of the two groups.

\begin{Lem}
    Let $\mathcal I,\mathcal J\subseteq\mathcal P_t$ be two finite groups of individuals. Then
        \[k_{\mathcal I,\mathcal J}=\frac1{|\mathcal I|\cdot|\mathcal J|}\sum_{I\in\mathcal I}\sum_{J\in\mathcal J}k_{I,J}.\]
\end{Lem}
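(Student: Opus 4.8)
The plan is to prove the identity by unwinding the definition of group kinship (Definition~\ref{def::grpkin}) in terms of a uniform two-stage sampling experiment and then decomposing the outer sampling over which individual each allele belongs to. Concretely, drawing one of the $2|\mathcal I|$ alleles present at the fixed locus in $\mathcal I$ uniformly at random can be realized as a composite experiment: first pick an individual $I\in\mathcal I$ uniformly (each with probability $1/|\mathcal I|$), then pick one of $I$'s two alleles at that locus uniformly. Since every individual contributes exactly two alleles, this composite experiment assigns probability $\tfrac{1}{|\mathcal I|}\cdot\tfrac12=\tfrac{1}{2|\mathcal I|}$ to each of the $2|\mathcal I|$ alleles, so it is genuinely the uniform draw required by the definition. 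The same realization applies to the draw from $\mathcal J$, and the two draws are independent.

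First I would fix the locus and let $X$ denote the event that the two sampled alleles are identical by descent, so that $k_{\mathcal I,\mathcal J}=\Pr[X]$ by definition. Next I would introduce the random individuals $\hat I\in\mathcal I$ and $\hat J\in\mathcal J$ chosen in the first stage of the two composite experiments, which are independent and uniform on $\mathcal I$ and $\mathcal J$ respectively. Conditioning on the pair $(\hat I,\hat J)=(I,J)$, the remaining randomness is exactly the uniform draw of one allele from $I$ and one allele from $J$, so the conditional probability of $X$ is precisely $k_{\{I\},\{J\}}=k_{I,J}$. The law of total probability then gives
\begin{equation*}
k_{\mathcal I,\mathcal J}=\Pr[X]=\sum_{I\in\mathcal I}\sum_{J\in\mathcal J}\Pr[\hat I=I,\hat J=J]\cdot\Pr[X\mid \hat I=I,\hat J=J]=\frac1{|\mathcal I|\cdot|\mathcal J|}\sum_{I\in\mathcal I}\sum_{J\in\mathcal J}k_{I,J},
\end{equation*}
which is the claimed formula.

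The only genuine subtlety, and the step I would state most carefully, is the ``with replacement'' convention from the Remark following Definition~\ref{def::grpkin} in the case $\mathcal I\cap\mathcal J\neq\varnothing$. One must check that decomposing each group's draw independently via its own two-stage experiment is consistent with treating the two group-draws as independent even when an individual lies in both groups: the key point is that the allele sampled from $\mathcal I$ and the allele sampled from $\mathcal J$ are drawn independently (with replacement), so the joint law of $(\hat I,\hat J)$ is the product law on $\mathcal I\times\mathcal J$ regardless of overlap, and the conditional event reduces to $k_{I,J}$ including the diagonal terms $I=J$ where $k_{I,I}$ already incorporates self-comparison with replacement. Once this independence is made explicit the computation is routine; no Mendelian-sampling or infinitesimal-model assumptions are needed, as the statement is purely combinatorial-probabilistic.
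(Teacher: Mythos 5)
Your proof is correct and follows essentially the same route as the paper's: both decompose the uniform allele draw from each group by first conditioning on which individual the allele comes from (probability $\frac{2}{2|\mathcal I|}=\frac{1}{|\mathcal I|}$ per individual) and then observing that, conditionally, the ibd probability is exactly $k_{I,J}$, with the total formula following by summing over all pairs. Your explicit treatment of the with-replacement convention for overlapping groups and the diagonal terms $k_{I,I}$ is a careful elaboration of a point the paper's shorter proof leaves implicit, but it is not a different argument.
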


\begin{proof}
    Fix an individual $I\in\mathcal I$ and an individual $J\in\mathcal J$. In the thought allele drawing experiment to determine $k_{\mathcal I,\mathcal J}$, the probability that the first allele is drawn from $I$ is $\frac2{2\cdot|\mathcal I|}=\frac1{|\mathcal I|}$ and the probability that the second allele is drawn from $J$ is $\frac2{2\cdot|\mathcal J|}=\frac1{|\mathcal J|}$. If the two drawn alleles are indeed from $I$ and $J$, respectively, their probability to be ibd is $k_{I,J}$. The assertion follows.
\end{proof}

\begin{Ex}\label{ex::fformula}
    To familiarize ourselves with the notion of kinship coefficients (between individuals), we derive their values for some simple cases.
    \begin{enumerate}[label=(\roman*)]
        \item Evidently, if two individuals $I,J\in\mathcal P_t$ do not share any common ancestors, their probability to share ibd alleles and thus their kinship coefficient is
            \[k_{I,J}=0.\]
        \begin{center}
            \begin{tikzpicture}
                \path (0,0) coordinate (Pt0)
                    node[left = 0cm of Pt0] {generation $\mathcal P_t$}
                    node[generation description = Pt0] {unrelated}
                    node[right = 1.3cm of Pt0, individual] (I) {}
                    node[above right = 0cm and 0cm of I] {$I$}
                    node[right = 2cm of I, individual] (J) {}
                    node[above right = 0cm and 0cm of J] {$J$};
                \draw[relationship]  (I) -- (J)
                    node[relationship description]{$k_{I,J}=0$};
            \end{tikzpicture}
        \end{center}

        \item Assume two half-siblings, i.\,e., two individuals $I,J\in\mathcal P_{t+1}$ that share one non-inbred parent $P\in\mathcal P_t$ with the respective other parents being neither related to $P$ nor to each other. The only possibility for $I$ and $J$ to share ibd alleles is via $P$. For the drawing procedure to end up with ibd alleles one would have to
        \begin{itemize}
            \item draw from both $I$ and $J$ the respective allele that was inherited from $P$ (probability each time $\frac12$), and
            \item $P$ would have to have inherited the same of its two alleles to both $I$ and $J$ (probability $\frac12$).
        \end{itemize}
        The probability for all this to happen and thus the kinship of $I$ and $J$ is
            \[k_{I,J}=\frac1{2^3}=\frac18.\]
        \begin{center}
            \begin{tikzpicture}
                \path (0,0) coordinate (Pt0)
                    node[left = 0mm of Pt0] {generation $\mathcal P_t$}
                    node[generation description = Pt0]
                        {non-inbred, unrelated}
                    coordinate[below = 2cm of Pt0] (Pt1)
                    node[left = 0mm of Pt1] {generation $\mathcal P_{t+1}$}
                    node[right = 1.3cm of Pt0, individual] (X1) {}
                    node[right = 2cm of X1.center, individual] (P) {}
                    node[above right = 0cm and 0cm of P] {$P$}
                    node[right = 2cm of P.center, individual] (X2) {}
                    node[below left = 2cm and 1cm of P.center, individual]
                        (I) {}
                    node[below right = 2cm and 1cm of P.center, individual]
                        (J) {}
                    node[right = 0cm of I] {$I$}
                    node[right = 0cm of J] {$J$};
                \draw[inheritance] (X1) -- (I);
                \draw[inheritance] (P) -- (I);
                \draw[inheritance] (X2) -- (J);
                \draw[inheritance] (P) -- (J);
                \draw[relationship] (I) -- (J)
                    node[relationship description]{$k_{I,J}=\frac18$};
            \end{tikzpicture}
        \end{center}

        \item Next, we assume full-siblings (i.\,e., two individuals $I,J\in\mathcal P_{t+1}$ with $I\neq J$) that are common offspring of individuals $P$ and $Q\in\mathcal P_t$. We further assume that $P$ and $Q$ are neither inbred nor related to each other. At any fixed locus, $P$ and $Q$ together assemble four different alleles and when we pick a random allele of $I$, this could be any of these four alleles with equal probability. Likewise, when we draw a random allele from $J$, we independently also end up with a random allele from the four combined alleles of $P$ and $Q$. The probability of these two drawings to render the same allele is
            \[k_{I,J}=\frac14.\]
        \begin{center}
            \begin{tikzpicture}
                \path (0,0) coordinate (Pt0)
                    node[left = 0cm of Pt0] {generation $\mathcal P_t$}
                    node[generation description = Pt0]
                        {non-inbred, unrelated}
                    coordinate[below = 2cm of Pt0] (Pt1)
                    node[left = 0cm of Pt1] {generation $\mathcal P_{t+1}$}
                    node[right = 1.3cm of Pt0, individual] (P) {}
                    node[left=0cm of P] {$P$}
                    node[right = 2cm of P.center, individual] (Q) {}
                    node[right=0cm of Q] {$Q$}
                    node[below = 2cm of P.center, individual] (I) {}
                    node[left=0cm of I] {$I$}
                    node[right = 2cm of I.center, individual] (J) {}
                    node[right=0cm of J] {$J$};
                \draw[inheritance] (P) -- (I);
                \draw[inheritance] (Q) -- (I);
                \draw[inheritance] (P) -- (J);
                \draw[inheritance] (Q) -- (J);
                \draw[relationship]  (I) -- (J)
                    node[relationship description]{$k_{I,J}=\frac14$};
            \end{tikzpicture}
        \end{center}

        \item \label{item::lastly} Lastly, we look at the kinship of an individual $I\in\mathcal P_t$ with itself. This can also be interpreted as the kinship between two identical twins. If we draw two alleles from $I$ (with replacement), we have a probability of $\frac12$ that we draw both times the very same allele (which, of course, is ibd to itself). With the complementary probability $\frac12$, we draw the two different alleles of $I$. The probability for these to be ibd is precisely the inbreeding coefficient of $I$:
            \[k_{I,I}=\frac12+\frac{f_{I}}{2}.\]
        This means that $k_{I,I}$ will always take on values between $\frac12$ (non-inbred) and $1$ (fully inbred).
    \end{enumerate}
\end{Ex}

\begin{Ex}\label{ex::avginbrr}
    We pick up our Example~\ref{ex::avginbr} and now look at $k_{\mathcal P_t,\mathcal P_t}$, $k_{\mathcal P_{t+1},\mathcal P_{t+1}}$, and $k_{\mathcal P_{t+2},\mathcal P_{t+2}}$. In each generation, the pool from which we draw consists of four alleles.
    \begin{center}
        \begin{tikzpicture}[every node/.style={draw=none},node distance=2cm and 2cm]
            \path (0,4) coordinate (Pt0)
                        node[left = 0mm of Pt0]
                            {generation $\mathcal P_t$}
                        node[generation description = Pt0]
                                 {fully inbred, unrelated}
                        coordinate[below = of Pt0] (Pt1)
                        node[left = 0mm of Pt1]
                            {generation $\mathcal P_{t+1}$}
                        coordinate[below = of Pt1] (Pt2)
                        node[left = 0mm of Pt2]
                            {generation $\mathcal P_{t+2}$}
                        node[right = of Pt0, individual] (At0) {}
                        node[left = 0mm of At0] {$A_{t}$}
                        node[right = of At0, individual] (Bt0) {}
                        node[right = 0mm of Bt0]
                            {$B_{t}$}
                        node[right = of Pt1, individual] (At1) {}
                        node[left = 0mm of At1]
                            {$A_{t+1}$}
                        node[right = of At1, individual] (Bt1) {}
                        node[right = 0mm of Bt1]
                            {$B_{t+1}$}
                        node[right = of Pt2, individual] (At2) {}
                        node[left = 0mm of At2]
                            {$A_{t+2}$}
                        node[right = of At2, individual] (Bt2) {}
                        node[right = 0mm of Bt2]
                            {$B_{t+2}$}
                        node[right = of Bt0]
                            {$k_{\mathcal P_t,\mathcal P_t}=\frac12$}
                        node[right = of Bt1]
                            {$k_{\mathcal P_{t+1},\mathcal P_{t+1}}
                            =\frac12$}
                        node[right = of Bt2]
                            {$k_{\mathcal P_{t+2},\mathcal P_{t+2}}
                            =\frac58$}
                            ;
            \draw[inheritance] (At0) -- (At1);
            \draw[inheritance] (At0) -- (Bt1);
            \draw[inheritance] (Bt0) -- (At1);
            \draw[inheritance] (Bt0) -- (Bt1);
            \draw[inheritance] (At1) -- (At2);
            \draw[inheritance] (At1) -- (Bt2);
            \draw[inheritance] (Bt1) -- (At2);
            \draw[inheritance] (Bt1) -- (Bt2);
        \end{tikzpicture}
    \end{center}
    \begin{enumerate}[label = (\roman*)]
        \item In generation $\mathcal P_t$, we draw our two alleles with equal probability either from the same or from different individuals. In the former case, the alleles will be identical since both individuals in $\mathcal P_t$ are fully inbred. In the second case, the alleles will not be ibd, because the individuals in $\mathcal P_t$ are mutually unrelated. The overall probability to end up with ibd alleles in generation $\mathcal P_t$ is thus
            \[k_{\mathcal P_t,\mathcal P_t}=\frac12.\]

        \item In generation $\mathcal P_{t+1}$, the four alleles consist of two identical alleles that come from $A_t\in\mathcal P_t$ and two identical alleles that come from $B_t\in\mathcal P_t$. Because $A_t$ and $B_t$ are unrelated, the two pairs of identical alleles in generation $\mathcal P_{t+1}$ are non-identical. Thus, as in generation $\mathcal P_t$, we have two different pairs of identical alleles in generation $\mathcal P_{t+1}$ and
            \[k_{\mathcal P_{t+1},\mathcal P_{t+1}}=\frac12.\]

        \item Finally, we look at generation $\mathcal P_{t+2}$. Here, we distinguish two cases for our drawing of alleles. Case 1 is that we draw the very same allele twice. As there are four alleles, the probability for this to happen is $\frac14$. In this case, the two drawn alleles are necessarily ibd. Case 2 is that we draw two different alleles and thus has the complementary probability of $\frac34$. In this case, each of the two alleles independently originates ultimately either from $A_t$ or $B_t\in\mathcal P_t$. If they originate from the same individual, they are ibd (because both $A_t$ and $B_t$ are fully inbred), if they originate from different individuals, they are not ibd (because $A_t$ and $B_t$ are unrelated). Both possibilities occur with probability $\frac12$. So, in total, the probability of drawing two ibd alleles is
            \[k_{\mathcal P_{t+2},\mathcal P_{t+2}}=\frac14\cdot1+\frac34\cdot\frac12=\frac58.\]
    \end{enumerate}
\end{Ex}

\begin{Rmk}\label{rmk::inbkin}
    \begin{enumerate}[label = (\roman*)]
        \item We see that unlike the average inbreeding $f_{\mathcal P_t}$, the average kinship $k_{\mathcal P_t,\mathcal P_t}$ shows positive values for all generations and appears much more stable. A monotonous increase of $k_{\mathcal P_t,\mathcal P_t}$ in $t$ is what we generally expect for a closed population. (Note, however, that it is possible to construct situations where $k_{\mathcal P_{t+1},\mathcal P_{t+1}} < k_{\mathcal P_t,\mathcal P_t}$.)

        \item \label{item::inbkin} It should be noted that Example~\ref{ex::avginbr} ($=$ Example~\ref{ex::avginbrr}) was chosen in an extreme way to illustrate the shortcomings of $f_{\mathcal P_t}$. In many breeding schemes, $f_{\mathcal P_t}$ and $k_{\mathcal P_t,\mathcal P_t}$ show very similar behavior with
        \[k_{\mathcal P_t,\mathcal P_t}\approx f_{\mathcal P_{t+1}}.\]
    But also in this case one should stick with $k_{\mathcal P_t,\mathcal P_t}$ rather than $f_{\mathcal P_{t}}$, because the former value \emph{looks one generation further into the future}.
    \end{enumerate}
\end{Rmk}

Next, we want to take a closer look at the development of $k_{\mathcal P_t,\mathcal P_t}$ over the generations. We will show the following

\begin{Lem}\label{lemma::avkin}
    With the notation as introduced above, we have
    \begin{equation*}
        k_{\mathcal P_{t+1},\mathcal P_{t+1}} =
              \sum_{I,J\in\mathcal P_t}c_{I}c_{J}k_{I,J}
            - \frac1{4N_{t+1}}\sum_{I\in\mathcal P_t}c_{I}f_{I}
            + \frac1{4N_{t+1}}.
    \end{equation*}
\end{Lem}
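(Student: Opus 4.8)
The plan is to compute $k_{\mathcal P_{t+1},\mathcal P_{t+1}}$ directly from its definition as the probability that two alleles, drawn with replacement from the $2N_{t+1}$ alleles present at a fixed locus in generation $\mathcal P_{t+1}$, are ibd. The guiding idea is to trace each drawn allele back one generation: every allele in $\mathcal P_{t+1}$ is a faithful copy of exactly one of the two alleles of the parent $I\in\mathcal P_t$ from which it was transmitted, so two alleles of $\mathcal P_{t+1}$ are ibd if and only if their generation-$t$ source alleles are ibd (with an allele counting as ibd to itself). I would therefore reformulate the drawing experiment in terms of these source alleles and exploit that, by the definition of $c_I$ in Notation~\ref{not::corig}, exactly $n_I:=2N_{t+1}c_I$ of the $2N_{t+1}$ alleles of $\mathcal P_{t+1}$ were transmitted from any given parent $I$.

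First I would split the two-allele draw into two disjoint events. With probability $\frac1{2N_{t+1}}$ the same physical allele is drawn twice, which is trivially ibd. Otherwise two \emph{distinct} alleles are drawn, and here I would condition on the parents $I,J\in\mathcal P_t$ from which they descend. The key observation is that distinct alleles of $\mathcal P_{t+1}$ arise from independent Mendelian segregation events (even the two alleles of a single offspring come from two independent gametes), so the two source alleles are independent uniform samples---with replacement when $I=J$---from the parental allele pairs. Consequently the conditional ibd probability is exactly $k_{I,J}$, where the identity $k_{I,I}=\tfrac12+\tfrac{f_I}{2}$ established in Example~\ref{ex::fformula} lets the diagonal case $I=J$ be written in the same form.

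It then remains to count. The number of ordered pairs of distinct alleles with source parents $(I,J)$ is $n_In_J$ when $I\neq J$ and $n_I(n_I-1)$ when $I=J$; summing $k_{I,J}$ over all such pairs and dividing by $(2N_{t+1})^2$ yields
\[
\frac1{(2N_{t+1})^2}\Bigl(\sum_{I,J\in\mathcal P_t} n_In_J\,k_{I,J}-\sum_{I\in\mathcal P_t} n_I\,k_{I,I}\Bigr).
\]
Substituting $n_I=2N_{t+1}c_I$, invoking $\sum_{I\in\mathcal P_t}c_I=1$ from \eqref{eq::csum} together with $k_{I,I}=\tfrac12+\tfrac{f_I}{2}$, and finally adding back the $\frac1{2N_{t+1}}$ from the repeated-draw event, the elementary algebra collapses to the claimed expression: the $\frac1{2N_{t+1}}$ and the diagonal correction merge into the two tail terms $+\frac1{4N_{t+1}}$ and $-\frac1{4N_{t+1}}\sum_{I\in\mathcal P_t}c_If_I$.

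I expect the only real obstacle to be the careful bookkeeping of this with-replacement, finite-population effect. A naive argument that draws a parent with probability $c_I$ twice independently and multiplies by $k_{I,J}$ would yield merely $\sum_{I,J\in\mathcal P_t}c_Ic_Jk_{I,J}$ and silently miss the $O(1/N_{t+1})$ corrections; these arise precisely because an allele can be sampled twice (forcing ibd) and because distinct same-parent pairs number $n_I(n_I-1)$ rather than $n_I^2$. Obtaining both of these contributions with the correct signs is the crux of the proof.
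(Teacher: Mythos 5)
Your proof is correct and takes essentially the same approach as the paper: both evaluate the with-replacement two-allele draw exactly, isolating the event that the very same physical allele is drawn twice and the same-parent diagonal correction, and both finish by substituting $k_{I,I}=\tfrac12+\tfrac{f_I}{2}$ and using $\sum_{I\in\mathcal P_t}c_I=1$. The only difference is organizational: you split off the identical-allele event globally (probability $\tfrac1{2N_{t+1}}$) and count ordered pairs of distinct alleles as $n_In_J$ versus $n_I(n_I-1)$, whereas the paper conditions on the parent pair first and corrects the conditional ibd probability by $\tfrac1{2c_IN_{t+1}}(1-k_{I,I})$ in the same-parent case---the resulting algebra is identical.
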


\begin{proof}
    Fix a random locus and two non-identical individuals $I,J\in\mathcal P_t$. If we draw a random allele at this locus from the pool of $2N_{t+1}$ alleles in generation $\mathcal P_{t+1}$, the probability that this allele was inherited from individual $I\in\mathcal P_t$ is $c_{I}$ because this number signifies the proportion of alleles in generation $P_{t+1}$ that comes from $I$. If we then draw a random second allele, the probability that this allele comes from individual $J\in\mathcal P_t$ is $c_{J}$. Provided that we indeed drew a first allele that came from $I$ and a second allele that came from $J$, the probability of these alleles to be ibd is $k_{I,J}$ by the definition of kinship coefficients. Thus, the total probability that the two drawn alleles are coming from $I$ and $J$, respectively, and are ibd is $c_{I}c_{J}k_{I,J}$.

    At first glance, this suggests that the average kinship $k_{\mathcal P_{t+1},\mathcal P_{t+1}}$ should amount to
        \[\sum_{I,J\in\mathcal P_t}c_{I}c_{J}k_{I,J}.\]
    However, note that above we had assumed $I$ and $J$ to be non-identical. Of course, there is also the possibility that we pick two alleles from generation $\mathcal P_{t+1}$ that both come from the same individual $I\in\mathcal P_t$. With the same consideration as above, the probability for this to happen is $c_{I}^2$. But if both drawn alleles come indeed from individual $I\in\mathcal P_t$, their probability to be ibd will be higher than $k_{I,I}$:

    The number of alleles that are passed from individual $I\in\mathcal P_t$ to generation $\mathcal P_{t+1}$ is $2c_IN_{t+1}$. If we draw two of these alleles (with replacement!), the probability to pick the very same allele twice is $\frac1{2c_IN_{t+1}}$ and in this case their chance to be ibd is 1. With the complementary probability of $1-\frac1{2c_IN_{t+1}}$, we pick two different alleles which are independent samples of $I$'s two alleles with probability $k_{I,I}$ to be ibd. Thus, the probability, of picking two ibd alleles in generation $\mathcal P_{t+1}$ provided that both alleles were passed on from individual $I\in\mathcal P_t$ is
    \begin{align*}
        \frac1{2c_IN_{t+1}} + \left(1-\frac1{2c_IN_{t+1}}\right)k_{I,I}=k_{I,I}+\frac1{2c_IN_{t+1}}(1-k_{I,I}).
    \end{align*}

    This means that
    \begin{equation*}
        k_{\mathcal P_{t+1},\mathcal P_{t+1}} =
              \sum_{I,J\in\mathcal P_t}c_{I}c_{J}k_{I,J}
            + \frac1{2N_{t+1}}
                    \sum_{I\in\mathcal P_t}c_{I}\left(1-k_{I,I}\right).
    \end{equation*}
    The assertion follows by exploiting $\sum_{I\in\mathcal P_t}c_I=1$ (Equation~\ref{eq::csum}), replacing $k_{I,I}$ with $\frac12+\frac{f_{I}}{2}$ (Example~\ref{ex::fformula}\,\ref{item::lastly}), and simplifying.
\end{proof}

\begin{Rmk}
    \begin{enumerate}[label = (\roman*)]
        \item The reasoning why the term $-\frac1{4N_{t+1}}\sum_{I\in\mathcal P_t}c_{I}f_{I}$$ + \frac1{4N_{t+1}}$ needs to be added is fairly subtle. The original derivation of OCS by \citet{meuwissen97} apparently overlooks this summand and only works with $\sum_{I,J\in\mathcal P_t}c_{I}c_{J}k_{I,J}$ for $k_{\mathcal P_{t+1},\mathcal P_{t+1}}$. Since then, many authors have adapted Meuwissen's formula, seemingly without questioning it. However, the (correct) formula presented here is not unknown. It can, for example, be found in \citep{wellmann09} or \citep{wellmann19key}.

        \item For the reader unfamiliar with Lemma~\ref{lemma::avkin}, it is instructive to once again check out Example~\ref{ex::avginbr} ($=$ Example~\ref{ex::avginbrr}). Here, for all  three generations we have $N_t=N_{t+1}=N_{t+2}=2$ and further $c_{A_t}=c_{B_t}=c_{A_{t+1}}=c_{B_{t+1}}=\frac12$. Readers are invited to check out and calculate for themselves that and why Lemma~\ref{lemma::avkin} actually holds in this special case. By doing so, they will find that the additional term is zero in the calculation of $k_{\mathcal P_{t+1},\mathcal P_{t+1}}$ but becomes positive in the calculation of $k_{\mathcal P_{t+2},\mathcal P_{t+2}}$.

        \item In our analysis, we focus on the average kinship $k_{\mathcal P_t,\mathcal P_t}$. The same value for $k_{\mathcal P_t,\mathcal P_t}$ can be reached either with highly inbred but mutually barely related individuals or by non-inbred individuals that share some stronger kinships between each other. This can very well be seen by comparing the first two generations of Example~\ref{ex::avginbr} ($=$ Example~\ref{ex::avginbrr}) with each other. In practice, the second option will be favored, because it features a lower risk of inbreeding depression at the same level of genetic diversity within the population. But when it comes to keeping the average kinship for the next generation low, it appears that a population of highly inbred individuals is to be preferred, because high inbreeding coefficients $f_{I}$ lower the value of the additional term in Lemma~\ref{lemma::avkin}. \citet{wellmann19key} thus argue that the additional term may be deliberately left out or could be altered in a way that ameliorates the preference of inbred individuals for reproduction. The validity of their argument appears unclear. While the goal to keep the average kinship low most probably will indeed lead to a preferred \emph{selection} of inbred individuals, it is not evident if such a selection scheme also \emph{produces} highly inbred individuals at a higher rate.
    \end{enumerate}
\end{Rmk}

From now on, we will pass to the use of vector and matrix notation, with which our results can be formulated more concisely.

\begin{Not}
    \begin{enumerate}[label = (\roman*)]
        \item We combine the $N_t$ contributions $c_{I}$ of individuals $I\in\mathcal P_t$ to the next generation $\mathcal P_{t+1}$ to one vector
            \[\mathbf c_t=\left(c_{I}\right)_{I\in\mathcal P_t}\in\mathbb R_{\geq0}^{\mathcal P_t}.\]

        \item Similarly, we combine the estimated breeding values $\hat u_{I}$ of the individuals $I\in\mathcal P_t$ to a vector
            \[\hat{\mathbf u}_t=\left(\hat u_{I}\right)_{I\in\mathcal P_t}\in\mathbb R^{\mathcal P_t},\]

        \item and their inbreeding coefficients $f_{I}$ to
            \[\mathbf f_t=\left(f_{I}\right)_{I\in\mathcal P_t}\in\mathbb R^{\mathcal P_t}.\]

        \item Furthermore, the kinships between individuals $I\in\mathcal P_t$ are combined to a matrix
            \[\mathbf K_t=\left(k_{I,J}\right)_{I,J\in\mathcal P_t}\in\mathbb R^{\mathcal P_t\times \mathcal P_t}.\]

        \item Finally, we let $\boldsymbol{1}_t\in\mathbb R^{\mathcal P_t}$ be the vector with 1 as every entry:
            \[\boldsymbol 1_t=\left(1\right)_{I\in\mathcal P_t}.\]
    \end{enumerate}
\end{Not}

\begin{Rmk}\label{rmk::onone}
    \begin{enumerate}[label = (\roman*)]
        \item Some readers may not be familiar with the notation $\mathbb R^{\mathcal P_t}$, i.\,e. the real numbers to the power of a set. Formally, $\mathbb R^{\mathcal P_t}$ is defined as the vector space of all functions $\mathcal P_t\to\mathbb R$ and is isomorphic to $\mathbb R^{N_t}$. The conceptual advantage of working with $\mathbb R^{\mathcal P_t}$ instead of $\mathbb R^{N_t}$ is that it does not require to chose a (more or less arbitrary) ordering of the individuals $I\in\mathcal P_t$. Readers who are confused by $\mathbb R^{\mathcal P_t}$ are advised to simply think of $\mathbb R^{N_t}$ instead.

        \item By standard theory on quantitative genetics, the kinship matrix $\mathbf K_t$ is symmetric and positive definite \citep{lange97mathematical}.

        \item \label{item::scp} The scalar product of $\boldsymbol{1}_t\in\mathbb R^{\mathcal P_t}$ with another vector $\mathbf v\in\mathbb R^{\mathcal P_t}$ is the sum of all elements of $\mathbf v$. The average of the entries of $\mathbf v$ can thus be written as $\frac1{N_t}\mathbf 1_t^{\top}\mathbf v$.
    \end{enumerate}
\end{Rmk}

\begin{Lem}
    Using this notation, the key findings so far read as follows:
    \begin{align}
        \hat u_{\mathcal P_t} &=
            \frac1{N_t}\boldsymbol 1_t^{\top}\hat{\mathbf u}_t,
                \label{eq::preu}\\
        \mathbb E\left[\hat u_{\mathcal P_{t+1}}\right] &=
            \mathbf c_t^{\top}\hat{\mathbf u}_t,
                \label{eq::u}\\
        k_{\mathcal P_t,\mathcal P_t} &=
            \frac1{N_t^2}\boldsymbol 1_t^{\top}\mathbf K_t\boldsymbol 1_t,\\
        k_{\mathcal P_{t+1},\mathcal P_{t+1}} &=
            \mathbf c_t^{\top}\mathbf K_t\mathbf c_t
                - \frac{\mathbf c_{t}^{\top}\mathbf f_t}{4N_{t+1}}
                + \frac{1}{4N_{t+1}},
                    \label{eq::k}\\
        \boldsymbol 1_t^{\top}\mathbf c_t &= 1.
            \label{eq::c}
    \end{align}
\end{Lem}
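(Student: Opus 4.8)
The plan is to recognize each of the five identities as a transcription, into vector–matrix language, of a statement already proved in scalar form, so that the whole lemma reduces to two elementary bookkeeping facts about $\mathbb R^{\mathcal P_t}$. The first is that, for any $\mathbf v\in\mathbb R^{\mathcal P_t}$, the scalar product $\boldsymbol 1_t^{\top}\mathbf v$ equals the plain sum $\sum_{I\in\mathcal P_t}v_I$ (Remark~\ref{rmk::onone}\,\ref{item::scp}). The second is the bilinear expansion of a quadratic form: for a matrix $\mathbf M=(m_{I,J})_{I,J\in\mathcal P_t}$ and vectors $\mathbf v,\mathbf w$, one has $\mathbf v^{\top}\mathbf M\mathbf w=\sum_{I,J\in\mathcal P_t}v_I\,m_{I,J}\,w_J$, where the double sum ranges over all ordered pairs, the diagonal terms $I=J$ included.

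I would dispatch the three linear identities first. Equation~\ref{eq::preu} is immediate from Definition~\ref{def::grppp} applied to $\mathcal I=\mathcal P_t$, which gives $\hat u_{\mathcal P_t}=\frac1{N_t}\sum_{I\in\mathcal P_t}\hat u_I$, and then rewriting the sum as $\boldsymbol 1_t^{\top}\hat{\mathbf u}_t$. Equation~\ref{eq::u} is Equation~\ref{eq::eupt} read as a dot product, $\sum_{I\in\mathcal P_t}c_I\hat u_I=\mathbf c_t^{\top}\hat{\mathbf u}_t$. Equation~\ref{eq::c} is Equation~\ref{eq::csum} with its left-hand side identified as $\boldsymbol 1_t^{\top}\mathbf c_t$.

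The two kinship identities use the quadratic-form expansion. For the average kinship of the current generation I would invoke the earlier lemma expressing group kinship as the average of pairwise kinships (applied with $\mathcal I=\mathcal J=\mathcal P_t$), which yields $k_{\mathcal P_t,\mathcal P_t}=\frac1{N_t^2}\sum_{I,J\in\mathcal P_t}k_{I,J}$, and then recognize $\sum_{I,J\in\mathcal P_t}k_{I,J}=\boldsymbol 1_t^{\top}\mathbf K_t\boldsymbol 1_t$. For Equation~\ref{eq::k} I would take Lemma~\ref{lemma::avkin} verbatim and rewrite its three summands: the double sum $\sum_{I,J\in\mathcal P_t}c_Ic_Jk_{I,J}$ becomes $\mathbf c_t^{\top}\mathbf K_t\mathbf c_t$, the single sum $\sum_{I\in\mathcal P_t}c_If_I$ becomes $\mathbf c_t^{\top}\mathbf f_t$, and the constant $\frac1{4N_{t+1}}$ carries over unchanged.

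Since each equation is a notational rewrite of a previously established result, I expect no substantive obstacle. The only point deserving mild care is the index range in the quadratic-form expansions: one must verify that $\mathbf c_t^{\top}\mathbf K_t\mathbf c_t$ and $\boldsymbol 1_t^{\top}\mathbf K_t\boldsymbol 1_t$ unfold to the \emph{full} double sums over all ordered pairs $(I,J)$, diagonal included, so that they match exactly the index conventions of Lemma~\ref{lemma::avkin} and of the group-kinship lemma, rather than a sum over unordered or off-diagonal pairs.
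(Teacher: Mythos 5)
Your proposal is correct and matches the paper's treatment exactly: the paper states this lemma without proof, regarding each identity as an immediate transcription of Definition~\ref{def::grppp}, Equations~\ref{eq::eupt} and~\ref{eq::csum}, the group-kinship lemma, and Lemma~\ref{lemma::avkin} into vector--matrix notation, which is precisely your argument. Your closing caution about the quadratic forms unfolding to full double sums over ordered pairs (diagonal included) is well placed, since that is indeed the index convention of Lemma~\ref{lemma::avkin}.
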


\begin{Rmk}
    By equations~\ref{eq::u} and~\ref{eq::k}, we see that once $\mathbf u_t$, $\mathbf K_t$ (and thereby $\mathbf f_t$), and $N_{t+1}$ are given, the values for $\mathbb E\left[\hat u_{\mathcal P_{t+1}}\right]$ and $k_{\mathcal P_{t+1},\mathcal P_{t+1}}$ are fully determined by the vector $\mathbf c_t$ of contributions. We may thus see $\mathbb E\left[\hat u_{\mathcal P_{t+1}}\right]=\mathbb E\left[\hat u_{\mathcal P_{t+1}}\right](\mathbf c_t)$ and $k_{\mathcal P_{t+1},\mathcal P_{t+1}}=k_{\mathcal P_{t+1},\mathcal P_{t+1}}(\mathbf c_t)$ as functions in $\mathbf c_t$. The general idea behind OCS is to determine $\mathbf c_t$ so that $\mathbb E\left[\hat u_{\mathcal P_{t+1}}\right]$  is maximized under the constraint that $k_{\mathcal P_{t+1},\mathcal P_{t+1}}$ shall not exceed a given value $k_{t+1}^{\ast}$ that is judged to be an acceptable average kinship.
\end{Rmk}

Thus, the basic task of Optimum Contribution Selection can be formulated as follows:

\begin{Task}
    Given a generation $\mathcal P_t$, $\hat{\mathbf u}_t\in\mathbb R^{\mathcal P_t}$, and $\mathbf K_t\in\mathbb R^{\mathcal P_t\times \mathcal P_t}$ (symmetric and positive definite), as well as the required number of individuals of the next generation, $N_{t+1}$, and a maximum acceptable kinship level $k_{t+1}^{\ast}$, maximize the function
        \[\mathbb E\left[\hat u_{\mathcal P_{t+1}}\right]:
            \mathbb R_{\geq0}^{\mathcal P_t}\to\mathbb R,\quad
            \mathbf c_t\mapsto \mathbf c_t^{\top}\hat{\mathbf u}_t\]
    under the constraints
        \[\mathbf 1_t^{\top}\mathbf c_t=1\]
    and
        \[\mathbf c_t^{\top}\mathbf K_t\mathbf c_t
            - \frac{\mathbf c_{t}^{\top}\mathbf f_t}{4N_{t+1}}
            + \frac{1}{4N_{t+1}}\leq k_{t+1}^{\ast}.\]
\end{Task}

We will not proceed by explaining how this task can be tackled. Instead, we will derive the corresponding tasks for the diecious setting as well as for settings with overlapping generations. In Section~\ref{sec::ocshb}, we will derive similar tasks for honeybee populations. Finally, in Section~\ref{sec::solvetask}, we observe that all the tasks follow a general pattern. Then we will discuss how to solve tasks of this pattern in general.

\subsubsection{Diecious populations} \label{sec::diec}

In a diecious population, the $N_t$ individuals of a generation $\mathcal P_t$ fall into two categories, females and males. We may thus write $\mathcal P_t$ as a disjoint union
\[\mathcal P_t=\mathcal F_t\sqcup\mathcal M_t,\]
where $\mathcal F_t$ and $\mathcal M_t$ are the sets of female and male individuals of generation $\mathcal P_t$, respectively.

Furthermore, inheritance is organized in a way that each individual $I\in\mathcal P_{t+1}$ has one female parent $F\in\mathcal F_t$ (called \emph{dam}) and one male parent $M\in\mathcal M_t$ (called \emph{sire}).

\begin{center}
    \begin{tikzpicture}
        \path (0,0) coordinate (Pt0)
            node[left = 0mm of Pt0] {generation $\mathcal P_t$}
            coordinate[below = 2cm of Pt0] (Pt1)
            node[left = 0mm of Pt1] {generation $\mathcal P_{t+1}$}

            node[right = 0.9cm of Pt0, individual] (At0) {}
            coordinate[right = 5mm of At0.center] (AAt0)
            node[right = 1cm of At0.center, individual] (Bt0) {}
            node[right = 1cm of Bt0.center, individual] (Ct0) {}
            node[right = 7mm of Ct0.center, anchor = center] (Dt0) {$\cdots$}
            node[right = 7mm of Dt0.center, individual] (Et0) {}
            node[right = 1cm of Et0.center, individual] (Ft0) {}
            coordinate[left = 5mm of Ft0.center] (FFt0)

            node[right = 1.5cm of Ft0.center, individual] (at0) {}
            coordinate[right = 5mm of at0.center] (aat0)
            node[right = 1cm of at0.center, individual] (bt0) {}
            node[right = 1cm of bt0.center, individual] (ct0) {}
            node[right = 7mm of ct0.center, anchor = center] (dt0) {$\cdots$}
            node[right = 7mm of dt0.center, individual] (et0) {}
            node[right = 1cm of et0.center, individual] (ft0) {}
            coordinate[left = 5mm of ft0.center] (fft0)

            node[right = 0.9cm of Pt1, individual] (At1) {}
            coordinate[right = 5mm of At1.center] (AAt1)
            node[right = 1cm of At1.center, individual] (Bt1) {}
            node[right = 1cm of Bt1.center, individual] (Ct1) {}
            node[right = 7mm of Ct1.center, anchor = center] (Dt1) {$\cdots$}
            node[right = 7mm of Dt1.center, individual] (Et1) {}
            node[right = 1cm of Et1.center, individual] (Ft1) {}
            coordinate[left = 5mm of Ft1.center] (FFt1)

            node[right = 1.5cm of Ft1.center, individual] (at1) {}
            coordinate[right = 5mm of at1.center] (aat1)
            node[right = 1cm of at1.center, individual] (bt1) {}
            node[right = 1cm of bt1.center, individual] (ct1) {}
            node[right = 7mm of ct1.center, anchor = center] (dt1) {$\cdots$}
            node[right = 7mm of dt1.center, individual] (et1) {}
            node[right = 1cm of et1.center, individual] (ft1) {}
            coordinate[left = 5mm of ft1.center] (fft1)
            ;

        \begin{scope}[on background layer]
            \path node [fit=(AAt0) (FFt0), ellipse, inner sep = 3mm, draw] (F0) {}
                node [fit=(aat0) (fft0), ellipse, inner sep = 3mm, draw] (M0) {}
                node [fit=(AAt1) (FFt1), ellipse, inner sep = 3mm, draw] (F1) {}
                node [fit=(aat1) (fft1), ellipse, inner sep = 3mm, draw] (M1) {}
                node [above = 0mm of F0] {$\mathcal F_t$}
                node [above = 0mm of M0] {$\mathcal M_t$}
                node [below = 0mm of F1] {$\mathcal F_{t+1}$}
                node [below = 0mm of M1] {$\mathcal M_{t+1}$};
        \end{scope}

        \draw[inheritance] (Bt0) -- (Ct1);
        \draw[inheritance] (bt0) -- (Ct1);
        \draw[inheritance] (Et0) -- (bt1);
        \draw[inheritance] (bt0) -- (bt1);
    \end{tikzpicture}
\end{center}

\begin{Not}
    We denote the numbers of female and male individuals of generation $\mathcal P_t$ by $N_{t}^{\mathcal F}:=\left|\mathcal F_t\right|$ and $N_{t}^{\mathcal M}:=\left|\mathcal M_t\right|$, respectively. So,
        \[N_t=N_t^{\mathcal F}+N_t^{\mathcal M}.\]
\end{Not}

\begin{Rmk}
    Writing $\mathcal P_t=\mathcal F_t\sqcup\mathcal M_t$ as a disjoint union gives rise to a natural isomorphism
        \[\mathbb R^{\mathcal P_t}\cong\mathbb R^{\mathcal F_t}\oplus\mathbb R^{\mathcal M_t},\]
    separating vector entries belonging to male and female individuals.
\end{Rmk}

\begin{Not}\label{not::einsfm}
    \begin{enumerate}[label = (\roman*)]
        \item \label{item::einsfm} Under this isomorphism, $\mathbf c_t$ becomes $\mathbf c_{t}^{\mathcal F}\oplus\mathbf c_t^{\mathcal M}$, $\hat{\mathbf u}_t$ becomes $\hat{\mathbf u}_{t}^{\mathcal F}\oplus\hat{\mathbf u}_{t}^{\mathcal M}$, and $\mathbf f_t$ becomes $\mathbf f_{t}^{\mathcal F}\oplus\mathbf f_{t}^{\mathcal M}$. Of course, also the vector of ones, $\mathbf 1_t$, can be split up into $\mathbf 1_{t}^{\mathcal F}\oplus\mathbf 1_{t}^{\mathcal M}$.

        \item For the matrix $\mathbf K_t$, we distinguish four blocks $\mathbf K_{t}^{\mathcal F\mathcal F}$, $\mathbf K_{t}^{\mathcal F\mathcal M}$, $\mathbf K_{t}^{\mathcal M\mathcal F}$, and $\mathbf K_{t}^{\mathcal M\mathcal M}$, containing the kinships within and between sex classes,
            \[\mathbf K_t = \begin{pmatrix}\mathbf K_{t}^{\mathcal F\mathcal F} & \mathbf K_{t}^{\mathcal F\mathcal M} \\
                                           \mathbf K_{t}^{\mathcal M\mathcal F} & \mathbf K_{t}^{\mathcal M\mathcal M}
                            \end{pmatrix}.\]
    \end{enumerate}
\end{Not}

\begin{Rmk}
    As $\mathbf K_t$ is symmetric, we have
        \[\mathbf K_{t}^{\mathcal M\mathcal F}=\left(\mathbf K_{t}^{\mathcal F\mathcal M}\right)^{\top}.\]
\end{Rmk}

Now, one should take a moment to think about what is the variable to maximize. The straightforward approach is to still use the average breeding value
    \[\hat{u}_{\mathcal P_t} =
    \frac1{N_t}\mathbf 1_t^{\top}\hat{\mathbf u}_t=
    \frac1{N_t^{\mathcal F}+N_{t}^{\mathcal M}}
    \left(\left(\mathbf 1_{t}^{\mathcal F}\right)^{\top}\hat{\mathbf u}_{t}^{\mathcal F}+\left(\mathbf 1_{t}^{\mathcal M}\right)^{\top}
        \hat{\mathbf u}_{t}^{\mathcal M}\right).\]
But consider the following

\begin{Ex}\label{ex::weightedmean}
    Assume a population of eight female individuals with breeding values
        \[\hat{\mathbf u}_{t}^{\mathcal F}=\begin{pmatrix}20.3 & 17.9 & 18.4 & 19.0 & 17.0 & 19.5 & 20.1 & 19.8 \end{pmatrix}^{\top}\]
    and two male individuals with breeding values
        \[\hat{\mathbf u}_{t}^{\mathcal M}=\begin{pmatrix}8.3 & 9.7 \end{pmatrix}^{\top}.\]
    Then the average breeding value of this population is
        \[\hat{u}_{\mathcal P_t}=\frac{20.3 + 17.9 + 18.4 + 19.0 + 17.0 + 19.5 + 20.1 + 19.8 + 8.3 + 9.7}{10} = 17.0.\]
    But even if we only let the best female of generation $\mathcal P_t$ (corresponding to the first entry in $\hat{\mathbf u}_{t}^{\mathcal F}$) and the best male (corresponding to the second entry in $\hat{\mathbf u}_{t}^{\mathcal M}$) have offspring with each other, the expected average breeding value of the next generation would be $\frac{20.3+9.7}{2}=15.0$. So the value $\hat{u}_{\mathcal P_t}$ does not adequately represent the population's inherent value for breeding purposes. Instead, the value
        \[\frac{\hat{u}_{\mathcal F_t}+\hat{u}_{\mathcal M_t}}2 =
        \frac{\frac1{N_{t}^{\mathcal F}}\left(\mathbf 1_{t}^{\mathcal F}\right)^{\top}\hat{\mathbf u}_{t}^{\mathcal F}
            +\frac1{N_{t}^{\mathcal M}}\left(\mathbf 1_{t}^{\mathcal M}\right)^{\top}\hat{\mathbf u}_{t}^{\mathcal M}}{2}\]
    appears to be a more appropriate choice. It reflects the expected breeding value of a common offspring of a randomly chosen female and a randomly chosen male. Following Definition~\ref{def::setset}, by declaring the (two-elemented) set of sexes
        \[\mathfrak S_t:=\{\mathcal F_t,\mathcal M_t\},\]
    we may write this value as $\hat u_{\mathfrak S_t}$.

    In our example, we have
        \[\hat{u}_{\mathcal F_t}=\frac{20.3 + 17.9 + 18.4 + 19.0 + 17.0 + 19.5 + 20.1 + 19.8}8 = 19.0\]
    and
        \[\hat{u}_{\mathcal M_t}=\frac{8.3 + 9.7}2 = 9.0,\]
    from which results
        \[\hat{u}_{\mathfrak S_t}=\frac{19.0 + 9.0}2 = 14.0.\]
\end{Ex}

So, instead of maximizing $\hat{u}_{\mathcal P_t}$ over the generations, in theory one should aim to maximize $\hat{u}_{\mathfrak S_t}$. However, in order to predict $\mathbb E\left[\hat{u}_{\mathfrak S_{t+1}}\right]$ from $\hat{\mathbf u}_{t}^{\mathcal F}\oplus\hat{\mathbf u}_{t}^{\mathcal M}$, one would have to split up the vector $\mathbf c_t\cong\mathbf c_{t}^{\mathcal F}\oplus\mathbf c_{t}^{\mathcal M}$ into the respective contributions towards male and female offspring. But, in general, it is not predictable if a particular offspring of two individuals will be male or female and if one assumes equal probability for both options, the expected value of $\hat{u}_{\mathfrak S_{t+1}}$ is precisely $\mathbb E\left[\hat{u}_{\mathcal P_{t+1}}\right]$. Therefore, after a little theoretical detour, one again ends up with the original choice of maximizing $\hat{u}_{\mathcal P_t}$.

\begin{Rmk}
    \begin{enumerate}[label = (\roman*)]
        \item By the use of sexed semen, it would indeed be possible to predefine separate contributions towards males and females in the next generation. We are not aware if the resulting theory for maximizing $\hat{u}_{\mathfrak S_t}$ has been worked out. However, even with sexed sperm, the practical relevance is likely to be low. The large difference between $\hat{u}_{\mathcal P_t}$ and $\hat{u}_{\mathfrak S_t}$ in Example~\ref{ex::weightedmean} comes from the big differences between male and female individuals in both number and average breeding values. While in a breeding scheme with sexed semen there are indeed likely more females than males, there is no reason to assume greatly differing average breeding values between the sexes. Therefore, the difference between $\hat{u}_{\mathcal P_t}$ and $\hat{u}_{\mathfrak S_t}$ will be practically negligible.

        \item Similarly, one may ask the question if $k_{\mathcal P_t,\mathcal P_t}$ is still the right measure for the genetic diversity within a diecious population. For a monoecious population with selfing, $k_{\mathcal P_t,\mathcal P_t}$ is the expected value of the average inbreeding $f_{\mathcal P_{t+1}}$ of the next generation under panmixia. For a diecious population, this is no longer the case as the expectation for $f_{\mathcal P_{t+1}}$ here equals the average kinship between male and female individuals, i.\,e.
            \[\mathbb E\left[f_{\mathcal P_{t+1}}|\text{ panmixia }\right] =
            k_{\mathcal F_t,\mathcal M_t} =
            \frac{1}{N_{t}^{\mathcal F}\cdot N_{t}^{\mathcal M}}\left(\mathbf 1_{t}^{\mathcal F}\right)^{\top}\mathbf K_{t}^{\mathcal F\mathcal M}\mathbf 1_{t}^{\mathcal M}.\]
        However, the fact that $k_{\mathcal P_t,\mathcal P_t}$ is the expectation for $f_{\mathcal P_{t+1}}$ is not the reason why we chose this value for our analysis in the monoecious case. Particularly, because the assumption of panmixia is anyway massively violated under directed selection. Furthermore, to predict such an alternative measure for interrelatedness in the population would again require control over the sex of offspring.
        Lastly, there seems to be no reason to divert from $k_{\mathcal P_t,\mathcal P_t}$ as the relevant measure for genetic interrelatedness, also in diecious populations.
    \end{enumerate}
\end{Rmk}

By the above considerations, it follows that Equations~\ref{eq::preu} to~\ref{eq::k} do not require any changes in the diecious case. But one can reformulate them in a way that female and male components become apparent. The equations then turn into

\begin{align}
    \hat u_{\mathcal P_t}
        &= \frac1{N_t^{\mathcal F}+N_{t}^{\mathcal M}}\cdot
            \left(\left(\boldsymbol 1_{t}^{\mathcal F}\right)^{\top}\hat{\mathbf u}_{t}^{\mathcal F} +
                \left(\boldsymbol 1_{t}^{\mathcal M}\right)^{\top}\hat{\mathbf u}_{t}^{\mathcal M}\right), \label{eq::preudi}\\
    \mathbb E\left[\hat u_{\mathcal P_{t+1}}\right]
        &= \left(\mathbf c_{t}^{\mathcal F}\right)^{\top}\hat{\mathbf u}_{t}^{\mathcal F}
            + \left(\mathbf c_{t}^{\mathcal M}\right)^{\top}\hat{\mathbf u}_{t}^{\mathcal M}, \label{eq::udi}\\
    k_{\mathcal P_t,\mathcal P_t}
        &= \frac1{(N_{t}^{\mathcal F}+N_{t}^{\mathcal M})^2}
            \left(\left(\boldsymbol 1_{t}^{\mathcal F}\right)^{\top}\mathbf K_{t}^{\mathcal F\mathcal F}\boldsymbol 1_{t}^{\mathcal F}
                + 2\left(\boldsymbol 1_{t}^{\mathcal F}\right)^{\top}\mathbf K_{t}^{\mathcal F\mathcal M}\boldsymbol 1_{t}^{\mathcal M}
                + \left(\boldsymbol 1_{t}^{\mathcal M}\right)^{\top}\mathbf K_{t}^{\mathcal M\mathcal M}\boldsymbol 1_{t}^{\mathcal M}\right),\\
    k_{\mathcal P_{t+1},\mathcal P_{t+1}}
        &= \left(\mathbf c_{t}^{\mathcal F}\right)^{\top}\mathbf K_{t}^{\mathcal F\mathcal F}\mathbf c_{t}^{\mathcal F}
            + 2\left(\mathbf c_{t}^{\mathcal F}\right)^{\top}\mathbf K_{t}^{\mathcal F\mathcal M}\mathbf c_{t}^{\mathcal M}
            + \left(\mathbf c_{t}^{\mathcal M}\right)^{\top}\mathbf K_{t}^{\mathcal M\mathcal M}\mathbf c_{t}^{\mathcal M}\nonumber\\
            &\quad\quad\quad\quad\quad\quad\quad\quad\quad\quad\quad\quad\quad\quad\quad
            - \frac{\left(\mathbf c_{t}^{\mathcal F}\right)^{\top}\mathbf f_{t}^{\mathcal F}+\left(\mathbf c_{t}^{\mathcal M}\right)^{\top}\mathbf f_{t}^{\mathcal M}}{4N_{t+1}}
            + \frac{1}{4N_{t+1}}. \label{eq::kdi}
\end{align}

\begin{Rmk}
    These reformulations of Equations~\ref{eq::preu} to~\ref{eq::k} do not reveal further insights and appear clumsy in comparison with the original. Throughout the literature, equations are therefore usually reported in the version of Equations~\ref{eq::preu} to~\ref{eq::k}. The reason for adding the alternative formulations of Equations~\ref{eq::preudi} to~\ref{eq::kdi} is that it may prepare the reader for what to expect when turning to honeybees later on.
\end{Rmk}

Finally, the condition imposed on $\mathbf c_t$ by Equation~\ref{eq::c} actually needs modification in the diecious case. While in the monoecious case, we only had the requirement that all contributions add up to unity, we now need to obey the fact that each individual $I\in\mathcal P_{t+1}$ has exactly one male and one female parent. So, male and female individuals of generation $\mathcal P_t$ have to contribute equally to generation $\mathcal P_{t+1}$, which is captured by the two equations
\begin{equation}\label{eq::ceins}
    \left(\mathbf 1_{t}^{\mathcal F}\right)^{\top}\mathbf c_{t}^{\mathcal F}=\frac12
\end{equation}
and
\begin{equation}\label{eq::czwei}
    \left(\mathbf 1_{t}^{\mathcal M}\right)^{\top}\mathbf c_{t}^{\mathcal M}=\frac12.
\end{equation}

Once more, we are able to formulate the task of OCS in case of a diecious population:

\begin{Task}
    Given a sex-divided generation $\mathcal P_t=\mathcal F_t\sqcup\mathcal M_t$ and
    \begin{itemize}
        \item $\hat{\mathbf u}_{t}^{\mathcal F}\in\mathbb R^{\mathcal F_t}, \hat{\mathbf u}_{t}^{\mathcal M}\in\mathbb R^{\mathcal M_t}$,

        \item matrices $\mathbf K_{t}^{\mathcal F\mathcal F}\in\mathbb R^{\mathcal F_t\times \mathcal F_t}$, $\mathbf K_{t}^{\mathcal F\mathcal M}\in\mathbb R^{\mathcal F_t\times \mathcal M_t}$, and $\mathbf K_{t}^{\mathcal M\mathcal M}\in\mathbb R^{\mathcal M_t\times \mathcal M_t}$ such that
            $\mathbf K_t:=\begin{pmatrix}\mathbf K_{t}^{\mathcal F\mathcal F} & \mathbf K_{t}^{\mathcal F\mathcal M} \\
            \left(\mathbf K_{t}^{\mathcal F\mathcal M}\right)^{\top} & \mathbf K_{t}^{\mathcal M\mathcal M}\end{pmatrix}$
        is symmetric and positive definite,

        \item the required number of individuals of the next generation, $N_{t+1}$,

        \item and a maximum acceptable kinship level $k_{t+1}^{\ast}$,
    \end{itemize}
    maximize the function
        \[\mathbb E\left[\hat u_{\mathcal P_{t+1}}\right]:\mathbb R_{\geq0}^{\mathcal F_t}\oplus \mathbb R_{\geq0}^{\mathcal M_t}\to\mathbb R,\quad
            \mathbf c_{t}^{\mathcal F}\oplus\mathbf c_{t}^{\mathcal M}
                \mapsto \left(\mathbf c_{t}^{\mathcal F}\right)^{\top}\hat{\mathbf u}_{t}^{\mathcal F}+\left(\mathbf c_{t}^{\mathcal M}\right)^{\top}\hat{\mathbf u}_{t}^{\mathcal M}\]
    under the constraints
        \[\left(\mathbf 1_{t}^{\mathcal F}\right)^{\top}\mathbf c_{t}^{\mathcal F}=\frac12,\]
        \[\left(\mathbf 1_{t}^{\mathcal M}\right)^{\top}\mathbf c_{t}^{\mathcal M}=\frac12,\]
    and
    \begin{align*}
        \left(\mathbf c_{t}^{\mathcal F}\right)^{\top}\mathbf K_{t}^{\mathcal F\mathcal F}\mathbf c_{t}^{\mathcal F}
            + 2\left(\mathbf c_{t}^{\mathcal F}\right)^{\top}\mathbf K_{t}^{\mathcal F\mathcal M}\mathbf c_{t}^{\mathcal M}
            &+ \left(\mathbf c_{t}^{\mathcal M}\right)^{\top}\mathbf K_{t}^{\mathcal M\mathcal M}\mathbf c_{t}^{\mathcal M}\\
            &- \frac{\left(\mathbf c_{t}^{\mathcal F}\right)^{\top}\mathbf f_{t}^{\mathcal F}+\left(\mathbf c_{t}^{\mathcal M}\right)^{\top}\mathbf f_{t}^{\mathcal M}}{4N_{t+1}}
            + \frac{1}{4N_{t+1}}
                \leq k_{t+1}^{\ast}.
    \end{align*}
\end{Task}

\subsection{Overlapping generations} \label{sec::overl}

So far, we have discussed a situation where individuals of a generation $\mathcal P_t$ produced the next generation $\mathcal P_{t+1}$ at one specific time and afterwards ceased to play a role in the breeding system. In most real-life breeding systems, this will not be the case. Instead, individuals may have several offspring that are born at different points in time and generations may overlap and intermingle. To model this situation, we no longer distinguish disjoint generations but rather different reproductive periods, meaning time frames during which new individuals are born and some old individuals leave the population. For simplicity, one can assume such a time frame to be one year but rescaling to different time frames is easily possible. Hopefully not too misleading, we still call the state $\mathcal P_t$ of a population $\mathcal P$ at a time $t$ a \emph{generation}. So, formally, what we do is to allow subsequent generations $\mathcal P_t$ and $\mathcal P_{t+1}$ to share common individuals.

\begin{Rmk}
    The first derivation of OCS with overlapping generations was worked out by \citep{meuwissen98}, who followed a slightly different modeling approach than we do.
\end{Rmk}

\begin{Not}\label{not::nsclassic}
    \begin{enumerate}[label = (\roman*)]
        \item Each generation $\mathcal P_{t+1}$ now splits into two disjoint sets
            \[\mathcal P_{t+1}=\mathcal N_{t+1}\sqcup\mathcal S_{t+1},\]
        where
            \[\mathcal N_{t+1}:=\mathcal P_{t+1}\backslash\mathcal P_t\]
        are the newly born individuals of generation $\mathcal P_{t+1}$ and
            \[\mathcal S_{t+1}:=\mathcal P_{t+1}\cap\mathcal P_t\]
        are the individuals that survived from generation $\mathcal P_t$.

        \begin{center}
            \begin{tikzpicture}
                \path (0,0) coordinate (Pt0)
                    node[left = 0mm of Pt0] {generation $\mathcal P_t$}
                    coordinate[below = 3cm of Pt0] (Pt1)
                    node[left = 0mm of Pt1] {generation $\mathcal P_{t+1}$}

                    node[right = 0.9cm of Pt0, individual] (At0) {}
                    coordinate[right = 5mm of At0.center] (AAt0)
                    node[right = 1.5cm of At0.center, individual] (Bt0) {}
                    node[right = 1.5cm of Bt0.center, individual] (Ct0) {}
                    coordinate[left = 5mm of Ct0.center] (CCt0)
                    node[right = 2.5cm of Ct0.center, individual] (Dt0) {}
                    coordinate[right = 5mm of Dt0.center] (DDt0)
                    node[right = 1.5cm of Dt0.center, individual] (Et0) {}
                    node[right = 1.5cm of Et0.center, individual] (Ft0) {}
                    coordinate[left = 5mm of Ft0.center] (FFt0)

                    node[right = 0.9cm of Pt1, individual] (At1) {}
                    coordinate[right = 5mm of At1.center] (AAt1)
                    node[right = 1.5cm of At1.center, individual] (Bt1) {}
                    node[right = 1.5cm of Bt1.center, individual] (Ct1) {}
                    coordinate[left = 5mm of Ct1.center] (CCt1)
                    node[right = 2.5cm of Ct1.center, individual] (Dt1) {}
                    coordinate[right = 5mm of Dt1.center] (DDt1)
                    node[right = 1.5cm of Dt1.center, individual] (Et1) {}
                    node[right = 1.5cm of Et1.center, individual] (Ft1) {}
                    coordinate[left = 5mm of Ft1.center] (FFt1)
                    ;

                \begin{scope}[on background layer]
                    \path node [fit=(AAt0) (CCt0), ellipse, inner sep = 3mm, draw] (N0) {}
                        node [fit=(DDt0) (FFt0), ellipse, inner sep = 3mm, draw] (S0) {}
                        node [fit=(AAt1) (CCt1), ellipse, inner sep = 3mm, draw] (N1) {}
                        node [fit=(DDt1) (FFt1), ellipse, inner sep = 3mm, draw] (S1) {}
                        node [above = 0mm of N0] {$\mathcal N_t$}
                        node [above = 0mm of S0] {$\mathcal S_t$}
                        node [below = 0mm of N1] {$\mathcal N_{t+1}$}
                        node [below = 0mm of S1] {$\mathcal S_{t+1}$};
                \end{scope}

                \draw[inheritance] (At0) -- (At1);
                \draw[inheritance] (Bt0) -- (At1)
                    node[gene pass description]{inherits};
                \draw[inheritance] (Ct0) -- (Bt1);
                \draw[inheritance] (Dt0) -- (Bt1);
                \draw[inheritance] (Ct0) -- (Ct1);
                \draw[inheritance] (Et0) -- (Ct1);
                \draw[survival] (Bt0) -- (Dt1);
                \draw[survival] (Ct0) -- (Et1);
                \draw[survival] (Ft0) -- (Ft1)
                    node[gene pass description]{survives};
                \draw[death] (At0) -- ++(-1cm,-1.5cm)
                    node[gene pass description]{dies};
                \draw[death] (Dt0) -- ++(0cm,-1.5cm);
                \draw[death] (Et0) -- ++(0cm,-1.5cm);
            \end{tikzpicture}
        \end{center}

        \item We let $N_t^{\mathcal N}:=\left|\mathcal N_t\right|$ and $N_t^{\mathcal S}:=\left|\mathcal S_t\right|$, so that
            \[N_t=N_{t}^{\mathcal N}+N_{t}^{\mathcal S}.\]

        \item \label{item::nsclassicc}Each newly born individual $I\in\mathcal N_{t+1}$ has two parents from $\mathcal P_t$. At a given locus, the individuals in $\mathcal N_{t+1}$ assemble a total of $2N_{t+1}^{\mathcal N}$ alleles and we may again assign to each individual $I\in\mathcal P_t$ the fraction $c_{I,t}$ of these alleles that were inherited from $I$. As in the situation with discrete generations, this gives rise to a vector $\mathbf c_t=(c_{I,t})_{I\in\mathcal P_t}\in\mathbb R_{\geq0}^{\mathcal P_t}$ with
        \begin{equation}\label{eq::ccc}
            \mathbf 1_t^{\top}\mathbf c_t=1.
        \end{equation}
    \end{enumerate}
\end{Not}

\begin{Rmk}
    \begin{enumerate}[label = (\roman*)]
        \item When developing the theory for discrete generations, we had denoted the contribution of an individual $I\in\mathcal P_t$ to the next generation simply by $c_I$ (Notation~\ref{not::corig}). In case of overlapping generations, however, we have to add the index $t$ to the notation (i.\,e. $c_{I,t}$). Individual $I$ may also be alive at a different time $t'\neq t$ and then have a different contribution $c_{I,t'}\neq c_{I,t}$ to generation $\mathcal P_{t'+1}$.

        \item For the same reason, the additional index $t$ is also added in the following notations regarding survival and estimated breeding values. Note in particular that also the estimated breeding value of an individual will generally change over time (whereas the true breeding value remains constant).

        \item Equation~\ref{eq::ccc} describes the case of a monoecious population. In case of a diecious population and a separation $\mathcal P_t=\mathcal F_t\sqcup\mathcal M_t$ into females and males, we need to impose
            \[\left(\mathbf 1_t^{\mathcal F}\right)^{\top}\mathbf c_t^{\mathcal F}=\frac12\quad\text{and}\quad
              \left(\mathbf 1_t^{\mathcal M}\right)^{\top}\mathbf c_t^{\mathcal M}=\frac12,\]
        just as in Equations~\ref{eq::ceins} and~\ref{eq::czwei}.
    \end{enumerate}
\end{Rmk}

\begin{Not}\label{not::basc}
    \begin{enumerate}[label = (\roman*)]
        \item \label{item::basci}To each individual $I\in\mathcal P_t$ we assign the binary value $s_{I,t}\in\{0,1\}$, indicating if $I$ survives to generation $\mathcal P_{t+1}$:
            \[s_{I,t}:=\begin{cases}1&\text{if }I\in\mathcal P_{t+1},\\0&\text{otherwise}\end{cases}.\]
        This gives rise to a binary vector $\mathbf s_t\in\{0,1\}^{\mathcal P_t}\subseteq\mathbb R^{\mathcal P_t}$ with
        \begin{equation}
            \mathbf 1_t^{\top}\mathbf s_t=N_{t+1}^{\mathcal S}.
        \end{equation}

        \item Each individual $I\in\mathcal P_t$ has an estimated breeding value $\hat u_{I,t}$, resulting in a vector $\hat{\mathbf u}_{t}\in\mathbb R^{\mathcal P_t}$.
        We calculate the average breeding value as
        \begin{equation}
            \hat u_{\mathcal P_t}=\frac1{N_t}\mathbf 1_t^{\top}\hat{\mathbf u}_t.
        \end{equation}
    \end{enumerate}
\end{Not}

\begin{Rmk}\label{rmk::notgood}
    \begin{enumerate}[label = (\roman*)]
        \item The term \emph{survival} is not necessarily to be understood literally. For our purposes, it makes no difference whether an individual dies or becomes (irreversably) infertile. Also if breeding scheme restrictions only allow individuals up to a specific age to reproduce, all older individuals may be considered \emph{dead}.
        
        \item \label{item::notgood} We will soon discuss why this value $\hat u_{\mathcal P_t}$ may not be ideal in order to assess the overall genetic quality of $\mathcal P_t$ and how it could be replaced with a more refined value. Nevertheless, we will calculate, how this value $\hat u_{\mathcal P_t}$ is transported over the years as we think that these calculations are instructive.
    \end{enumerate}
\end{Rmk}

\begin{Thm}\label{thm::eusimple}
    We have
        \[\mathbb E\left[\hat u_{\mathcal P_{t+1}}\right]=\frac{\left(N_{t+1}^{\mathcal N}\mathbf c_t+\mathbf s_t\right)^{\top}\hat{\mathbf u}_t}{N_{t+1}}.\]
\end{Thm}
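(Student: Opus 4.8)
The plan is to compute $\hat u_{\mathcal P_{t+1}}$ directly from its definition as an average over $\mathcal P_{t+1}$ and to exploit the disjoint decomposition $\mathcal P_{t+1}=\mathcal N_{t+1}\sqcup\mathcal S_{t+1}$ introduced in Notation~\ref{not::nsclassic}. Writing
\[
\hat u_{\mathcal P_{t+1}}
=\frac1{N_{t+1}}\sum_{J\in\mathcal P_{t+1}}\hat u_{J,t+1}
=\frac1{N_{t+1}}\Biggl(\sum_{J\in\mathcal N_{t+1}}\hat u_{J,t+1}+\sum_{J\in\mathcal S_{t+1}}\hat u_{J,t+1}\Biggr),
\]
I would take expectations and treat the two sums separately by linearity. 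The two resulting terms will correspond exactly to the two summands $N_{t+1}^{\mathcal N}\mathbf c_t$ and $\mathbf s_t$ appearing in the claim, so the whole argument reduces to identifying each of them.

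First I would handle the surviving individuals. The set $\mathcal S_{t+1}$ consists precisely of those $I\in\mathcal P_t$ with $s_{I,t}=1$, and a survivor carries its breeding value over from generation $\mathcal P_t$; hence $\sum_{J\in\mathcal S_{t+1}}\hat u_{J,t+1}=\sum_{I\in\mathcal P_t}s_{I,t}\hat u_{I,t}=\mathbf s_t^{\top}\hat{\mathbf u}_t$, contributing no randomness. For the newly born individuals I would invoke the additive Infinitesimal Model exactly as in the discrete case: since $c_{I,t}$ was defined in Notation~\ref{not::nsclassic}\,\ref{item::nsclassicc} as the fraction of the $2N_{t+1}^{\mathcal N}$ alleles of $\mathcal N_{t+1}$ inherited from $I$, the expected \emph{average} breeding value of the $N_{t+1}^{\mathcal N}$ new individuals is $\mathbf c_t^{\top}\hat{\mathbf u}_t$ (this is Equation~\ref{eq::u} applied to the subpopulation $\mathcal N_{t+1}$ in the role of the next generation). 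Consequently the expected \emph{sum} over $\mathcal N_{t+1}$ equals $N_{t+1}^{\mathcal N}\mathbf c_t^{\top}\hat{\mathbf u}_t$. Substituting both terms and factoring out $\hat{\mathbf u}_t$ then gives
\[
\mathbb E\left[\hat u_{\mathcal P_{t+1}}\right]
=\frac1{N_{t+1}}\left(N_{t+1}^{\mathcal N}\mathbf c_t^{\top}\hat{\mathbf u}_t+\mathbf s_t^{\top}\hat{\mathbf u}_t\right)
=\frac{\left(N_{t+1}^{\mathcal N}\mathbf c_t+\mathbf s_t\right)^{\top}\hat{\mathbf u}_t}{N_{t+1}},
\]
as asserted.

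The main obstacle I anticipate is conceptual rather than computational, and it concerns the survivor term. One must justify replacing $\hat u_{J,t+1}$ by $\hat u_{I,t}$ for a surviving individual, even though it was stressed earlier that \emph{estimated} breeding values generally change from one year to the next. The clean resolution is that the true breeding value of an individual is constant over time and BLUP estimates are unbiased, so $\hat u_{I,t}$ is the correct time-$t$ predictor of a survivor's value and the expectation is understood relative to the information available at time $t$. The second point requiring care is that $\mathbf c_t$ is normalized against the \emph{new} individuals only --- the $2N_{t+1}^{\mathcal N}$ alleles of $\mathcal N_{t+1}$, giving $\mathbf 1_t^{\top}\mathbf c_t=1$ --- and not against all of $\mathcal P_{t+1}$; this is exactly why the factor $N_{t+1}^{\mathcal N}$, rather than $N_{t+1}$, multiplies $\mathbf c_t$ in the final formula.
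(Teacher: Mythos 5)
Your proposal is correct and follows essentially the same route as the paper: the paper likewise splits $\mathcal P_{t+1}=\mathcal N_{t+1}\sqcup\mathcal S_{t+1}$, computes $\mathbb E\left[\hat u_{\mathcal N_{t+1}}\right]=\mathbf c_t^{\top}\hat{\mathbf u}_t$ in analogy with the discrete-generation case and $\hat u_{\mathcal S_{t+1}}=\frac1{N_{t+1}^{\mathcal S}}\mathbf s_t^{\top}\hat{\mathbf u}_t$ for the survivors, and combines them as a weighted mean --- your version merely works with sums instead of averages, which is equivalent. Your closing remarks on why survivors may be credited with their time-$t$ estimates and on the normalization of $\mathbf c_t$ against $\mathcal N_{t+1}$ only are sound and, if anything, make explicit what the paper leaves implicit.
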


\begin{proof}
    In generation $\mathcal P_{t+1}$, we can calculate separate expected average estimated breeding values $\mathbb E\left[\hat{u}_{\mathcal N_{t+1}}\right]$ for the newly born individuals and $\mathbb E\left[\hat{u}_{\mathcal S_{t+1}}\right]$ for the survivor individuals. The total average $\mathbb E\left[\hat{u}_{\mathcal P_{t+1}}\right]$ will then be a weighted mean of these two values:
        \[\mathbb E\left[\hat{u}_{\mathcal P_{t+1}}\right] =
            \frac{N_{t+1}^{\mathcal N}\mathbb E\left[\hat{u}_{\mathcal N_{t+1}}\right]+N_{t+1}^{\mathcal S}\mathbb E\left[\hat{u}_{\mathcal S_{t+1}}\right]}{N_{t+1}}.\]
    The calculation of the expected average estimated breeding value of the newly born individuals of generation $\mathcal P_{t+1}$ is in complete analogy to the case of discrete generations and we have
        \[\mathbb E\left[\hat{u}_{\mathcal N_{t+1}}\right]=\mathbf c_t^{\top}\hat{\mathbf u}_t.\]
    We turn to $\mathbb E\left[\hat u_{\mathcal S_{t+1}}\right]$. If the binary survival vector $\mathbf s_t$ is known, the calculation is simple because it is precisely the surviving individuals from $\mathcal P_t$ that contribute to the group of survivors in $\mathcal P_{t+1}$, i.\,e.
        \[\hat u_{\mathcal S_{t+1}}=\frac1{N_{t+1}^{\mathcal S}}\mathbf s_t^{\top}\hat{\mathbf u}_t.\]
    Putting all formulas together and simplifying finally yields the assertion.
\end{proof}

\begin{Rmk}
    \begin{enumerate}[label = (\roman*)]
        \item In practice, selection decisions to produce a generation $\mathcal P_{t+1}$ from generation $\mathcal P_t$ are made at some time $t^{\ast}\in[t,t+1]$. If $t^{\ast}$ is close to $t+1$, one can already be (relatively) certain if an individual $I\in\mathcal P_t$ will also be in $\mathcal P_{t+1}$. If, however, $t^{\ast}$ is close to $t$, one has to make an assumption which individuals of $\mathcal P_t$ one still expects to be a part of $\mathcal P_{t+1}$. The optimum contributions determined by OCS will then only be optimal if the assumptions turn out to be true.

        \item In mammals with long gestation times and juvenile phases (i.\,e. long generation intervals), selection decisions are indeed made early and the question on what to assume for the survival vector $\mathbf s_t$ is highly relevant. With a very coarse approach, one could simply assume  that all individuals below a certain age threshold will survive while all individuals above it will die. But, of course, breeders may fine-tune this approach by including, for example, the health status of individuals into the assumption of $\mathbf s_t$.
        
        \item Alternatively, one can interpret $\mathbf s_t$ as a random vector of Bernoulli-distributed variables, meaning that each individual $I\in\mathcal P_t$ is attributed with a probability $p_{I,t}$ to survive and consequently a probability $1-p_{I,t}$ to die. The derivations by \citet{wellmann19key} essentially follow this approach with equal survival probabilities for all members of a sex\,$\times$\,year class. It results in a vector $\mathbf p_{t}$ of survival probabilities in generation $\mathcal P_t$ with
            \[\mathbf p_{t}=\mathbb E\left[\mathbf s_t\right].\]

        \item It appears plausible to assume the individual random variables $s_{I}$ to be mutually independent, so that
            \[\mathrm{var}(\mathbf s_t)=\mathrm{diag}\left(p_{I,t}(1-p_{I,t})\right)_{I\in\mathcal P_t}.\]

        \item One should note that by turning $\mathbf s_t$ into a random vector, also the surviving population size $N_{t+1}^{\mathcal S}=\mathbf 1_{t}^{\top}\mathbf s_t$ and thus the total population size $N_{t+1}=N_{t+1}^{\mathcal N}+N_{t+1}^{\mathcal S}$ become random variables. By linearity of the expectation, we have
            \[\mathbb E\left[N_{t+1}^{\mathcal S}\right]=\mathbf 1_t^{\top}\mathbf p_{t}\]
        and
            \[\mathbb E\left[N_{t+1}\right]=N_{t+1}^{\mathcal N}+\mathbf 1_t^{\top}\mathbf p_{t}.\]

        \item In honeybees, which are the main target of this manuscript, $t^{\ast}$ will turn out to be very close to $t+1$. Queen selection decisions are usually made in early spring \citep{buchler24} and the new generation is ready after a few weeks. Since winter is the main season for honeybee queens to die \citep{bruckner23, gray23, tang23}, it is reasonable to assume that at the time of selection decision in honeybees, it is already known which queens from former years will still be around in the current year.

        \item While it is possible to derive (or at least approximate) the relevant formulas for OCS with survival probabilities $p_{I,t}$, we will not pursue this path here. The ultimate goal of this manuscript is to develop a working theory of OCS for honeybees for which $\mathbf s_t$ can safely be assumed as known at the time of selection decisions. Thus, we do not want to unnecessarily get too much diverted.

        Therefore, we will stick with the assumption that we know which individuals survive and thus with the formula from Theorem~\ref{thm::eusimple}:
            \[\mathbb E\left[\hat u_{\mathcal P_{t+1}}\right]=\frac{\left(N_{t+1}^{\mathcal N}\mathbf c_t+\mathbf s_t\right)^{\top}\hat{\mathbf u}_t}{N_{t+1}}.\]
    \end{enumerate}
\end{Rmk}

However, we had mentioned in Remark~\ref{rmk::notgood}\,\ref{item::notgood} that $\hat u_{\mathcal P_t}$ is not necessarily the best measurement for the genetic quality of a population. We illustrate this with a little example:

\begin{Ex}
    Assume a population $\mathcal P$, where $\mathcal P_t=\{A,B,C\}$ consists of three individuals. While $A\in\mathcal N_{t}$, $B$ and $C\in\mathcal S_t$ are old already, both being born at a time $t'\ll t$. In $\mathcal P_{t+1}$, we have again three individuals: While $A$ and $B$ survive to be in $\mathcal S_{t+1}$, $C$ dies but is replaced by a common child $D$ of $B$ and $C$.
    \begin{center}
        \begin{tikzpicture}
            \path (0,0) coordinate (Pts)
                node[left = 0mm of Pts] {generation $\mathcal P_{t'}$}
                    coordinate[below = 4cm of Pts] (Pt0)
                    node[left = 0mm of Pt0] {generation $\mathcal P_{t}$}
                    coordinate[below = 2.6cm of Pt0] (Pt1)
                    node[left = 0mm of Pt1] {generation $\mathcal P_{t+1}$}

                    coordinate[right = 15mm of Pts] (Ats)
                    node[right = 3cm of Ats, individual] (Bts) {}
                        node[right = 0cm of Bts] {$B$}
                    node[right = 3cm of Bts.center, individual] (Cts) {}
                        node[left = 0cm of Cts] {$C$}

                    node[right = 15mm of Pt0, individual] (At0) {}
                        node[right = 0cm of At0] {$A$}
                    node[right = 3cm of At0.center, individual] (Bt0) {}
                        node[right = 0cm of Bt0] {$B$}
                    node[right = 3cm of Bt0.center, individual] (Ct0) {}
                        node[left = 0cm of Ct0] {$C$}

                    node[right = 15mm of Pt1, individual] (At1) {}
                        node[right = 0cm of At1] {$A$}
                    node[right = 3cm of At1.center, individual] (Bt1) {}
                        node[left = 0cm of Bt1] {$B$}
                    node[right = 3cm of Bt1.center, individual] (Dt1) {}
                        node[right = 0cm of Dt1] {$D$};

                \draw[survival] (Bts) -- (Bt0)
                    node[gene pass description] {survives};
                \draw[survival] (Cts) -- (Ct0);
                \draw[survival] (At0) -- (At1);
                \draw[survival] (Bt0) -- (Bt1);
                \draw[death] (Ct0) -- ++(1.5cm,-1.5cm)
                    node[gene pass description] {dies};
                \draw[inheritance] (Bt0) -- (Dt1)
                    node[gene pass description] {inherits};
                \draw[inheritance] (Ct0) -- (Dt1);

                \begin{scope}[on background layer]
                    \path node [fit=(Bts) (Cts), ellipse, inner sep = 2mm, draw] (Nts) {}
                            node [above = 0mm of Nts] {$\mathcal N_{t'}$}
                        node [fit=(At0.east) (At0.west), ellipse, inner sep = 4mm, draw] (Nt0) {}
                            node [above = 0mm of Nt0] {$\mathcal N_{t}$}
                        node [fit=(Bt0) (Ct0), ellipse, inner sep = 2mm, draw] (St0) {}
                            node [above = 0mm of St0] {$\mathcal S_{t}$}
                        node [fit=(At1) (Bt1), ellipse, inner sep = 2mm, draw] (St1) {}
                            node [below = 0mm of St1] {$\mathcal S_{t+1}$}
                        node [fit=(Dt1.east) (Dt1.west), ellipse, inner sep = 4mm, draw] (Nt1) {}
                            node [below = 0mm of Nt1] {$\mathcal N_{t+1}$};
                \end{scope}
        \end{tikzpicture}
    \end{center}

    If we look at $\hat{u}_{\mathcal P_{t+1}}$, we calculate it as
        \[\hat{u}_{\mathcal P_{t+1}}=\frac{\hat{u}_{A}+\hat{u}_{B}+\hat{u}_{D}}{3},\]
    meaning that all three individuals $A$, $B$, and $D$ contribute equally towards $\hat{u}_{\mathcal P_{t+1}}$. But is this justified? Individual $B$ is already very old and likely to die soon. So it will no longer exercise a great influence on future generations. In contrast, individuals $A$ and $D$ may still live for many years and can have numerous offspring. So when we want to assess the total value for breeding in the population, we should use a weighted mean of the individuals' breeding values, the weights representing the number of offspring they are still expected to have in their remaining lifespan.
\end{Ex}

\begin{Not}
    We may attribute to each individual $I\in\mathcal P_t$ a weight $w_{I,t}$ with which it should contribute to the average breeding value, giving rise to a vector $\mathbf w_t\in\mathbb R^{\mathcal P_t}$. We then may consider
        \[\hat{u}_{\mathcal P_t,\mathbf w_t}:=\mathbf w_t^{\top}\hat{\mathbf u}_t\]
    as the relevant value to maximize over time.
\end{Not}

\begin{Rmk}
    The question arises, how these weights $w_{I,t}$ can be determined in practice. Typically, weights are chosen for the different $\text{age}\times\text{sex}$ classes, and all individuals with the same age and sex receive the same weight \citep{wellmann19key}. Older individuals receive lower weights than younger ones. While this assumption is practical, it is not entirely necessary and other weights may very well be given. For the purpose of maximizing the expected value of $\hat{u}_{\mathcal P_t,\mathbf w_t}$, it is, however, necessary that the weights for the individuals in $\mathcal P_{t+1}$ are already available at selection time $t^{\ast}<t+1$. Thus, all newly born individuals in $\mathcal N_{t+1}$ should receive the same weight $w_{\mathcal N_{t+1}}$ and for all individuals alive at time $t$ it should be known what their weights will be at time $t+1$ in case they survive. Evidently, non-surviving individuals should not play a role in the calculation of the average breeding value at time $t+1$, so $s_{I,t}=0$ should imply $w_{I,t}=0$.
\end{Rmk}

The expected weighted average breeding value can be calculated as follows:

\begin{Thm}\label{thm::euw}
    We have
        \[\mathbb E\left[\hat u_{\mathcal P_{t+1},\mathbf w_{t+1}}\right]=\left(w_{\mathcal N_{t+1}}N_{t+1}^{\mathcal N}\mathbf c_t+\mathbf w_{t+1}\right)^{\top}\hat{\mathbf u}_t.\]
\end{Thm}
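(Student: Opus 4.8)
The plan is to mirror the proof of Theorem~\ref{thm::eusimple}: decompose the weighted inner product over the two disjoint classes $\mathcal N_{t+1}$ and $\mathcal S_{t+1}$ of $\mathcal P_{t+1}$, treat each block separately, and then recombine by linearity of expectation. Concretely, I would first unfold the definition $\hat u_{\mathcal P_{t+1},\mathbf w_{t+1}}=\mathbf w_{t+1}^{\top}\hat{\mathbf u}_{t+1}=\sum_{I\in\mathcal P_{t+1}}w_{I,t+1}\hat u_{I,t+1}$ and split the sum according to $\mathcal P_{t+1}=\mathcal N_{t+1}\sqcup\mathcal S_{t+1}$.

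For the newly born block, every $I\in\mathcal N_{t+1}$ carries the common weight $w_{\mathcal N_{t+1}}$, so that block equals $w_{\mathcal N_{t+1}}\sum_{I\in\mathcal N_{t+1}}\hat u_{I,t+1}=w_{\mathcal N_{t+1}}N_{t+1}^{\mathcal N}\hat u_{\mathcal N_{t+1}}$. Taking expectations and invoking the identity $\mathbb E[\hat u_{\mathcal N_{t+1}}]=\mathbf c_t^{\top}\hat{\mathbf u}_t$ already established for the newly born individuals in the proof of Theorem~\ref{thm::eusimple}, this contributes exactly $w_{\mathcal N_{t+1}}N_{t+1}^{\mathcal N}\mathbf c_t^{\top}\hat{\mathbf u}_t$.

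For the survivor block, I would read $\mathbf w_{t+1}$ as a vector indexed by $\mathcal P_t$ (its natural habitat at selection time $t^{\ast}<t+1$), which is legitimate precisely because the stated convention $s_{I,t}=0\Rightarrow w_{I,t+1}=0$ makes the non-surviving indices contribute nothing; hence summing over $\mathcal S_{t+1}$ coincides with summing over all of $\mathcal P_t$. Using the same modeling convention as in Theorem~\ref{thm::eusimple}, namely that a survivor $I\in\mathcal S_{t+1}$ enters with its time-$t$ breeding value $\hat u_{I,t}$, this block is deterministic given $\mathbf s_t$ and equals $\mathbf w_{t+1}^{\top}\hat{\mathbf u}_t$, with both vectors now living in $\mathbb R^{\mathcal P_t}$. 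Adding the two contributions and factoring out $\hat{\mathbf u}_t$ yields $\left(w_{\mathcal N_{t+1}}N_{t+1}^{\mathcal N}\mathbf c_t+\mathbf w_{t+1}\right)^{\top}\hat{\mathbf u}_t$, as claimed.

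The step requiring the most care is purely a bookkeeping one: silently reinterpreting $\mathbf w_{t+1}$ as an element of $\mathbb R^{\mathcal P_t}$ rather than $\mathbb R^{\mathcal P_{t+1}}$. One must verify that extending the survivor weights by zero over the dead individuals is consistent with the convention $s_{I,t}=0\Rightarrow w_{I,t+1}=0$, and that the breeding values of survivors may legitimately be carried over from time $t$ (so that no additional expectation enters the survivor block). Once this index-set identification is pinned down, the remainder is just linearity of expectation together with the already-proven newly-born identity, so I do not anticipate any genuine analytic difficulty.
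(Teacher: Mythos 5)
Your proof is correct and takes essentially the same approach the paper intends: the paper's own proof is just the one-line remark that the claim ``follows in complete analogy to Theorem~\ref{thm::eusimple}'', and your decomposition over $\mathcal P_{t+1}=\mathcal N_{t+1}\sqcup\mathcal S_{t+1}$, with the newly-born block handled via $\mathbb E\left[\hat u_{\mathcal N_{t+1}}\right]=\mathbf c_t^{\top}\hat{\mathbf u}_t$ and the survivor block read off deterministically, is precisely that analogy spelled out. Your explicit treatment of the index-set identification --- extending $\mathbf w_{t+1}$ by zero to $\mathbb R^{\mathcal P_t}$, justified by the convention that non-survivors receive weight zero --- is exactly the bookkeeping the paper leaves implicit, and it is handled correctly.
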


\begin{proof}
    This follows in complete analogy to Theorem~\ref{thm::eusimple}.
\end{proof}

\begin{Rmk}
    Ultimately, it is not really important whether one works with $\hat{u}_{\mathcal P_t}$ or with $\hat{u}_{\mathcal P_t,\mathbf w_t}$. Each time, the goal will be to maximize the expected (weighted) average breeding value in $\mathcal P_{t+1}$ via an optimal choice of the vector $\mathbf c_t$, which in both cases is achieved by maximizing the product $\mathbf c_t^{\top}\hat{\mathbf u}_t$. This is also the case if one works with survival probabilities $p_{I,t}$.
\end{Rmk}

We turn to the question of average kinship. Also here, one may think of weighted averages but we will consider simply the value $k_{\mathcal P_t,\mathcal P_t}=\frac{1}{N_t^2}\mathbf 1_t^{\top}\mathbf K_t\mathbf 1_t$ which we will want to restrict.
Thus, the main remaining question is how to predict $k_{\mathcal P_{t+1}}$ in dependence of the vectors $\mathbf c_t$ and $\mathbf s_t$ of contributions from individuals in $\mathcal P_t$.

\begin{Not}
    With suitable ordering of the individuals in $\mathcal P_{t+1}$, the kinship matrix $\mathbf K_{t+1}$ can be decomposed into four blocks
        \[\mathbf K_{t+1}=\begin{pmatrix}\mathbf K_{t+1}^{\mathcal N\mathcal N} & \mathbf K_{t+1}^{\mathcal N\mathcal S}\\
                                         \mathbf K_{t+1}^{\mathcal S\mathcal N} & \mathbf K_{t+1}^{\mathcal S\mathcal S}\end{pmatrix},\]
    where $\mathbf K_{t+1}^{\mathcal N\mathcal N}$ and $\mathbf K_{t+1}^{\mathcal S\mathcal S}$ denote the kinships between newly born and between surviving individuals, respectively, and $\mathbf K_{t+1}^{\mathcal N\mathcal S}=\left(\mathbf K_{t+1}^{\mathcal S\mathcal N}\right)^{\top}$ denote the kinships between newly born and surviving individuals.
\end{Not}

\begin{Rmk}
    By that, we have
    \begin{equation}\label{eq::olk}
        k_{\mathcal P_{t+1},\mathcal P_{t+1}}=\frac{1}{N_{t+1}^2}\left(\left(N_{t+1}^{\mathcal N}\right)^2k_{\mathcal N_{t+1},\mathcal N_{t+1}}
            + 2N_{t+1}^{\mathcal N}N_{t+1}^{\mathcal S}k_{\mathcal N_{t+1},\mathcal S_{t+1}}+\left(N_{t+1}^{\mathcal S}\right)^2k_{\mathcal S_{t+1},\mathcal S_{t+1}}\right).
    \end{equation}
\end{Rmk}

We will proceed by separate calculations of the terms $k_{\mathcal N_{t+1},\mathcal N_{t+1}}$, $k_{\mathcal N_{t+1},\mathcal S_{t+1}}$, and $k_{\mathcal S_{t+1},\mathcal S_{t+1}}$.

\begin{Lem}\label{lem::olk}
    We have
    \begin{align}
        k_{\mathcal N_{t+1},\mathcal N_{t+1}} &= \mathbf c_t^{\top}\mathbf K_t\mathbf c_t-\frac{\mathbf c_{t}^{\top}\mathbf f_t}{4N_{t+1}^{\mathcal N}}+\frac{1}{4N_{t+1}^{\mathcal N}},
            \label{eq::olnn}\\
        k_{\mathcal N_{t+1},\mathcal S_{t+1}} &= \frac{1}{N_{t+1}^{\mathcal S}}\mathbf c_t^{\top}\mathbf K_t\mathbf s_t,
            \label{eq::olns}\\
        k_{\mathcal S_{t+1},\mathcal S_{t+1}} &= \frac1{\left(N_{t+1}^{\mathcal S}\right)^2}\mathbf s_t^{\top}\mathbf K_t\mathbf s_t.
            \label{eq::olss}
    \end{align}
\end{Lem}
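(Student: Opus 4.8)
The plan is to prove the three identities independently, exploiting a structural asymmetry between the two sub-populations of $\mathcal P_{t+1}$. The newly born individuals $\mathcal N_{t+1}$ are produced from the parents in $\mathcal P_t$ through the contribution vector $\mathbf c_t$ in exactly the manner of the discrete-generation model (Notation~\ref{not::nsclassic}\,\ref{item::nsclassicc}), so results about offspring carry over directly. By contrast, the survivors $\mathcal S_{t+1}$ are literally the same individuals as the surviving subset of $\mathcal P_t$; since kinship between two fixed individuals is a time-invariant genealogical quantity, all kinships involving $\mathcal S_{t+1}$ are already encoded in $\mathbf K_t$, selected by the $0/1$ survival indicator $\mathbf s_t$.

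For Equation~\eqref{eq::olnn}, I would note that the only change relative to the discrete case is that the offspring pool now assembles $2N_{t+1}^{\mathcal N}$ alleles rather than $2N_{t+1}$. The allele-drawing derivation of Lemma~\ref{lemma::avkin} then applies verbatim with $N_{t+1}^{\mathcal N}$ in place of $N_{t+1}$, reproducing both the leading term $\mathbf c_t^{\top}\mathbf K_t\mathbf c_t$ and the correction term arising from the possibility of sampling the same physical offspring allele twice. For Equation~\eqref{eq::olss}, I would apply the averaging lemma stating that the kinship of a group equals the mean of pairwise kinships: since $\mathbf s_t$ indicates the survivors within $\mathcal P_t$, one has $k_{\mathcal S_{t+1},\mathcal S_{t+1}}=\frac{1}{(N_{t+1}^{\mathcal S})^2}\sum_{I,J\in\mathcal S_{t+1}}k_{I,J}=\frac{1}{(N_{t+1}^{\mathcal S})^2}\mathbf s_t^{\top}\mathbf K_t\mathbf s_t$, where the diagonal contributions $s_{I,t}k_{I,I}$ are precisely the ``with replacement'' terms built into Definition~\ref{def::grpkin}.

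The delicate case is Equation~\eqref{eq::olns}, which I would establish by the two-allele experiment of Definition~\ref{def::grpkin}: draw one allele uniformly from the $2N_{t+1}^{\mathcal N}$ alleles of $\mathcal N_{t+1}$ and one from the $2N_{t+1}^{\mathcal S}$ alleles of $\mathcal S_{t+1}$. The first stems from parent $I\in\mathcal P_t$ with probability $c_{I,t}$, the second belongs to survivor $J$ with probability $\frac{s_{J,t}}{N_{t+1}^{\mathcal S}}$, and conditional on these events the ibd probability is $k_{I,J}$; summing over $I,J$ gives $\frac{1}{N_{t+1}^{\mathcal S}}\mathbf c_t^{\top}\mathbf K_t\mathbf s_t$. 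The main obstacle, and the point requiring the most care, is to verify that \emph{no} correction term of the kind appearing in Equation~\eqref{eq::olnn} enters here. The reason is twofold: the two alleles are sampled from disjoint physical pools (one from the newborns, one from the survivors), so the same physical allele can never be drawn twice; and even when the parent $I$ itself survives (the case $I=J$), the newborn's allele is a fresh Mendelian copy drawn independently of the survivor's two alleles, so the conditional ibd probability is exactly $k_{I,I}=\frac12+\frac{f_I}{2}$ by the with-replacement computation of Example~\ref{ex::fformula}\,\ref{item::lastly}. Thus $k_{I,J}$ is the correct conditional probability for every pair, and the clean formula follows without modification.
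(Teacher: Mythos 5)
Your proposal is correct and follows essentially the same route as the paper's proof: Equation~\eqref{eq::olnn} by transferring the discrete-generation argument of Lemma~\ref{lemma::avkin} with $N_{t+1}^{\mathcal N}$ in place of $N_{t+1}$, Equation~\eqref{eq::olns} by the same two-allele drawing experiment with probabilities $c_{I,t}$ and $\frac{s_{J,t}}{N_{t+1}^{\mathcal S}}$, and Equation~\eqref{eq::olss} by time-invariance of kinships among survivors. Your explicit verification that the case $I=J$ requires no correction term — because the newborn allele is a fresh Mendelian copy independent of the uniformly drawn survivor allele, so the conditional ibd probability is exactly $k_{I,I}=\frac12+\frac{f_I}2$ — is a correct and welcome elaboration of a point the paper handles only implicitly with its parenthetical ``(possibly identical)''.
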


\begin{proof}
    \begin{enumerate}[label = (\roman*)]
        \item Regarding Equation~\ref{eq::olnn}, we are precisely in the same situation as in the case of non-overlapping generations: a group of older individuals generates a cohort of new individuals. Thus, the assertion follows with the same arguments as in Lemma~\ref{lemma::avkin}.

        \item We turn to the average kinship $k_{\mathcal N_{t+1},\mathcal S_{t+1}}$ between newly born individuals and survivors in generation $\mathcal P_{t+1}$. We derive $k_{\mathcal N_{t+1},\mathcal S_{t+1}}$ via an allele drawing process. If we fix a locus, generation $\mathcal P_{t+1}$ assembles a total of $2N_{t+1}$ alleles, of which $2N_{t+1}^{\mathcal N}$ are in $\mathcal N_{t+1}$ and $2N_{t+1}^{\mathcal S}$ are in $\mathcal S_{t+1}$. The desired value $k_{\mathcal N_{t+1},\mathcal S_{t+1}}$ is the probability to end up with ibd alleles when drawing one allele from $\mathcal N_{t+1}$ and one allele from $\mathcal S_{t+1}$.

        Let $A^{\mathcal N}$ be the allele drawn from $\mathcal N_{t+1}$ and $A^{\mathcal S}$ be the allele from $\mathcal S_{t+1}$. Because the population is closed, (copies of) both alleles were already present in $\mathcal P_t$. From there, $A^{\mathcal N}$ was inherited and $A^{\mathcal S}$ simply stayed in the same individual. We fix two individuals $I,J\in\mathcal P_t$ (possibly identical). The probability that $A^{\mathcal N}$ was inherited from $I$ is $c_{I,t}$ because this is the fraction of the $2N_{t+1}^{\mathcal N}$ newly born alleles that was inherited from $I$. Next, we look whether $A^{\mathcal S}$ may have come from $J$. This is only possible, if $J$ survived to be in $\mathcal P_{t+1}$, i.\,e. if $s_{J,t}=1$. In that case, $J$ is responsible for two of the $2N_{t+1}^{\mathcal S}$ alleles of the survivor part of the population. So, in total, the probability that $A^{\mathcal S}$ came from $J$ is $\frac{s_{J,t}}{N_{t+1}^{\mathcal S}}$. If $A^{\mathcal N}$ and $A^{\mathcal S}$ did indeed come from $I$ and $J$, respectively, they are just two randomly drawn alleles from these individuals, so their probability to be ibd is $k_{I,J}$. Thus, the probability that $A^{\mathcal N}$ and $A^{\mathcal S}$ came from $I$ and $J$ and are ibd is $\frac{c_{I,t}k_{I,J}s_{J,t}}{N_{t+1}^{\mathcal S}}$. Summing over all possible choices for $I$ and $J$ yields
            \[k_{\mathcal N_{t+1},\mathcal S_{t+1}}
                = \sum_{I,J\in\mathcal P_t}\frac{c_{I,t}k_{I,J}s_{J,t}}{N_{t+1}^{\mathcal S}}
                = \frac{\mathbf c_t^{\top}\mathbf K_t\mathbf s_t}{N_{t+1}^{\mathcal S}}.\]

        \item \label{item::olkiii} We come to $k_{\mathcal S_{t+1},\mathcal S_{t+1}}$. Kinships between surviving individuals do not change over the years. Therefore, $k_{\mathcal S_{t+1}\mathcal S_{t+1}}$, the average of kinships of survivors in $\mathcal P_{t+1}$, equals the average of kinships between those individuals in $\mathcal P_t$ that survive to the next year,
        i.\,e. $\frac1{\left(N_{t+1}^{\mathcal S}\right)^2}\mathbf s_t^{\top}\mathbf K_t\mathbf s_t$.
    \end{enumerate}
\end{proof}

In total, this gives us
\begin{Thm}
    \[k_{\mathcal P_{t+1},\mathcal P_{t+1}}=\frac{\left(N_{t+1}^{\mathcal N}\right)^2}{N_{t+1}^2}\mathbf c_t^{\top}\mathbf K_t\mathbf c_t
        + \frac{N_{t+1}^{\mathcal N}}{4N_{t+1}^2}\mathbf c_t^{\top}(8\mathbf K_t\mathbf s_t-\mathbf f_t)
        + \frac{\mathbf s_t^{\top}\mathbf K_t\mathbf s_t+N_{t+1}^{\mathcal N}}{4N_{t+1}^2}\]
\end{Thm}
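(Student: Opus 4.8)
The plan is almost entirely computational: the assertion is obtained by substituting the three block kinships of Lemma~\ref{lem::olk} into the decomposition of Equation~\ref{eq::olk} and collecting like terms. No further genetic modelling is required, because Equation~\ref{eq::olk} already records the split of $\mathcal P_{t+1}$ into $\mathcal N_{t+1}$ and $\mathcal S_{t+1}$ (weighted by the squared sub-population sizes), while Lemma~\ref{lem::olk} expresses each of $k_{\mathcal N_{t+1},\mathcal N_{t+1}}$, $k_{\mathcal N_{t+1},\mathcal S_{t+1}}$, and $k_{\mathcal S_{t+1},\mathcal S_{t+1}}$ through $\mathbf c_t$, $\mathbf s_t$, $\mathbf K_t$, and $\mathbf f_t$.

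Concretely, I would first multiply each block kinship by its prefactor from Equation~\ref{eq::olk} and watch the powers of $N_{t+1}^{\mathcal N}$ and $N_{t+1}^{\mathcal S}$ cancel. In the newly-born block, the factor $\left(N_{t+1}^{\mathcal N}\right)^2$ meets the $\tfrac1{4N_{t+1}^{\mathcal N}}$ denominators of Equation~\ref{eq::olnn}, leaving $\left(N_{t+1}^{\mathcal N}\right)^2\mathbf c_t^\top\mathbf K_t\mathbf c_t-\tfrac14 N_{t+1}^{\mathcal N}\mathbf c_t^\top\mathbf f_t+\tfrac14 N_{t+1}^{\mathcal N}$. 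The cross term $2N_{t+1}^{\mathcal N}N_{t+1}^{\mathcal S}k_{\mathcal N_{t+1},\mathcal S_{t+1}}$ collapses to $2N_{t+1}^{\mathcal N}\mathbf c_t^\top\mathbf K_t\mathbf s_t$ once the single factor $N_{t+1}^{\mathcal S}$ cancels against Equation~\ref{eq::olns}. Dividing the sum by $N_{t+1}^2$ and regrouping then turns the $\mathbf c_t^\top\mathbf K_t\mathbf s_t$ and $\mathbf c_t^\top\mathbf f_t$ contributions into the single factor $\tfrac{N_{t+1}^{\mathcal N}}{4N_{t+1}^2}\mathbf c_t^\top\left(8\mathbf K_t\mathbf s_t-\mathbf f_t\right)$, as displayed. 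As sanity checks I would verify the two degenerate limits: with $\mathbf s_t=\mathbf 0$ and $N_{t+1}^{\mathcal S}=0$ the formula must reduce to Equation~\ref{eq::k} (equivalently Lemma~\ref{lemma::avkin}), and with $N_{t+1}^{\mathcal N}=0$, $\mathbf s_t=\mathbf 1_t$ it must reduce to $k_{\mathcal P_t,\mathcal P_t}$.

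There is no conceptual obstacle; the one step that needs genuine care—and the one I would double-check against the displayed statement—is the survivor self-kinship term. Because the weight $\left(N_{t+1}^{\mathcal S}\right)^2$ in Equation~\ref{eq::olk} cancels the normalization $\tfrac1{\left(N_{t+1}^{\mathcal S}\right)^2}$ of Equation~\ref{eq::olss} completely, $\mathbf s_t^\top\mathbf K_t\mathbf s_t$ survives division by $N_{t+1}^2$ with weight $\tfrac1{N_{t+1}^2}$ rather than $\tfrac1{4N_{t+1}^2}$. The degenerate case $N_{t+1}^{\mathcal N}=0$, $\mathbf s_t=\mathbf 1_t$ confirms this, since it forces $k_{\mathcal P_{t+1},\mathcal P_{t+1}}=\tfrac1{N_{t+1}^2}\mathbf 1_t^\top\mathbf K_t\mathbf 1_t=k_{\mathcal P_t,\mathcal P_t}$, pinning the coefficient of $\mathbf s_t^\top\mathbf K_t\mathbf s_t$ to $\tfrac1{N_{t+1}^2}$. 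The final additive tail should therefore read $\tfrac{\mathbf s_t^\top\mathbf K_t\mathbf s_t}{N_{t+1}^2}+\tfrac{N_{t+1}^{\mathcal N}}{4N_{t+1}^2}$, i.e. with $4\mathbf s_t^\top\mathbf K_t\mathbf s_t$ in the numerator of the displayed grouping; all other coefficients match the statement as written.
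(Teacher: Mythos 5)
Your route is exactly the paper's: its entire proof consists of the sentence ``this follows by inserting the results of Lemma~\ref{lem::olk} into Equation~\ref{eq::olk} and simplifying,'' which is precisely the computation you carry out. The substantive content of your proposal is the flag you raise on the survivor self-kinship term, and you are right: the theorem as printed is wrong there. Substituting Equations~\ref{eq::olnn}--\ref{eq::olss} into Equation~\ref{eq::olk} gives
\begin{align*}
k_{\mathcal P_{t+1},\mathcal P_{t+1}}
&=\frac{1}{N_{t+1}^{2}}\left(\left(N_{t+1}^{\mathcal N}\right)^{2}\mathbf c_t^{\top}\mathbf K_t\mathbf c_t
-\frac{N_{t+1}^{\mathcal N}}{4}\mathbf c_t^{\top}\mathbf f_t
+\frac{N_{t+1}^{\mathcal N}}{4}
+2N_{t+1}^{\mathcal N}\,\mathbf c_t^{\top}\mathbf K_t\mathbf s_t
+\mathbf s_t^{\top}\mathbf K_t\mathbf s_t\right),
\end{align*}
since, as you note, the weight $\left(N_{t+1}^{\mathcal S}\right)^{2}$ in Equation~\ref{eq::olk} cancels the normalization of Equation~\ref{eq::olss} completely. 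This agrees with the displayed statement in every term except that $\mathbf s_t^{\top}\mathbf K_t\mathbf s_t$ must carry weight $1/N_{t+1}^{2}$, not $1/\bigl(4N_{t+1}^{2}\bigr)$. Your two degenerate checks settle the matter: with $\mathbf s_t=\mathbf 0$ the formula correctly reduces to Equation~\ref{eq::k}, while with $N_{t+1}^{\mathcal N}=0$ and $\mathbf s_t=\mathbf 1_t$ it must return $k_{\mathcal P_t,\mathcal P_t}=\frac1{N_t^2}\mathbf 1_t^{\top}\mathbf K_t\mathbf 1_t$, which the printed version misses by a factor of four. The tail should therefore read $\frac{\mathbf s_t^{\top}\mathbf K_t\mathbf s_t}{N_{t+1}^{2}}+\frac{N_{t+1}^{\mathcal N}}{4N_{t+1}^{2}}$, equivalently $\frac{4\,\mathbf s_t^{\top}\mathbf K_t\mathbf s_t+N_{t+1}^{\mathcal N}}{4N_{t+1}^{2}}$ in the paper's grouping, exactly as you conclude.

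Two remarks on scope. The same expression is copied verbatim into the kinship constraint of the OCS task for overlapping generations that immediately follows the theorem, so your correction propagates there as well. It does not, however, infect the honeybee analogues: in Theorems~\ref{thm::burner}, \ref{thm::burnerms}, and~\ref{thm::burnermix} the prefactor $\frac1{4N_{t+1}^2}$ on $\mathbf s_t^{\top}\bigl(\mathbf K_t^{\mathcal Q\mathcal Q}+2\mathbf K_t^{\mathcal Q\mathcal R}+\mathbf K_t^{\mathcal R\mathcal R}\bigr)\mathbf s_t$ arises legitimately from the weights $\frac14,\frac12,\frac14$ of Equation~\ref{eq::kng} applied to survivor terms that each carry weight $\frac1{N_{t+1}^2}$, so no factor of four is missing in the honeybee case.
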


\begin{proof}
    This follows by inserting the results of Lemma~\ref{lem::olk} into Equation~\ref{eq::olk} and simplifying.
\end{proof}

We may now formulate the task of OCS with overlapping generations.

\begin{Task}
    \begin{enumerate}[label = (\roman*)]
        \item\label{item::casei} Given a generation $\mathcal P_t$ of monoecious individuals and
        \begin{itemize}
            \item a vector $\hat{\mathbf u}_t\in\mathbb R^{\mathcal P_t}$ of estimated breeding values,

            \item a survival vector $\mathbf s_t\in\{0,1\}^{\mathcal P_t}$,

            \item a symmetric and positive definite kinship matrix $\mathbf K_t\in\mathbb R^{\mathcal P_t\times\mathcal P_t}$,

            \item the required number of newly created individuals of the next generation, $N_{t+1}^{\mathcal N}$,

            \item and a maximum acceptable kinship level $k_{t+1}^{\ast}$,
        \end{itemize}
        let $N_{t+1}:=N_{t+1}^{\mathcal N}+\mathbf 1_t^{\top}\mathbf s_t$ and maximize the function
            \[\mathbb E\left[\hat u_{\mathcal P_{t+1}}\right]:\mathbb R_{\geq0}^{\mathcal P_t}\to\mathbb R,\quad
            \mathbf c_{t}\mapsto \frac{\left(N_{t+1}^{\mathcal N}\mathbf c_t+\mathbf s_t\right)^{\top}\hat{\mathbf u}_t}{N_{t+1}}\]
        under the constraints
            \[\mathbf 1_t^{\top}\mathbf c_t=1\]
        and
            \[\frac{\left(N_{t+1}^{\mathcal N}\right)^2}{N_{t+1}^2}\mathbf c_t^{\top}\mathbf K_t\mathbf c_t
        + \frac{N_{t+1}^{\mathcal N}}{4N_{t+1}^2}\mathbf c_t^{\top}(8\mathbf K_t\mathbf s_t-\mathbf f_t)
        + \frac{\mathbf s_t^{\top}\mathbf K_t\mathbf s_t+N_{t+1}^{\mathcal N}}{4N_{t+1}^2} \leq k_{t+1}^{\ast}\]

        \item Given a sex-divided generation $\mathcal P_t=\mathcal F_t\sqcup\mathcal M_t$ of diecious individuals and the other values as in \ref{item::casei} the task remains the same, only the condition on $\mathbf c_t\cong\mathbf c_t^{\mathcal F}\oplus\mathbf c_t^{\mathcal M}$ has to be changed to
            \[\left(\mathbf 1_t^{\mathcal F}\right)^{\top}\mathbf c_t^{\mathcal F}=\frac12\]
        and
            \[\left(\mathbf 1_t^{\mathcal M}\right)^{\top}\mathbf c_t^{\mathcal M}=\frac12.\]
    \end{enumerate}
\end{Task}

This time, we abstain from writing everything in terms of female and male components.

\begin{Rmk}
    Because $N_{t+1}^{\mathcal N}$, $N_{t+1}$, $\mathbf s_t$, and $\hat{\mathbf u}_t$ are known constants, the function
        \[\mathbb E\left[\hat u_{\mathcal P_{t+1}}\right]:\mathbb R_{\geq0}^{\mathcal P_t}\to\mathbb R,\quad
            \mathbf c_{t}\mapsto \frac{\left(N_{t+1}^{\mathcal N}\mathbf c_t+\mathbf s_t\right)^{\top}\hat{\mathbf u}_t}{N_{t+1}}\]
    is maximized precisely when the simpler function
        \[\mathbf c_t\mapsto\mathbf c_t^{\top}\hat{\mathbf u}_t\]
    is maximized.
\end{Rmk}

\section{Honeybee peculiarities}\label{sec::hbpec}
\subsection{Reproductive biology of honeybees}

In most farm animals, performances are attributed to individuals. Each individual dairy cow has a lactation yield, each individual piglet has a weaning weight. In honeybees, however, phenotypes are generally only measured on the level of colonies. Examples for important breeding traits in honeybees are seasonal honey yield, gentleness, or resistance against the parasite \emph{Varroa destructor} \citep{buchler24}. It is generally not recorded how much a single bee contributed towards the honey yield or if an individual worker was aggressive. Instead, these traits are attributed to the colony as a whole. Honeybee colonies consist of a single queen and several thousands of worker bees, which are daughters of the queen. This means that (at least in the ideal type) all workers of a colony are sisters.
Worker bees are generally infertile, making the queen the only egg-laying individual of the colony.

At first glance, it may seem reasonable to assume that it is the worker bees that are mainly responsible for economically interesting traits. Workers collect the nectar, workers sting (or exhibit gentle behavior) and workers perform defense strategies against parasites. However, this assumption is too short-sighted as several studies have shown a strong influence of the queen on many traits \citep{bienefeld90, brascamp16, hoppe20}. However, the queens and workers contribute in different ways. For example, by her egg-laying frequency, the queen can influence the number of workers in the colony and thus affect honey yield because more workers can collect more nectar. Also, by pheromone release, the queen can orchestrate worker behavior and thereby also have an influence on behavior traits like gentleness \citep{gervan05}.

Male offspring of a queen are called drones. While drones do not play a (known) role in the performance regarding breeding traits, their purpose lies in reproduction. Shortly after hatching, a new queen leaves the hive to perform a nuptial flight, during which she mates with multiple drones from other colonies of the broader vicinity. The queen stores the drones' semen in her spermatheca and uses it for the remainder of her life (typically a few years) to fertilize eggs. Fertilized eggs develop into female bees, i.\,e. mainly workers. If a female larva is fed with a specific diet, it can also develop into a daughter queen. This means that genetically, there is no difference between queens and workers. Drones, however, develop from unfertilized eggs and are therefore haploid, in contrast to the diploid workers and queens.

The mating flights of queens pose a tough challenge to honeybee breeders. Typically, it is not observable, where the specific drones a queen mates with come from. By this, no paternal pedigree information is available and there is no guarantee that the mating partners provide desirable genetic properties. In many instances, honeybee breeding therefore works only with selection on the maternal side \citep{andonov19,pernal12, bigio14the}. However, there are strategies to gain at least a certain degree of control over the paternal inheritance path. The two most common of these strategies are isolated mating stations and instrumental insemination. Computer simulation studies have shown that both these strategies lead to much greater genetic response than breeding strategies that rely on free mating \citep{plate19the, du21a, du23}.

Isolated mating stations are established in geographically secluded areas, where one can (more or less) guarantee the absence of (unwanted) honeybee hives. There, a number of colonies with favorable genes is placed for drone production. The queens heading these colonies are often called DPQs, short for \emph{drone producing queens}. When a virgin queen is brought to such a place for her nuptial flight, the only drones she can mate with are those produced by the DPQs. For a daughter of a thus mated queen, it can be concluded that the father drones comes from one of the DPQs, while the particular origin remains unknown. It is, however, a common practice to let all DPQs of a mating station be sisters, i.\,e. daughters of a single queen. By doing so, all drones of the mating station share a common grand-dam, which for historic reasons is called the 4a-queen of the mating station \citep{uzunov22initiation, druml23}. 4a-queens are usually selected with great rigor, to ensure that all queens mating on a mating station will be equipped with excellent genetic material.

Instrumental insemination provides the breeder with even greater control over the fertilization process, because the drones can be chosen individually. It is possible to use a single drone for the insemination of a queen \citep{harbo99the}. Then, for all daughters of a thus inseminated queen it is known from which specific drone they inherited their paternal genes. However, the amount of sperm produced by a single drone is insufficient to let the queen develop full-sized colonies and typically single drone inseminated colonies do not survive their first winter. Instead, a strategy that comes with fewer problems is to inseminate queens with several drones from the same colony \citep{du24the}. By that, for an offspring queen it is still unclear who her father drone is but the dam of the drones is known (and not just the grand-dam as in the mating station case).

Simulation studies have shown that breeding schemes with instrumental insemination often generate higher genetic gain than breeding schemes with isolated mating stations. However, they also come with an increased risk of inbreeding \citep{du23}.

\subsection{Quantitative genetics}

These peculiarities in the reproductive biology of honeybees require a number of adaptations in the general quantitative genetic theory of breeding. These adaptations will be explained in the following.

\subsubsection{Breeding values}

The breeding value of an individual in the infinitesimal model is usually defined as the sum of infinitely many infinitesimally small allele effects \citep{lynch98}. However, because in honeybees, most traits are commonly affected by the queen and the worker group in different ways, the same allele may have different effects on the trait depending on whether it is expressed in a queen or in a worker. Thus, each allele is equipped with two allele effects -- a queen effect and a worker effect -- and consequently each individual bee $B$ has two (true) breeding values (per trait), namely the queen effect breeding value $u_B^{\text{queen eff.}}$ and the worker effect breeding value $u_B^{\text{worker eff.}}$ \citep{bienefeld90}.

\begin{Def}
    The \emph{total} (true) breeding value of a bee $B$ is defined as
        \[u_B:=u_B^{\text{queen eff.}}+u_B^{\text{worker eff.}}.\]
\end{Def}

\begin{Rmk}
    For the remainder of this text, the individual queen effect and worker effect breeding values will not play a role and all breeding values are meant to be total breeding values.
\end{Rmk}

For groups of bees of the same ploidy (so no mixed groups with queens and drones), it makes sense to define the breeding value of a group, very much like we have defined estimated breeding values of groups of individuals earlier.

\begin{Def}\label{def::mathcalb}
    Let $\mathcal B$ be an all-female or all-male finite group of bees. Then the breeding value of $\mathcal B$ is defined as
    \begin{equation}
        u_{\mathcal B} := \frac1{|\mathcal B|}\sum_{B\in\mathcal B}u_{B}.
    \end{equation}
\end{Def}

\begin{Rmk}
    In this way, it is possible to define the breeding value of a worker group $\mathcal W$ or of the group $\mathcal M$ of DPQs on a mating station. In the literature \citep{brascamp14methods, du21shortterm}, such breeding values of groups are often equipped with a bar to indicate that they are calculated as averages (i.\,e. $\bar u_{\mathcal B}$ instead of $u_{\mathcal B}$). We abstain from this practice to yield simpler notation. Instead, we remind the reader that a (lower) index in calligraphic font usually means that averages are taken (cf. Remark~\ref{rmk::nobars}\,\ref{item::nobars}). Some notation using a bar will occur much later in this manuscript (Notation~\ref{not::bar}).
\end{Rmk}

Also like earlier, we may extend the notion of breeding values to finite sets of groups of bees:

\begin{Def} \label{def::mathfrakb}
    Let $\mathfrak B=\left\{\mathcal B_1,...,\mathcal B_{|\mathfrak B|}\right\}$ be a finite set of groups of bees so that all bees in $\tilde{\mathcal B}:=\bigcup_{\mathcal B\in\mathfrak B}\mathcal B$ have the same sex. Then the breeding value of $\mathfrak B$ is defined as
    \begin{equation}
        u_{\mathfrak B}=\frac1{|\mathfrak B|}\sum_{B\in\mathfrak B}u_{\mathcal B}.
    \end{equation}
\end{Def}

\begin{Rmk}
    For example, it is possible to interpret a colony, consisting of a queen $Q$ and her worker group $\mathcal W$, as the two-elemented set
        \[\mathfrak C=\{\{Q\},\mathcal W\}.\]
    With this definition, we have
        \[u_{\mathfrak C}=\frac{u_Q+u_{\mathcal W}}{2}.\]
\end{Rmk}

\begin{Rmk}\label{rmk::inherit}
    We should note an important difference between queens and drones. Whenever a queen passes on her own genes to an offspring, she passes half of her alleles. Thus, the offspring is expected to inherit half of the queen's breeding value, disturbed by some Mendelian sampling with expectation zero. If, however, a drone passes on his genes, he will give all of his alleles, so the passed breeding value is precisely the drone's own breeding value, without any Mendelian sampling.
\end{Rmk}

The following rules of inheritance for honeybee breeding values can essentially also be found in \citep{du21shortterm} and \citep{kistler21}. They are straightforward consequences of the general rules of the infinitesimal model and Remark~\ref{rmk::inherit}

\begin{Lem}\label{lem::bvinherits}
    \begin{enumerate}[label = (\roman*)]
        \item \label{item::rrr} If a drone $D$ is the son of a queen $Q$, then
            \[u_D=\frac12u_Q+\phi_{Q,D},\]
        where $\phi_{Q,D}$ is a random normal variable with $\mathbb E\left[\phi_Q\right]=0$. In particular,
        \begin{equation}\label{eq::fqi}
            \mathbb E\left[u_D|u_Q\right]=\frac12u_Q.
        \end{equation}

        \item \label{item::qinherit} If a queen $R$ is the daughter of a queen $Q$ and a drone $D$, its breeding value is
            \[u_R=\frac12u_Q+u_D+\phi_{Q,R}\]
        with $\phi_{Q,R}$ as $\phi_{Q,D}$ in~\ref{item::rrr} and
        \begin{equation}
            \mathbb E\left[u_R|u_Q,u_D\right]=\frac12u_Q+u_D.
        \end{equation}
    \end{enumerate}
\end{Lem}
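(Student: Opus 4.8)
The plan is to read off both statements directly from the additive infinitesimal model of \citet{fisher18}, combined with the asymmetry between maternal and paternal transmission already isolated in Remark~\ref{rmk::inherit}. The guiding principle is that a total breeding value is additive over the alleles an individual carries, so the breeding value of an offspring splits into a contribution from the alleles received from its mother and a contribution from the alleles received from its father; each contribution can then be analyzed on its own, and the two are simply added.

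For part~\ref{item::rrr} I would note that a drone $D$ arises from an unfertilized egg of $Q$, so $Q$ is his only parent and $D$ is haploid. At each locus $Q$ transmits one of her two alleles, selected by Mendelian sampling. By the infinitesimal model the expectation of the transmitted breeding value is half of $Q$'s breeding value, which is exactly the situation ``a queen passes on her genes'' described in Remark~\ref{rmk::inherit}. Collecting the random deviation from this expectation into $\phi_{Q,D}$, which is normal with mean $0$ in the infinitesimal limit, gives $u_D=\frac12 u_Q+\phi_{Q,D}$; taking the conditional expectation and using $\mathbb E[\phi_{Q,D}]=0$ then yields Equation~\ref{eq::fqi}.

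For part~\ref{item::qinherit} the queen $R$ is diploid, receiving one allele set from her mother $Q$ and one from her father $D$, and I would treat the two contributions in turn. The maternal contribution is exactly as in part~\ref{item::rrr}: $Q$ transmits half of her alleles, contributing $\frac12 u_Q$ in expectation, with the Mendelian sampling deviation now written $\phi_{Q,R}$. The paternal contribution is where the honeybee peculiarity enters: since $D$ is haploid he carries only one allele per locus and transmits all of them deterministically, so by Remark~\ref{rmk::inherit} his contribution is exactly $u_D$, with no Mendelian sampling term. Summing the two contributions gives $u_R=\frac12 u_Q+u_D+\phi_{Q,R}$, and conditioning on $u_Q$ and $u_D$ removes only the mean-zero term $\phi_{Q,R}$.

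The point to get right, rather than a genuine obstacle, is the absence of both the factor $\frac12$ \emph{and} the Mendelian sampling in the paternal term. Both are consequences of haploidy: there is no random choice between two paternal alleles, so nothing is averaged away and nothing fluctuates. It is also worth checking that the split into a maternal and a paternal contribution is legitimate for the \emph{total} breeding value $u_B=u_B^{\text{queen eff.}}+u_B^{\text{worker eff.}}$, but this is immediate, since both effect types ride on the same transmitted alleles and are therefore inherited in identical proportions.
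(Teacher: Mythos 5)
Your proposal is correct and follows essentially the same route as the paper, which gives no separate proof but declares the lemma a straightforward consequence of the infinitesimal model and Remark~\ref{rmk::inherit} -- precisely the decomposition you spell out: a maternal contribution of $\frac12 u_Q$ plus a mean-zero Mendelian sampling term, and a deterministic paternal contribution of $u_D$ (no factor $\frac12$, no sampling) because the haploid drone transmits all of his alleles. Your closing check that the split is valid for the \emph{total} breeding value $u_B=u_B^{\text{queen eff.}}+u_B^{\text{worker eff.}}$ is a harmless extra that is consistent with the paper's setup.
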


\begin{center}
    \begin{tikzpicture}
        \path (0,0) node (i) {(i)}
            node[below right =0cm and 5mm of i, queen] (Q1) {}
                node[right = 0cm of Q1] {$Q$}
            node[below = 2cm of Q1.center, drone] (D1) {}
                node[right = 0cm of D1] {$D,\quad\mathbb E\left[u_D\right]=\frac12u_Q$};

        \draw[inheritance] (Q1) -- (D1);

        \path node[right = 5cm of i] (ii) {(ii)}
            node[below right =0cm and 5mm of ii, queen] (Q2) {}
                node[left = 0cm of Q2] {$Q$}
            node[right = 2cm of Q2.center, drone] (D2) {}
                node[right = 0cm of D2] {$D$}
            node[below right = 2cm and 1 cm of Q2.center, queen] (R) {}
                node[right = 0cm of R] {$R,\quad\mathbb E\left[u_R\right]=\frac12u_Q+u_D$};

        \draw[inheritance] (Q2) -- (R);
        \draw[inheritance] (D2) -- (R);
    \end{tikzpicture}
\end{center}

Usually, one does not have information about the particular father drone of a queen but only has some information (or assumptions) on the group of drones that her dam mated with. If we assume that all drones of the group have the same chance to be the father drone, we obtain the following

\begin{Lem}\label{lem::bvrepq}
    If a queen $R$ is the daughter of a queen $Q$ who mated with a group of drones $\mathcal D$, its expected breeding value is
    \begin{equation}\label{eq::tqi}
        \mathbb E\left[u_R|u_Q,u_{\mathcal D}\right]=\frac12u_Q+u_{\mathcal D}.
    \end{equation}
\end{Lem}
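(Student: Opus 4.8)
The plan is to derive the formula from Lemma~\ref{lem::bvinherits}\,\ref{item::qinherit} (the single-drone case) by averaging over the unknown identity of the father drone, using the law of total expectation. First I would model the actual father of $R$ as a random variable $D$ taking values in the group $\mathcal D$, and invoke the stated assumption that every drone in $\mathcal D$ is equally likely to be the father, so that $D$ is uniformly distributed on $\mathcal D$; that is, $\mathbb P(D = D_i) = \frac1{|\mathcal D|}$ for each $D_i \in \mathcal D$.

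Next I would condition on the event that the father is a specific drone $D_i \in \mathcal D$. Once this identity is fixed, $R$ is simply the daughter of the queen $Q$ and the single drone $D_i$, so Lemma~\ref{lem::bvinherits}\,\ref{item::qinherit} applies verbatim and gives $\mathbb E\left[u_R \mid u_Q, u_{D_i}\right] = \frac12 u_Q + u_{D_i}$. The key point here is that, conditional on the father being $D_i$, the breeding value $u_R$ depends on the group $\mathcal D$ only through $u_{D_i}$; the values of the other drones are irrelevant to the inner expectation.

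Applying the tower property and averaging over the uniform choice of father then gives
\[
\mathbb E\left[u_R \mid u_Q, u_{\mathcal D}\right] = \frac1{|\mathcal D|}\sum_{D\in\mathcal D}\left(\frac12 u_Q + u_D\right) = \frac12 u_Q + \frac1{|\mathcal D|}\sum_{D\in\mathcal D} u_D.
\]
By Definition~\ref{def::mathcalb}, the remaining average is exactly $u_{\mathcal D}$, which yields the claimed identity.

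The routine parts are the linearity manipulations; the only genuinely delicate step is the bookkeeping of the conditioning, i.e. making sure that conditioning jointly on $u_Q$ and the summary statistic $u_{\mathcal D}$ is compatible with first conditioning on the full father identity and then averaging. This is clean here precisely because the uniform-father assumption makes the averaging weights $\frac1{|\mathcal D|}$ independent of the breeding values, so the inner expectations simply average into the group mean $u_{\mathcal D}$.
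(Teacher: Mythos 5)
Your proposal is correct and is essentially the paper's own (implicit) argument: the paper states Lemma~\ref{lem::bvrepq} without a detailed proof, presenting it as a direct consequence of Lemma~\ref{lem::bvinherits}\,\ref{item::qinherit} together with the assumption that every drone in $\mathcal D$ is equally likely to be the father, which is exactly the uniform-father averaging you spell out via the tower property and Definition~\ref{def::mathcalb}. Your explicit handling of the conditioning (noting that the uniform weights $\frac1{|\mathcal D|}$ make the inner expectations collapse to the group mean $u_{\mathcal D}$) is a faithful, slightly more careful write-up of the intended reasoning, not a different route.
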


\begin{center}
    \begin{tikzpicture}
        \path (0,0) node[queen] (Q1) {}
                node[above right = 0cm and 0cm of Q1] {$Q$}
            node[right = 3cm of Q1.center, group] (D1) {}
                -- (D1) pic {drones}
                node[right = 0cm of D1] {$\mathcal D$}
            node[below right = 2cm and 1.5 cm of Q1.center, queen] (R) {}
                node[right = 0cm of R]{$R,\quad\mathbb E\left[u_R\right]=\frac12u_Q+u_D$};

        \draw[inheritance] (Q1) -- (R);
        \draw[mating] (D1) -- (Q1)
            node[gene pass description] {mate};
    \end{tikzpicture}
\end{center}

In general, a group of drones $\mathcal D$ comes from a group $\mathcal M$ of queens rather than a single queen. Thus, we will also make use of the following Lemma, which assumes that all queens in $\mathcal M$ have an equal chance to contribute to $\mathcal D$.

\begin{Lem}\label{lem::bvdrgr}
    If a group of drones $\mathcal D$ was produced by a group $\mathcal M$ of queens, its expected breeding value is
    \begin{equation}\label{eq::bvdrgr}
        \mathbb E\left[u_\mathcal D|u_{\mathcal M}\right]=\frac12u_{\mathcal M}.
    \end{equation}
\end{Lem}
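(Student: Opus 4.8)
The plan is to reduce the statement about a group of drones $\mathcal D$ to the single-drone case already covered by Lemma~\ref{lem::bvinherits}\,\ref{item::rrr}, combined with the definitions of group breeding values (Definitions~\ref{def::mathcalb}). The key observation is that each drone $D \in \mathcal D$ is the son of some queen in $\mathcal M$, and by the hypothesis that every queen in $\mathcal M$ has an equal chance to contribute to $\mathcal D$, the expected breeding value of a randomly drawn drone in $\mathcal D$ should coincide with the expectation over a randomly drawn queen in $\mathcal M$ of her son's breeding value.

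First I would write out $u_{\mathcal D} = \frac{1}{|\mathcal D|}\sum_{D \in \mathcal D} u_D$ by Definition~\ref{def::mathcalb} and take the conditional expectation given $u_{\mathcal M}$, using linearity:
\[
\mathbb E\left[u_{\mathcal D}\,\middle|\,u_{\mathcal M}\right]
= \frac{1}{|\mathcal D|}\sum_{D \in \mathcal D} \mathbb E\left[u_D\,\middle|\,u_{\mathcal M}\right].
\]
Next I would invoke Lemma~\ref{lem::bvinherits}\,\ref{item::rrr}: if $D$ is the son of a specific queen $Q \in \mathcal M$, then $\mathbb E\left[u_D\,\middle|\,u_Q\right] = \frac12 u_Q$. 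Since each drone's mother is one of the queens in $\mathcal M$ with equal probability, each term $\mathbb E\left[u_D\,\middle|\,u_{\mathcal M}\right]$ becomes the average $\frac{1}{|\mathcal M|}\sum_{Q \in \mathcal M}\frac12 u_Q = \frac12 u_{\mathcal M}$, where the last step again uses Definition~\ref{def::mathcalb} for $u_{\mathcal M}$. Substituting this constant value back and noting it is independent of $D$, the outer average over the $|\mathcal D|$ drones simply reproduces $\frac12 u_{\mathcal M}$, giving the claim.

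The main subtlety I expect is pinning down precisely what "equal chance to contribute to $\mathcal D$" means probabilistically, since the result is stated as an expectation. The cleanest reading is that the maternal assignment of each drone is uniform over $\mathcal M$, so that $\mathbb E\left[\frac12 u_{Q(D)}\right] = \frac12 u_{\mathcal M}$ for the (random) mother $Q(D)$ of each drone $D$; alternatively, if the composition of $\mathcal D$ is deterministic but balanced across $\mathcal M$, one argues combinatorially that the multiset of mothers averages to $u_{\mathcal M}$. Either interpretation yields the same arithmetic, so the obstacle is purely one of making the modeling assumption explicit rather than any genuine computational difficulty; the algebra itself is a routine interchange of averaging and conditional expectation.
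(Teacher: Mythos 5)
Your proof is correct and coincides with the paper's reasoning: the paper states Lemma~\ref{lem::bvdrgr} without a written proof, treating it (like the surrounding inheritance rules) as a straightforward consequence of Lemma~\ref{lem::bvinherits}\,\ref{item::rrr}, Definition~\ref{def::mathcalb}, and the stated assumption that all queens in $\mathcal M$ have an equal chance to contribute to $\mathcal D$ --- which is precisely the linearity-plus-uniform-dam argument you spell out. Your closing remark on making the ``equal chance'' assumption probabilistically explicit is a fair observation, and both of your readings match the paper's intent.
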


\begin{center}
    \begin{tikzpicture}
        \path (0,0) node[group] (Q1) {}
                -- (Q1.center) pic {queens}
                node[right = 0cm of Q1] {$\mathcal M$}
            node[below = 2cm of Q1.center, group] (D1) {}
                -- (D1) pic {drones}
                node[right = 0cm of D1] {$\mathcal D,\quad\mathbb E\left[u_\mathcal D\right]=\frac12u_{\mathcal M}$};

        \draw[inheritance] (Q1) -- (D1);
    \end{tikzpicture}
\end{center}

If we look at the inherited breeding value of a worker group $\mathcal W$ of a queen $Q$ who mated with a group of drones $\mathcal D$, we see that for each individual worker $W\in \mathcal W$ with drone father $D\in\mathcal D$, we have by Lemma~\ref{lem::bvinherits}\,\ref{item::qinherit}
    \[u_{W}=\frac12u_Q+u_{D}+\phi_{Q,W}.\]
Assuming that all drones contributed equally to the worker group and that the worker group is infinitely large, by the central limit theorem, this gives

\begin{Lem}\label{lem::bvwork}
    The breeding value of the worker group $\mathcal W$ of a queen $Q$ who mated with a group of drones $\mathcal D$ is
    \begin{equation}\label{eq::lqi}
        u_{\mathcal W}=\frac12u_Q+u_{\mathcal D}.
    \end{equation}
\end{Lem}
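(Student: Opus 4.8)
The plan is to start from the individual-level inheritance rule and average over the whole worker group, tracking how the three contributions to each worker's breeding value (the maternal term, the paternal term, and the Mendelian noise) behave under averaging.

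First I would apply Lemma~\ref{lem::bvinherits}\,\ref{item::qinherit} to each worker $W\in\mathcal W$. Writing $D(W)\in\mathcal D$ for the drone father of $W$, this gives
\[u_W=\tfrac12u_Q+u_{D(W)}+\phi_{Q,W},\]
where each $\phi_{Q,W}$ is a mean-zero Mendelian sampling term. Invoking Definition~\ref{def::mathcalb} for the group breeding value and using linearity, I would split the average into three pieces:
\[u_{\mathcal W}=\frac{1}{|\mathcal W|}\sum_{W\in\mathcal W}u_W=\tfrac12u_Q+\frac{1}{|\mathcal W|}\sum_{W\in\mathcal W}u_{D(W)}+\frac{1}{|\mathcal W|}\sum_{W\in\mathcal W}\phi_{Q,W}.\]
The maternal term $\tfrac12u_Q$ is constant across workers and survives averaging unchanged.

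Next I would treat the paternal term. Under the stated assumption that every drone in $\mathcal D$ fathers an equal fraction of the workers, the proportion of workers with father $D$ is $1/|\mathcal D|$, so that
\[\frac{1}{|\mathcal W|}\sum_{W\in\mathcal W}u_{D(W)}=\sum_{D\in\mathcal D}\frac{1}{|\mathcal D|}u_D=u_{\mathcal D},\]
again by Definition~\ref{def::mathcalb}. Finally, the Mendelian term is an average of independent mean-zero random variables; as the worker group is assumed infinitely large, the average vanishes. Combining the three contributions yields $u_{\mathcal W}=\tfrac12u_Q+u_{\mathcal D}$.

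I expect the main obstacle to be the careful handling of the two idealizations hidden in the statement: the exact equal-contribution assumption for the drones and the passage to an infinitely large worker group. With a finite $\mathcal W$ the paternal average equals $u_{\mathcal D}$ only approximately (the empirical father frequencies need not be exactly $1/|\mathcal D|$), and the Mendelian average is only approximately zero. The clean equality in the lemma is thus a limiting statement, and the honest version of the argument rests on the convergence $\tfrac{1}{|\mathcal W|}\sum_{W}\phi_{Q,W}\to0$ --- which is really a law-of-large-numbers claim rather than, strictly speaking, the central limit theorem --- together with the idealization that the drones' contributions are exactly balanced.
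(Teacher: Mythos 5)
Your proposal is correct and takes essentially the same approach as the paper, which likewise applies Lemma~\ref{lem::bvinherits}\,\ref{item::qinherit} worker by worker, invokes the equal-contribution assumption for the drones, and lets the averaged Mendelian sampling terms vanish in the infinite-worker-group limit. Your observation that this vanishing is really a law-of-large-numbers statement is a fair sharpening of the paper's (slightly loose) appeal to the central limit theorem.
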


\begin{center}
    \begin{tikzpicture}
        \path (0,0) node[queen] (Q1) {}
                node[above right = 0cm and 0cm of Q1] {$Q$}
            node[right = 3cm of Q1.center, group] (D1) {}
                -- (D1) pic {drones}
                node[right = 0cm of D1] {$\mathcal D$}
            node[below right = 2cm and 1.5 cm of Q1.center, worker group] (W) {}
                node[right = 0cm of W]{$\mathcal W,\quad u_{\mathcal W}=\frac12u_Q+u_D$};

        \draw[inheritance] (Q1) -- (W);
        \draw[mating] (D1) -- (Q1)
            node[gene pass description] {mate};
    \end{tikzpicture}
\end{center}

\begin{Rmk}\label{rmk::rpl}
    \begin{enumerate}[label = (\roman*)]
        \item Note that there is no Mendelian sampling in the inheritance of breeding values to worker groups.

        \item \label{item::rplii} Furthermore, note, that the breeding value of a worker group is precisely the expected breeding value of a daughter queen.
    \end{enumerate}
\end{Rmk}

In reality, as for any breeding animal, we do not have access to the (true) breeding values of honeybees. But we may estimate them via the BLUP procedure described e.\,g. in \citep{bienefeld07, brascamp14methods}, assigning an estimated breeding value $\hat u_B$ to each bee $B$ of a population. The estimated breeding value $\hat u_{\mathcal B}$ of a (same-sex) group $\mathcal B$ of bees is defined as in Definition~\ref{def::mathcalb}.

A key property of BLUP-estimated breeding values is that they are unbiased (That is what the U in BLUP stands for). This means that the expectations of estimated breeding values and true breeding values coincide. Thus, Equations~\ref{eq::fqi} to~\ref{eq::lqi} still hold if we replace true breeding values $u$ by estimated breeding values $\hat u$. In particular, by combination of Equations~\ref{eq::tqi} and~\ref{eq::lqi}:

\begin{Lem}\label{lem::expdaughter}
    If we have a queen $Q$ with worker group $\mathcal W$, then for a daughter $R$ of $Q$, we have
        \[\mathbb E\left[\hat u_R\right]=\hat u_{\mathcal W}.\]
\end{Lem}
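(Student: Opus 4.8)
The plan is to combine the two estimated-breeding-value formulas already established for a daughter queen and for a worker group, both of which turn out to have the same right-hand side. First I would invoke Lemma~\ref{lem::bvrepq}, whose estimated-breeding-value version reads $\mathbb E[\hat u_R \mid \hat u_Q, \hat u_{\mathcal D}] = \frac12 \hat u_Q + \hat u_{\mathcal D}$, where $\mathcal D$ denotes the group of drones that $Q$ mated with. The passage from true to hatted quantities is licensed by the unbiasedness of BLUP, as noted in the paragraph immediately preceding the statement, so that Equations~\ref{eq::fqi} to~\ref{eq::lqi} all carry over to estimated breeding values.

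Next I would recall Lemma~\ref{lem::bvwork}, again in its estimated form, which gives $\hat u_{\mathcal W} = \frac12 \hat u_Q + \hat u_{\mathcal D}$ for the worker group $\mathcal W$ of the \emph{same} queen $Q$ mated with the \emph{same} drone group $\mathcal D$. Since the worker group and the daughter queen descend from the identical parental configuration $(Q,\mathcal D)$, the two right-hand sides are literally the same expression. Substituting one formula into the other therefore yields $\mathbb E[\hat u_R \mid \hat u_Q, \hat u_{\mathcal D}] = \hat u_{\mathcal W}$.

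Finally I would observe that $\hat u_{\mathcal W}$ is a deterministic quantity once $Q$ and $\mathcal D$ are fixed, carrying no Mendelian sampling by Remark~\ref{rmk::rpl}. Hence conditioning on $(\hat u_Q, \hat u_{\mathcal D})$ contributes nothing further, and the conditional expectation coincides with the unconditional one, giving $\mathbb E[\hat u_R] = \hat u_{\mathcal W}$ as claimed.

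There is essentially no obstacle here: the content of the lemma is exactly the coincidence flagged in Remark~\ref{rmk::rpl}\,\ref{item::rplii}, namely that the breeding value of a worker group equals the expected breeding value of a daughter queen. The only point deserving a word of care is to make explicit that $\mathcal D$ refers to one and the same drone group in both invoked lemmas, so that the shared term $\frac12 \hat u_Q + \hat u_{\mathcal D}$ is genuinely identical rather than merely structurally analogous.
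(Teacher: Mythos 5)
Your proposal is correct and matches the paper's own proof, which likewise obtains the result by combining Equations~\ref{eq::tqi} and~\ref{eq::lqi} and invoking the unbiasedness of BLUP to pass from true to estimated breeding values. Your extra care in noting that $\mathcal D$ is the same drone group in both formulas, and that the conditional expectation coincides with the unconditional one, is sound but goes slightly beyond what the paper spells out.
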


\begin{center}
    \begin{tikzpicture}
        \path (0,0) node[queen] (Q1) {}
                node[above right = 0cm and 0cm of Q1] {$Q$}
            node[right = 3cm of Q1.center, group] (D1) {}
                -- (D1) pic {drones}
                node[right = 0cm of D1] {$\mathcal D$}
            node[below right = 2cm and 2cm of Q1.center, worker group] (W) {}
                node[right = 0cm of W]{$\mathcal W$}
            node[below left = 2cm and 2 of Q1.center, queen] (R) {}
                node[left = 0cm of R]{$R$};

        \draw[inheritance] (Q1) -- (W);
        \draw[inheritance] (Q1) -- (R);
        \draw[mating] (D1) -- (Q1)
            node[gene pass description] {mate};

        \path (W) -- (R) node[midway]{$\mathbb E\left[\hat u_R\right]=\hat u_{\mathcal W}$};
    \end{tikzpicture}
\end{center}

\subsubsection{Kinships}

The concept of kinship can also be transported from diploid species to different ploidy levels. In polyploid species, different types of kinship coefficients can be defined \citep{gallais03}, but the most straightforward definition is the following:

\begin{Def}\label{def::polykin}
    Consider two individuals $I$ and $J$ with ploidies $p_I, p_J\in\mathbb N$. We fix a locus and draw randomly one of the $p_I$ alleles of $I$ and one of the $p_J$ alleles of $J$. As in the diploid case, the kinship between $I$ and $J$ is then defined as the probability to end up with ibd alleles.
\end{Def}

\begin{Rmk}
    \begin{enumerate}[label = (\roman*)]
        \item By Definition~\ref{def::polykin}, we can also consider kinships between queens and drones or between drones and drones.

        \item The concept of inbreeding for polyploids is more intricate than for diploids \citep{gallais03, kerr12}. Considering honeybees, for queens and workers the standard definition applies because they are diploid. The inbreeding coefficient $f_Q$ of a queen $Q$ is the probability of her two alleles at a random locus to be ibd (Definition~\ref{def::inbr}). For drones, it does not make sense to speak of inbreeding because they are hemizygous at every locus.
    \end{enumerate}
\end{Rmk}

As in diploid species, we may extend Definition~\ref{def::polykin} to define the kinship between two groups of bees.

\begin{Def}\label{def::grpkinbee}
    For two finite groups $\mathcal G$ and $\mathcal H$ of bees, we fix a locus and sample one of the alleles that are assembled at this locus in $\mathcal G$ and one of the alleles that are assembled at this locus in $\mathcal H$. The kinship $k_{\mathcal G,\mathcal H}$ between $\mathcal G$ and $\mathcal H$ is defined as the probability of the two sampled alleles to be ibd.
\end{Def}

\begin{Rmk}
    Typical groups of bees that are commonly used in quantitative genetic theory of honeybees are the group $\mathcal W$ of workers in a colony, the group $\mathcal D$ of drones that mated with a queen, and the group $\mathcal M$ of DPQs on an isolated mating station.
\end{Rmk}

We now consider the case that a queen $Q$ mated with a group $\mathcal D$ of drones and produced two disjoint groups of daughters (queens or workers), $\mathcal G$ and $\mathcal H$.

\begin{center}
    \begin{tikzpicture}
        \path (0,0) node[queen] (Q) {}
                node[above left = 0cm and 0cm of Q] {$Q$}
            node[right = 2cm of Q.center, group] (D) {}
                -- (D) pic {drones}
                node[right = 0cm of D] {$\mathcal D$}
            node[below left = 2cm and 1cm of Q.center, group] (G) {}
                -- (G) pic {queens}
                node[right = 0cm of G] {$\mathcal G$}
            node[below right = 2cm and 1cm of Q.center, group] (H) {}
                -- (H) pic {queens}
                node[right = 0cm of H] {$\mathcal H$};

        \draw[inheritance] (Q) -- (G);
        \draw[inheritance] (Q) -- (H);
        \draw[mating] (D) -- (Q)
            node[gene pass description] {mate};
    \end{tikzpicture}
\end{center}

Assume that we know the kinships $k_{Q,Q}$, $k_{Q,\mathcal D}$, and $k_{\mathcal D,\mathcal D}$, as well as the group sizes $|\mathcal G|$, and $|\mathcal H|$. What are the seven remaining kinships $k_{Q,\mathcal G}$, $k_{Q,\mathcal H}$, $k_{\mathcal D,\mathcal G}$, $k_{\mathcal D,\mathcal H}$, $k_{\mathcal G,\mathcal G}$, $k_{\mathcal G,\mathcal H}$, and $k_{\mathcal H,\mathcal H}$?

\begin{Lem}\label{lem::qkinsh}
    We have
    \begin{align}
        k_{Q,\mathcal G} = k_{Q,\mathcal H} &= \frac{k_{Q,Q}+k_{Q,\mathcal D}}{2}, \label{eq::ein}\\
        k_{\mathcal D,\mathcal G} = k_{\mathcal D,\mathcal H} &= \frac{k_{\mathcal D,\mathcal D}+k_{Q,\mathcal D}}{2}, \label{eq::zwe}\\
        k_{\mathcal G,\mathcal G} &= \frac{2+(|\mathcal G|-1)k_{Q,Q}+2\cdot|\mathcal G|\cdot k_{Q,\mathcal D}
                                        +(|\mathcal G|-1)k_{\mathcal D,\mathcal D}}{4\cdot|\mathcal G|},  \label {eq::dre}\\
        k_{\mathcal H,\mathcal H} &= \frac{2+(|\mathcal H|-1)k_{Q,Q}+2\cdot|\mathcal H|\cdot k_{Q,\mathcal D}
                                        +(|\mathcal H|-1)k_{\mathcal D,\mathcal D}}{4\cdot|\mathcal H|}, \label{eq::vie}\\
        k_{\mathcal G,\mathcal H} &= \frac14k_{Q,Q}+\frac12k_{Q,\mathcal D}+\frac14k_{\mathcal D,\mathcal D}. \label{eq::fuen}
    \end{align}
\end{Lem}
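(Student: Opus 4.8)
The plan is to compute every kinship directly from the allele-drawing definition (Definition~\ref{def::grpkinbee}), reducing everything to a single structural observation about how a daughter's alleles are sampled. Each diploid daughter in $\mathcal{G}$ (or $\mathcal{H}$) carries exactly two alleles at the fixed locus: a \emph{maternal} allele, which is one of $Q$'s two alleles chosen uniformly by Mendelian sampling, and a \emph{paternal} allele, which is the single allele of a drone father drawn uniformly from $\mathcal{D}$ (recall from Remark~\ref{rmk::inherit} that a haploid drone transmits his allele intact). Consequently, drawing one allele uniformly from the $2|\mathcal{G}|$ alleles assembled in $\mathcal{G}$ is equivalent to the two-stage experiment: with probability $\tfrac12$ return a uniformly random allele of $Q$, and with probability $\tfrac12$ return a uniformly random allele of $\mathcal{D}$. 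I would state this decomposition once at the start and then reuse it throughout.

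Given this, the three cross-kinships follow by conditioning on the maternal/paternal origin of each sampled allele. For $k_{Q,\mathcal{G}}$, the $Q$-side draw is unconditioned while the $\mathcal{G}$-side draw is maternal (prob.\ $\tfrac12$, giving a second $Q$-allele, hence ibd-probability $k_{Q,Q}$) or paternal (prob.\ $\tfrac12$, giving a $\mathcal{D}$-allele, hence $k_{Q,\mathcal{D}}$); averaging yields Equation~\ref{eq::ein}. Equation~\ref{eq::zwe} is identical with the $Q$-side replaced by a $\mathcal{D}$-side draw, producing $k_{Q,\mathcal{D}}$ and $k_{\mathcal{D},\mathcal{D}}$. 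For $k_{\mathcal{G},\mathcal{H}}$ the two sampled alleles come from \emph{distinct} daughters (because $\mathcal{G}\cap\mathcal{H}=\varnothing$), so the two maternal/paternal choices and the two independent drone-father choices are mutually independent; the four equally likely cases give $k_{Q,Q}$, $k_{Q,\mathcal{D}}$, $k_{Q,\mathcal{D}}$, $k_{\mathcal{D},\mathcal{D}}$ respectively, and averaging gives Equation~\ref{eq::fuen}. The by-symmetry equalities $k_{Q,\mathcal{G}}=k_{Q,\mathcal{H}}$ and $k_{\mathcal{D},\mathcal{G}}=k_{\mathcal{D},\mathcal{H}}$ are then immediate, since the argument never used the size of the daughter group.

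The within-group kinships $k_{\mathcal{G},\mathcal{G}}$ and $k_{\mathcal{H},\mathcal{H}}$ are the only non-routine part, because the two alleles are drawn with replacement and one must separate the same-daughter contribution. I would condition on whether the two draws land on the same daughter (probability $\tfrac1{|\mathcal{G}|}$) or on two different daughters (probability $\tfrac{|\mathcal{G}|-1}{|\mathcal{G}|}$). In the latter case the two daughters are distinct full sibs, so their pairwise kinship is exactly the value just computed in Equation~\ref{eq::fuen}. In the former case we draw twice (with replacement) from a single daughter $R$, giving $k_{R,R}=\tfrac12+\tfrac{f_R}{2}$ by Example~\ref{ex::fformula}\,\ref{item::lastly}; the key substep is to identify $f_R=k_{Q,\mathcal{D}}$, which holds because $R$'s two alleles are precisely one maternal $Q$-allele and one independent paternal $\mathcal{D}$-allele, so their ibd-probability is $k_{Q,\mathcal{D}}$. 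Substituting both pieces and clearing the denominator $4|\mathcal{G}|$ reproduces Equation~\ref{eq::dre}; Equation~\ref{eq::vie} is the same computation with $|\mathcal{H}|$.

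The main obstacle I anticipate is purely the bookkeeping in this last step: correctly weighting the same-daughter against the different-daughter case and remembering that the single-daughter draw carries the extra self-identity mass $\tfrac12$ absent from the cross terms. Recognizing $f_R=k_{Q,\mathcal{D}}$ is the one genuinely honeybee-specific observation; everything else is the diploid allele-counting argument already rehearsed in Lemma~\ref{lemma::avkin} and Example~\ref{ex::fformula}.
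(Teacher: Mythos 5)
Your proof is correct. For Equations~\ref{eq::ein}, \ref{eq::zwe}, and~\ref{eq::fuen} it coincides with the paper's argument: each allele sampled from a daughter group is, with probability $\tfrac12$ each, a fresh uniform draw from $Q$'s two alleles or from $\mathcal D$'s alleles, and one conditions accordingly. For the within-group kinships \ref{eq::dre} and~\ref{eq::vie}, however, you take a genuinely different decomposition. The paper conditions at the \emph{allele} level: with probability $\tfrac1{2|\mathcal G|}$ the very same allele is drawn twice (ibd for sure), and otherwise it performs a without-replacement origin count, producing the hypergeometric-style probabilities $\tfrac{|\mathcal G|-1}{2|\mathcal G|-1}$ and $\tfrac{|\mathcal G|}{2|\mathcal G|-1}$ for the $Q$--$Q$, $Q$--$\mathcal D$, and $\mathcal D$--$\mathcal D$ cases. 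You instead condition at the \emph{individual} level: with probability $\tfrac1{|\mathcal G|}$ both draws land on the same daughter $R$, contributing her self-kinship $k_{R,R}=\tfrac12+\tfrac{f_R}2$ with $f_R=k_{Q,\mathcal D}$, and with probability $\tfrac{|\mathcal G|-1}{|\mathcal G|}$ they land on distinct full sibs, whose pairwise kinship is the value of Equation~\ref{eq::fuen}. This reorders the lemma (you need \ref{eq::fuen} before \ref{eq::dre}, which is legitimate since its proof is independent of \ref{eq::dre}), and it derives en route the identity $f_R=k_{Q,\mathcal D}$ that the paper only records afterwards as a \emph{consequence} of \ref{eq::dre} in Remark~\ref{rmk::withcol}\,\ref{item::withcoli}; your direct derivation from the maternal/paternal decomposition keeps this non-circular. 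The weighted average indeed checks out: $\tfrac1{|\mathcal G|}\cdot\tfrac{1+k_{Q,\mathcal D}}2+\tfrac{|\mathcal G|-1}{|\mathcal G|}\cdot\tfrac{k_{Q,Q}+2k_{Q,\mathcal D}+k_{\mathcal D,\mathcal D}}4$ equals the right-hand side of \ref{eq::dre}. What your route buys is modularity and transparency -- no without-replacement bookkeeping, and the $\tfrac1{|\mathcal G|}$-terms are exposed as a diagonal self-kinship correction, the same structural idea the paper exploits later in Lemma~\ref{lem::allthek}; what the paper's route buys is a single self-contained allele-level computation with no dependence on the other assertions of the lemma.
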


\begin{proof}
    \begin{enumerate}[label = (\roman*)]
        \item We first show Equation~\ref{eq::ein}, i.\,e.
            \[k_{Q,\mathcal G} = k_{Q,\mathcal H} = \frac{k_{Q,Q}+k_{Q,\mathcal D}}{2}.\]
        We fix a locus and draw a random allele $A^{\mathcal G}$ at this locus from the $2\cdot|\mathcal G|$ alleles of group $\mathcal G$. We further draw randomly one of the two alleles of $Q$ at the same locus and call it $A^Q$. With probability $\frac12$, allele $A^{\mathcal G}$ was inherited from $Q$ and with probability $\frac12$ it was inherited from one of the drones in $\mathcal D$. In the former case, $A^{\mathcal G}$ and $A^Q$ are two independent picks from $Q$'s two alleles and the probability of them to be ibd is $k_{Q,Q}$. In the latter case, $A^{\mathcal G}$ turns out to be a random allele picked from the drone group $\mathcal D$, while $A^Q$ is still a random allele from $Q$, so the chance for them to be ibd is $k_{Q,\mathcal D}$. This yields the assertion. The derivation for $k_{Q,\mathcal H}$ follows in complete analogy.

        \item We now show Equation~\ref{eq::zwe}, i.\,e.
            \[k_{\mathcal D,\mathcal G} = k_{\mathcal D,\mathcal H} = \frac{k_{\mathcal D,\mathcal D}+k_{Q,\mathcal D}}{2}.\]
        We fix a locus and draw a random allele $A^{\mathcal G}$ at this locus from the $2\cdot|\mathcal G|$ alleles of group $\mathcal G$. We further draw randomly one of the $N^{\mathcal D}$ alleles of $\mathcal D$ at the same locus and call it $A^{\mathcal D}$. With probability $\frac12$, allele $A^{\mathcal G}$ was inherited from one of the drones in $\mathcal D$ and with probability $\frac12$ it was inherited from $Q$. In the former case, $A^{\mathcal G}$ and $A^{\mathcal D}$ are two independent picks from $\mathcal D$'s alleles and the probability of them to be ibd is $k_{\mathcal D,\mathcal D}$. In the latter case, $A^{\mathcal G}$ turns out to be a random allele picked from $Q$, while $A^{\mathcal D}$ is still a random allele from $\mathcal D$, so the chance for them to be ibd is $k_{Q,\mathcal D}$. This yields the assertion. The derivation for $k_{\mathcal D,\mathcal H}$ follows in complete analogy.

        \item We now show Equation~\ref{eq::dre}, i.\,e.
            \[k_{\mathcal G,\mathcal G}=\frac{2+(|\mathcal G|-1)k_{Q,Q}+2\cdot|\mathcal G|\cdot k_{Q,\mathcal D}+(|\mathcal G|-1)k_{\mathcal D,\mathcal D}}{4\cdot|\mathcal G|}.\]
        We fix a locus and pick randomly (with replacement) two alleles from the $2\cdot|\mathcal G|$ alleles of $\mathcal G$. With probability $\frac{1}{2\cdot|\mathcal G|}$, we picked the very same allele twice, and the drawn alleles are surely ibd. With the remaining probability of $1-\frac1{2\cdot|\mathcal G|}$, we picked two different alleles, meaning that we are in a situation of drawing without replacement. In this case, the following consideration applies. Since $|\mathcal G|$ of the $2\cdot|\mathcal G|$ genes in $\mathcal G$ come from $Q$, the probability that the first drawn allele $A^{1}$ comes from $Q$ is $\frac{|\mathcal G|}{2\cdot|\mathcal G|}=\frac12$. If $A^1$ came indeed from $Q$, there are $2\cdot|\mathcal G|-1$ alleles left, $|\mathcal G|-1$ of which come from $Q$, so the probability that the second allele $A^2$ comes again from $Q$ is $\frac{|\mathcal G|-1}{2\cdot|\mathcal G|-1}$. If this is also the case, $A^1$ and $A^2$ are two random picks of alleles from $Q$ and their probability to be ibd is $k_{Q,Q}$. With similar considerations, the probability that the two alleles (which are not the very same allele) come from $Q$ and $\mathcal D$, respectively, is $\frac{|\mathcal G|}{2\cdot|\mathcal G|-1}$ and in that case their probability to be ibd is $k_{Q,\mathcal D}$. Finally, the probability that two alleles (not the very same) are both inherited from $\mathcal D$ and are ibd is $\frac{1}{2}\cdot\frac{|\mathcal G|-1}{2\cdot|\mathcal G|-1}k_{\mathcal D,\mathcal D}$. Putting all this together, we end up with
        \begin{align*}
            k_{\mathcal G,\mathcal G}
            =&\ \frac1{2\cdot|\mathcal G|}\\
                &\ \ + \frac{2|\mathcal G|-1}{2|\mathcal G|}\left(\frac12\cdot\frac{|\mathcal G|-1}{2|\mathcal G|-1}k_{Q,Q}
                + \frac{|\mathcal G|}{2|\mathcal G|-1}k_{Q,\mathcal D}
                + \frac12\cdot\frac{|\mathcal G|-1}{2|\mathcal G|-1}k_{\mathcal D,\mathcal D}\right)\\
            =&\ \frac{2+(|\mathcal G|-1)k_{Q,Q}+2\cdot|\mathcal G|k_{Q,\mathcal D}+(|\mathcal G|-1)k_{\mathcal D,\mathcal D}}
                {4\cdot|\mathcal G|}.
        \end{align*}
        The assertion for $k_{\mathcal H,\mathcal H}$ (Equation~\ref{eq::vie}) follows by replacing the variable $\mathcal G$ with $\mathcal H$ in all places.

        \item Finally, we show Equation~\ref{eq::fuen}, i.\,e.
            \[K_{\mathcal G,\mathcal H}=\frac14k_{Q,Q}+\frac12k_{Q,\mathcal D}+\frac14k_{\mathcal D,\mathcal D}.\]
        We fix a locus and draw an allele $A^{\mathcal G}$ from $\mathcal G$ and an allele $A^\mathcal H$ from $\mathcal H$. Since $\mathcal G$ and $\mathcal H$ are disjoint, these two draws are independent. Both alleles come with equal probability of $\frac12$ from $Q$ or from $\mathcal D$. Consequently, the probability that both alleles come from $Q$ is $\frac14$ and in this case they are ibd with probability $k_{Q,Q}$. With probability $\frac12$, one allele was inherited from $Q$ and one from $\mathcal D$, in which case their probability to be ibd is $k_{Q,\mathcal D}$. Finally, there is a chance of $\frac14$ that both alleles came from $\mathcal D$ and the ibd-probability then is $k_{\mathcal D,\mathcal D}$. In total, this gives us the assertion.
    \end{enumerate}
\end{proof}

\begin{Rmk}\label{rmk::withcol}
    \begin{enumerate}[label = (\roman*)]
        \item \label{item::withcoli} From Equation~\ref{eq::dre}, we can see directly that for a single daughter $R$ of $Q$ (i.\,e. $\mathcal G=\{R\}$, $|\mathcal G|=1$):
            \[k_{R,R}=\frac{1+k_{Q,\mathcal D}}{2},\]
        and consequently
            \[f_R=k_{Q,\mathcal D}.\]

        \item \label{item::withcolii} If, in contrast, we consider $\mathcal G=\mathcal W$ to be the worker group of $Q$, the cardinality $|\mathcal G|=|\mathcal W|$ becomes very large and we may approximate from Equation~\ref{eq::dre}
        \begin{align*}
            k_{\mathcal W,\mathcal W} &= \lim_{|\mathcal W|\to\infty}\frac{2+(|\mathcal W|-1)k_{Q,Q}+2\cdot|\mathcal W|\cdot k_{Q,\mathcal D}+(|\mathcal W|-1)k_{\mathcal D,\mathcal D}}{4\cdot|\mathcal W|}\\
            &=\frac14k_{Q,Q}+\frac12k_{Q,\mathcal D}+\frac14k_{\mathcal D,\mathcal D}.
        \end{align*}

        \item In combination with Equation~\ref{eq::fuen}, this shows that the kinship between two (non-identical) sister queens is the same as the kinship between one of the sisters and the worker group of their dam and as the kinship of the dam's worker group with itself.

        \begin{center}
            \begin{tikzpicture}
                \path (0,0) node[queen] (Q) {}
                        node[left = 0cm of Q] {$Q$}
                    node[right = 2.5cm of Q, group] (D) {}
                        node[right = 0cm of D] {$\mathcal D$}
                        -- (D) pic {drones}
                    node[below left = 2.5cm and 2cm of Q, queen] (R1) {}
                        node[left = 0cm of R1] {$R_1$}
                    node[below = 2.5cm of Q, queen] (R2) {}
                        node[left = 0cm of R2] {$R_2$}
                    node[below right = 2.5cm and 2cm of Q, worker group] (W) {}
                        node[right = 0cm of W] {$\mathcal W$};

                \draw[inheritance] (Q) -- (R1);
                \draw[inheritance] (Q) -- (R2);
                \draw[inheritance] (Q) -- (W);
                \draw[mating] (D) -- (Q)
                    node[gene pass description] {mate};
                \draw[relationship] (R1) -- (R2)
                    node[relationship description, font = \footnotesize] (a) {$k_{R_1,R_2}$};
                \draw[relationship] (R2) -- (W)
                    node[relationship description, font = \footnotesize] (b) {$k_{R_2,\mathcal W}$};
                \draw[relationship] (W.south west) -- (W.south east)
                    node[relationship description, font = \footnotesize] (c) {$k_{\mathcal W,\mathcal W}$};

                \path (a) -- (b)
                        node[midway, sloped, font = \footnotesize] {$=$}
                    (b) -- (c)
                        node[midway, sloped, font = \footnotesize] {$=$};
            \end{tikzpicture}
        \end{center}

        \item In the literature, kinship calculations between sister queens often perform a case distinction, if the sisters came from the same drone or not \citep{bienefeld89, brascamp14methods}. Working with $k_{\mathcal D,\mathcal D}$ as a given parameter, this case distinction is not necessary at this point.
    \end{enumerate}
\end{Rmk}

We want to calculate one further relevant property of kinships between groups of queens. In the situation of Lemma~\ref{lem::qkinsh}, let $\mathcal I$ be a further group of queens so that none of the queens in $\mathcal G$ and $\mathcal H$ is a direct ancestor of any of the queens in $\mathcal I$.

\begin{center}
    \begin{tikzpicture}
        \path (0,0) node[queen] (Q) {}
                node[above left = 0cm and 0cm of Q] {$Q$}
            node[right = 2cm of Q.center, group] (D) {}
                -- (D) pic {drones}
                node[right = 0cm of D] {$\mathcal D$}
            node[below left = 2cm and 1cm of Q.center, group] (G) {}
                -- (G) pic {queens}
                node[right = 0cm of G] {$\mathcal G$}
            node[below right = 2cm and 1cm of Q.center, group] (H) {}
                -- (H) pic {queens}
                node[right = 0cm of H] {$\mathcal H$}
            node[right = 2cm of H.center, group] (I) {}
                -- (I) pic {queens}
                node[right = 0cm of I] {$\mathcal I$};

        \draw[inheritance] (Q) -- (G);
        \draw[inheritance] (Q) -- (H);
        \draw[mating] (D) -- (Q)
            node[gene pass description] {mate};
    \end{tikzpicture}
\end{center}

\begin{Lem}\label{lem::kinbet}
    In this situation, we have
    \[k_{\mathcal G,\mathcal I}=k_{\mathcal H,\mathcal I}.\]
\end{Lem}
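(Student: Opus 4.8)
The plan is to mimic the allele-drawing argument used for Equations~\ref{eq::ein} and~\ref{eq::zwe} in the proof of Lemma~\ref{lem::qkinsh}, decomposing a random allele of a daughter group according to whether it was inherited maternally (from $Q$) or paternally (from $\mathcal D$). Concretely, I would fix a locus, draw a random allele $A^{\mathcal G}$ from the $2\cdot|\mathcal G|$ alleles of $\mathcal G$ together with an independent random allele $A^{\mathcal I}$ from $\mathcal I$, and then condition on the origin of $A^{\mathcal G}$. With probability $\tfrac12$ the allele $A^{\mathcal G}$ is the maternal allele of some queen in $\mathcal G$, in which case it is a copy of a uniformly random allele of $Q$; with the complementary probability $\tfrac12$ it is the paternal allele and hence a copy of a uniformly random allele of $\mathcal D$.

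The crucial step is to argue that, conditional on $A^{\mathcal G}$ being maternal, the ibd-probability of $A^{\mathcal G}$ with $A^{\mathcal I}$ equals $k_{Q,\mathcal I}$, and that, conditional on paternal origin, it equals $k_{\mathcal D,\mathcal I}$. This is exactly where the hypothesis that no queen in $\mathcal G$ or $\mathcal H$ is an ancestor of any queen in $\mathcal I$ enters: since $\mathcal I$ lies off the descent lines running through $\mathcal G$, the genealogy of $A^{\mathcal I}$ does not depend on which of $Q$'s two alleles the Mendelian sampling placed into the chosen daughter. Consequently $A^{\mathcal G}$ behaves, for the sole purpose of being ibd with $A^{\mathcal I}$, exactly like a freshly drawn random allele of $Q$ (respectively of $\mathcal D$). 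Summing the two cases then gives
\[k_{\mathcal G,\mathcal I}=\tfrac12 k_{Q,\mathcal I}+\tfrac12 k_{\mathcal D,\mathcal I}.\]

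Running the identical argument verbatim with $\mathcal H$ in place of $\mathcal G$ — legitimate because $\mathcal H$ is produced from the same queen $Q$ and the same drone group $\mathcal D$, and also satisfies the non-ancestry hypothesis with respect to $\mathcal I$ — yields
\[k_{\mathcal H,\mathcal I}=\tfrac12 k_{Q,\mathcal I}+\tfrac12 k_{\mathcal D,\mathcal I}.\]
The two right-hand sides coincide, so $k_{\mathcal G,\mathcal I}=k_{\mathcal H,\mathcal I}$, as claimed. I expect the only genuine obstacle to be the conditional-independence step in the second paragraph: one must carefully separate the randomness of the two allele draws from the Mendelian sampling that built $\mathcal G$ and $\mathcal H$, and invoke the non-ancestry assumption at precisely that point. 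The remaining bookkeeping is routine and directly parallels parts~(i) and~(ii) of the proof of Lemma~\ref{lem::qkinsh}.
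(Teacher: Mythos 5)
Your proposal is correct and takes essentially the same route as the paper's proof: the paper also fixes a locus, notes that a random allele drawn from $\mathcal G$ comes with equal probability from $Q$ or from $\mathcal D$, concludes $k_{\mathcal G,\mathcal I}=\frac{k_{Q,\mathcal I}+k_{\mathcal D,\mathcal I}}{2}$, and repeats the argument verbatim for $\mathcal H$. Your extra care in isolating exactly where the non-ancestry hypothesis is invoked is sound and is precisely the point the paper's subsequent remark asks the reader to clarify for themselves.
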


\begin{proof}
    We fix a locus and draw an allele from $\mathcal G$. This allele comes with equal probability from $Q$ or from $\mathcal D$. We can thus deduce
        \[k_{\mathcal G,\mathcal I}=\frac{k_{Q,\mathcal I}+k_{\mathcal D,\mathcal I}}{2}.\]
    By the same argument, we also have
        \[k_{\mathcal H,\mathcal I}=\frac{k_{Q,\mathcal I}+k_{\mathcal D,\mathcal I}}{2},\]
    and the assertion follows.
\end{proof}

\begin{Rmk}
    It is instructive to clarify for oneself, at which point this argument needs the fact that $\mathcal G$ and $\mathcal H$ do not contain ancestors of queens in $\mathcal I$.
\end{Rmk}

One important corollary to Lemma~\ref{lem::kinbet} is the following:

\begin{Cor}\label{cor::imp}
    Let $\mathcal G_1$ and $\mathcal G_2$ be two (non-identical) groups of sister queens so that no queen in $\mathcal G_1$ is an ancestor of a queen in $\mathcal G_2$ and vice versa; let $Q_1$ and $Q_2$ be their (possibly identical!) respective dams and $\mathcal W_1$ and $\mathcal W_2$ be the respective worker groups of $Q_1$ and $Q_2$. Then
        \[k_{\mathcal G_1,\mathcal G_2}=k_{\mathcal W_1,\mathcal W_2}.\]
\end{Cor}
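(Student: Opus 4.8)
The plan is to reduce the claim to two applications of Lemma~\ref{lem::kinbet}, each time replacing a group of sister queens by the worker group of their common dam. The key observation is that a worker group $\mathcal W_i$ is simply a second (very large) group of daughters of $Q_i$ mated with the same drones, so that $\mathcal G_i$ and $\mathcal W_i$ play exactly the roles of the two disjoint daughter groups $\mathcal G$ and $\mathcal H$ in the setting of Lemmas~\ref{lem::qkinsh} and~\ref{lem::kinbet}. They are genuinely disjoint as sets of individuals, since a given bee is either a queen or a worker.

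Concretely, I would first fix, for $i\in\{1,2\}$, the drone group $\mathcal D_i$ with which $Q_i$ mated, so that both $\mathcal G_i$ and $\mathcal W_i$ are daughter groups of $Q_i$ via $\mathcal D_i$. Applying Lemma~\ref{lem::kinbet} to the dam $Q_1$, with daughter groups $\mathcal G_1$ and $\mathcal W_1$ and external group $\mathcal I:=\mathcal G_2$, gives
\[k_{\mathcal G_1,\mathcal G_2}=k_{\mathcal W_1,\mathcal G_2}.\]
Applying Lemma~\ref{lem::kinbet} a second time, now to the dam $Q_2$, with daughter groups $\mathcal G_2$ and $\mathcal W_2$ and external group $\mathcal I:=\mathcal W_1$, gives
\[k_{\mathcal W_1,\mathcal G_2}=k_{\mathcal W_1,\mathcal W_2},\]
and chaining the two equalities yields the assertion (using, if needed, the symmetry $k_{\mathcal W_1,\mathcal W_2}=k_{\mathcal W_2,\mathcal W_1}$ to match the stated orientation). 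I would point out that the proof of Lemma~\ref{lem::kinbet} only ever draws an allele from the two daughter groups and never from $\mathcal I$; hence it applies verbatim when $\mathcal I=\mathcal W_1$ is a worker group rather than a group of queens.

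The step that genuinely needs care, and which I expect to be the only real obstacle, is verifying the non-ancestry hypothesis of Lemma~\ref{lem::kinbet} in each application. In the first application I must check that neither $\mathcal G_1$ nor $\mathcal W_1$ contains a direct ancestor of a queen in $\mathcal G_2$: for $\mathcal G_1$ this is precisely the hypothesis of the corollary, and for $\mathcal W_1$ it is automatic because worker bees are infertile and can therefore never be ancestors of anything. In the second application I need that no bee of $\mathcal G_2$ or $\mathcal W_2$ is an ancestor of a bee of $\mathcal W_1$; again $\mathcal W_2$ is excluded by infertility, while for $\mathcal G_2$ I would argue by contraposition. Every ancestor of a worker in $\mathcal W_1$ is $Q_1$, an ancestor of $Q_1$, or lies on the paternal side through $\mathcal D_1$; a queen $R\in\mathcal G_2$ equal to $Q_1$ or to an ancestor of $Q_1$ would in particular be an ancestor of the sister queens in $\mathcal G_1$, contradicting the ``vice versa'' part of the hypothesis. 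The delicate sub-case is the paternal one, where one must use that the drone lineages $\mathcal D_1$ are not descended from $\mathcal G_2$ either; this is the point at which the intended reading of the non-ancestry hypothesis (neither queen family descending from the other) has to be invoked, and I would make that reading explicit rather than leaving it implicit.
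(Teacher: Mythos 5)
Your proof is correct and is exactly the argument the paper intends: the paper states the result without a written proof, presenting it as an immediate consequence of Lemma~\ref{lem::kinbet}, and your two successive applications of that lemma (first trading $\mathcal G_1$ for $\mathcal W_1$ against $\mathcal I=\mathcal G_2$, then trading $\mathcal G_2$ for $\mathcal W_2$ against $\mathcal I=\mathcal W_1$, noting that the lemma's proof uses no property of $\mathcal I$ beyond its being a group of bees) spell out precisely that reduction. Your verification of the non-ancestry hypotheses --- worker infertility, contraposition through the dams, and the explicit flag that the paternal sub-case requires reading the hypothesis as ``neither family descends from the other'' (a condition that holds automatically in all of the paper's applications, where $\mathcal G_1$ and $\mathcal G_2$ are newly hatched at time $t+1$ and the drone groups predate them) --- is in fact more careful than anything the paper records, which merely remarks after Lemma~\ref{lem::kinbet} that the reader should locate where the ancestry assumption enters.
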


\begin{center}
    \begin{tikzpicture}
        \path (0,0) node[queen] (Q1) {}
                node[left = 0cm of Q1] {$Q_1$}
            node[right = 4cm of Q1.center, queen] (Q2) {}
                node[left = 0cm of Q2] {$Q_2$}
            node[below left = 2.5cm and 1cm of Q1.center, worker group] (W1) {}
                node[left = 0cm of W1] {$\mathcal W_1$}
            node[below right = 2.5cm and 1cm of Q2.center, worker group] (W2) {}
                node[right = 0cm of W2] {$\mathcal W_2$}
            node[below right = 2.4cm and 1cm of Q1.center, group] (G1) {}
                -- (G1.center) pic{queens}
                node[left = 0cm of G1] {$\mathcal G_1$}
            node[below left = 2.4cm and 1cm of Q2.center, group] (G2) {}
                -- (G2.center) pic{queens}
                node[right = 0cm of G2] {$\mathcal G_2$};

        \draw[inheritance] (Q1) -- (W1);
        \draw[inheritance] (Q1) -- (G1);
        \draw[inheritance] (Q2) -- (W2);
        \draw[inheritance] (Q2) -- (G2);
        \draw[relationship] (G1) -- (G2);
        \draw[relationship] (W1) -- (W2)
            node[relationship description, font = \footnotesize]{$k_{\mathcal G_1,\mathcal G_2}=k_{\mathcal W_1,\mathcal W_2}$};
    \end{tikzpicture}
\end{center}

So far, we mainly calculated kinships between (groups of) queens. Next, we want to consider some cases of kinships including drones. Assume that we have two disjoint finite groups of drones $\mathcal D_1$ and $\mathcal D_2$ that were produced by two (not necessarily disjoint) groups of queens, $\mathcal G_1$ and $\mathcal G_2$, respectively.

\begin{center}
    \begin{tikzpicture}
        \path (0,0) node[group] (G2) {}
                node[left = 0cm of G2] {$\mathcal G_2$}
                -- (G2.center) pic{queens}
            node[left = 3cm of G2.center, group] (G1) {}
                node[left = 0cm of G1] {$\mathcal G_1$}
                -- (G1) pic{queens}
            node[below = 2cm of G2.center, group] (D2) {}
                node[left = 0cm of D2] {$\mathcal D_2$}
                -- (D2) pic{drones}
            node[below = 2cm of G1.center, group] (D1) {}
                node[left = 0cm of D1] {$\mathcal D_1$}
                -- (D1) pic{drones};

        \draw[inheritance] (G1) -- (D1);
        \draw[inheritance] (G2) -- (D2);
    \end{tikzpicture}
\end{center}

For a drone $D\in\mathcal D_i$ ($i\in\{1,2\}$), we assume that any queen $Q\in\mathcal G_i$ has an equal chance to be his dam.
The following lemma describes the relevant kinships in this situation.

\begin{Lem}\label{lem::reldron}
    For $i,j\in\{1,2\}$, we have
    \begin{align}
        k_{\mathcal G_i,\mathcal D_j} &= k_{\mathcal G_i,\mathcal G_j}, \label{eq::kgd}\\
        k_{\mathcal D_i,\mathcal D_i} &= \frac1{|\mathcal D_i|}
                                         \left(1+\left(|\mathcal D_i|-1\right)k_{\mathcal G_i,\mathcal G_i}\right), \label{eq::skd}\\
        k_{\mathcal D_1,\mathcal D_2} &= k_{\mathcal G_1,\mathcal G_2}. \label{eq::kdd}
    \end{align}
\end{Lem}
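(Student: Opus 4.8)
The plan is to reduce all three equations to a single structural observation: in the allele-drawing experiment underlying Definition~\ref{def::grpkinbee}, a random allele carried by a uniformly random drone of $\mathcal D_i$ has exactly the same descent distribution as a random allele sampled directly from the queen group $\mathcal G_i$. Indeed, each drone is haploid and develops from an unfertilized egg, so his single allele is a copy of one of the two alleles of his dam; by assumption his dam is a uniformly random queen of $\mathcal G_i$, and by the Mendelian sampling of Remark~\ref{rmk::inherit} the inherited allele is one of her two alleles, chosen uniformly. Picking a uniform drone and reading off his allele therefore amounts to picking a uniform queen of $\mathcal G_i$ and then a uniform one of her two alleles, which is precisely a uniform draw from the $2|\mathcal G_i|$ alleles assembled in $\mathcal G_i$. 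Crucially, distinct drones inherit their alleles through independent sampling events, and these events are independent of any allele drawn directly from a queen group. Since ibd status depends only on ancestral origin, any kinship computed with a drone-allele may be replaced by the corresponding kinship with its dam-group.

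With this in hand, Equation~\ref{eq::kgd} is immediate: drawing one allele from $\mathcal G_i$ and one from $\mathcal D_j$ is, descent-wise, the same as drawing one allele from $\mathcal G_i$ and one from $\mathcal G_j$ (the drone allele being an independent $\mathcal G_j$-draw), so the ibd probability is $k_{\mathcal G_i,\mathcal G_j}$. Likewise Equation~\ref{eq::kdd} follows because $\mathcal D_1$ and $\mathcal D_2$ are disjoint, so the two drawn alleles come from two different drones whose alleles are independent draws from $\mathcal G_1$ and $\mathcal G_2$, respectively; this matches the (possibly with-replacement) draw defining $k_{\mathcal G_1,\mathcal G_2}$ even when $\mathcal G_1$ and $\mathcal G_2$ overlap, giving $k_{\mathcal D_1,\mathcal D_2}=k_{\mathcal G_1,\mathcal G_2}$.

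For the self-kinship in Equation~\ref{eq::skd} I would follow the drawing-with-replacement bookkeeping already used in Example~\ref{ex::fformula}\,\ref{item::lastly} and in the proof of Lemma~\ref{lem::qkinsh}. The group $\mathcal D_i$ assembles exactly $|\mathcal D_i|$ alleles, one per drone. Drawing two alleles with replacement, with probability $\frac1{|\mathcal D_i|}$ we pick the same drone twice, in which case the allele is trivially ibd to itself; with the complementary probability $\frac{|\mathcal D_i|-1}{|\mathcal D_i|}$ we pick two distinct drones, whose alleles are two independent $\mathcal G_i$-draws and hence ibd with probability $k_{\mathcal G_i,\mathcal G_i}$. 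Summing gives
\[
    k_{\mathcal D_i,\mathcal D_i}=\frac1{|\mathcal D_i|}\cdot 1+\frac{|\mathcal D_i|-1}{|\mathcal D_i|}\,k_{\mathcal G_i,\mathcal G_i}=\frac1{|\mathcal D_i|}\bigl(1+(|\mathcal D_i|-1)k_{\mathcal G_i,\mathcal G_i}\bigr).
\]

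The part requiring the most care is the reduction itself: I must justify that a drone's allele behaves as an independent uniform $\mathcal G_i$-draw, and in particular that the coincidence of drawing the same drone twice (Equation~\ref{eq::skd}) and the independence of the $\mathcal D_1$- and $\mathcal D_2$-draws even when $\mathcal G_1\cap\mathcal G_2\neq\varnothing$ (Equation~\ref{eq::kdd}) are handled correctly. These are exactly the points where the haploidy of drones and the with-replacement convention interact.
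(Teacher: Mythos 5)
Your proof is correct and follows essentially the same route as the paper's: the paper likewise observes that a drone's single allele, having no sire, is a uniform draw from his dam group's alleles (giving Equations~\ref{eq::kgd} and~\ref{eq::kdd}), and handles Equation~\ref{eq::skd} by the identical with-replacement case split between picking the same drone twice and picking two distinct drones. Your explicit attention to the independence of distinct drones' sampling events and to the case $\mathcal G_1\cap\mathcal G_2\neq\varnothing$ is sound; the paper leaves these points implicit.
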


\begin{proof}
    \begin{enumerate}[label = (\roman*)]
        \item We first show Equation~\ref{eq::kgd}, i.\,e.
            \[k_{\mathcal G_i,\mathcal D_j} = k_{\mathcal G_i,\mathcal G_j}.\]
        We fix a locus and draw an allele $A^{\mathcal G}$ from $\mathcal G_i$ and an allele $A^{\mathcal D}$ from $\mathcal D_j$. Since drones only have dams but no sires, the allele $A^{\mathcal D}$ must come from a queen in $\mathcal G_j$ and by our assumption all queens of $\mathcal G_j$ have equal probability to be the source of $A^{\mathcal D}$. Thus, $A^{\mathcal D}$ really is a randomly drawn allele from $\mathcal G_j$ and the probability that $A^{\mathcal G}$ and $A^{\mathcal D}$ are ibd is the probability that $A^{\mathcal G}$ is ibd with a randomly drawn allele from $\mathcal G_j$. From this, the assertion follows.

        \item We now show Equation~\ref{eq::skd}, i.\,e.
            \[k_{\mathcal D_i,\mathcal D_i}
                = \frac1{|\mathcal D_i|}\left(1+\left(|\mathcal D_i|-1\right)k_{\mathcal G_i,\mathcal G_i}\right).\]
        We fix a locus and draw (with replacement) two of the $|\mathcal D_i|$ alleles of $\mathcal D_i$. With probability $\frac1{|\mathcal D_i|}$, we picked the same allele which is surely ibd to itself. With the complementary probability of $\frac{|\mathcal D_i|-1}{|\mathcal D_i|}$, we picked to different alleles from $\mathcal D_i$ which are then just two random picks of alleles from $\mathcal G_i$ and have a probability of $k_{\mathcal G_i,\mathcal G_i}$ to be ibd. The assertion follows.

        \item Finally, we show Equation~\ref{eq::kdd}, i.\,e.
            \[k_{\mathcal D_1,\mathcal D_2} = k_{\mathcal G_1,\mathcal G_2}.\]
        We fix a locus and pick an allele $A^1$ from $\mathcal D_1$ and an allele $A^2$ from $\mathcal D_2$. Then, as in the proof of Equation~\ref{eq::kgd}, $A^1$ can be interpreted as a randomly drawn allele from $\mathcal G_1$ and $A^2$ as a randomly drawn allele from $\mathcal G_2$, and the assertion follows.
    \end{enumerate}
\end{proof}

\subsubsection{Worker groups or replacement queens}\label{sec::wgrq}

The approach to model a honeybee colony as consisting of two separate entities, namely a queen an a worker group, each with their own breeding values goes back to the early days of breeding value estimation in this species \citep{bienefeld90, bienefeld07} and has turned out very practical. It is, however, worth to take a step back and ask oneself what \emph{value} a worker group can have for breeding, since all the workers are sterile. So, while the workers have genetic properties, they are not able to pass them on to future generations. The reason why the estimated breeding value of a worker group is still of interest is that it is the expectation for the breeding value of a replacement queen \citep{bienefeld07, brascamp19a}. We had noted this fact in Remark~\ref{rmk::rpl}\,\ref{item::rplii}.

So, for the purposes of this manuscript, we could also imagine colonies to not consist of a queen $Q$ and a worker group $\mathcal W$ but of a queen $Q$ and a potential replacement queen $R$. So every queen is equipped with an imaginary replacement queen which -- by its imaginary nature -- will never have offspring.

\begin{center}
    \begin{tikzpicture}
        \path (0,0) node[queen] (Q1) {}
                node[left = 0cm of Q1] (q1name) {$Q$}
            node[below = 2cm of Q1.center, worker group] (W1) {}
                node[left = 0cm of W1] (w1name) {$\mathcal W$}
            node[right = 7cm of Q1.center, queen] (Q2) {}
                node[left = 0cm of Q2] (q2name) {$Q$}
            node[below = 2cm of Q2.center, replacement queen] (R2) {}
                node[left = 0cm of R2] (r2name) {$R$}
            coordinate[below = 1cm of Q1.center] (C1)
            coordinate[below = 1cm of Q2.center] (C2)
            -- (C1) -- (C2)
                node[midway] {or};

        \begin{scope}[on background layer]
            \path node [fit=(Q1) (q1name) (W1) (w1name), ellipse, inner sep = 2mm, draw] (Col1) {}
                node [fit=(Q2) (q2name) (R2) (r2name), ellipse, inner sep = 2mm, draw] (Col2) {}
                node [right = 0mm of Col1, font = \footnotesize\sffamily] {colony}
                node [right = 0mm of Col2, font = \footnotesize\sffamily] {colony};
        \end{scope}

        \draw[inheritance] (Q1) -- (W1);
        \draw[inheritance] (Q2) -- (R2);
    \end{tikzpicture}
\end{center}

The difference becomes apparent when one looks at kinships. At first glance, one might think that there is no difference. Neither worker groups nor replacement queens have offspring, so we regularly find ourselves in the situation of Lemma~\ref{lem::kinbet} and any other bee or group of bees has the same kinship with $\mathcal W$ as with $R$. The difference lies in the self-kinship. As becomes apparent from Remark~\ref{rmk::withcol}\,\ref{item::withcoli} and~\ref{item::withcolii}, the kinships $k_{R,R}$ and $k_{\mathcal W,\mathcal W}$ will generally differ.

When developing a theory of OCS for honeybees, we will get to a point where we have to restrict average kinships to acceptable levels. As average kinships also contain self-kinships, the question arises which value is more important, $k_{\mathcal W,\mathcal W}$ or $k_{R,R}$.

The replacement queen self-kinship $k_{R,R}$ signifies the inbreeding of individual worker bees. This is of particular practical relevance because in honeybees there is a specific form of inbreeding depression. There is a locus within the honeybee genome, called \emph{csd}-locus, and worker or queen bees can only develop if they are heterozygous at this locus. Otherwise, they start a development into diploid drones but are soon removed from the colony \citep{woyke65}. Evidently, highly inbred workers have a high chance to become homozygous at the \emph{csd}-locus, which leads to holes in the brood pattern and weakened colonies \citep{bruckner78, zayed05}.

In contrast, $k_{\mathcal W,\mathcal W}$ describes the relatedness between different workers. This measure bears some importance, too, because several studies have shown that colonies with less related worker bees show greater overall vitality \citep{mattila07, tarpy13}. However, this vitality boost is not inheritable and therefore has little significance for breeding \citep{uzunov22the, du24comparison}. Furthermore, $k_{\mathcal W,\mathcal W}$ is mostly dependent on the strategy of mating control and not so much on the inbreeding development in the population. Therefore, $k_{R,R}$ turns out to be the more relevant value for our purposes.

However, at some points we will still need worker group kinships in our derivations. Note, for example, that by Corollary~\ref{cor::imp} the kinship between two newly hatched sister queens can be calculated as the kinship of their dam's worker group to itself but not as the self-kinship of a replacement queen of the dam. We will thus imagine honeybee colonies to consist of three components: a queen $Q$, a worker group $\mathcal W$ and a potential replacement queen $R$. With the exception of self-kinships, the information provided by $\mathcal W$ and $R$ is redundant.

\begin{center}
    \begin{tikzpicture}
        \path (0,0) node[queen] (Q) {}
                node[left = 0cm of Q] (qname) {$Q$}
            node[below right = 2cm and 0.5cm of Q.center, worker group] (W) {}
                node[left = 0cm of W] (wname) {$\mathcal W$}
            node[below left = 1cm and 0.5cm of Q.center, replacement queen] (R) {}
                node[left = 0cm of R] (rname) {$R$};

        \begin{scope}[on background layer]
            \path node [fit=(Q) (qname) (W) (wname) (R) (rname), ellipse, inner sep = 1mm, draw] (Col) {}
                node [right = 0mm of Col, font = \footnotesize\sffamily] {colony};
        \end{scope}

        \draw[inheritance] (Q) -- (W);
        \draw[inheritance] (Q) -- (R);
    \end{tikzpicture}
\end{center}

\section{Optimum Contribution Selection for honeybees}\label{sec::ocshb}

We turn to the development of a theory of OCS for honeybees.

\begin{Not}
    \begin{enumerate}[label = (\roman*)]
        \item At each time $t\in\mathbb N$, we consider a population $\mathcal P_t$ of honeybees, consisting of $N_t$ colonies. As each colony consists of one queen $Q$, one worker group $\mathcal W$, and one replacement queen $R$, we model $\mathcal P_t$ as
            \[\mathcal P_t = \mathcal Q_t\sqcup\mathfrak W_t\sqcup\mathcal R_t,\]
        where $\mathcal Q_t$ comprises the queens alive at time $t$, and $\mathfrak W_t$ and $\mathcal R_t$, comprise the corresponding worker groups and replacement queens, respectively.

        \item When developing the theory of OCS for diploid species, we had spoken of \emph{individuals} $I\in\mathcal P_t$. In the following theory for honeybees, $\mathcal P_t$ also contains worker groups which are not individuals but collective units. Accordingly, we will henceforth speak of \emph{entities} $E\in\mathcal P_t$. When the type of an entity (queen, worker group or replacement queen) is clear, we will name them as such.
    \end{enumerate}
\end{Not}

\begin{Rmk}\label{rmk::nodr}
    \begin{enumerate}[label = (\roman*)]
        \item \label{item::nodri} Besides the queen $Q$, worker group $\mathcal W$, and replacement queen~$R$, another important group of bees that are uniquely associated with a colony is the group $\mathcal D$ of drones that $Q$ mated with. It is, thus, perceivable to add another component $\mathfrak D_t$ to $\mathcal P_t$, comprising all the groups of drones that mated with queens in $\mathcal Q_t$. However, drones are generally not seen as proper breeding entities but rather as \emph{flying gametes} \citep{mackensen67} and are thus rarely included in population analyses. Furthermore, note that letting $\mathcal P_t$ contain bees of mixed ploidy would prevent direct applications of Definitions~\ref{def::mathcalb} and~\ref{def::mathfrakb} to~$\mathcal P_t$.

        \item \label{item::nodrii} Nevertheless, we will sometimes need to consider the set of drones $\mathcal D$ that a queen $Q\in\mathcal Q_t$ mated with. When we include groups of drones in figures, we will henceforth use a gray hue to indicate that they are not counted as a part of $\mathcal P_t$.

        \item In the theory of OCS for other species, it is usually assumed that there cannot be parent-offspring relations within one generation \citep{wellmann19key}. In honeybees, this assumption is violated in the sense, that worker groups and replacement queens are considered as belonging to the same generation as their dam queens.

        \begin{center}
            \begin{tikzpicture}
                \path (0,0) coordinate (Pt0)
                        node[left = 0mm of Pt0] {generation $\mathcal P_t$}
                    coordinate[below = 4cm of Pt0] (Pt1)
                        node[left = 0mm of Pt1] {generation $\mathcal P_{t+1}$}
                    (Pt0) -- (Pt1)
                        coordinate[pos = 0.4] (M)
                    coordinate[right = 3cm of Pt0] (C0)
                    node[above = 1cm of C0, queen] (Q0) {}
                        node[left = 0cm of Q0] {$Q_1$}
                    node[right = 2cm of Q0.center, not considered, group] (D0) {}
                        node[right = 0cm of D0, not considered] {$\mathcal D_1$}
                        -- (D0) pic[not considered] {drones}
                    node[below right = 2cm and 0.5cm of Q0.center, worker group] (W0) {}
                        node[right = 0cm of W0] {$\mathcal W_1$}
                    node[below left = 1cm and 0.5cm of Q0.center, replacement queen] (R0) {}
                        node[left = 0cm of R0] {$R_1$}
                    coordinate[right = 2cm of Pt1] (C1)
                    node[above = 1cm of C1, queen] (Q1) {}
                        node[left = 0cm of Q1] {$Q_2$}
                    node[right = 2cm of Q1.center, not considered, group] (D1) {}
                        node[right = 0cm of D1, not considered] {$\mathcal D_2$}
                        -- (D1) pic[not considered] {drones}
                    node[below right = 2cm and 0.5cm of Q1.center, worker group] (W1) {}
                        node[right = 0cm of W1] {$\mathcal W_2$}
                    node[below left = 1cm and 0.5cm of Q1.center, replacement queen] (R1) {}
                        node[left = 0cm of R1] {$R_2$};

                \draw[mating, not considered] (D0) -- (Q0)
                    node[gene pass description, not considered] {mate};
                \draw[mating, not considered] (D1) -- (Q1)
                    node[gene pass description, not considered] {mate};
                \draw[inheritance] (Q0) -- (W0);
                \draw[inheritance] (Q0) -- (R0);
                \draw[inheritance] (Q1) -- (W1);
                \draw[inheritance] (Q1) -- (R1);
                \draw[inheritance] (Q0) -- (Q1);

                \draw[dashed] (M) ++ (-3cm,0cm) -- ++(9cm,0cm);
            \end{tikzpicture}
        \end{center}
    \end{enumerate}
\end{Rmk}

\begin{Not}
    \begin{enumerate}[label = (\roman*)]
        \item For a queen $Q\in\mathcal Q_t$, we denote her unique worker group $\mathcal W\in\mathfrak W_t$ by $\mathcal W(Q)$ and her unique replacement queen by $R(Q)\in\mathcal R_t$.

        \item Similarly, for a worker group $\mathcal W\in\mathfrak W_t$, we denote its unique queen $Q \in\mathcal Q_t$ by $Q(\mathcal W)$ and the corresponding replacement queen by $R(\mathcal W)\in\mathcal R_t$.

        \item Finally, for a replacement queen $R\in\mathcal R_t$, we denote her unique dam queen $Q \in\mathcal Q_t$ by $Q(R)$ and the corresponding worker group by $\mathcal W(R)\in\mathfrak W_t$.
    \end{enumerate}
\end{Not}

Like in the derivation of OCS for diploid species we have to predict average breeding values and kinships at time $t+1$ from the data of time $t$. Some of the necessary derivations hold for any closed honeybee population, whereas others depend on the way in which mating control is organized. In Section~\ref{sec::gendev}, we will develop the general part of the theory as far as possible. Then, we will complete the theory by adding the missing parts that depend on the mode of mating control. We will consider \emph{single colony inseminations} in Section~\ref{sec::sci}, \emph{isolated mating stations} in Section~\ref{sec::ims} and the combination of both strategies in Section~\ref{sec::mix}. All these sections are subdivided into
\begin{enumerate}[label = (\roman*)]
    \item the analysis of estimated breeding values, and
    \item the analysis of kinships.
\end{enumerate}

\subsection{General derivations}\label{sec::gendev}
\subsubsection{Breeding value analysis}

\begin{Not}
    Each queen, worker group, and replacement queen is equipped with an estimated breeding value, giving rise to a vector of estimated breeding values
        \[\hat{\mathbf u}_t\in\mathbb R^{\mathcal P_t},\]
    which, under the isomorphism $\mathbb R^{\mathcal P_t}\cong\mathbb R^{\mathcal Q_t}\oplus\mathbb R^{\mathfrak W_t}\oplus\mathbb R^{\mathcal R_t}$ may also be interpreted as
        \[\hat{\mathbf u}_t^{\mathcal Q}\oplus\hat{\mathbf u}_t^{\mathfrak W}\oplus\hat{\mathbf u}_t^{\mathcal R}\in\mathbb R^{\mathcal Q_t}\oplus\mathbb R^{\mathfrak W_t}\oplus\mathbb R^{\mathcal R_t}.\]
\end{Not}

\begin{Rmk}
    \begin{enumerate}[label = (\roman*)]
        \item \label{item::ident} As explained in Remark~\ref{rmk::rpl}\,\ref{item::rplii} and Section~\ref{sec::wgrq}, we have
            \[\hat{\mathbf u}_t^{\mathfrak W}=\hat{\mathbf u}_t^{\mathcal R}.\]
        Or, to put it more precisely, $\hat{\mathbf u}_t^{\mathfrak W}$ is the image of $\hat{\mathbf u}_t^{\mathcal R}$ under the isomorphism $\mathbb R^{\mathcal R_t}\cong\mathbb R^{\mathfrak W_t}$ induced by the bijection $R\mapsto\mathcal W(R)$.

        \item Implicit identifications of the vector spaces $\mathbb R^{\mathcal Q_t}$, $\mathbb R^{\mathfrak W_t}$, and $\mathbb R^{\mathcal R_t}$, like when writing $\hat{\mathbf u}_t^{\mathfrak W}=\hat{\mathbf u}_t^{\mathcal R}$ in~\ref{item::ident}, will occur frequently in the remainder of this manuscript.
    \end{enumerate}
\end{Rmk}

Since all three sets $\mathcal Q_t$, $\mathfrak W_t$, and $\mathcal R_t$ have the same cardinality,
    \[\left|\mathcal Q_t\right|=\left|\mathfrak W_t\right|=\left|\mathcal R_t\right|=N_t,\]
we have
    \[\hat u_{\mathcal P_t} = \frac{\hat u_{\mathcal Q_t}+\hat u_{\mathfrak W_t}+\hat u_{\mathcal R_t}}{3}
                            = \frac{\hat u_{\mathcal Q_t}+2\hat u_{\mathcal R_t}}{3}.\]
However, this value is of little significance. By modeling colonies to have both a worker group and a replacement queen, we artificially upweigh the replacement queen's breeding value. Instead, we want to weigh the breeding values of queens and replacement queens equally in the average breeding value.

\begin{Def}
    We therefore define the \emph{reduced generation} in which colonies consist only of queens and replacement queens
        \[\mathcal P^{\ast}_t:=\mathcal Q_t\sqcup\mathcal R_t\]
    and focus our interest on the average breeding value
    \begin{equation}\label{eq::upast}
        \hat u_{\mathcal P^{\ast}_t}=\frac{\hat u_{\mathcal Q_t}+\hat u_{\mathcal R_t}}{2}.
    \end{equation}

    \begin{center}
        \begin{tikzpicture}
            \path (0,0) coordinate (Pt0)
                    node[left = 0mm of Pt0] {reduced generation $\mathcal P^{\ast}_t$}
                coordinate[below = 4cm of Pt0] (Pt1)
                    node[left = 0mm of Pt1] {reduced generation $\mathcal P^{\ast}_{t+1}$}
                (Pt0) -- (Pt1)
                    coordinate[pos = 0.4] (M)
                coordinate[right = 3cm of Pt0] (C0)
                node[above = 1cm of C0, queen] (Q0) {}
                    node[left = 0cm of Q0] {$Q_1$}
                node[right = 2cm of Q0.center, not considered, group] (D0) {}
                    node[right = 0cm of D0, not considered] {$\mathcal D_1$}
                    -- (D0) pic[not considered] {drones}
                node[below right = 2cm and 0.5cm of Q0.center, not considered, worker group, pattern color = black!40] (W0) {}
                    node[right = 0cm of W0, not considered] {$\mathcal W_1$}
                node[below left = 1cm and 0.5cm of Q0.center, replacement queen] (R0) {}
                    node[left = 0cm of R0] {$R_1$}
                coordinate[right = 2cm of Pt1] (C1)
                node[above = 1cm of C1, queen] (Q1) {}
                    node[left = 0cm of Q1] {$Q_2$}
                node[right = 2cm of Q1.center, not considered, group] (D1) {}
                    node[right = 0cm of D1, not considered] {$\mathcal D_2$}
                    -- (D1) pic[not considered] {drones}
                node[below right = 2cm and 0.5cm of Q1.center, not considered, worker group, pattern color = black!40] (W1) {}
                    node[right = 0cm of W1, not considered] {$\mathcal W_2$}
                node[below left = 1cm and 0.5cm of Q1.center, replacement queen] (R1) {}
                    node[left = 0cm of R1] {$R_2$};

            \draw[mating, not considered] (D0) -- (Q0)
                node[gene pass description, not considered] {mate};
            \draw[mating, not considered] (D1) -- (Q1)
                node[gene pass description, not considered] {mate};
            \draw[inheritance, not considered] (Q0) -- (W0);
            \draw[inheritance] (Q0) -- (R0);
            \draw[inheritance, not considered] (Q1) -- (W1);
            \draw[inheritance] (Q1) -- (R1);
            \draw[inheritance] (Q0) -- (Q1);

            \draw[dashed] (M) ++ (-5cm,0cm) -- ++(12cm,0cm);
        \end{tikzpicture}
    \end{center}
\end{Def}

\begin{Not}
    Each of the vector spaces $\mathbb R^{\mathcal Q_t}$, $\mathbb R^{\mathfrak W_t}$, $\mathbb R^{\mathcal R_t}$, $\mathbb R^{\mathcal P^{\ast}_t}$, and $\mathbb R^{\mathcal P_t}$, contains a vector which has ones as all entries. In analogy with Notation~\ref{not::einsfm}\,\ref{item::einsfm}, we could denote these vectors by $\mathbf 1_t^{\mathcal Q}$, $\mathbf 1_t^{\mathfrak W}$, etc. However, we opt for simpler (yet slightly ambiguous) notation and denote all of these vectors simply by $\mathbf 1_t$.
\end{Not}

\begin{Rmk}
    For each reduced generation $\mathcal P^{\ast}_t$, we have
    \begin{equation} \label{eq::uqcurr}
        \hat u_{\mathcal Q_t}=\frac1{N_t}\mathbf 1_t^{\top}\hat{\mathbf u}_t^{\mathcal Q}
    \end{equation}
    and
    \begin{equation} \label{eq::urcurr}
        \hat u_{\mathcal R_t}=\frac1{N_t}\mathbf 1_t^{\top}\hat{\mathbf u}_t^{\mathcal R}.
    \end{equation}
    Inserting these equations in Equation~\ref{eq::upast} yields
    \begin{equation} \label{upcurr}
        \hat u_{\mathcal P^{\ast}_t}=\frac1{2N_t}\mathbf 1_t^{\top}\hat{\mathbf u}_t^{\mathcal Q}
            + \frac1{2N_t}\mathbf 1_t^{\top}\hat{\mathbf u}_t^{\mathcal R}.
    \end{equation}
\end{Rmk}

In order to maximize $\hat u_{\mathcal P^{\ast}_t}$ over time, we need to calculate the expected average breeding values for the next reduced generation $\mathcal P_{t+1}^{\ast}$, i.\,e. the value $\mathbb E\bigl[\hat u_{\mathcal P_{t+1}^{\ast}}\bigr]$.

\begin{Rmk}
    Since Equation~\ref{eq::upast} also holds for reduced generation $\mathcal P_{t+1}^{\ast}$, we have
    \begin{equation}\label{eq::eupast}
        \mathbb E\bigl[\hat u_{\mathcal P_{t+1}^{\ast}}\bigr] = \frac12\mathbb E\left[\hat u_{\mathcal Q_{t+1}}\right] + \frac12\mathbb E\left[\hat u_{\mathcal R_{t+1}}\right].
    \end{equation}
    Thereby, the task of calculating $\mathbb E\bigl[\hat u_{\mathcal P_{t+1}^{\ast}}\bigr]$ is broken down to calculating the two values $\mathbb E\left[\hat u_{\mathcal Q_{t+1}}\right]$ and $\mathbb E\left[\hat u_{\mathcal R_{t+1}}\right]$.
\end{Rmk}

We break the task down further:

\begin{Not}
    \begin{enumerate}[label = (\roman*)]
        \item As in the case of diploids with overlapping generations (Notation~\ref{not::nsclassic}), we subdivide reduced generation $\mathcal P_{t+1}^{\ast}$ (consisting of queens and replacement queens) into the newly created entities
            \[\mathcal N_{t+1}=\mathcal P^{\ast}_{t+1}\backslash\mathcal P^{\ast}_t,\]
        and those that survived from the previous generation,
            \[\mathcal S_{t+1}=\mathcal P^{\ast}_{t+1}\cap\mathcal P^{\ast}_t,\]
        so that
            \[\mathcal P^{\ast}_{t+1}=\mathcal N_{t+1}\sqcup\mathcal S_{t+1}.\]

        \item Accordingly, the total number of colonies at time $t+1$ is the sum of the newly created ones and the older surviving ones,
            \[N_{t+1}=N_{t+1}^{\mathcal N}+N_{t+1}^{\mathcal S}.\]

        \item With the already existing subdivision of $\mathcal P^{\ast}_{t+1}$ into queens and replacement queens, this separates $\mathcal P^{\ast}_{t+1}$ into four disjoint classes:
            \[\mathcal P_{t+1}=\mathcal N\mathcal Q_{t+1}\sqcup\mathcal N\mathcal R_{t+1}\sqcup\mathcal S\mathcal Q_{t+1}\sqcup\mathcal S\mathcal R_{t+1},\]
        where
        \begin{align*}
            \mathcal N\mathcal Q_{t+1} &:= \mathcal N_{t+1}\cap\mathcal Q_{t+1},\\
            \mathcal N\mathcal R_{t+1} &:= \mathcal N_{t+1}\cap\mathcal R_{t+1},\\
            \mathcal S\mathcal Q_{t+1} &:= \mathcal S_{t+1}\cap\mathcal Q_{t+1},\\
            \mathcal S\mathcal R_{t+1} &:= \mathcal S_{t+1}\cap\mathcal R_{t+1}.
        \end{align*}
    \end{enumerate}
\end{Not}

\begin{Lem}\label{lem::genred}
    We have
    \begin{align}
        \mathbb E\left[\hat{u}_{\mathcal Q_{t+1}}\right] &=
            \frac{N_{t+1}^{\mathcal N}\mathbb E\left[\hat{u}_{\mathcal N\mathcal Q_{t+1}}\right]
                + N_{t+1}^{\mathcal S}\mathbb E\left[\hat{u}_{\mathcal S\mathcal Q_{t+1}}\right]}{N_{t+1}},\label{eq::fstt}\\
        \mathbb E\left[\hat{u}_{\mathcal R_{t+1}}\right] &=
            \frac{N_{t+1}^{\mathcal N}\mathbb E\left[\hat{u}_{\mathcal N\mathcal R_{t+1}}\right]
                + N_{t+1}^{\mathcal S}\mathbb E\left[\hat{u}_{\mathcal S\mathcal R_{t+1}}\right]}{N_{t+1}},\label{eq::sstt}\\
        \mathbb E\bigl[\hat{u}_{\mathcal P^{\ast}_{t+1}}\bigr] &=
            \frac{N_{t+1}^{\mathcal N}\mathbb E\left[\hat{u}_{\mathcal N\mathcal Q_{t+1}}\right]
                + N_{t+1}^{\mathcal S}\mathbb E\left[\hat{u}_{\mathcal S\mathcal Q_{t+1}}\right]
                + N_{t+1}^{\mathcal N}\mathbb E\left[\hat{u}_{\mathcal N\mathcal R_{t+1}}\right]
                + N_{t+1}^{\mathcal S}\mathbb E\left[\hat{u}_{\mathcal S\mathcal R_{t+1}}\right]}{2N_{t+1}}.\label{eq::tstt}
    \end{align}
\end{Lem}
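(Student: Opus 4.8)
The plan is to read each of the three averages as a genuine mean over a finite set of entities and to exploit that the relevant sets split cleanly into a ``newly created'' and a ``surviving'' part. Equations~\ref{eq::fstt} and~\ref{eq::sstt} are then nothing more than the elementary fact that the mean over a disjoint union is the size-weighted mean of the two partial means, and Equation~\ref{eq::tstt} is a purely formal substitution into Equation~\ref{eq::eupast}. Throughout, the passage from averages to expectations is justified by linearity of the expectation.

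First I would pin down the cardinalities. Every colony present at time $t+1$ carries exactly one queen and one replacement queen, so each of the $N_{t+1}^{\mathcal N}$ newly created colonies contributes one entity to $\mathcal N\mathcal Q_{t+1}$ and one to $\mathcal N\mathcal R_{t+1}$, while each of the $N_{t+1}^{\mathcal S}$ surviving colonies contributes one entity to $\mathcal S\mathcal Q_{t+1}$ and one to $\mathcal S\mathcal R_{t+1}$. Hence $|\mathcal N\mathcal Q_{t+1}|=|\mathcal N\mathcal R_{t+1}|=N_{t+1}^{\mathcal N}$, $|\mathcal S\mathcal Q_{t+1}|=|\mathcal S\mathcal R_{t+1}|=N_{t+1}^{\mathcal S}$, and $|\mathcal Q_{t+1}|=N_{t+1}$. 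Using $\mathcal Q_{t+1}=\mathcal N\mathcal Q_{t+1}\sqcup\mathcal S\mathcal Q_{t+1}$ and the definition of a group's average breeding value (Definition~\ref{def::mathcalb}), I would then write
\[
\hat u_{\mathcal Q_{t+1}}
= \frac1{N_{t+1}}\sum_{Q\in\mathcal Q_{t+1}}\hat u_Q
= \frac1{N_{t+1}}\left(\sum_{Q\in\mathcal N\mathcal Q_{t+1}}\hat u_Q+\sum_{Q\in\mathcal S\mathcal Q_{t+1}}\hat u_Q\right)
= \frac{N_{t+1}^{\mathcal N}\hat u_{\mathcal N\mathcal Q_{t+1}}+N_{t+1}^{\mathcal S}\hat u_{\mathcal S\mathcal Q_{t+1}}}{N_{t+1}},
\]
where the last step merely reinserts the partial averages via their definitions. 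Taking expectations on both sides and invoking linearity yields Equation~\ref{eq::fstt}, and the identical argument with $\mathcal R$ in place of $\mathcal Q$ (and the matching cardinalities above) yields Equation~\ref{eq::sstt}.

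For Equation~\ref{eq::tstt} I would simply substitute the two expressions just obtained into Equation~\ref{eq::eupast}, which reads $\mathbb E[\hat u_{\mathcal P^{\ast}_{t+1}}]=\frac12\mathbb E[\hat u_{\mathcal Q_{t+1}}]+\frac12\mathbb E[\hat u_{\mathcal R_{t+1}}]$, and collect the four resulting terms over the common denominator $2N_{t+1}$. No further manipulation is needed, since both decompositions already share the denominator $N_{t+1}$.

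I do not expect any genuine obstacle: the statement is a weighted-mean decomposition dressed up with expectations. The one point deserving care is the cardinality bookkeeping of the opening step — in particular the convention that a surviving colony's imaginary replacement queen is itself counted as ``surviving'' (so that it lands in $\mathcal S\mathcal R_{t+1}$ rather than $\mathcal N\mathcal R_{t+1}$). It is precisely this convention that makes $|\mathcal N\mathcal R_{t+1}|$ and $|\mathcal S\mathcal R_{t+1}|$ agree with the corresponding queen counts, and hence lets the single denominator $N_{t+1}$ serve both decompositions and collapse cleanly into the $2N_{t+1}$ of the third identity.
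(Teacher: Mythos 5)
Your proof is correct and follows the same route as the paper, which likewise derives Equations~\ref{eq::fstt} and~\ref{eq::sstt} directly from the partition $\mathcal P^{\ast}_{t+1}=\mathcal N_{t+1}\sqcup\mathcal S_{t+1}$ and obtains Equation~\ref{eq::tstt} by substitution into Equation~\ref{eq::eupast}. You merely spell out the cardinality bookkeeping and the linearity-of-expectation step that the paper leaves implicit as ``immediate consequences.''
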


\begin{proof}
    Equations~\ref{eq::fstt} and~\ref{eq::sstt} are immediate consequences of the partition
        \[\mathcal P_{t+1}^{\ast}=\mathcal N_{t+1}\sqcup\mathcal S_{t+1}.\]
    Equation~\ref{eq::tstt} follows by inserting Equations~\ref{eq::fstt} and~\ref{eq::sstt} into Equation~\ref{eq::eupast}.
\end{proof}

\begin{Rmk}\label{rmk::brokenu}
    By Lemma~\ref{lem::genred}, the task of calculating $\mathbb E\bigl[\hat{u}_{\mathcal P^{\ast}_{t+1}}\bigr]$ is equivalent to calculating the four values $\mathbb E\left[\hat{u}_{\mathcal N\mathcal Q_{t+1}}\right]$, $\mathbb E\left[\hat{u}_{\mathcal S\mathcal Q_{t+1}}\right]$, $\mathbb E\left[\hat{u}_{\mathcal N\mathcal R_{t+1}}\right]$, and $\mathbb E\left[\hat{u}_{\mathcal S\mathcal R_{t+1}}\right]$.
\end{Rmk}

We leave the task to calculate these four expectations open for now. It is tackled for mating control via single colony insemination in Section~\ref{sec::scibv}, for mating control via isolated mating stations in Section~\ref{sec::imsbv} and for the mixed strategy in Section~\ref{sec::mixbv}.

Instead, we turn our attention to the analysis of kinships.

\subsubsection{Kinship analysis}

\begin{Not}
    By the partition of $\mathcal P_t = \mathcal Q_t\sqcup\mathfrak W_t\sqcup\mathcal R_t$ into queens, worker groups, and replacement queens, the matrix $\mathbf K_t$ of kinships in generation $\mathcal P_t$ is subdivided into nine equal-sized blocks,
        \[\mathbf K_t =
            \begin{pmatrix}
                \mathbf K_t^{\mathcal Q\mathcal Q}  & \mathbf K_t^{\mathcal Q\mathfrak W}  & \mathbf K_t^{\mathcal Q\mathcal R} \\
                \mathbf K_t^{\mathfrak W\mathcal Q} & \mathbf K_t^{\mathfrak W\mathfrak W} & \mathbf K_t^{\mathfrak W\mathcal R}\\
                \mathbf K_t^{\mathcal R\mathcal Q}  & \mathbf K_t^{\mathcal R\mathfrak W}  & \mathbf K_t^{\mathcal R\mathcal R}
            \end{pmatrix}.\]
    Here, $\mathbf K_t^{\mathcal Q\mathcal Q}$ contains the kinships among queens, $\mathbf K_t^{\mathfrak W\mathfrak W}$ contains the kinships among worker groups and $\mathbf K_t^{\mathcal R\mathcal R}$ contains the kinships among replacement queens. The other blocks contain the kinships between the different categories.
\end{Not}

\begin{Rmk}\label{rmk::onk}
    \begin{enumerate}[label = (\roman*)]
        \item Because kinships are symmetric, so are $\mathbf K_t^{\mathcal Q\mathcal Q}$, $\mathbf K_t^{\mathfrak W\mathfrak W}$, and $\mathbf K_t^{\mathcal R\mathcal R}$.

        \item Moreover, we have
        \begin{align*}
            \mathbf K_t^{\mathfrak W\mathcal Q} &= \left(\mathbf K_t^{\mathcal Q\mathfrak W}\right)^{\top},\\
            \mathbf K_t^{\mathcal R\mathcal Q}  &= \left(\mathbf K_t^{\mathcal Q\mathcal R}\right)^{\top}, \\
            \mathbf K_t^{\mathcal R\mathfrak W} &= \left(\mathbf K_t^{\mathfrak W\mathcal R}\right)^{\top}.
        \end{align*}

        \item \label{item::onkdiag} As explained in Section~\ref{sec::wgrq}, kinships to replacement queens are generally the same as kinships to worker groups, with the only exception of self-kinships. We therefore have
            \[\mathbf K_t^{\mathcal Q\mathfrak W}=\mathbf K_t^{\mathcal Q\mathcal R}\]
        and
            \[\mathbf K_t^{\mathfrak W\mathcal R}=\mathbf K_t^{\mathfrak W\mathfrak W},\]
        whereas $\mathbf K_t^{\mathfrak W\mathfrak W}$ and $\mathbf K_t^{\mathcal R\mathcal R}$ only differ on the diagonal.
    \end{enumerate}
\end{Rmk}

We remind ourselves of the short discussion in Section~\ref{sec::wgrq} to conclude that we are mainly interested in the development of the average kinships $k_{\mathcal P_t^{\ast},\mathcal P_t^{\ast}}$ in the reduced population $\mathcal P^{\ast}_t=\mathcal Q_t\sqcup\mathcal R_t$

\begin{Rmk}
    \begin{enumerate}[label = (\roman*)]
        \item Since all blocks of matrix $\mathbf K_t$ are of equal size $N_t\times N_t$, the average kinship in the reduced generation $\mathcal P^{\ast}_t$ is
        \begin{align*}
            k_{\mathcal P^{\ast}_t,\mathcal P^{\ast}_t}
                &= \frac14k_{\mathcal Q_t,\mathcal Q_t}+\frac12k_{\mathcal Q_t,\mathcal R_t}+\frac14k_{\mathcal R_t,\mathcal R_t}  \\
                &= \frac1{4N_t^2}\left(\mathbf 1_t^{\top}\mathbf{K}_t^{\mathcal Q\mathcal Q}\mathbf 1_t
                    + 2\cdot\mathbf 1_t^{\top}\mathbf{K}_t^{\mathcal Q\mathcal R}\mathbf 1_t
                    + \mathbf 1_t^{\top}\mathbf{K}_t^{\mathcal R\mathcal R}\mathbf 1_t\right).
        \end{align*}

        \item Evidently, this identity also holds in the next generation, i.\,e.
        \begin{equation}\label{eq::kng}
            k_{\mathcal P^{\ast}_{t+1},\mathcal P^{\ast}_{t+1}}
                = \frac14k_{\mathcal Q_{t+1},\mathcal Q_{t+1}}+\frac12k_{\mathcal Q_{t+1},\mathcal R_{t+1}}+\frac14k_{\mathcal R_{t+1},\mathcal R_{t+1}}.
        \end{equation}
    \end{enumerate}
\end{Rmk}

\begin{Rmk}\label{rmk::brokenk}
    \begin{enumerate}[label = (\roman*)]
        \item \label{item::teni} By Equation~\ref{eq::kng}, we can determine $k_{\mathcal P_{t+1}^{\ast},\mathcal P_{t+1}^{\ast}}$ if we know the three average kinships $k_{\mathcal Q_{t+1},\mathcal Q_{t+1}}$, $k_{\mathcal Q_{t+1},\mathcal R_{t+1}}$, and $k_{\mathcal R_{t+1},\mathcal R_{t+1}}$.

        \item \label{item::tenii} These three values can in turn be calculated as weighted averages of kinships between newly created and surviving entities:
        \begin{align}
            k_{\mathcal Q_{t+1},\mathcal Q_{t+1}} &=
                \left(\frac{N_{t+1}^{\mathcal N}}{N_{t+1}}\right)^2k_{\mathcal N\mathcal Q_{t+1}, \mathcal N\mathcal Q_{t+1}}
                    + \frac{2N_{t+1}^{\mathcal N}N_{t+1}^{\mathcal S}}{N_{t+1}^2}k_{\mathcal N\mathcal Q_{t+1},\mathcal S\mathcal Q_{t+1}}\nonumber\\
                    &\hspace{1cm}+ \left(\frac{N_{t+1}^{\mathcal S}}{N_{t+1}}\right)^2k_{\mathcal S\mathcal Q_{t+1}, \mathcal S\mathcal Q_{t+1}},\\
            k_{\mathcal Q_{t+1},\mathcal R_{t+1}} &=
                \left(\frac{N_{t+1}^{\mathcal N}}{N_{t+1}}\right)^2 k_{\mathcal N\mathcal Q_{t+1}, \mathcal N\mathcal R_{t+1}}
                    + \frac{N_{t+1}^{\mathcal N}N_{t+1}^{\mathcal S}}{N_{t+1}^2}\left(k_{\mathcal N\mathcal Q_{t+1},\mathcal S\mathcal R_{t+1}}
                        + k_{\mathcal S\mathcal Q_{t+1},\mathcal N\mathcal R_{t+1}}\right)\nonumber\\
                    &\hspace{1cm}+ \left(\frac{N_{t+1}^{\mathcal S}}{N_{t+1}}\right)^2 k_{\mathcal S\mathcal Q_{t+1}, \mathcal S\mathcal R_{t+1}},\\
            k_{\mathcal R_{t+1},\mathcal R_{t+1}} &=
                \left(\frac{N_{t+1}^{\mathcal N}}{N_{t+1}}\right)^2k_{\mathcal N\mathcal R_{t+1}, \mathcal N\mathcal R_{t+1}}
                    +\frac{2N_{t+1}^{\mathcal N}N_{t+1}^{\mathcal S}}{N_{t+1}^2}k_{\mathcal N\mathcal R_{t+1},\mathcal S\mathcal R_{t+1}}\nonumber\\
                    &\hspace{1cm} +\left(\frac{N_{t+1}^{\mathcal S}}{N_{t+1}}\right)^2k_{\mathcal S\mathcal R_{t+1}, \mathcal S\mathcal R_{t+1}}.
        \end{align}

        \item \label{item::brokenkiii} Following~\ref{item::teni} and~\ref{item::tenii}, what we need to do is to calculate the ten different average kinships $k_{\mathcal N\mathcal Q_{t+1}, \mathcal N\mathcal Q_{t+1}}$, $k_{\mathcal N\mathcal Q_{t+1}, \mathcal N\mathcal R_{t+1}}$, $k_{\mathcal N\mathcal Q_{t+1}, \mathcal S\mathcal Q_{t+1}}$, $k_{\mathcal N\mathcal Q_{t+1}, \mathcal S\mathcal R_{t+1}}$, $k_{\mathcal N\mathcal R_{t+1}, \mathcal N\mathcal R_{t+1}}$, $k_{\mathcal N\mathcal R_{t+1}, \mathcal S\mathcal Q_{t+1}}$, $k_{\mathcal N\mathcal R_{t+1}, \mathcal S\mathcal R_{t+1}}$, $k_{\mathcal S\mathcal Q_{t+1}, \mathcal S\mathcal Q_{t+1}}$, $k_{\mathcal S\mathcal Q_{t+1}, \mathcal S\mathcal R_{t+1}}$, and $k_{\mathcal S\mathcal R_{t+1}, \mathcal S\mathcal R_{t+1}}$.
    \end{enumerate}
\end{Rmk}

As in the breeding value analysis, we leave the task to actually compute these ten values open for now. It is tackled for mating control via single colony insemination in Section~\ref{sec::scikin}, for mating control via isolated mating stations in Section~\ref{sec::imskin} and for the mixed strategy in Section~\ref{sec::mixkin}.

The distinction of the different mating strategies starts now with the treatment of single colony insemination.

\subsection{Single colony insemination}\label{sec::sci}

If young queens are always inseminated with drones from a single colony, any queen $Q\in\mathcal Q_t$ can contribute to the genetic setup of the next generation $\mathcal P_{t+1}$ in three different ways.
\begin{enumerate}[label = (\roman*)]
    \item $Q$ can produce new daughter queens in generation $\mathcal Q_{t+1}$. We call this type of contribution the \emph{dam path}.
    \item Drones produced by $Q$ can be used to fertilize new queens in $\mathcal Q_{t+1}$. In the nomenclature of \url{www.beebreed.eu}, a queen in the role of $Q$ is called "1b-queen" \citep{uzunov23}, which is why we call this the \emph{1b-path}.
    \item $Q$ can survive and still be alive a time $t+1$. This is the \emph{survival path}.
\end{enumerate}

\begin{Rmk}
    In case of the survival path, also the worker group $\mathcal W(Q)$ and the replacement queen $R(Q)$ contribute to the next generation.

    \begin{center}
        \begin{tikzpicture}
            \path (0,0) coordinate (Pt0)
                    node[left = 0mm of Pt0] {generation $\mathcal P_t$}
                coordinate[below = 4cm of Pt0] (Pt1)
                    node[left = 0mm of Pt1] {generation $\mathcal P_{t+1}$}
                (Pt0) -- (Pt1)
                    coordinate[pos = 0.4] (M)
                coordinate[right = 6cm of Pt0] (C0)
                node[above = 1cm of C0, queen] (Q0) {}
                    node[left = 0cm of Q0] {$Q_0$}
                node[right = 2cm of Q0.center, not considered, group] (D0) {}
                    node[right = 0cm of D0, not considered] {$\mathcal D_0$}
                    -- (D0) pic[not considered] {drones}
                node[below right = 2cm and 0.5cm of Q0.center, worker group] (W0) {}
                    node[right = 0cm of W0] {$\mathcal W_0$}
                node[below left = 1cm and 0.5cm of Q0.center, replacement queen] (R0) {}
                    node[left = 0cm of R0] {$R_0$}

                coordinate[right = 1.7cm of Pt1] (C1)
                node[above = 1cm of C1, queen] (Q1) {}
                    node[left = 0cm of Q1] {$Q_1$}
                node[right = 2cm of Q1.center, not considered, group] (D1) {}
                    node[right = 0cm of D1, not considered] {$\mathcal D_1$}
                    -- (D1) pic[not considered] {drones}
                node[below right = 2cm and 0.5cm of Q1.center, worker group] (W1) {}
                    node[right = 0cm of W1] {$\mathcal W_1$}
                node[below left = 1cm and 0.5cm of Q1.center, replacement queen] (R1) {}
                    node[left = 0cm of R1] {$R_1$}

                coordinate[right = 6cm of Pt1] (C2)
                node[above = 1cm of C2, queen] (Q2) {}
                    node[left = 0cm of Q2] {$Q_0$}
                node[below right = 2cm and 0.5cm of Q2.center, worker group] (W2) {}
                    node[right = 0cm of W2] {$\mathcal W_0$}
                node[below left = 1cm and 0.5cm of Q2.center, replacement queen] (R2) {}
                    node[left = 0cm of R2] {$R_0$}

                coordinate[right = 10cm of Pt1] (C3)
                node[above = 1cm of C3, queen] (Q3) {}
                    node[left = 0cm of Q3] {$Q_2$}
                node[below right = 2cm and 0.5cm of Q3.center, worker group] (W3) {}
                    node[right = 0cm of W3] {$\mathcal W_2$}
                node[below left = 1cm and 0.5cm of Q3.center, replacement queen] (R3) {}
                    node[left = 0cm of R3] {$R_2$};

            \draw[mating, not considered] (D0) -- (Q0)
                node[gene pass description, not considered] {mate};
            \draw[mating, not considered] (D1) -- (Q1)
                node[gene pass description, not considered] {mate};
            \draw[inheritance] (Q0) -- (W0);
            \draw[inheritance] (Q0) -- (R0);
            \draw[inheritance] (Q1) -- (W1);
            \draw[inheritance] (Q1) -- (R1);
            \draw[inheritance] (Q2) -- (W2);
            \draw[inheritance] (Q2) -- (R2);
            \draw[inheritance] (Q3) -- (W3);
            \draw[inheritance] (Q3) -- (R3);
            \draw[inheritance] (Q0) .. controls ++(-2,-1) .. (D1)
                node[gene pass description] {1b-path};
            \draw[inheritance] (Q0) -- (Q3)
                node[gene pass description] {dam path};
            \draw[survival] (W0) -- (W2)
                node[gene pass description] {survival path};
            \draw[survival] (Q0) -- (Q2);
            \draw[survival] (R0) -- (R2);

            \draw[dashed] (M) ++ (-3cm,0cm) -- ++(15cm,0cm);
        \end{tikzpicture}
    \end{center}
\end{Rmk}

\begin{Rmk}
    \begin{enumerate}[label = (\roman*)]
        \item In the theory for diploid species we had equipped each individual $I\in\mathcal P_t$ with a genetic contribution $c_{I,t}\in[0,1]$ towards the next generation (Notation~\ref{not::nsclassic}\,\ref{item::nsclassicc}). For honeybees, only the queens $Q\in\mathcal Q_t$ are equipped with such values, because worker groups and replacement queens do not pass on their genes to further generations.

        \item For a queen $Q\in\mathcal Q_t$, instead of a single value $c_{Q,t}$, we will need two separate values for $Q$'s contributions via the dam path and via the 1b-path.
    \end{enumerate}
\end{Rmk}

\begin{Not}\label{not::dcbc}
    \begin{enumerate}[label = (\roman*)]
        \item \label{item::dc} Each queen $NQ\in\mathcal N\mathcal Q_{t+1}$ that newly hatches at time $t+1$ has a dam $Q\in\mathcal Q_t$. For each queen $Q\in\mathcal Q_t$, we let $dc_{Q,t}$ be the fraction of queens in $\mathcal N\mathcal Q_{t+1}$ for which $Q$ serves as the \emph{dam}.

        \item Furthermore, each new queen $NQ\in\mathcal N\mathcal Q_{t+1}$ is inseminated with drones from a queen $Q\in\mathcal Q_t$, which thus serves as \emph{1b-queen}. We denote the fraction of queens in $\mathcal N\mathcal Q_{t+1}$ that were inseminated with drones from $Q\in\mathcal Q_t$ by $bc_{Q,t}$.

        \item \label{item::dcbcvec} This gives rise to two vectors $\mathbf {dc}_{t}, \mathbf {bc}_t\in\mathbb R_{\geq0}^{\mathcal Q_t}$ of contributions to the (newly generated entities in the) next generation via the dam path and 1b-path, respectively.
    \end{enumerate}
\end{Not}

\begin{Rmk}\label{rmk::dbone}
    Because all newly created queens in $\mathcal N\mathcal Q_{t+1}$ need to have a dam $\mathcal Q\in\mathcal Q_t$ and mate with drones from a 1b-queen $S\in\mathcal Q_t$, we have
        \[\mathbf 1_t^{\top}\mathbf {dc}_{t}=1\]
    and
        \[\mathbf 1_t^{\top}\mathbf {bc}_{t}=1.\]
\end{Rmk}

\begin{Not}\label{not::survi}
    As in Notation~\ref{not::basc}\,\ref{item::basci}, for each entity $E\in\mathcal P_t$, we denote the binary survival information by
        \[s_{E,t}=\begin{cases}1,&\text{if } E\in\mathcal P_{t+1}\\0,& \text{otherwise}\end{cases}.\]
    This gives rise to survival vectors $\mathbf s_t^{\mathcal Q}\in\mathbb R^{\mathcal Q_t}$, $\mathbf s_t^{\mathfrak W}\in\mathbb R^{\mathfrak W_t}$ and $\mathbf s_t^{\mathcal R}\in\mathbb R^{\mathcal R_t}$. But since a colony, consisting of queen worker group and replacement queen, dies or survives as a whole, all these three vectors are essentially the same (up to the canonical isomorphisms). For easier notation, we thus simply write
        \[\mathbf s_t:=\mathbf s_t^{\mathcal Q}=\mathbf s_t^{\mathfrak W}=\mathbf s_t^{\mathcal R}.\]
\end{Not}

\subsubsection{Breeding value development}\label{sec::scibv}

By Remark~\ref{rmk::brokenu}, we need to calculate the four expectations $\mathbb E\left[\hat{u}_{\mathcal N\mathcal Q_{t+1}}\right]$, $\mathbb E\left[\hat{u}_{\mathcal S\mathcal Q_{t+1}}\right]$, $\mathbb E\left[\hat{u}_{\mathcal N\mathcal R_{t+1}}\right]$, and $\mathbb E\left[\hat{u}_{\mathcal S\mathcal R_{t+1}}\right]$ in order to deduce the desired value of $\mathbb E\bigl[\hat{u}_{\mathcal P^{\ast}_{t+1}}\bigr]$. We are now equipped with the necessary tools to do so.

\begin{Lem}\label{lem::alltheus}
    We have
    \begin{align}
        \mathbb E\left[\hat{u}_{\mathcal N\mathcal Q_{t+1}}\right]
            &= \mathbf {dc}_{t}^{\top}\hat{\mathbf u}_{t}^{\mathcal R}, \label{eq::unq}\\
        \mathbb E\left[\hat{u}_{\mathcal S\mathcal Q_{t+1}}\right]
            &= \frac1{N_{t+1}^{\mathcal S}}\mathbf s_{t}^{\top}\hat{\mathbf u}_{t}^{\mathcal Q}, \label{eq::usq}\\
        \mathbb E\left[\hat u_{\mathcal N\mathcal R_{t+1}}\right]
            &= \frac12\mathbf {dc}_{t}^{\top}\hat{\mathbf u}_{t}^{\mathcal R}
                + \frac12\mathbf {bc}_{t}^{\top}\hat{\mathbf u}_{t}^{\mathcal Q}, \label{eq::unr}\\
        \mathbb E\left[\hat{u}_{\mathcal S\mathcal R_{t+1}}\right]
            &= \frac1{N_{t+1}^{\mathcal S}}\mathbf s_{t}^{\top}\hat{\mathbf u}_{t}^{\mathcal R}. \label{eq::usr}
    \end{align}
\end{Lem}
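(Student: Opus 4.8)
The plan is to prove the four identities separately, splitting them into the two elementary \emph{survival} identities (Equations~\ref{eq::usq} and~\ref{eq::usr}) and the two \emph{newly created} identities (Equations~\ref{eq::unq} and~\ref{eq::unr}) that invoke the honeybee inheritance rules of Lemmas~\ref{lem::expdaughter}, \ref{lem::bvwork}, and~\ref{lem::bvdrgr}. Throughout I would use the identification $\hat{\mathbf u}_t^{\mathfrak W}=\hat{\mathbf u}_t^{\mathcal R}$, so that the worker group value $\hat u_{\mathcal W(Q)}$ may be read off as the entry $(\hat{\mathbf u}_t^{\mathcal R})_Q$, together with linearity of the expectation.

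For the survival identities I would argue exactly as for the survivor term in Theorem~\ref{thm::eusimple}. By Notation~\ref{not::survi} a colony survives as a whole, so $\mathcal S\mathcal Q_{t+1}$ consists precisely of the queens $Q\in\mathcal Q_t$ with $s_{Q,t}=1$ and $\mathcal S\mathcal R_{t+1}$ of their replacement queens; since survivors retain their breeding values, the group averages are $\frac1{N_{t+1}^{\mathcal S}}\sum_{Q\in\mathcal Q_t}s_{Q,t}\hat u_Q$ and the analogous sum over $\hat u_{R(Q)}$, which are exactly $\frac1{N_{t+1}^{\mathcal S}}\mathbf s_t^{\top}\hat{\mathbf u}_t^{\mathcal Q}$ and $\frac1{N_{t+1}^{\mathcal S}}\mathbf s_t^{\top}\hat{\mathbf u}_t^{\mathcal R}$. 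These need no expectation beyond the convention that $\mathbf s_t$ is known.

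For $\mathbb E[\hat u_{\mathcal N\mathcal Q_{t+1}}]$ I would note that every $NQ\in\mathcal N\mathcal Q_{t+1}$ is a daughter of its dam $Q\in\mathcal Q_t$, so by Lemma~\ref{lem::expdaughter} its expected breeding value is $\hat u_{\mathcal W(Q)}=(\hat{\mathbf u}_t^{\mathcal R})_Q$. Averaging over $\mathcal N\mathcal Q_{t+1}$ and grouping the new queens by dam, the fraction with dam $Q$ is $dc_{Q,t}$ (Notation~\ref{not::dcbc}), so the average collapses to $\sum_{Q\in\mathcal Q_t}dc_{Q,t}\hat u_{R(Q)}=\mathbf{dc}_t^{\top}\hat{\mathbf u}_t^{\mathcal R}$, which is Equation~\ref{eq::unq}.

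I expect Equation~\ref{eq::unr} to be the main obstacle, since it chains two inheritance steps. The replacement queen $R(NQ)$ of a new queen $NQ$ stands for a daughter of $NQ$, so its expected breeding value equals $\hat u_{\mathcal W(NQ)}$, which by Lemma~\ref{lem::bvwork} (Equation~\ref{eq::lqi}) decomposes as $\tfrac12\hat u_{NQ}+\hat u_{\mathcal D(NQ)}$ with $\mathcal D(NQ)$ the drones that $NQ$ mated with. Taking expectations termwise, $\mathbb E[\hat u_{NQ}]=\hat u_{R(Q)}$ by Lemma~\ref{lem::expdaughter} (with dam $Q$), while the drones come from the 1b-queen $S\in\mathcal Q_t$, so Lemma~\ref{lem::bvdrgr} (Equation~\ref{eq::bvdrgr}) gives $\mathbb E[\hat u_{\mathcal D(NQ)}]=\tfrac12\hat u_S$; hence $\mathbb E[\hat u_{R(NQ)}]=\tfrac12\hat u_{R(Q)}+\tfrac12\hat u_S$. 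The delicate point is that the two summands must be weighted by \emph{different} contribution vectors: summing over $\mathcal N\mathcal R_{t+1}$ and grouping the first summand by dam produces the weights $dc_{Q,t}$, whereas grouping the second by 1b-queen produces the weights $bc_{S,t}$, yielding $\tfrac12\mathbf{dc}_t^{\top}\hat{\mathbf u}_t^{\mathcal R}+\tfrac12\mathbf{bc}_t^{\top}\hat{\mathbf u}_t^{\mathcal Q}$. Keeping the coefficient on the drone term at $\tfrac12$ rather than $\tfrac14$ — because a worker group inherits the \emph{full} drone breeding value in Equation~\ref{eq::lqi} while the drone group itself carries only half of its dam's value — is the step most prone to error.
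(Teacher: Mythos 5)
Your proposal is correct and follows essentially the same route as the paper's proof: survivors contribute their unchanged breeding values, new queens inherit the dam's replacement-queen value via Lemma~\ref{lem::expdaughter} weighted by $\mathbf{dc}_t$, and the new replacement queens split into a dam-path term weighted by $\mathbf{dc}_t$ and a drone term weighted by $\mathbf{bc}_t$ with the crucial coefficient $\tfrac12$ on the 1b-queen's value, exactly as in the paper. The only cosmetic differences are that you decompose via the worker group (Lemma~\ref{lem::bvwork}) where the paper invokes the equivalent replacement-queen formula (Lemma~\ref{lem::bvrepq}), and you cite Lemma~\ref{lem::bvdrgr} with a singleton dam group where the paper cites Lemma~\ref{lem::bvinherits}\,(i) — both yield the same $\tfrac12\hat u_S$.
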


\begin{proof}
    \begin{enumerate}[label = (\roman*)]
        \item We start by Equations~\ref{eq::unq} and~\ref{eq::usq}, i.\,e.
            \[\mathbb E\left[\hat{u}_{\mathcal N\mathcal Q_{t+1}}\right]
                = \mathbf {dc}_{t}^{\top}\hat{\mathbf u}_{t}^{\mathcal R}\quad\text{and}\quad
            \mathbb E\left[\hat{u}_{\mathcal S\mathcal Q_{t+1}}\right]
                = \frac1{N_{t+1}^{\mathcal S}}\mathbf s_{t}^{\top}\hat{\mathbf u}_{t}^{\mathcal Q}.\]

        \begin{center}
            \begin{tikzpicture}
                \path (0,0) coordinate (Pt0)
                        node[left = 0mm of Pt0] {generation $\mathcal P_t$}
                    coordinate[below = 2.5cm of Pt0] (Pt1)
                        node[left = 0mm of Pt1] {generation $\mathcal P_{t+1}$}
                    (Pt0) -- (Pt1)
                        coordinate[pos = 0.4] (M)

                    coordinate[right = 2cm of Pt0] (C0)
                    node[above = 0.5cm of C0, queen] (Q) {}
                        node[left = 0cm of Q] {$Q$}
                    node[right = 2cm of Q.center, not considered, group] (DQ) {}
                        node[right = 0cm of DQ, not considered] {$\mathcal D$}
                        -- (DQ) pic[not considered] {drones}
                    node[below left = 1cm and 0.5cm of Q.center, replacement queen] (R) {}
                        node[left = 0cm of R] {$R(Q)$}

                    coordinate[right = 7.5cm of Pt0] (C1)
                    node[above = 0.5cm of C1, queen] (SQ0) {}
                        node[left = 0cm of SQ0] {$SQ$}

                    coordinate[right = 2cm of Pt1] (C2)
                    node[above = 0.5cm of C2, queen] (NQ) {}
                        node[left = 0cm of NQ] {$NQ$}
                        node[below = 0cm of NQ, font = \footnotesize] {$\mathbb E\left[\hat u_{NQ,t+1}\right]=\hat u_{R(Q),t}$}

                    coordinate[right = 7.5cm of Pt1] (C3)
                    node[above = 0.5cm of C3, queen] (SQ1) {}
                        node[left = 0cm of SQ1] {$SQ$}
                        node[below = 0cm of SQ1, font = \footnotesize] {$\mathbb E\left[\hat u_{SQ,t+1}\right]=\hat u_{SQ,t}$};

                \draw[mating, not considered] (DQ) -- (Q)
                    node[gene pass description, not considered] {mate};
            \draw[inheritance] (Q) -- (R);
            \draw[inheritance] (Q) -- (NQ)
                node[gene pass description] {dam path};
            \draw[survival] (SQ0) -- (SQ1)
                node[gene pass description] {survival path};

            \draw[dashed] (M) ++ (-3cm,0cm) -- ++(13.7cm,0cm);
            \end{tikzpicture}
        \end{center}

        The expected breeding value of a new queen $NQ\in\mathcal N\mathcal Q_{t+1}$ with dam $Q\in\mathcal Q_t$ is precisely the estimated breeding value of $Q$'s replacement queen $R(Q)$ (Lemma~\ref{lem::expdaughter} in combination with Section~\ref{sec::wgrq}). By this consideration, and the fact that breeding values are inherited proportionally to the contribution to the next generation, we obtain indeed Equation~\ref{eq::unq}.

        The expected average estimated breeding value $\mathbb E\left[\hat{u}_{\mathcal S\mathcal Q_{t+1}}\right]$ among the survivor queens is calculated precisely as in the diploid case -- all queens $SQ\in\mathcal Q_t$ that survive (i.\,e. with $SQ\in\mathcal S\mathcal Q_{t+1}$) contribute to equal parts with their respective own breeding values. This is what is described by Equation~\ref{eq::usq}.

        \item We then show Equations~\ref{eq::unr} and~\ref{eq::usr}, i.\,e.
            \[\mathbb E\left[\hat u_{\mathcal N\mathcal R_{t+1}}\right]
                = \frac12\mathbf {dc}_{t}^{\top}\hat{\mathbf u}_{t}^{\mathcal R}
                    + \frac12\mathbf {bc}_{t}^{\top}\hat{\mathbf u}_{t}^{\mathcal Q}\quad\text{and}\quad
            \mathbb E\left[\hat{u}_{\mathcal S\mathcal R_{t+1}}\right]
                = \frac1{N_{t+1}^{\mathcal S}}\mathbf s_{t}^{\top}\hat{\mathbf u}_{t}^{\mathcal R}.\]

        \begin{center}
            \begin{tikzpicture}
                \path (0,0) coordinate (Pt0)
                        node[left = 0mm of Pt0] {generation $\mathcal P_t$}
                    coordinate[below = 2.5cm of Pt0] (Pt1)
                        node[left = 0mm of Pt1] {generation $\mathcal P_{t+1}$}
                    (Pt0) -- (Pt1)
                        coordinate[pos = 0.4] (M)

                    coordinate[right = 4.5cm of Pt0] (C0)
                    node[above = 0.5cm of C0, queen] (Q) {}
                        node[left = 0cm of Q] {$Q$}

                    coordinate[right = 9cm of Pt0] (C1)
                    node[above = 0.5cm of C1, queen] (SQ0) {}
                        node[right = 0cm of SQ0] {$Q(SR)$}
                    node[below left = 1cm and 0.5cm of SQ0.center, replacement queen] (SR0) {}
                        node[left = 0cm of SR0] {$SR$}

                    coordinate[right = 2.5cm of Pt1] (C2)
                    node[above = 0.5cm of C2, queen] (NQ) {}
                        node[left = 0cm of NQ] {$Q(NR)$}
                    node[right = 2cm of NQ.center, not considered, group] (D) {}
                        node[right = 0cm of D, not considered] {$\mathcal D$}
                        -- (D) pic[not considered] {drones}
                    node[below left = 1cm and 0.5cm of NQ.center, replacement queen] (NR) {}
                        node[left = 0cm of NR] {$NR$}
                        node[below = 0cm of NR, font = \footnotesize, align = left] {$\mathbb E\left[\hat u_{NR,t+1}\right]=\frac12\mathbb E\left[\hat u_{Q(NR),t+1}\right]
                                                                                                                                 + \mathbb E\left[\hat u_{\mathcal D,t+1}\right]$\\
                                                                                     $\phantom{\mathbb E\left[\hat u_{NR,t+1}\right]}=\frac12\mathbb E\left[\hat u_{Q(NR),t+1}\right]
                                                                                                                                 + \frac12\hat u_{Q,t}$}

                    coordinate[right = 9cm of Pt1] (C3)
                    node[above = 0.5cm of C3, queen] (SQ1) {}
                        node[right = 0cm of SQ1] {$Q(SR)$}
                    node[below left = 1cm and 0.5cm of SQ1.center, replacement queen] (SR1) {}
                        node[left = 0cm of SR1] {$SR$}
                        node[below = 0cm of SR1, font = \footnotesize] {$\mathbb E\left[\hat u_{SR,t+1}\right]=\hat u_{SR,t}$};

                \draw[mating, not considered] (D) -- (NQ)
                    node[gene pass description, not considered] {mate};
                \draw[inheritance] (NQ) -- (NR);
                \draw[inheritance] (SQ0) -- (SR0);
                \draw[inheritance] (SQ1) -- (SR1);
                \draw[inheritance] (Q) -- (D)
                    node[gene pass description] {1b-path};
                \draw[survival] (SR0.south) ++(0.001pt,0pt) -- (SR1)
                    node[gene pass description] {survival path};

            \draw[dashed] (M) ++ (-3cm,0cm) -- ++(13.7cm,0cm);
            \end{tikzpicture}
        \end{center}

        The expected breeding value of a new replacement queen $NR\in\mathcal N\mathcal R_{t+1}$ is half the breeding value of its queen $Q(NR)\in\mathcal Q_{t+1}$ plus the breeding value of the drone group $\mathcal D$ that $Q(NR)$ mated with (Lemma~\ref{lem::bvrepq}). But the expected breeding value of $\mathcal D$ is half the breeding value of the queen $Q\in\mathcal Q_t$ that produced the drones (Lemma~\ref{lem::bvinherits}\,\ref{item::rrr}). The relative frequencies with which queens in $\mathcal Q_t$ occur as drone producers are given by the vector $\mathbf {bc}_{t}\in\mathbb R^{\mathcal Q_t}$. This leads to
            \[\mathbb E\left[\hat u_{\mathcal N\mathcal R_{t+1}}\right] = \frac12\mathbb E\left[\hat{u}_{\mathcal N\mathcal Q_{t+1}}\right]
                + \frac12\mathbf {bc}_{t}^{\top}\hat{\mathbf u}_{t}^{\mathcal Q}.\]
        Inserting Equation~\ref{eq::unq} yields the assertion for $\mathbb E\left[\hat u_{\mathcal N\mathcal R_{t+1}}\right]$.

        Lastly, Equation~\ref{eq::usr} holds with the exact same argument as for Equation~\ref{eq::usq}.
    \end{enumerate}
\end{proof}

By inserting the results of Lemma~\ref{lem::alltheus} into Lemma~\ref{lem::genred}, we obtain the desired formula for $\mathbb E\bigl[\hat u_{\mathcal P_{t+1}^{\ast}}\bigr]$:

\begin{Thm}\label{thm::ut}
    We have
    \begin{align}
        \mathbb E\left[\hat u_{\mathcal Q_{t+1}}\right] &=
            \frac{N_{t+1}^{\mathcal N}}{N_{t+1}}\mathbf {dc}_{t}^{\top}\hat{\mathbf u}_{t}^{\mathcal R}+\frac1{N_{t+1}}\mathbf s_t^{\top}\hat{\mathbf u}_{t}^{\mathcal Q}, \label{eq::utq}\\
        \mathbb E\left[\hat{u}_{\mathcal R_{t+1}}\right] &=
            \frac{N_{t+1}^{\mathcal N}}{2N_{t+1}}\mathbf {dc}_{t}^{\top}\hat{\mathbf u}_{t}^{\mathcal R}+\frac{N_{t+1}^{\mathcal N}}{2N_{t+1}}\mathbf {bc}_{t}^{\top}\hat{\mathbf u}_{t}^{\mathcal Q}+\frac1{N_{t+1}}\mathbf s_t^{\top}\hat{\mathbf u}_{t}^{\mathcal R}, \label{eq::utr}\\
        \mathbb E\bigl[\hat{u}_{\mathcal P^{\ast}_{t+1}}\bigr] &=
            \frac{3N_{t+1}^{\mathcal N}}{4N_{t+1}}\mathbf {dc}_{t}^{\top}\hat{\mathbf u}_{t}^{\mathcal R}
                + \frac{N_{t+1}^{\mathcal N}}{4N_{t+1}}\mathbf {bc}_{t}^{\top}\hat{\mathbf u}_{t}^{\mathcal Q}
                + \frac1{2N_{t+1}}\mathbf s_t^{\top}\left(\hat{\mathbf u}_{t}^{\mathcal R}+\hat{\mathbf u}_{t}^{\mathcal Q}\right).
    \end{align}
\end{Thm}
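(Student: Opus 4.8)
The plan is to prove all three identities of Theorem~\ref{thm::ut} by substituting the four expected averages computed in Lemma~\ref{lem::alltheus} (Equations~\ref{eq::unq}--\ref{eq::usr}) into the three weighted-mean decompositions of Lemma~\ref{lem::genred} (Equations~\ref{eq::fstt}--\ref{eq::tstt}) and then collecting coefficients. The argument is entirely mechanical: no new probabilistic reasoning is required, since the decomposition of the next reduced generation into the four classes $\mathcal N\mathcal Q_{t+1}$, $\mathcal S\mathcal Q_{t+1}$, $\mathcal N\mathcal R_{t+1}$, $\mathcal S\mathcal R_{t+1}$ has already been carried out in Lemma~\ref{lem::genred} and the genetics of each class has been settled in Lemma~\ref{lem::alltheus}.

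First I would establish Equation~\ref{eq::utq}. Inserting $\mathbb E\left[\hat u_{\mathcal N\mathcal Q_{t+1}}\right]=\mathbf{dc}_t^{\top}\hat{\mathbf u}_t^{\mathcal R}$ and $\mathbb E\left[\hat u_{\mathcal S\mathcal Q_{t+1}}\right]=\tfrac1{N_{t+1}^{\mathcal S}}\mathbf s_t^{\top}\hat{\mathbf u}_t^{\mathcal Q}$ into Equation~\ref{eq::fstt}, the numerator becomes $N_{t+1}^{\mathcal N}\,\mathbf{dc}_t^{\top}\hat{\mathbf u}_t^{\mathcal R}+N_{t+1}^{\mathcal S}\cdot\tfrac1{N_{t+1}^{\mathcal S}}\mathbf s_t^{\top}\hat{\mathbf u}_t^{\mathcal Q}$. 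The key simplification to highlight is that the survivor weight $N_{t+1}^{\mathcal S}$ cancels exactly against the $\tfrac1{N_{t+1}^{\mathcal S}}$ prefactor carried by the survivor expectation, which is a direct reflection of the fact that surviving queens transmit their own breeding values unchanged. After this cancellation and division by $N_{t+1}$ one reads off Equation~\ref{eq::utq}. Equation~\ref{eq::utr} follows in the same way from Equation~\ref{eq::sstt}, the only difference being that $\mathbb E\left[\hat u_{\mathcal N\mathcal R_{t+1}}\right]$ now contributes \emph{two} terms, each carrying a factor $\tfrac12$, so that the newly-created replacement-queen contribution splits into a dam-path part $\tfrac{N_{t+1}^{\mathcal N}}{2N_{t+1}}\mathbf{dc}_t^{\top}\hat{\mathbf u}_t^{\mathcal R}$ and a 1b-path part $\tfrac{N_{t+1}^{\mathcal N}}{2N_{t+1}}\mathbf{bc}_t^{\top}\hat{\mathbf u}_t^{\mathcal Q}$.

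For the third identity I would avoid re-expanding Equation~\ref{eq::tstt} from scratch and instead use the already-established relation $\mathbb E\bigl[\hat u_{\mathcal P^{\ast}_{t+1}}\bigr]=\tfrac12\mathbb E\left[\hat u_{\mathcal Q_{t+1}}\right]+\tfrac12\mathbb E\left[\hat u_{\mathcal R_{t+1}}\right]$ from Equation~\ref{eq::eupast}, feeding in the two formulas just derived. Averaging the $\mathbf{dc}_t$-coefficients gives $\tfrac12\bigl(\tfrac{N_{t+1}^{\mathcal N}}{N_{t+1}}+\tfrac{N_{t+1}^{\mathcal N}}{2N_{t+1}}\bigr)=\tfrac{3N_{t+1}^{\mathcal N}}{4N_{t+1}}$, the $\mathbf{bc}_t$-coefficient becomes $\tfrac{N_{t+1}^{\mathcal N}}{4N_{t+1}}$, and the two survivor terms combine into $\tfrac1{2N_{t+1}}\mathbf s_t^{\top}\bigl(\hat{\mathbf u}_t^{\mathcal R}+\hat{\mathbf u}_t^{\mathcal Q}\bigr)$, which is exactly the claimed expression. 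The only thing resembling an obstacle is bookkeeping: one must keep straight that the replacement-queen half-weight interacts with the $\tfrac12$ already built into $\mathbb E\left[\hat u_{\mathcal N\mathcal R_{t+1}}\right]$, producing the asymmetric coefficients $\tfrac34$ versus $\tfrac14$ in front of the dam-path and 1b-path contributions. Verifying that these coefficients are correct, rather than any genuine mathematical difficulty, is where care is needed.
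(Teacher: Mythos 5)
Your proposal is correct and matches the paper exactly: the paper obtains Theorem~\ref{thm::ut} precisely by inserting the four expectations of Lemma~\ref{lem::alltheus} into the decompositions of Lemma~\ref{lem::genred}, with the same cancellation of $N_{t+1}^{\mathcal S}$ against the survivor prefactor and the same coefficient bookkeeping yielding $\tfrac{3N_{t+1}^{\mathcal N}}{4N_{t+1}}$ and $\tfrac{N_{t+1}^{\mathcal N}}{4N_{t+1}}$. Your use of Equation~\ref{eq::eupast} for the third identity is only a cosmetic shortcut, since Equation~\ref{eq::tstt} was itself derived from it.
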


\subsubsection{Kinship development}\label{sec::scikin}

By Remark~\ref{rmk::brokenk}\,\ref{item::brokenkiii}, we need to calculate $k_{\mathcal N\mathcal Q_{t+1}, \mathcal N\mathcal Q_{t+1}}$, $k_{\mathcal N\mathcal Q_{t+1}, \mathcal N\mathcal R_{t+1}}$, $k_{\mathcal N\mathcal Q_{t+1}, \mathcal S\mathcal Q_{t+1}}$, $k_{\mathcal N\mathcal Q_{t+1}, \mathcal S\mathcal R_{t+1}}$, $k_{\mathcal N\mathcal R_{t+1}, \mathcal N\mathcal R_{t+1}}$, $k_{\mathcal N\mathcal R_{t+1}, \mathcal S\mathcal Q_{t+1}}$, $k_{\mathcal N\mathcal R_{t+1}, \mathcal S\mathcal R_{t+1}}$, $k_{\mathcal S\mathcal Q_{t+1}, \mathcal S\mathcal Q_{t+1}}$, $k_{\mathcal S\mathcal Q_{t+1}, \mathcal S\mathcal R_{t+1}}$, and $k_{\mathcal S\mathcal R_{t+1}, \mathcal S\mathcal R_{t+1}}$ in order to obtain the average genetic kinship in the next reduced generation, $k_{\mathcal P_{t+1}^{\ast},\mathcal P_{t+1}^{\ast}}$. This is what we will do in this section.

\begin{Lem}\label{lem::allthek}
    We have
    \begin{align}
        k_{\mathcal N\mathcal Q_{t+1},\mathcal N\mathcal Q_{t+1}}
            &= \mathbf {dc}_t^{\top}\mathbf K_t^{\mathfrak W\mathfrak W}\mathbf {dc}_t
                + \frac1{N_{t+1}^{\mathcal N}}\mathbf {dc}_t^{\top}\mathrm{diag}\left(\mathbf K_t^{\mathcal R\mathcal R}\right)
                - \frac1{N_{t+1}^{\mathcal N}}\mathbf {dc}_t^{\top}\mathrm{diag}\left(\mathbf K_t^{\mathfrak W\mathfrak W}\right), \label{eq::knqnq}\\
        k_{\mathcal N\mathcal Q_{t+1},\mathcal S\mathcal Q_{t+1}}
            &= \frac1{N_{t+1}^{\mathcal S}}\mathbf {dc}_{t}^{\top}\mathbf K_t^{\mathcal R\mathcal Q}\mathbf{s}_t, \label{eq::knqsq}\\
        k_{\mathcal S\mathcal Q_{t+1},\mathcal S\mathcal Q_{t+1}}
            &= \frac1{\left(N_{t+1}^{\mathcal S}\right)^2}\mathbf {s}_{t}^{\top}\mathbf K_t^{\mathcal Q\mathcal Q}\mathbf{s}_t,
                \label{eq::ksqsq}\\
        k_{\mathcal N\mathcal Q_{t+1},\mathcal N\mathcal R_{t+1}}
            &= \frac12\mathbf {dc}_t^{\top}\mathbf K_t^{\mathfrak W\mathfrak W}\mathbf {dc}_t
                + \frac12\mathbf {dc}_t^{\top}\mathbf K_t^{\mathcal R\mathcal Q}\mathbf {bc}_t \nonumber\\
                &\hspace{1cm}+ \frac1{2N_{t+1}^{\mathcal N}}\mathbf {dc}_t^{\top}\mathrm{diag}\left(\mathbf K_t^{\mathcal R\mathcal R}\right)
                - \frac1{2N_{t+1}^{\mathcal N}}\mathbf {dc}_t^{\top}\mathrm{diag}\left(\mathbf K_t^{\mathfrak W\mathfrak W}\right),  \label{eq::knqnr}\\
        k_{\mathcal N\mathcal Q_{t+1},\mathcal S\mathcal R_{t+1}}
            &= \frac1{N_{t+1}^{\mathcal S}}\mathbf {dc}_{t}^{\top}\mathbf K_t^{\mathfrak W\mathfrak W}\mathbf{s}_t, \label{eq::knqsr}\\
        k_{\mathcal S\mathcal Q_{t+1},\mathcal N\mathcal R_{t+1}}
            &= \frac1{2N_{t+1}^{\mathcal S}}\mathbf {dc}_{t}^{\top}\mathbf K_t^{\mathcal R\mathcal Q}\mathbf{s}_t
                + \frac1{2N_{t+1}^{\mathcal S}}\mathbf {bc}_t^{\top}\mathbf K_t^{\mathcal Q\mathcal Q}\mathbf s_t, \label{eq::ksqnr}\\
        k_{\mathcal S\mathcal Q_{t+1},\mathcal S\mathcal R_{t+1}}
            &= \frac1{\left(N_{t+1}^{\mathcal S}\right)^2}\mathbf s_t^{\top}\mathbf K_t^{\mathcal Q\mathcal R}\mathbf s_t,
                \label{eq::ksqsr}\\
        k_{\mathcal N\mathcal R_{t+1},\mathcal N\mathcal R_{t+1}}
            &=\frac14\mathbf {dc}_t^{\top}\mathbf K_{t}^{\mathfrak W\mathfrak W}\mathbf{dc}_t
                + \frac12\mathbf {bc}_t^{\top}\mathbf K_{t}^{\mathcal Q\mathcal R}\mathbf{dc}_t
                + \frac14\mathbf {bc}_t^{\top}\mathbf K_{t}^{\mathcal Q\mathcal Q}\mathbf{bc}_t\nonumber\\
            &\hspace{1cm}- \frac1{4N_{t+1}^{\mathcal N}}\mathbf{dc}_t^{\top}
                \mathrm{diag}\left(\mathbf K_t^{\mathfrak W\mathfrak W}\right)
                - \frac1{4N_{t+1}^{\mathcal N}}\mathbf{bc}_t^{\top}\mathrm{diag}\left(\mathbf K_t^{\mathcal Q\mathcal Q}\right)
                + \frac1{2N_{t+1}^{\mathcal N}}, \label{eq::knrnr}\\
        k_{\mathcal N\mathcal R_{t+1},\mathcal S\mathcal R_{t+1}}
            &= \frac1{2N_{t+1}^{\mathcal S}}\mathbf {dc}_{t}^{\top}\mathbf K_t^{\mathfrak W\mathfrak W}\mathbf{s}_t
                    +\frac1{2N_{t+1}^{\mathcal S}}\mathbf {bc}_t^{\top}\mathbf K_{t}^{\mathcal Q\mathcal R}\mathbf s_t,
                \label{eq::knrsr}\\
        k_{\mathcal S\mathcal R_{t+1},\mathcal S\mathcal R_{t+1}}
            &= \frac1{\left(N_{t+1}^{\mathcal S}\right)^2}\mathbf s_t^{\top}\mathbf K_t^{\mathcal R\mathcal R}\mathbf s_t.
                \label{eq::ksrsr}
    \end{align}
\end{Lem}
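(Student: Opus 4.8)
The plan is to handle all ten equations by one common scheme. Using the fact that the kinship between two groups equals the average of the pairwise kinships of their members, I write each target quantity as $\frac1{|\cdot|\,|\cdot|}\sum k_{E_1,E_2}$ over the two relevant classes, split the double sum into off-diagonal pairs (genuinely distinct entities) and diagonal pairs (an entity paired with itself), and evaluate the two parts separately. The off-diagonal pairwise kinships will be reduced to entries of the known blocks of $\mathbf K_t$ by allele tracing, whereas the diagonal pairs will require the self-kinship formulas of Remark~\ref{rmk::withcol}.

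The workhorse for the off-diagonal reductions is the observation that a newly created queen $NQ\in\mathcal N\mathcal Q_{t+1}$ with dam $Q$ inherits one half of its alleles from $Q$ and the other half from the drones $\mathcal D_Q$ that $Q$ mated with; hence a random allele of $NQ$ is distributed exactly like a random allele of $\mathcal W(Q)$, and by the argument of Lemma~\ref{lem::kinbet} one gets $k_{NQ,X}=k_{\mathcal W(Q),X}$ for every entity $X$ that is not a descendant of $NQ$ (the single-element case of Corollary~\ref{cor::imp}). A newly created replacement queen $NR=R(NQ)$ is a daughter of $NQ$ inseminated with the drones of its $1b$-queen $S$, so a random allele of $NR$ is half an allele of $\mathcal W(Q)$ and half an allele of $\mathcal D_S$. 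Combined with Lemma~\ref{lem::reldron} (which replaces a drone group by its producing queen) and the block identities $\mathbf K_t^{\mathcal Q\mathfrak W}=\mathbf K_t^{\mathcal Q\mathcal R}$ and $\mathbf K_t^{\mathfrak W\mathcal R}=\mathbf K_t^{\mathfrak W\mathfrak W}$ of Remark~\ref{rmk::onk}\,\ref{item::onkdiag}, every off-diagonal pairwise kinship collapses to a linear combination of entries of $\mathbf K_t^{\mathfrak W\mathfrak W}$, $\mathbf K_t^{\mathcal R\mathcal Q}$, and $\mathbf K_t^{\mathcal Q\mathcal Q}$. Converting the resulting sums into matrix products is routine, using that exactly $dc_{Q,t}N_{t+1}^{\mathcal N}$ of the new queens have dam $Q$, that $bc_{S,t}N_{t+1}^{\mathcal N}$ of them are fathered from $1b$-queen $S$, and that $\mathbf s_t$ selects the survivors.

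With this machinery the purely survivor-based identities (Equations~\ref{eq::ksqsq}, \ref{eq::ksqsr}, \ref{eq::ksrsr}) are immediate, since kinships among entities carried over from $\mathcal P_t$ do not change; and the mixed new–survivor identities (Equations~\ref{eq::knqsq}, \ref{eq::knqsr}, \ref{eq::ksqnr}, \ref{eq::knrsr}) follow directly from the off-diagonal reduction, because a newly created and a surviving entity are always distinct and no self-kinship correction is needed.

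The main obstacle is the trio of equations in which a class is paired with itself, namely $k_{\mathcal N\mathcal Q_{t+1},\mathcal N\mathcal Q_{t+1}}$, $k_{\mathcal N\mathcal Q_{t+1},\mathcal N\mathcal R_{t+1}}$, and $k_{\mathcal N\mathcal R_{t+1},\mathcal N\mathcal R_{t+1}}$, where the diagonal of the double sum must be corrected, exactly as the extra terms arise in Lemma~\ref{lemma::avkin}. The subtle point is the dual role each queen plays: as a dam $Q$ it produces $NQ$ together with its own drones $\mathcal D_Q$, so the true self-kinship is $k_{NQ,NQ}=k_{R(Q),R(Q)}$ (the replacement-queen diagonal, by Remark~\ref{rmk::withcol}\,\ref{item::withcoli}), whereas as a $1b$-queen $S$ its drones father the replacement queens of others, giving $k_{NR,NR}=\tfrac12+\tfrac12 k_{R(Q),S}$ for $NR=R(NQ)$. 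I would obtain each correction as the difference between this true diagonal value and the value the off-diagonal formula assigns on the diagonal: for $k_{\mathcal N\mathcal Q_{t+1},\mathcal N\mathcal Q_{t+1}}$ this swaps $k_{\mathcal W(Q),\mathcal W(Q)}$ for $k_{R(Q),R(Q)}$, producing the $\mathrm{diag}\left(\mathbf K_t^{\mathcal R\mathcal R}\right)-\mathrm{diag}\left(\mathbf K_t^{\mathfrak W\mathfrak W}\right)$ terms, and for $k_{\mathcal N\mathcal R_{t+1},\mathcal N\mathcal R_{t+1}}$ the difference $\tfrac12-\tfrac14 k_{\mathcal W(Q),\mathcal W(Q)}-\tfrac14 k_{S,S}$ produces both the $\mathrm{diag}$ terms and the constant $\tfrac{1}{2N_{t+1}^{\mathcal N}}$. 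One further delicate check is that two distinct new queens inseminated from the same $1b$-queen receive disjoint drone samples, so that $k_{\mathcal D_{S_1},\mathcal D_{S_2}}=k_{S_1,S_2}$ holds uniformly by Lemma~\ref{lem::reldron} (including the diagonal entry $k_{S,S}$ when $S_1=S_2$), which is what makes the $1b$-contributions assemble cleanly into $\mathbf K_t^{\mathcal Q\mathcal Q}$.
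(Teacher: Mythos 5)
Your proposal is correct and follows essentially the same route as the paper's proof: reduce every off-diagonal pairwise kinship to blocks of $\mathbf K_t$ via Corollary~\ref{cor::imp}, Lemma~\ref{lem::kinbet}, and Lemma~\ref{lem::reldron} (with the worker-group entries covering shared-dam cases), treat survivor--survivor kinships as constant in time, and correct the diagonal precisely in the three new--new equations using $k_{NQ,NQ}=k_{R(Q),R(Q)}$ and $k_{NR,NR}=\tfrac12+\tfrac12 k_{R(Q),S}$. Your uniform off-diagonal/diagonal decomposition, including the observation that $k_{\mathcal D_{1},\mathcal D_{2}}=k_{S_1,S_2}$ holds even for $S_1=S_2$, matches the paper's case-by-case derivation in substance, so there is nothing to add.
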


\begin{proof}
    \begin{enumerate}[label = (\roman*)]
        \item \label{item::knqnq} We show Equation~\ref{eq::knqnq}, i.\,e.
            \[k_{\mathcal N\mathcal Q_{t+1},\mathcal N\mathcal Q_{t+1}}
                = \mathbf {dc}_t^{\top}\mathbf K_t^{\mathfrak W\mathfrak W}\mathbf {dc}_t
                    + \frac1{N_{t+1}^{\mathcal N}}\mathbf {dc}_t^{\top}\mathrm{diag}\left(\mathbf K_t^{\mathcal R\mathcal R}\right)
                    - \frac1{N_{t+1}^{\mathcal N}}\mathbf {dc}_t^{\top}\mathrm{diag}\left(\mathbf K_t^{\mathfrak W\mathfrak W}\right).\]
        \begin{center}
            \begin{tikzpicture}
                \path (0,0) coordinate (Pt0)
                        node[left = 0mm of Pt0] {generation $\mathcal P_t$}
                    coordinate[below = 4cm of Pt0] (Pt1)
                        node[left = 0mm of Pt1] {generation $\mathcal P_{t+1}$}
                    (Pt0) -- (Pt1)
                        coordinate[pos = 0.4] (M)

                    coordinate[right = 2.5cm of Pt0] (C0)
                    node[above = 1cm of C0, queen] (Q0) {}
                        node[left = 0cm of Q0] {$Q_1$}
                    node[below right = 2cm and 0.5cm of Q0.center, worker group] (W0) {}
                        node[right = 0cm of W0] {$\mathcal W(Q_1)$}
                    node[below left = 1cm and 0.5cm of Q0.center, replacement queen] (R0) {}
                        node[left = 0cm of R0] {$R(Q_1)$}

                    coordinate[right = 6cm of Pt0] (C1)
                    node[above = 1cm of C1, queen] (Q1) {}
                        node[left = 0cm of Q1] {$Q_2$}
                    node[below right = 2cm and 0.5cm of Q1.center, worker group] (W1) {}
                        node[right = 0cm of W1] {$\mathcal W(Q_2)$}
                    node[below left = 1cm and 0.5cm of Q1.center, replacement queen] (R1) {}
                        node[left = 0cm of R1] {$R(Q_2)$}

                    coordinate[right = 9.5cm of Pt0] (C2)
                    node[above = 1cm of C2, queen] (Q2) {}
                        node[left = 0cm of Q2] {$Q$}
                    node[below right = 2cm and 0.5cm of Q2.center, worker group] (W2) {}
                    node[below left = 1cm and 0.5cm of Q2.center, replacement queen] (R2) {}
                        node[left = 0cm of R2] {$R(Q)$}

                    coordinate[right = 2cm of Pt1] (C3)
                    node[above = 1cm of C3, queen] (Q3) {}
                        node[left = 0cm of Q3] {$NQ_1$}

                    coordinate[right = 5.5cm of Pt1] (C4)
                    node[above = 1cm of C4, queen] (Q4) {}
                        node[left = 0cm of Q4] {$NQ_2$}

                    coordinate[right = 9cm of Pt1] (C5)
                    node[above = 1cm of C5, queen] (Q5) {}
                        node[left = 0cm of Q5] {$NQ$};

                    \draw[inheritance] (Q0) -- (R0);
                    \draw[inheritance] (Q0) -- (W0);
                    \draw[inheritance] (Q1) -- (R1);
                    \draw[inheritance] (Q1) -- (W1);
                    \draw[inheritance] (Q2) -- (R2);
                    \draw[inheritance] (Q2) -- (W2);
                    \draw[inheritance] (Q0) -- (Q3);
                    \draw[inheritance] (Q1) -- (Q4);
                    \draw[inheritance] (Q2) -- (Q5);
                    \draw[dotted, thick] (W0) -- (Q3);
                    \draw[dotted, thick] (W1) -- (Q4);
                    \draw[dotted, thick] (R2) -- (Q5);
                    \draw[relationship] (W0) -- (W1);
                    \draw[relationship] (R2.west) -- (R2.east);
                    \draw[relationship] (Q3) -- (Q4)
                        node[relationship description] {$k_{NQ_1,NQ_2}=k_{W(Q_1),W(Q_2)}$};
                    \draw[relationship] (Q5.west) -- (Q5.east)
                        node[relationship description] {$k_{NQ,NQ}=k_{R(Q),R(Q)}$};

                \draw[dashed] (M) ++ (-3cm,0cm) -- ++(13.9cm,0cm);
            \end{tikzpicture}
        \end{center}

        Let $NQ_1,NQ_2\in\mathcal N\mathcal Q_{t+1}$ be two non-identical newly hatched queens and let $Q_1,Q_2\in\mathcal Q_t$ be their respective dam queens (possibly identical). Then neither of $NQ_1$ and $NQ_2$ is an ancestor of the other and thus by Corollary~\ref{cor::imp},
            \[k_{NQ_1,NQ_2}=k_{\mathcal W(Q_1),\mathcal W(Q_2)}.\]
        (Note that here we actually need worker groups, not replacement queens, because otherwise the statement is not true in case $NQ_1$ and $NQ_2$ are siblings, i.\,e. $Q_1=Q_2$.)
        Under the (wrong) assumption that $k_{NQ_1,NQ_2}=k_{\mathcal W(Q_1),\mathcal W(Q_2)}$ also holds for $NQ_1=NQ_2$ this would result in an average kinship between the newly hatched queens in $\mathcal N\mathcal Q_{t+1}$ of $\mathbf {dc}_t^{\top}\mathbf K_t^{\mathfrak W\mathfrak W}\mathbf {dc}_t$. However, the kinship of a newly hatched queen $NQ\in\mathcal N\mathcal Q_{t+1}$ with dam $Q\in\mathcal Q_t$ to itself is not $k_{\mathcal W(Q),\mathcal W(Q)}$, but
            \[k_{NQ,NQ}=k_{R(Q),R(Q)}.\]
        If we look at all possible kinships $k_{NQ_1,NQ_2}$ between newly hatched queens, a fraction of $\frac1{N_t^{\mathcal N}}$ of them are self-kinships of the form $k_{NQ,NQ}$. For these, we have to add the correction terms $k_{R(Q),R(Q)}-k_{\mathcal W(Q),\mathcal W(Q)}$ multiplied with the frequency $dc_{Q,t}$ with which $Q\in\mathcal Q_t$ occurs as a dam. So, in total, we have
            \[k_{\mathcal N\mathcal Q_{t+1},\mathcal N\mathcal Q_{t+1}}
                = \mathbf {dc}_t^{\top}\mathbf K_t^{\mathfrak W\mathfrak W}\mathbf {dc}_t
                    + \frac1{N_t^{\mathcal N}}\sum_{Q\in\mathcal Q_t}dc_{Q,t}\left(k_{R(Q),R(Q)}-k_{\mathcal W(Q),\mathcal W(Q)}\right).\]
        and the assertion follows.

        \item \label{item::knqsq} We show Equation~\ref{eq::knqsq}, i.\,e.
            \[k_{\mathcal N\mathcal Q_{t+1},\mathcal S\mathcal Q_{t+1}}
                =\frac1{N_{t+1}^{\mathcal S}}\mathbf {dc}_{t}^{\top}\mathbf K_t^{\mathcal R\mathcal Q}\mathbf{s}_t.\]

        \begin{center}
            \begin{tikzpicture}
                \path (0,0) coordinate (Pt0)
                        node[left = 0mm of Pt0] {generation $\mathcal P_t$}
                    coordinate[below = 4cm of Pt0] (Pt1)
                        node[left = 0mm of Pt1] {generation $\mathcal P_{t+1}$}
                    (Pt0) -- (Pt1)
                        coordinate[pos = 0.4] (M)

                    coordinate[right = 2.5cm of Pt0] (C0)
                    node[above = 1cm of C0, queen] (Q0) {}
                        node[left = 0cm of Q0] {$Q$}
                    node[below left = 1cm and 0.5cm of Q0.center, replacement queen] (R0) {}
                        node[left = 0cm of R0] {$R(Q)$}

                    coordinate[right = 7cm of Pt0] (C1)
                    node[above = 1cm of C1, queen] (Q1) {}
                        node[left = 0cm of Q1] {$SQ$}
                    node[below left = 1cm and 0.5cm of Q1.center, replacement queen] (R1) {}
                        node[below left = 0cm and 0cm of R1] {$R(SQ)$}

                    coordinate[right = 3cm of Pt1] (C3)
                    node[above = 1cm of C3, queen] (Q3) {}
                        node[left = 0cm of Q3] {$NQ$}

                    coordinate[right = 7cm of Pt1] (C4)
                    node[above = 1cm of C4, queen] (Q4) {}
                        node[left = 0cm of Q4] {$SQ$};

                    \draw[inheritance] (Q0) -- (R0);
                    \draw[inheritance] (Q1) -- (R1);
                    \draw[inheritance] (Q0) -- (Q3);
                    \draw[survival] (Q1) -- (Q4);
                    \draw[dotted, thick] (R0) -- (Q3);
                    \draw[relationship] (R0) -- (Q1);
                    \draw[relationship] (Q3) -- (Q4)
                        node[relationship description] {$k_{NQ,SQ}=k_{R(Q),SQ}$};

                \draw[dashed] (M) ++ (-3cm,0cm) -- ++(13.9cm,0cm);
            \end{tikzpicture}
        \end{center}

        Let $NQ\in\mathcal N\mathcal Q_{t+1}$ be a newly hatched queen with dam $Q\in\mathcal Q_t$ and let $SQ\in\mathcal S\mathcal Q_{t+1}\subseteq\mathcal Q_t$ be a survivor queen. Then the younger queen $NQ$ cannot be an ancestor of $SQ$ and neither is the replacement queen $R(Q)$ of $Q$. Thus, by Lemma~\ref{lem::kinbet}, we have
            \[k_{NQ,SQ}=k_{R(Q),SQ}.\]
        The frequency with which a specific queen $Q\in\mathcal Q_t$ occurs as a dam of a queen $NQ\in\mathcal N\mathcal Q_{t+1}$ is $dc_{Q,t}$ and the frequency with which it is identical with a survivor queen $SQ\in\mathcal S\mathcal Q_{t+1}$ is $\frac1{N_{t+1}^{\mathcal S}}s_{Q,t}$. From this, we conclude the assertion.

        \item \label{item::ksqsq} We show Equation~\ref{eq::ksqsq}, i.\,e.
            \[k_{\mathcal S\mathcal Q_{t+1},\mathcal S\mathcal Q_{t+1}} =
                \frac1{\left(N_{t+1}^{\mathcal S}\right)^2}\mathbf s_t^{\top}\mathbf K_t^{\mathcal Q\mathcal Q}\mathbf{s}_t.\]

        \begin{center}
            \begin{tikzpicture}
                \path (0,0) coordinate (Pt0)
                        node[left = 0mm of Pt0] {generation $\mathcal P_t$}
                    coordinate[below = 4cm of Pt0] (Pt1)
                        node[left = 0mm of Pt1] {generation $\mathcal P_{t+1}$}
                    (Pt0) -- (Pt1)
                        coordinate[pos = 0.4] (M)

                    coordinate[right = 3cm of Pt0] (C0)
                    node[above = 1cm of C0, queen] (Q0) {}
                        node[left = 0cm of Q0] {$SQ_1$}
                    node[below left = 1cm and 0.5cm of Q0.center, replacement queen] (R0) {}
                        node[left = 0cm of R0] {$R(SQ_1)$}

                    coordinate[right = 7cm of Pt0] (C1)
                    node[above = 1cm of C1, queen] (Q1) {}
                        node[left = 0cm of Q1] {$SQ_2$}
                    node[below left = 1cm and 0.5cm of Q1.center, replacement queen] (R1) {}
                        node[left = 0cm of R1] {$R(SQ_2)$}

                    coordinate[right = 3cm of Pt1] (C3)
                    node[above = 1cm of C3, queen] (Q3) {}
                        node[left = 0cm of Q3] {$SQ_1$}

                    coordinate[right = 7cm of Pt1] (C4)
                    node[above = 1cm of C4, queen] (Q4) {}
                        node[left = 0cm of Q4] {$SQ_2$};

                    \draw[survival] (Q0) -- (Q3);
                    \draw[survival] (Q1) -- (Q4);
                    \draw[inheritance] (Q0) -- (R0);
                    \draw[inheritance] (Q1) -- (R1);
                    \draw[relationship] (Q0) -- (Q1);
                    \draw[relationship] (Q3) -- (Q4)
                        node[relationship description] {$k_{SQ_1,SQ_2}=k_{SQ_1,SQ_2}$};

                \draw[dashed] (M) ++ (-3cm,0cm) -- ++(13.9cm,0cm);
            \end{tikzpicture}
        \end{center}

        Just as in part~\ref{item::olkiii} of the proof to Lemma~\ref{lem::olk}, this follows from the fact that kinships between surviving colonies do not change over time.

        \item \label{item::knqnr} We show Equation~\ref{eq::knqnr}, i.\,e.
        \begin{align*}
            k_{\mathcal N\mathcal Q_{t+1},\mathcal N\mathcal R_{t+1}}
                &= \frac12\mathbf {dc}_t^{\top}\mathbf K_t^{\mathfrak W\mathfrak W}\mathbf {dc}_t
                    + \frac12\mathbf {dc}_t^{\top}\mathbf K_t^{\mathcal R\mathcal Q}\mathbf {bc}_t \nonumber\\
                    &\hspace{1cm}+ \frac1{2N_{t+1}^{\mathcal N}}\mathbf {dc}_t^{\top}\mathrm{diag}\left(\mathbf K_t^{\mathcal R\mathcal R}\right)
                    - \frac1{2N_{t+1}^{\mathcal N}}\mathbf {dc}_t^{\top}\mathrm{diag}\left(\mathbf K_t^{\mathfrak W\mathfrak W}\right).
        \end{align*}

        \begin{center}
            \begin{tikzpicture}
                \path (0,0) coordinate (Pt0)
                        node[left = 0mm of Pt0] {generation $\mathcal P_t$}
                    coordinate[below = 4cm of Pt0] (Pt1)
                        node[left = 0mm of Pt1] {generation $\mathcal P_{t+1}$}
                    (Pt0) -- (Pt1)
                        coordinate[pos = 0.4] (M)

                    coordinate[right = 3cm of Pt0] (C0)
                    node[above = 1cm of C0, queen] (Q0) {}
                        node[left = 0cm of Q0] {$Q_1$}
                    node[below left = 1cm and 0.5cm of Q0.center, replacement queen] (R0) {}
                        node[left = 0cm of R0] {$R(Q_1)$}

                    coordinate[right = 7cm of Pt0] (C1)
                    node[above = 1cm of C1, queen] (Q1) {}
                        node[left = 0cm of Q1] {}
                    node[below left = 1cm and 0.5cm of Q1.center, replacement queen] (R1) {}
                        node[left = 0cm of R1] {}

                    coordinate[right = 2.5cm of Pt1] (C3)
                    node[above = 1cm of C3, queen] (Q3) {}
                        node[left = 0cm of Q3] {$NQ_1$}

                    coordinate[right = 6.5cm of Pt1] (C4)
                    node[above = 1cm of C4, queen] (Q4) {}
                        node[left = 0cm of Q4] {$Q(NR_2)$}
                    node[right = 2cm of Q4.center, not considered, group] (D) {}
                        node[right = 0cm of D, not considered] {$\mathcal D_2$}
                        -- (D) pic[not considered] {drones}
                    node[below left = 1cm and 0.5cm of Q4.center, replacement queen] (R4) {}
                        node[right = 0cm of R4] {$NR_2$}

                    coordinate[right = 9cm of Pt0] (C5)
                    node[above = 1cm of C5, queen] (Q5) {}
                        node[left = 0cm of Q5] {$Q_2$}
                    node[below left = 1cm and 0.5cm of Q5.center, replacement queen] (R5) {}
                        node[left = 0cm of R5] {$R(Q_2)$};

                    \draw[inheritance] (Q0) -- (R0);
                    \draw[inheritance] (Q1) -- (R1);
                    \draw[inheritance] (Q4) -- (R4);
                    \draw[inheritance] (Q5) -- (R5);
                    \draw[inheritance] (Q0) -- (Q3);
                    \draw[inheritance] (Q1) -- (Q4);
                    \draw[inheritance] (Q5) -- (D);
                    \draw[mating, not considered] (D) -- (Q4)
                        node[gene pass description, not considered] {mate};
                    \draw[relationship] (Q3) -- (R4)
                        node[relationship description] {$k_{NQ_1,NR_2}=...$};

                \draw[dashed] (M) ++ (-3cm,0cm) -- ++(13.9cm,0cm);
            \end{tikzpicture}
        \end{center}

        Let $NQ_1\in\mathcal N\mathcal Q_{t+1}$ and $NR_2\in\mathcal N\mathcal R_{t+1}$. We fix a locus and draw an allele $A^{1}$ from $NQ_1$ and an allele $A^{2}$ from $NR_2$. The latter allele comes with probability $\frac12$ from $NR_2$'s  dam $Q(NR_2)$ and with probability $\frac12$ from the group $\mathcal D_2$ of drones that $Q(NR_2)$ mated with. Let $Q_2\in\mathcal Q_t$ be the dam of these drones. Then,
            \[k_{NQ_1,NR_2}=\frac12k_{NQ_1,Q(NR_2)}+\frac12k_{NQ_1,\mathcal D_2}\]
        and thus by Equation~\ref{eq::kgd} (Lemma~\ref{lem::reldron})
        \begin{equation} \label{eq::helper}
            k_{NQ_1,NR_2}=\frac12k_{NQ_1,Q(NR_2)}+\frac12k_{NQ_1,Q_2}.
        \end{equation}
        Let $Q_1\in\mathcal Q_t$ be the dam of $NQ_1$. Since $NQ_1\in\mathcal N\mathcal Q_{t+1}$ cannot be an ancestor of $Q_2\in\mathcal Q_t$, we have by Lemma~\ref{lem::kinbet}
            \[k_{NQ_1,Q_2}=k_{R(Q_1),Q_2}.\]
        and thus by inserting this into Equation~\ref{eq::helper}
            \[k_{NQ_1,NR_2}=\frac12k_{NQ_1,Q(NR_2)}+\frac12k_{R(Q_1),Q_2}.\]
        If now, we take averages over all choices of $NQ_1$ and $NR_2$, any given queen $Q\in\mathcal Q_t$ will occur in the role of $Q_1$ with frequency $dc_{Q,t}$ and in the role of $Q_2$ with frequency $bc_{Q,t}$. By that, we have
        \begin{align*}
            k_{\mathcal N\mathcal Q_{t+1},\mathcal N\mathcal R_{t+1}}
                &= \frac12k_{\mathcal N\mathcal Q_{t+1},\mathcal N\mathcal Q_{t+1}}
                    +\frac12\mathbf {dc}_t^{\top}\mathbf K_t^{\mathcal R\mathcal Q}\mathbf {bc}_t
        \end{align*}
        and by inserting Equation~\ref{eq::knqnq} for $k_{\mathcal N\mathcal Q_{t+1},\mathcal N\mathcal Q_{t+1}}$ (shown in~\ref{item::knqnq}), the assertion follows.

        \item \label{item::knqsr} We show Equation~\ref{eq::knqsr}, i.\,e.
            \[k_{\mathcal N\mathcal Q_{t+1},\mathcal S\mathcal R_{t+1}} =
                \frac1{N_{t+1}^{\mathcal S}}\mathbf {dc}_{t}^{\top}\mathbf K_t^{\mathfrak W\mathfrak W}\mathbf{s}_t.\]

        \begin{center}
            \begin{tikzpicture}
                \path (0,0) coordinate (Pt0)
                        node[left = 0mm of Pt0] {generation $\mathcal P_t$}
                    coordinate[below = 4cm of Pt0] (Pt1)
                        node[left = 0mm of Pt1] {generation $\mathcal P_{t+1}$}
                    (Pt0) -- (Pt1)
                        coordinate[pos = 0.4] (M)

                    coordinate[right = 2.5cm of Pt0] (C0)
                    node[above = 1cm of C0, queen] (Q0) {}
                        node[left = 0cm of Q0] {$Q$}
                    node[below right = 2cm and 1cm of Q0.center, worker group] (W0) {}
                        node[right = 0cm of W0] {$\mathcal W(Q)$}
                    node[below left = 1cm and 0.5cm of Q0.center, replacement queen] (R0) {}
                        node[left = 0cm of R0] {$R(Q)$}

                    coordinate[right = 7cm of Pt0] (C1)
                    node[above = 1cm of C1, queen] (Q1) {}
                        node[right = 0cm of Q1] {$Q(SR)$}
                    node[below right = 2cm and 1cm of Q1.center, worker group] (W1) {}
                        node[right = 0cm of W1] {$\mathcal W(SR)$}
                    node[below left = 1cm and 0.5cm of Q1.center, replacement queen] (R1) {}
                        node[right = 0cm of R1] {$SR$}

                    coordinate[right = 3cm of Pt1] (C3)
                    node[above = 1cm of C3, queen] (Q3) {}
                        node[left = 0cm of Q3] {$NQ$}

                    coordinate[right = 7cm of Pt1] (C4)
                    node[above = 1cm of C4, queen] (Q4) {}
                        node[right = 0cm of Q4] {$Q(SR)$}
                    node[below left = 1cm and 0.5cm of Q4.center, replacement queen] (R4) {}
                        node[right = 0cm of R4] {$SR$};

                    \draw[inheritance] (Q0) -- (R0);
                    \draw[inheritance] (Q1) -- (R1);
                    \draw[inheritance] (Q0) -- (W0);
                    \draw[inheritance] (Q1) -- (W1);
                    \draw[inheritance] (Q4) -- (R4);
                    \draw[inheritance] (Q0) -- (Q3);
                    \draw[survival] (R1) -- (R4);
                    \draw[relationship] (W0) -- (W1);
                    \draw[relationship] (Q3) -- (R4)
                        node[relationship description] {$k_{NQ,SR}=k_{\mathcal W(Q),\mathcal W(SR)}$};

                \draw[dashed] (M) ++ (-3cm,0cm) -- ++(13.9cm,0cm);
            \end{tikzpicture}
        \end{center}

        Let $NQ\in\mathcal N\mathcal Q_{t+1}$ be a newly hatched queen with dam $Q\in\mathcal Q_t$ and let $SR\in\mathcal S\mathcal R_{t+1}\subseteq\mathcal R_t$ be a survivor replacement queen. Then $NQ$ cannot be identical with $SR$ and therefore, by Corollary~\ref{cor::imp}, we have
            \[k_{NQ,SR}=k_{\mathcal W(Q),\mathcal W(SR)}.\]
        Note, that we need to work with worker groups instead of replacement queens to cover the case $Q=Q(SR)$.
        The frequency with which a specific queen $Q\in\mathcal Q_t$ occurs as a dam of a queen $NQ\in\mathcal N\mathcal Q_{t+1}$ is $dc_{Q,t}$ and the frequency with which a replacement queen $R\in\mathcal R_t$ is identical with a survivor replacement queen $SR\in\mathcal S\mathcal R_{t+1}$ is $\frac1{N_{t+1}^{\mathcal S}}s_{R,t}$. From this, we conclude the claimed identity.

        \item \label{item::ksqnr} We show Equation~\ref{eq::ksqnr}, i.\,e.
            \[k_{\mathcal S\mathcal Q_{t+1},\mathcal N\mathcal R_{t+1}} =
                \frac1{2N_{t+1}^{\mathcal S}}\mathbf {dc}_{t}^{\top}\mathbf K_t^{\mathcal R\mathcal Q}\mathbf{s}_t^
                    +\frac1{2N_{t+1}^{\mathcal S}}\mathbf {bc}_t^{\top}\mathbf K_t^{\mathcal Q\mathcal Q}\mathbf s_t.\]

        \begin{center}
            \begin{tikzpicture}
                \path (0,0) coordinate (Pt0)
                        node[left = 0mm of Pt0] {generation $\mathcal P_t$}
                    coordinate[below = 4cm of Pt0] (Pt1)
                        node[left = 0mm of Pt1] {generation $\mathcal P_{t+1}$}
                    (Pt0) -- (Pt1)
                        coordinate[pos = 0.4] (M)

                    coordinate[right = 2.5cm of Pt0] (C0)
                    node[above = 1cm of C0, queen] (Q0) {}
                        node[left = 0cm of Q0] {$SQ$}
                    node[below left = 1cm and 0.5cm of Q0.center, replacement queen] (R0) {}
                        node[left = 0cm of R0] {$R(SQ)$}

                    coordinate[right = 6.5cm of Pt0] (C1)
                    node[above = 1cm of C1, queen] (Q1) {}
                        node[right = 0cm of Q1] {}
                    node[below left = 1cm and 0.5cm of Q1.center, replacement queen] (R1) {}
                        node[left = 0cm of R1] {}

                    coordinate[right = 9cm of Pt0] (C2)
                    node[above = 1cm of C2, queen] (Q2) {}
                        node[right = 0cm of Q2] {$Q$}
                    node[below left = 1cm and 0.5cm of Q2.center, replacement queen] (R2) {}
                        node[left = 0cm of R2] {$R(Q)$}

                    coordinate[right = 2.5cm of Pt1] (C3)
                    node[above = 1cm of C3, queen] (Q3) {}
                        node[left = 0cm of Q3] {$SQ$}

                    coordinate[right = 7cm of Pt1] (C4)
                    node[above = 1cm of C4, queen] (Q4) {}
                        node[left = 0cm of Q4] {$Q(NR)$}
                    node[right = 2cm of Q4.center, not considered, group] (D) {}
                        node[right = 0cm of D, not considered] {$\mathcal D$}
                        -- (D) pic[not considered] {drones}
                    node[below left = 1cm and 0.5cm of Q4.center, replacement queen] (R4) {}
                        node[right = 0cm of R4] {$NR$};

                    \draw[inheritance] (Q0) -- (R0);
                    \draw[inheritance] (Q1) -- (R1);
                    \draw[inheritance] (Q1) -- (Q4);
                    \draw[inheritance] (Q2) -- (R2);
                    \draw[inheritance] (Q4) -- (R4);
                    \draw[inheritance] (Q2) -- (D);
                    \draw[survival] (Q0) -- (Q3);
                    \draw[mating, not considered] (D) -- (Q4)
                        node[gene pass description, not considered]{mate};
                    \draw[relationship] (Q3) -- (R4)
                        node[relationship description] {$k_{SQ,NR}=...$};

                \draw[dashed] (M) ++ (-3cm,0cm) -- ++(13.9cm,0cm);
            \end{tikzpicture}
        \end{center}

        Let $SQ\in\mathcal S\mathcal Q_{t+1}\subseteq\mathcal Q_t$ be a survivor queen and let $NR\in\mathcal N\mathcal R_{t+1}$ be a newly hatched replacement queen whose dam $Q(NR)\in\mathcal N\mathcal Q_{t+1}$ mated with a group $\mathcal D$ of drones. Let $Q\in\mathcal Q_t$ be the dam of $\mathcal D$. Then
        \begin{align*}
            k_{SQ,NR} &= \frac12k_{SQ,Q(NR)}+\frac12k_{SQ,\mathcal D}\\
                      &= \frac12k_{SQ,Q(NR)}+\frac12k_{SQ,Q}.
        \end{align*}
        Taking averages, we obtain by the usual arguments
        \begin{align*}
            k_{\mathcal S\mathcal Q_{t+1},\mathcal N\mathcal R_{t+1}}
                &=\frac12k_{\mathcal S\mathcal Q_{t+1},\mathcal N\mathcal Q_{t+1}}
                    +\frac1{2N_{t+1}^{\mathcal S}}\mathbf {bc}_t^{\top}\mathbf K_t^{\mathcal Q\mathcal Q}\mathbf s_t.
        \end{align*}
        The assertion follows by inserting Equation~\ref{eq::knqsq} (shown in~\ref{item::knqsq}).

        \item \label{item::ksqsr} We show Equation~\ref{eq::ksqsr}, i.\,e.
            \[k_{\mathcal S\mathcal Q_{t+1},\mathcal S\mathcal R_{t+1}} =
                \frac1{\left(N_{t+1}^{\mathcal S}\right)^2}\mathbf s_t^{\top}\mathbf K_t^{\mathcal Q\mathcal R}\mathbf s_t.\]

        \begin{center}
            \begin{tikzpicture}
                \path (0,0) coordinate (Pt0)
                        node[left = 0mm of Pt0] {generation $\mathcal P_t$}
                    coordinate[below = 4cm of Pt0] (Pt1)
                        node[left = 0mm of Pt1] {generation $\mathcal P_{t+1}$}
                    (Pt0) -- (Pt1)
                        coordinate[pos = 0.4] (M)

                    coordinate[right = 3cm of Pt0] (C0)
                    node[above = 1cm of C0, queen] (Q0) {}
                        node[left = 0cm of Q0] {$SQ_1$}
                    node[below left = 1cm and 0.5cm of Q0.center, replacement queen] (R0) {}
                        node[left = 0cm of R0] {$R(SQ_1)$}

                    coordinate[right = 7cm of Pt0] (C1)
                    node[above = 1cm of C1, queen] (Q1) {}
                        node[right = 0cm of Q1] {$Q(SR_2)$}
                    node[below left = 1cm and 0.5cm of Q1.center, replacement queen] (R1) {}
                        node[right = 0cm of R1] {$SR_2$}

                    coordinate[right = 3cm of Pt1] (C3)
                    node[above = 1cm of C3, queen] (Q3) {}
                        node[left = 0cm of Q3] {$SQ_1$}

                    coordinate[right = 7cm of Pt1] (C4)
                    node[above = 1cm of C4, queen] (Q4) {}
                        node[right = 0cm of Q4] {$Q(SR_2)$}
                    node[below left = 1cm and 0.5cm of Q4.center, replacement queen] (R4) {}
                        node[right = 0cm of R4] {$SR_2$};

                    \draw[survival] (Q0) -- (Q3);
                    \draw[survival] (R1) -- (R4);
                    \draw[inheritance] (Q0) -- (R0);
                    \draw[inheritance] (Q1) -- (R1);
                    \draw[inheritance] (Q4) -- (R4);
                    \draw[relationship] (Q0) -- (R1);
                    \draw[relationship] (Q3) -- (R4)
                        node[relationship description] {$k_{SQ_1,SR_2}=k_{SQ_1,SR_2}$};

                \draw[dashed] (M) ++ (-3cm,0cm) -- ++(13.9cm,0cm);
            \end{tikzpicture}
        \end{center}

        Just as in ~\ref{item::ksqsq} (i.\,e. proof of Equation~\ref{eq::ksqsq}), this follows from the fact that kinships between surviving colonies do not change over time.

        \item \label{item::knrnr} We show Equation~\ref{eq::knrnr}, i.\,e.
        \begin{align*}
            k_{\mathcal N\mathcal R_{t+1},\mathcal N\mathcal R_{t+1}}
                &=\frac14\mathbf {dc}_t^{\top}\mathbf K_{t}^{\mathfrak W\mathfrak W}\mathbf{dc}_t
                    + \frac12\mathbf {bc}_t^{\top}\mathbf K_{t}^{\mathcal Q\mathcal R}\mathbf{dc}_t
                    + \frac14\mathbf {bc}_t^{\top}\mathbf K_{t}^{\mathcal Q\mathcal Q}\mathbf{bc}_t\\
                &\hspace{1cm}- \frac1{4N_{t+1}^{\mathcal N}}\mathbf{dc}_t^{\top}
                    \mathrm{diag}\left(\mathbf K_t^{\mathfrak W\mathfrak W}\right)
                    - \frac1{4N_{t+1}^{\mathcal N}}\mathbf{bc}_t^{\top}\mathrm{diag}\left(\mathbf K_t^{\mathcal Q\mathcal Q}\right)
                    + \frac1{2N_{t+1}^{\mathcal N}}.
        \end{align*}

        \begin{center}
            \begin{tikzpicture}
                \path (0,0) coordinate (Pt0)
                        node[left = 0mm of Pt0] {generation $\mathcal P_t$}
                    coordinate[below = 4cm of Pt0] (Pt1)
                        node[left = 0mm of Pt1] {generation $\mathcal P_{t+1}$}
                    (Pt0) -- (Pt1)
                        coordinate[pos = 0.4] (M)

                    coordinate[right = 2.4cm of Pt0] (C0)
                    node[above = 1cm of C0, queen] (Q0) {}
                        node[left = 0cm of Q0] {$Q_1$}
                    node[below right = 2cm and 0.5cm of Q0.center, worker group] (W0) {}
                        node[right = 0cm of W0] {$\mathcal W(Q_1)$}
                    node[below left = 1cm and 0.5cm of Q0.center, replacement queen] (R0) {}
                        node[left = 0cm of R0] {$R(Q_1)$}

                    coordinate[right = 4.8cm of Pt0] (C1)
                    node[above = 1cm of C1, queen] (Q1) {}
                        node[right = 0cm of Q1] {$S_1$}
                    node[below left = 1cm and 0.5cm of Q1.center, replacement queen] (R1) {}
                        node[left = 0cm of R1] {$R(S_1)$}

                    coordinate[right = 7.2cm of Pt0] (C2)
                    node[above = 1cm of C2, queen] (Q2) {}
                        node[right = 0cm of Q2] {$Q_2$}
                    node[below right = 2cm and 0.5cm of Q2.center, worker group] (W2) {}
                        node[right = 0cm of W2] {$\mathcal W(Q_2)$}
                    node[below left = 1cm and 0.5cm of Q2.center, replacement queen] (R2) {}
                        node[left = 0cm of R2] {$R(Q_2)$}

                    coordinate[right = 9.6cm of Pt0] (C3)
                    node[above = 1cm of C3, queen] (Q3) {}
                        node[right = 0cm of Q3] {$S_2$}
                    node[below left = 1cm and 0.5cm of Q3.center, replacement queen] (R3) {}
                        node[left = 0cm of R3] {$SR_2$}

                    coordinate[right = 1.9cm of Pt1] (C4)
                    node[above = 1cm of C4, queen] (Q4) {}
                        node[left = 0cm of Q4] {$Q(NR_1)$}
                    node[right = 2cm of Q4.center, not considered, group] (D1) {}
                        node[below = 0cm of D1, not considered] {$\mathcal D_1$}
                        -- (D1) pic[not considered] {drones}
                    node[below left = 1cm and 0.5cm of Q4.center, replacement queen] (R4) {}
                        node[right = 0cm of R4] {$NR_1$}

                    coordinate[right = 6.7cm of Pt1] (C5)
                    node[above = 1cm of C5, queen] (Q5) {}
                        node[left = 0cm of Q5] {$Q(NR_2)$}
                    node[right = 2cm of Q5.center, not considered, group] (D2) {}
                        node[below = 0cm of D2, not considered] {$\mathcal D_2$}
                        -- (D2) pic[not considered] {drones}
                    node[below left = 1cm and 0.5cm of Q5.center, replacement queen] (R5) {}
                        node[right = 0cm of R5] {$NR_2$};

                    \draw[inheritance] (Q0) -- (R0);
                    \draw[inheritance] (Q0) -- (W0);
                    \draw[inheritance] (Q1) -- (R1);
                    \draw[inheritance] (Q2) -- (R2);
                    \draw[inheritance] (Q2) -- (W2);
                    \draw[inheritance] (Q3) -- (R3);
                    \draw[inheritance] (Q4) -- (R4);
                    \draw[inheritance] (Q5) -- (R5);
                    \draw[inheritance] (Q0) -- (Q4);
                    \draw[inheritance] (Q1) -- (D1);
                    \draw[inheritance] (Q2) -- (Q5);
                    \draw[inheritance] (Q3) -- (D2);
                    \draw[mating, not considered] (D1) -- (Q4)
                        node[gene pass description, not considered] {mate};
                    \draw[mating, not considered] (D2) -- (Q5)
                        node[gene pass description, not considered] {mate};
                    \draw[relationship] (R4) -- (R5)
                        node[relationship description] {$k_{NR_1,NR_2}=...$};

                \draw[dashed] (M) ++ (-3cm,0cm) -- ++(13.9cm,0cm);
            \end{tikzpicture}
        \end{center}

        Let $NR_1,NR_2\in\mathcal N\mathcal R_{t+1}$ be two non-identical (!) replacement queens with dams $Q(NR_1)$ and $Q(NR_2)\in\mathcal N\mathcal Q_{t+1}$. Let $\mathcal D_1,\mathcal D_2$ be the respective groups of drones that $Q(NR_1)$ and $Q(NR_2)$ mated with. Then by the standard argument that any allele drawn from $NR_i$ with $i\in\{1,2\}$ comes with equal probability either from $NQ_i$ or from $\mathcal D_i$, we have
            \[k_{NR_1,NR_2}=\frac14k_{Q(NR_1),Q(NR_2)}+\frac14k_{Q(NR_1),\mathcal D_2}
                + \frac14k_{\mathcal D_1,Q(NR_2)}+\frac14k_{\mathcal D_1,\mathcal D_2}.\]
        Let $Q_1,Q_2\in\mathcal Q_t$ be the respective dams of $Q(NR_1)$ and $Q(NR_2)$ and let $S_1,S_2\in\mathcal Q_t$ be the respective dams of $\mathcal D_1$ and $\mathcal D_2$. Then by the replacements according to Corollary~\ref{cor::imp} and Lemma~\ref{lem::reldron}, we have
            \[k_{NR_1,NR_2}=\frac14k_{\mathcal W(Q_1),\mathcal W(Q_2)}+\frac14k_{R(Q_1),S_2}+\frac14k_{S_1,R(Q_2)}+\frac14k_{S_1,S_2}.\]
        Note that $NR_1\neq NR_2$ implies $Q(NR_1)\neq Q(NR_2)$, so that Corollary~\ref{cor::imp} can be applied.
        Note furthermore that similar to the proof of Equation~\ref{eq::knqnq} in~\ref{item::knqnq}, we need to resort to worker groups in order to cover the case $Q_1=Q_2$ correctly.
        The frequencies with which a given queen $Q\in\mathcal Q_t$ occurs in the roles of $Q_1,Q_2,S_1$ and $S_2$ when taking averages are $dc_{Q,t},dc_{Q,t},bc_{Q,t}$, and $bc_{Q,t}$, respectively. From this we deduce the approximation
            \[k_{\mathcal N\mathcal R_{t+1},\mathcal N\mathcal R_{t+1}} \approx
                \frac14\mathbf {dc}_t^{\top}\mathbf K_{t}^{\mathfrak W\mathfrak W}\mathbf{dc}_t
                    + \frac12\mathbf {bc}_t^{\top}\mathbf K_{t}^{\mathcal Q\mathcal R}\mathbf{dc}_t
                    + \frac14\mathbf {bc}_t^{\top}\mathbf K_{t}^{\mathcal Q\mathcal Q}\mathbf{bc}_t.\]
        This approximation would be an equality if the kinship of $NR_1$ to herself was $\frac14k_{\mathcal W(Q_1),\mathcal W(Q_1)}+\frac12k_{R(Q_1),S_1}+\frac14k_{S_1,S_1}$, which is not the case.
        Instead, we have by Remark~\ref{rmk::withcol}\,\ref{item::withcoli} in combination with Equation~\ref{eq::kgd} (Lemma~\ref{lem::reldron}) and Lemma~\ref{lem::kinbet}
        \begin{align*}
            k_{NR_1,NR_1} &= \frac12+\frac12k_{Q(NR_1),\mathcal D_1}\\
                          &= \frac12+\frac12k_{R(Q_1),S_1}.
        \end{align*}
        So, for each replacement queen $NR_1\in\mathcal N\mathcal R_{t+1}$, we have to add the correction term
        \begin{align*}
            k_{NR_1,NR_1} -& \left(\frac14k_{\mathcal W(Q_1),\mathcal W(Q_1)}+\frac12k_{R(Q_1),S_1}+\frac14k_{S_1,S_1}\right)\\
                &= \frac12-\frac14k_{\mathcal W(Q_1),\mathcal W(Q_1)}-\frac14k_{S_1,S_1}
        \end{align*}
        A queen $Q\in\mathcal Q_t$ occurs with frequency $dc_{Q,t}$ in the role of $Q_1$ and with frequency $bc_{Q,t}$ in the role of $S_1$.Thus, the term that needs to be added to the approximation is
            \[\frac1{2N_{t+1}^{\mathcal N}}
                - \frac1{4N_{t+1}^{\mathcal N}}\mathbf{dc}_t^{\top}\mathrm{diag}\left(\mathbf K_t^{\mathfrak W\mathfrak W}\right)
                - \frac1{4N_{t+1}^{\mathcal N}}\mathbf{bc}_t^{\top}\mathrm{diag}\left(\mathbf K_t^{\mathcal Q\mathcal Q}\right),\]
        and we end up at the claimed identity.

        \item \label{item::knrsr} We show Equation~\ref{eq::knrsr}, i.\,e.
            \[k_{\mathcal N\mathcal R_{t+1},\mathcal S\mathcal R_{t+1}}=
                \frac1{2N_{t+1}^{\mathcal S}}\mathbf {dc}_{t}^{\top}\mathbf K_t^{\mathfrak W\mathfrak W}\mathbf{s}_t
                    +\frac1{2N_{t+1}^{\mathcal S}}\mathbf {bc}_t^{\top}\mathbf K_{t}^{\mathcal Q\mathcal R}\mathbf s_t.\]

        \begin{center}
            \begin{tikzpicture}
                \path (0,0) coordinate (Pt0)
                        node[left = 0mm of Pt0] {generation $\mathcal P_t$}
                    coordinate[below = 4cm of Pt0] (Pt1)
                        node[left = 0mm of Pt1] {generation $\mathcal P_{t+1}$}
                    (Pt0) -- (Pt1)
                        coordinate[pos = 0.4] (M)

                    coordinate[right = 4.5cm of Pt0] (C0)
                    node[above = 1cm of C0, queen] (Q0) {}
                        node[left = 0cm of Q0] {$Q_1$}
                    node[below left = 1cm and 0.5cm of Q0.center, replacement queen] (R0) {}
                        node[left = 0cm of R0] {$R(Q_1)$}

                    coordinate[right = 8cm of Pt0] (C1)
                    node[above = 1cm of C1, queen] (Q1) {}
                        node[right = 0cm of Q1] {$Q(SR_2)$}
                    node[below left = 1cm and 0.5cm of Q1.center, replacement queen] (R1) {}
                        node[left = 0cm of R1] {$SR_2$}

                    coordinate[right = 2cm of Pt1] (C3)
                    node[above = 1cm of C3, queen] (Q3) {}
                        node[left = 0cm of Q3] {$Q(NR_1)$}
                    node[right = 2cm of Q3.center, not considered, group] (D) {}
                        node[right = 0cm of D, not considered] {$\mathcal D$}
                        -- (D) pic[not considered] {drones}
                    node[below left = 1cm and 0.5cm of Q3.center, replacement queen] (R3) {}
                        node[left = 0cm of R3] {$NR_1$}

                    coordinate[right = 8cm of Pt1] (C4)
                    node[above = 1cm of C4, queen] (Q4) {}
                        node[right = 0cm of Q4] {$Q(SR_2)$}
                    node[below left = 1cm and 0.5cm of Q4.center, replacement queen] (R4) {}
                        node[right = 0cm of R4] {$SR_2$};

                    \draw[inheritance] (Q0) -- (R0);
                    \draw[inheritance] (Q1) -- (R1);
                    \draw[inheritance] (Q3) -- (R3);
                    \draw[inheritance] (Q4) -- (R4);
                    \draw[inheritance] (Q0) -- (D);
                    \draw[survival] (R1) -- (R4);
                    \draw[mating, not considered] (D) -- (Q3)
                        node[gene pass description, not considered]{mate};
                    \draw[relationship] (R3) -- (R4)
                        node[relationship description] {$k_{NR_1,SR_2}=...$};

                \draw[dashed] (M) ++ (-3cm,0cm) -- ++(13.9cm,0cm);
            \end{tikzpicture}
        \end{center}

        Let $NR_1\in\mathcal N\mathcal R_{t+1}$ and $SR_2\in\mathcal S\mathcal R_{t+1}\subseteq\mathcal R_t$ be two replacement queens. Let $\mathcal D$ be the group of drones that mated with $NR_1$'s dam $Q(NR_1)$ and let $Q_1\in\mathcal Q_t$ be the dam of the drones in $\mathcal D$. Then $NR_1$ is not an ancestor of $SR_2$ and thus by the standard arguments
        \begin{align*}
            k_{NR_1,SR_2} &= \frac12k_{Q(NR_1),SR_2} + \frac12k_{\mathcal D,SR_2}\\
                          &= \frac12k_{Q(NR_1),SR_2} + \frac12k_{Q_1,SR_2}.
        \end{align*}
        When taking averages, a queen $Q\in\mathcal Q_t$ will occur in the role of $Q_1$ with frequency $bc_{Q,t}$ and a replacement queen $R\in\mathcal R_t$ will occur in the role of $SR_2$ with frequency $\frac1{N_{t+1}^{\mathcal S}}s_{R,t}$. This yields
            \[k_{\mathcal N\mathcal R_{t+1},\mathcal S\mathcal R_{t+1}} =
                \frac12k_{\mathcal N\mathcal Q_{t+1},\mathcal S\mathcal R_{t+1}}
                    +\frac1{2N_{t+1}^{\mathcal S}}\mathbf {bc}_t^{\top}\mathbf K_{t}^{\mathcal Q\mathcal R}\mathbf s_t\]
        The assertion follows by inserting Equation~\ref{eq::knqsr} (shown in~\ref{item::knqsr}).

        \item We show
            \[k_{\mathcal S\mathcal R_{t+1},\mathcal S\mathcal R_{t+1}}=\frac1{\left(N_{t+1}^{\mathcal S}\right)^2}\left(\mathbf {s}_{t}^{\mathcal R}\right)^{\top}\mathbf K_t^{\mathcal R\mathcal R}\mathbf{s}_t^{\mathcal R}.\]

        \begin{center}
            \begin{tikzpicture}
                \path (0,0) coordinate (Pt0)
                        node[left = 0mm of Pt0] {generation $\mathcal P_t$}
                    coordinate[below = 4cm of Pt0] (Pt1)
                        node[left = 0mm of Pt1] {generation $\mathcal P_{t+1}$}
                    (Pt0) -- (Pt1)
                        coordinate[pos = 0.4] (M)

                    coordinate[right = 3cm of Pt0] (C0)
                    node[above = 1cm of C0, queen] (Q0) {}
                        node[left = 0cm of Q0] {$Q(SR_1)$}
                    node[below left = 1cm and 0.5cm of Q0.center, replacement queen] (R0) {}
                        node[left = 0cm of R0] {$SR_1$}

                    coordinate[right = 7cm of Pt0] (C1)
                    node[above = 1cm of C1, queen] (Q1) {}
                        node[right = 0cm of Q1] {$Q(SR_2)$}
                    node[below left = 1cm and 0.5cm of Q1.center, replacement queen] (R1) {}
                        node[right = 0cm of R1] {$SR_2$}

                    coordinate[right = 3cm of Pt1] (C3)
                    node[above = 1cm of C3, queen] (Q3) {}
                        node[right = 0cm of Q3] {$Q(SR_1)$}
                    node[below left = 1cm and 0.5cm of Q3.center, replacement queen] (R3) {}
                        node[right = 0cm of R3] {$SR_1$}

                    coordinate[right = 7cm of Pt1] (C4)
                    node[above = 1cm of C4, queen] (Q4) {}
                        node[right = 0cm of Q4] {$Q(SR_2)$}
                    node[below left = 1cm and 0.5cm of Q4.center, replacement queen] (R4) {}
                        node[right = 0cm of R4] {$SR_2$};

                    \draw[survival] (R0) -- (R3);
                    \draw[survival] (R1) -- (R4);
                    \draw[inheritance] (Q0) -- (R0);
                    \draw[inheritance] (Q1) -- (R1);
                    \draw[inheritance] (Q3) -- (R3);
                    \draw[inheritance] (Q4) -- (R4);
                    \draw[relationship] (R0) -- (R1);
                    \draw[relationship] (R3) -- (R4)
                        node[relationship description] {$k_{SR_1,SR_2}=k_{SR_1,SR_2}$};

                \draw[dashed] (M) ++ (-3cm,0cm) -- ++(13.9cm,0cm);
            \end{tikzpicture}
        \end{center}

        Just as in Equation~\ref{eq::ksqsq} (part~\ref{item::ksqsq} of this proof) and in Equation~\ref{eq::ksqsr} (part~\ref{item::ksqsr} of this proof), this follows from the fact that kinships between surviving colonies do not change over time.
    \end{enumerate}
\end{proof}

With all these terms calculated, we insert them into the equations of Remark~\ref{rmk::brokenk}\,\ref{item::tenii}:

\begin{Lem}
    We have
    \begin{align}
        k_{\mathcal Q_{t+1},\mathcal Q_{t+1}}
            &= \left(\frac{N_{t+1}^{\mathcal N}}{N_{t+1}}\right)^2\mathbf{dc}_t^{\top}\mathbf{K}_t^{\mathfrak W\mathfrak W}\mathbf{dc}_t
                + \frac{N_{t+1}^{\mathcal N}}{N_{t+1}^2}\mathbf {dc}_t^{\top}\mathrm{diag}\left(\mathbf K_t^{\mathcal R\mathcal R}\right)
                - \frac{N_{t+1}^{\mathcal N}}{N_{t+1}^2}\mathbf {dc}_t^{\top}\mathrm{diag}\left(\mathbf K_t^{\mathfrak W\mathfrak W}\right) \nonumber\\
            &\hspace{1cm} +\frac{2N_{t+1}^{\mathcal N}}{N_{t+1}^2}\mathbf{dc}_t^{\top}\mathbf K_t^{\mathcal R\mathcal Q}\mathbf s_t
                + \frac1{N_{t+1}^2}\mathbf s_t^{\top}\mathbf K_t^{\mathcal Q\mathcal Q}\mathbf s_t,\\
        k_{\mathcal Q_{t+1},\mathcal R_{t+1}}
            &= \frac{\left(N_{t+1}^{\mathcal N}\right)^2}{2N_{t+1}^2}
                    \mathbf{dc}_t^{\top}\mathbf K_t^{\mathfrak W\mathfrak W}\mathbf{dc}_t
                + \frac{\left(N_{t+1}^{\mathcal N}\right)^2}{2N_{t+1}^2}
                    \mathbf{dc}_t^{\top}\mathbf K_t^{\mathcal R\mathcal Q}\mathbf{bc}_t \nonumber\\
            &\hspace{1cm}+ \frac{N_{t+1}^{\mathcal N}}{2N_{t+1}^2}
                    \mathbf{1}_t^{\top}\left(\mathbf K_t^{\mathcal R\mathcal R}-\mathbf K_t^{\mathfrak W\mathfrak W}\right)\mathbf{dc}_t
                + \frac{N_{t+1}^{\mathcal N}}{N_{t+1}^2}
                    \mathbf{dc}_t^{\top}\mathbf K_t^{\mathfrak W\mathfrak W}\mathbf{s}_t \nonumber\\
            &\hspace{1cm}+ \frac{N_{t+1}^{\mathcal N}}{2N_{t+1}^2}\mathbf {dc}_t^{\top}\mathbf K_t^{\mathcal R\mathcal Q}\mathbf s_t
                + \frac{N_{t+1}^{\mathcal N}}{2N_{t+1}^2}\mathbf {bc}_t^{\top}\mathbf K_t^{\mathcal Q\mathcal Q}\mathbf s_t
                + \frac1{N_{t+1}^2}\mathbf s_t^{\top}\mathbf K_t^{\mathcal Q\mathcal R}\mathbf s_t,\\
        k_{\mathcal R_{t+1},\mathcal R_{t+1}}
            &= \frac{\left(N_{t+1}^{\mathcal N}\right)^2}{4N_{t+1}^2}
                    \mathbf{dc}_t^{\top}\mathbf K_t^{\mathfrak W\mathfrak W}\mathbf{dc}_t
                + \frac{\left(N_{t+1}^{\mathcal N}\right)^2}{2N_{t+1}^2}
                    \mathbf{bc}_t^{\top}\mathbf K_t^{\mathcal Q\mathcal R}\mathbf{dc}_t
                + \frac{\left(N_{t+1}^{\mathcal N}\right)^2}{4N_{t+1}^2}
                    \mathbf{bc}_t^{\top}\mathbf K_t^{\mathcal Q\mathcal Q}\mathbf{bc}_t \nonumber\\
            &\hspace{1cm} - \frac{N_{t+1}^{\mathcal N}}{4N_{t+1}^2}
                    \mathbf{dc}_t^{\top}\mathrm{diag}\left(K_t^{\mathfrak W\mathfrak W}\right)
                - \frac{N_{t+1}^{\mathcal N}}{4N_{t+1}^2}
                    \mathbf{bc}_t^{\top}\mathrm{diag}\left(K_t^{\mathcal Q\mathcal Q}\right) \nonumber\\
            &\hspace{1cm} + \frac{N_{t+1}^{\mathcal N}}{N_{t+1}^2}\mathbf{dc}_t^{\top}\mathbf{K}_t^{\mathfrak W\mathfrak W}\mathbf s_t
                + \frac{N_{t+1}^{\mathcal N}}{N_{t+1}^2}\mathbf{bc}_t^{\top}\mathbf{K}_t^{\mathcal Q\mathcal R}\mathbf s_t
                + \frac{N_{t+1}^{\mathcal N}}{2N_{t+1}^2}
                + \frac1{N_{t+1}^2}\mathbf s_t\mathbf K_t^{\mathcal R\mathcal R}\mathbf s_t.
    \end{align}
\end{Lem}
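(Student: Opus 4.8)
The plan is to treat this statement as a pure substitution. Each of the three target quantities $k_{\mathcal Q_{t+1},\mathcal Q_{t+1}}$, $k_{\mathcal Q_{t+1},\mathcal R_{t+1}}$, and $k_{\mathcal R_{t+1},\mathcal R_{t+1}}$ is already expressed in Remark~\ref{rmk::brokenk}\,\ref{item::tenii} as a fixed weighted average — with weights built from $\tfrac{N_{t+1}^{\mathcal N}}{N_{t+1}}$ and $\tfrac{N_{t+1}^{\mathcal S}}{N_{t+1}}$ — of the ten elementary kinships between the classes $\mathcal N\mathcal Q_{t+1}$, $\mathcal S\mathcal Q_{t+1}$, $\mathcal N\mathcal R_{t+1}$, and $\mathcal S\mathcal R_{t+1}$. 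Since all ten of those were just evaluated in Lemma~\ref{lem::allthek}, I would simply insert the relevant expressions and collect terms, carrying out the three computations in turn since they each draw on a different subset of the ten formulas.

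For $k_{\mathcal Q_{t+1},\mathcal Q_{t+1}}$ I would substitute Equations~\ref{eq::knqnq}, \ref{eq::knqsq}, and~\ref{eq::ksqsq}; here the weight $\tfrac{2N_{t+1}^{\mathcal N}N_{t+1}^{\mathcal S}}{N_{t+1}^2}$ cancels the $\tfrac1{N_{t+1}^{\mathcal S}}$ in $k_{\mathcal N\mathcal Q_{t+1},\mathcal S\mathcal Q_{t+1}}$ and the factor $\bigl(\tfrac{N_{t+1}^{\mathcal S}}{N_{t+1}}\bigr)^2$ cancels the $\tfrac1{(N_{t+1}^{\mathcal S})^2}$ in $k_{\mathcal S\mathcal Q_{t+1},\mathcal S\mathcal Q_{t+1}}$, leaving exactly the five claimed terms. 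For $k_{\mathcal R_{t+1},\mathcal R_{t+1}}$ I would substitute Equations~\ref{eq::knrnr}, \ref{eq::knrsr}, and~\ref{eq::ksrsr}; the one thing to watch is the constant summand $\tfrac1{2N_{t+1}^{\mathcal N}}$ sitting inside $k_{\mathcal N\mathcal R_{t+1},\mathcal N\mathcal R_{t+1}}$, which under the prefactor $\bigl(\tfrac{N_{t+1}^{\mathcal N}}{N_{t+1}}\bigr)^2$ becomes precisely the lone $\tfrac{N_{t+1}^{\mathcal N}}{2N_{t+1}^2}$ term in the result.

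The only step requiring more than rote arithmetic is $k_{\mathcal Q_{t+1},\mathcal R_{t+1}}$, whose cross contribution is the \emph{sum} $k_{\mathcal N\mathcal Q_{t+1},\mathcal S\mathcal R_{t+1}} + k_{\mathcal S\mathcal Q_{t+1},\mathcal N\mathcal R_{t+1}}$ of two genuinely different expressions (Equations~\ref{eq::knqsr} and~\ref{eq::ksqnr}), so I would keep these separate rather than exploit any spurious symmetry. The diagonal correction inherited from $k_{\mathcal N\mathcal Q_{t+1},\mathcal N\mathcal R_{t+1}}$ (Equation~\ref{eq::knqnr}) arrives as $\mathbf{dc}_t^{\top}\mathrm{diag}(\mathbf K_t^{\mathcal R\mathcal R}) - \mathbf{dc}_t^{\top}\mathrm{diag}(\mathbf K_t^{\mathfrak W\mathfrak W})$; to match the stated form I would rewrite it as $\mathbf 1_t^{\top}(\mathbf K_t^{\mathcal R\mathcal R} - \mathbf K_t^{\mathfrak W\mathfrak W})\mathbf{dc}_t$, which is legitimate because, by Remark~\ref{rmk::onk}\,\ref{item::onkdiag}, the blocks $\mathbf K_t^{\mathcal R\mathcal R}$ and $\mathbf K_t^{\mathfrak W\mathfrak W}$ agree off the diagonal, so their difference is a diagonal matrix and $\mathbf 1_t^{\top}(\mathbf K_t^{\mathcal R\mathcal R}-\mathbf K_t^{\mathfrak W\mathfrak W})$ simply reads off the diagonal entries.

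I expect no conceptual obstacle: all the probabilistic and genealogical reasoning was already discharged in Lemma~\ref{lem::allthek}, and the weighted-average decomposition is supplied by Remark~\ref{rmk::brokenk}. The real difficulty is purely clerical — tracking the various prefactors $\tfrac{N_{t+1}^{\mathcal N}}{N_{t+1}}$ and $\tfrac{N_{t+1}^{\mathcal S}}{N_{t+1}}$, ensuring each $\mathrm{diag}$ term and each constant term lands with the correct coefficient, and not silently symmetrizing the $\mathcal Q$--$\mathcal R$ cross term. Accordingly I would present the proof as following "by inserting the results of Lemma~\ref{lem::allthek} into the equations of Remark~\ref{rmk::brokenk}\,\ref{item::tenii} and simplifying", optionally displaying the full substitution for one of the three identities to exhibit the pattern.
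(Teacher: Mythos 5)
Your proposal is correct and matches the paper exactly: the paper gives no separate argument for this Lemma beyond the sentence ``With all these terms calculated, we insert them into the equations of Remark~\ref{rmk::brokenk}\,\ref{item::tenii}'', which is precisely your substitute-and-simplify plan, and your coefficient bookkeeping checks out for all three identities. Your justification of the rewriting $\mathbf{dc}_t^{\top}\bigl(\mathrm{diag}(\mathbf K_t^{\mathcal R\mathcal R})-\mathrm{diag}(\mathbf K_t^{\mathfrak W\mathfrak W})\bigr)=\mathbf 1_t^{\top}\bigl(\mathbf K_t^{\mathcal R\mathcal R}-\mathbf K_t^{\mathfrak W\mathfrak W}\bigr)\mathbf{dc}_t$ via Remark~\ref{rmk::onk}\,\ref{item::onkdiag} is also the right (and only nontrivial) observation needed.
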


These equations can then be inserted into Equation~\ref{eq::kng} to obtain $k_{\mathcal P_{t+1}^{\ast},\mathcal P_{t+1}^{\ast}}$.

\begin{Thm}\label{thm::burner}
    We have
    \begin{align}
        k_{\mathcal P^{\ast}_{t+1},\mathcal P^{\ast}_{t+1}}
            &= \left(\frac{N_{t+1}^{\mathcal N}}{4N_{t+1}}\right)^2\left(
                9\cdot\mathbf{dc}_t^{\top}\mathbf K_t^{\mathfrak W\mathfrak W}\mathbf {dc}_t
                    + 6\cdot\mathbf{dc}_t^{\top}\mathbf K_t^{\mathcal R\mathcal Q}\mathbf {bc}_t
                    + \mathbf{bc}_t^{\top}\mathbf K_t^{\mathcal Q\mathcal Q}\mathbf {bc}_t
                \right) \nonumber \\
            &\hspace{0.5cm} + \frac{N_{t+1}^{\mathcal N}}{\left(4N_{t+1}\right)^2}\left(
                8\cdot\mathbf{dc}_t^{\top}\mathrm{diag}\left(\mathbf K_t^{\mathcal R\mathcal R}\right)
                    - 9\cdot\mathbf{dc}_t^{\top}\mathrm{diag}\left(\mathbf K_t^{\mathfrak W\mathfrak W}\right)\right. \nonumber \\
                    &\hspace{9cm}\left.- \mathbf{bc}_t^{\top}\mathrm{diag}\left(\mathbf K_t^{\mathcal Q\mathcal Q}\right)
                \right) \nonumber \\
            &\hspace{0.5cm} + \frac{N_{t+1}^{\mathcal N}}{\left(2N_{t+1}\right)^2}\left(
                3\cdot\mathbf{dc}_t^{\top}\mathbf K_t^{\mathcal R\mathcal Q}\mathbf s_t
                    + 3\cdot\mathbf{dc}_t^{\top}\mathbf K_t^{\mathfrak W\mathfrak W}\mathbf s_t
                    + \mathbf{bc}_t^{\top}\mathbf K_t^{\mathcal Q\mathcal R}\mathbf s_t
                    + \mathbf{bc}_t^{\top}\mathbf K_t^{\mathcal Q\mathcal Q}\mathbf s_t
                \right) \nonumber\\
            &\hspace{0.5cm} + \frac1{4N_{t+1}^2}\mathbf s_t^{\top}\left(
                    \mathbf K_t^{\mathcal Q\mathcal Q} + 2\cdot\mathbf K_t^{\mathcal Q\mathcal R} + \mathbf K_t^{\mathcal R\mathcal R}
                \right)\mathbf s_t
                + \frac{N^{\mathcal N}_{t+1}}{8N_{t+1}^2}.
    \end{align}
\end{Thm}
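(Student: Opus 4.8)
The plan is to derive Theorem~\ref{thm::burner} as a purely mechanical consequence: substitute the three block formulas for $k_{\mathcal Q_{t+1},\mathcal Q_{t+1}}$, $k_{\mathcal Q_{t+1},\mathcal R_{t+1}}$, and $k_{\mathcal R_{t+1},\mathcal R_{t+1}}$ from the preceding lemma (which itself aggregates the ten kinships of Lemma~\ref{lem::allthek} via Remark~\ref{rmk::brokenk}) into the weighted-average identity of Equation~\ref{eq::kng},
\[
k_{\mathcal P^{\ast}_{t+1},\mathcal P^{\ast}_{t+1}} = \tfrac14 k_{\mathcal Q_{t+1},\mathcal Q_{t+1}} + \tfrac12 k_{\mathcal Q_{t+1},\mathcal R_{t+1}} + \tfrac14 k_{\mathcal R_{t+1},\mathcal R_{t+1}},
\]
distribute the weights $\tfrac14,\tfrac12,\tfrac14$ over every summand, and collect coefficients. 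All the genuine content already lives in Lemma~\ref{lem::allthek} and its aggregation, so nothing new has to be proved; the work is entirely organizational.

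Two normalizing rewrites are needed before like terms can be merged. First, the summand $\frac{N_{t+1}^{\mathcal N}}{2N_{t+1}^2}\mathbf{1}_t^{\top}(\mathbf K_t^{\mathcal R\mathcal R}-\mathbf K_t^{\mathfrak W\mathfrak W})\mathbf{dc}_t$ in $k_{\mathcal Q_{t+1},\mathcal R_{t+1}}$ must be turned into diagonal form. By Remark~\ref{rmk::onk}\,\ref{item::onkdiag}, the matrices $\mathbf K_t^{\mathcal R\mathcal R}$ and $\mathbf K_t^{\mathfrak W\mathfrak W}$ agree off the diagonal, so their difference is diagonal and
\[
\mathbf{1}_t^{\top}(\mathbf K_t^{\mathcal R\mathcal R}-\mathbf K_t^{\mathfrak W\mathfrak W})\mathbf{dc}_t = \mathbf{dc}_t^{\top}\mathrm{diag}(\mathbf K_t^{\mathcal R\mathcal R}) - \mathbf{dc}_t^{\top}\mathrm{diag}(\mathbf K_t^{\mathfrak W\mathfrak W}).
\]
Second, the scalar $\mathbf{bc}_t^{\top}\mathbf K_t^{\mathcal Q\mathcal R}\mathbf{dc}_t$ appearing in $k_{\mathcal R_{t+1},\mathcal R_{t+1}}$ and the scalar $\mathbf{dc}_t^{\top}\mathbf K_t^{\mathcal R\mathcal Q}\mathbf{bc}_t$ appearing in $k_{\mathcal Q_{t+1},\mathcal R_{t+1}}$ are equal, being transposes of one another with $\mathbf K_t^{\mathcal R\mathcal Q}=(\mathbf K_t^{\mathcal Q\mathcal R})^{\top}$ (Remark~\ref{rmk::onk}); I would merge them into a single $\mathbf{dc}_t^{\top}\mathbf K_t^{\mathcal R\mathcal Q}\mathbf{bc}_t$ term.

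With these rewrites in place, I would group the fourteen distinct expressions into the four families matching the display in the theorem: the quadratic contribution forms $\mathbf{dc}_t^{\top}\mathbf K_t^{\mathfrak W\mathfrak W}\mathbf{dc}_t$, $\mathbf{dc}_t^{\top}\mathbf K_t^{\mathcal R\mathcal Q}\mathbf{bc}_t$, $\mathbf{bc}_t^{\top}\mathbf K_t^{\mathcal Q\mathcal Q}\mathbf{bc}_t$; the diagonal corrections in $\mathrm{diag}(\mathbf K_t^{\mathcal R\mathcal R})$, $\mathrm{diag}(\mathbf K_t^{\mathfrak W\mathfrak W})$, $\mathrm{diag}(\mathbf K_t^{\mathcal Q\mathcal Q})$; the contribution--survival cross terms against $\mathbf s_t$; and the survivor quadratics $\mathbf s_t^{\top}\mathbf K_t^{\bullet\bullet}\mathbf s_t$ together with the additive constant. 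For each family I would sum the three weighted coefficients; for instance the $\mathbf{dc}_t^{\top}\mathbf K_t^{\mathfrak W\mathfrak W}\mathbf{dc}_t$ coefficient accumulates $\tfrac14\cdot\frac{(N_{t+1}^{\mathcal N})^2}{N_{t+1}^2}+\tfrac12\cdot\frac{(N_{t+1}^{\mathcal N})^2}{2N_{t+1}^2}+\tfrac14\cdot\frac{(N_{t+1}^{\mathcal N})^2}{4N_{t+1}^2}=\frac{9(N_{t+1}^{\mathcal N})^2}{16N_{t+1}^2}=9\bigl(\tfrac{N_{t+1}^{\mathcal N}}{4N_{t+1}}\bigr)^2$, exactly as claimed.

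The main obstacle, and really the only one, is arithmetic care: there are many summands carrying prefactors $\frac{(N_{t+1}^{\mathcal N})^2}{N_{t+1}^2}$, $\frac{N_{t+1}^{\mathcal N}}{N_{t+1}^2}$, or $\frac{1}{N_{t+1}^2}$ with assorted numerical fractions, and it is easy to misread a $\mathbf K_t^{\mathcal Q\mathcal Q}\mathbf s_t$ contribution as a $\mathbf K_t^{\mathcal Q\mathcal R}\mathbf s_t$ one, or to drop a factor of two when folding the $\mathbf{dc}$--$\mathbf{bc}$ symmetry. I would guard against this by rendering every coefficient over the common denominators $16N_{t+1}^2$, $8N_{t+1}^2$, and $4N_{t+1}^2$ used in the statement and verifying each of the fourteen collected numerators individually before factoring out $\bigl(\tfrac{N_{t+1}^{\mathcal N}}{4N_{t+1}}\bigr)^2$, $\frac{N_{t+1}^{\mathcal N}}{(4N_{t+1})^2}$, $\frac{N_{t+1}^{\mathcal N}}{(2N_{t+1})^2}$, and $\frac{1}{4N_{t+1}^2}$ as in the claim.
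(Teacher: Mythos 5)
Your proposal is correct and takes essentially the same route as the paper, which likewise obtains Theorem~\ref{thm::burner} purely by inserting the three aggregated kinships $k_{\mathcal Q_{t+1},\mathcal Q_{t+1}}$, $k_{\mathcal Q_{t+1},\mathcal R_{t+1}}$, and $k_{\mathcal R_{t+1},\mathcal R_{t+1}}$ into Equation~\ref{eq::kng} and simplifying, with all substance residing in Lemma~\ref{lem::allthek}. Your two normalizing rewrites (turning $\mathbf 1_t^{\top}\bigl(\mathbf K_t^{\mathcal R\mathcal R}-\mathbf K_t^{\mathfrak W\mathfrak W}\bigr)\mathbf{dc}_t$ into diagonal form via Remark~\ref{rmk::onk} and merging the transposed cross terms) are exactly the bookkeeping required, and your sample coefficient computations check out.
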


With all that, we are able to formulate the task of OCS for a honeybee population with single colony inseminations.

\begin{Task}\label{task::sci}
    Given a generation $\mathcal P_t=\mathcal Q_t\sqcup\mathfrak W_t\sqcup\mathcal R_t$ of honeybee colonies, and
        \begin{itemize}
            \item vectors $\hat{\mathbf u}_t^{\mathcal Q}\in\mathbb R^{\mathcal Q_t}$ and $\hat{\mathbf u}_t^{\mathcal R}=\hat{\mathbf u}_t^{\mathfrak W}\in\mathbb R^{\mathcal R_t}\cong\mathbb R^{\mathfrak W_t}$ of estimated breeding values,

            \item a survival vector $\mathbf s_t\in\{0,1\}^{\mathcal Q_t}(\cong\{0,1\}^{\mathcal R_t}\cong\{0,1\}^{\mathfrak W_t})$,

            \item a symmetric and positive definite kinship matrix $\mathbf K_t\in\mathbb R^{\mathcal P_t\times\mathcal P_t}$ that falls into the blocks
                \[\mathbf K_t =
                    \begin{pmatrix}
                    \mathbf K_t^{\mathcal Q\mathcal Q}  & \mathbf K_t^{\mathcal Q\mathfrak W}  & \mathbf K_t^{\mathcal Q\mathcal R}\\
                    \mathbf K_t^{\mathfrak W\mathcal Q} & \mathbf K_t^{\mathfrak W\mathfrak W} & \mathbf K_t^{\mathfrak W\mathcal R}\\
                    \mathbf K_t^{\mathcal R\mathcal Q}  & \mathbf K_t^{\mathcal R\mathfrak W}  & \mathbf K_t^{\mathcal R\mathcal R}
                    \end{pmatrix}.\]
            and fulfills the properties listed in Remark~\ref{rmk::onk},

            \item the required number of newly created colonies of the next generation, $N_{t+1}^{\mathcal N}$,

            \item and a maximum acceptable kinship level $k_{t+1}^{\ast}$,
        \end{itemize}
        let $N_{t+1}:=N_{t+1}^{\mathcal N}+\mathbf 1_t^{\top}\mathbf s_t$ and maximize the function
        \begin{align*}
            \mathbb E\bigl[\hat u_{\mathcal P^{\ast}_{t+1}}\bigr]:
                \mathbb R_{\geq0}^{\mathcal Q_t}\oplus\mathbb R_{\geq0}^{\mathcal Q_t} &\to \mathbb R,\\
                \mathbf {dc}_{t}\oplus\mathbf {bc}_{t} &\mapsto
                \frac{3N_{t+1}^{\mathcal N}}{4N_{t+1}}\mathbf {dc}_{t}^{\top}\hat{\mathbf u}_{t}^{\mathcal R}
                + \frac{N_{t+1}^{\mathcal N}}{4N_{t+1}}\mathbf {bc}_{t}^{\top}\hat{\mathbf u}_{t}^{\mathcal Q}
                + \frac1{2N_{t+1}}\mathbf s_t^{\top}\left(\hat{\mathbf u}_{t}^{\mathcal R}+\hat{\mathbf u}_{t}^{\mathcal Q}\right)
        \end{align*}

        under the constraints
            \[\mathbf 1_t^{\top}\mathbf {dc}_t=1,\]
            \[\mathbf 1_t^{\top}\mathbf {bc}_t=1,\]
        and
            \[k_{\mathcal P^{\ast}_{t+1},\mathcal P^{\ast}_{t+1}} \leq k_{t+1}^{\ast},\]
        where $k_{\mathcal P^{\ast}_{t+1},\mathcal P^{\ast}_{t+1}}$ denotes the term described in Theorem~\ref{thm::burner}.
\end{Task}

\subsection{Isolated mating stations}\label{sec::ims}

We next consider the case that young queens are not instrumentally inseminated but mate on isolated mating stations. We further assume that the group $\mathcal M$ of DPQs on a mating station consists of sisters, i.\,e. shares a single dam.

Under these circumstances, any queen $Q\in\mathcal Q_t$ can still contribute to the genetic setup of the next generation $\mathcal P_{t+1}$ via the dam and survival paths just like in the case of single colony inseminations (Section~\ref{sec::sci}). However, contribution via the 1b-path is replaced by the following pathway:

\begin{Def}
    If a queen $Q\in\mathcal Q_t$ produces the group $\mathcal M$ of DPQs on a mating station she is by the nomenclature of \url{www.beebreed.eu} called a "4a-queen" \citep{uzunov23, druml23}. New queens in $\mathcal N\mathcal Q_{t+1}$ may then mate with drones produced by $\mathcal M$, and thereby $Q$ makes a genetic contribution to the next generation $\mathcal P_{t+1}$. We call this pathway the \emph{4a-path}.
\end{Def}

\begin{center}
    \begin{tikzpicture}
        \path (0,0) coordinate (Pt0)
                node[left = 0mm of Pt0] {generation $\mathcal P_t$}
            coordinate[below = 4cm of Pt0] (Pt1)
                node[left = 0mm of Pt1] {generation $\mathcal P_{t+1}$}
            (Pt0) -- (Pt1)
                coordinate[pos = 0.4] (M)
            coordinate[right = 6cm of Pt0] (C0)
            node[above = 1cm of C0, queen] (Q0) {}
                node[left = 0cm of Q0] {$Q_0$}
            node[right = 2cm of Q0.center, not considered, group] (D0) {}
                node[right = 0cm of D0, not considered] {$\mathcal D_0$}
                -- (D0) pic[not considered] {drones}
            node[below right = 2cm and 0.5cm of Q0.center, worker group] (W0) {}
                node[right = 0cm of W0] {$\mathcal W_0$}
            node[below left = 1cm and 0.5cm of Q0.center, replacement queen] (R0) {}
                node[left = 0cm of R0] {$R_0$}
             coordinate[right = 1.7cm of Pt1] (C1)
            node[above = 1cm of C1, queen] (Q1) {}
                node[left = 0cm of Q1] {$Q_1$}
            node[right = 2cm of Q1.center, not considered, group] (D1) {}
                node[right = 0cm of D1, not considered] {$\mathcal D_1$}
                -- (D1) pic[not considered] {drones}
            node[above = 1.4cm of D1.center, group base] (M1prov) {}
                -- (M1prov.center) node[not considered, group] (M1) {}
                node[above left = 0cm and 0cm of M1, not considered] {$\mathcal M_0$}
                -- (M1.center) pic[not considered] {queens}
            node[below right = 2cm and 0.5cm of Q1.center, worker group] (W1) {}
                node[right = 0cm of W1] {$\mathcal W_1$}
            node[below left = 1cm and 0.5cm of Q1.center, replacement queen] (R1) {}
                node[left = 0cm of R1] {$R_1$}

            coordinate[right = 6cm of Pt1] (C2)
            node[above = 1cm of C2, queen] (Q2) {}
                node[left = 0cm of Q2] {$Q_0$}
            node[below right = 2cm and 0.5cm of Q2.center, worker group] (W2) {}
                node[right = 0cm of W2] {$\mathcal W_0$}
            node[below left = 1cm and 0.5cm of Q2.center, replacement queen] (R2) {}
                node[left = 0cm of R2] {$R_0$}

            coordinate[right = 10cm of Pt1] (C3)
            node[above = 1cm of C3, queen] (Q3) {}
                node[left = 0cm of Q3] {$Q_2$}
            node[below right = 2cm and 0.5cm of Q3.center, worker group] (W3) {}
                node[right = 0cm of W3] {$\mathcal W_2$}
            node[below left = 1cm and 0.5cm of Q3.center, replacement queen] (R3) {}
                node[left = 0cm of R3] {$R_2$};

        \draw[mating, not considered] (D0) -- (Q0)
            node[gene pass description, not considered] {mate};
        \draw[mating, not considered] (D1) -- (Q1)
            node[gene pass description, not considered] {mate};
        \draw[inheritance] (Q0) -- (W0);
        \draw[inheritance] (Q0) -- (R0);
        \draw[inheritance] (Q1) -- (W1);
        \draw[inheritance] (Q1) -- (R1);
        \draw[inheritance] (Q2) -- (W2);
        \draw[inheritance] (Q2) -- (R2);
        \draw[inheritance] (Q3) -- (W3);
        \draw[inheritance] (Q3) -- (R3);
        \draw[inheritance] (Q0) .. controls ++(-2,-1) .. (M1)
            node[gene pass description] {4a-path};
        \draw[inheritance, not considered] (M1) -- (D1);
        \draw[inheritance] (Q0) -- (Q3)
            node[gene pass description] {dam path};
        \draw[survival] (W0) -- (W2)
            node[gene pass description] {survival path};
        \draw[survival] (Q0) -- (Q2);
        \draw[survival] (R0) -- (R2);

        \begin{scope}[on background layer]
            \draw[dashed] (M) ++ (-3cm,0cm) -- ++(15cm,0cm);
        \end{scope}
    \end{tikzpicture}
\end{center}

\begin{Rmk}
    The introduction of groups $\mathcal M$ of DPQs raises the question if and how these should be integrated in the overall population $\mathcal P$.
    \begin{enumerate}[label = (\roman*)]
        \item Counting individual DPQs as elements of $\mathcal Q_t$ would surely be a bad idea. Because the DPQs on a mating station are sisters, this would introduce many close relationships and increase $k_{\mathcal Q_t,\mathcal Q_t}$. In particular, it would appear better to have small groups $\mathcal M$ of DPQs on mating stations than to have large groups. In reality, however, there are virtually no downsides of having mating stations comprise large numbers of DPQs. In contrast, higher numbers of DPQs result in a greater drone density and therefore greater mating success rates \citep{tiesler16, uzunov22evaluation}. (See however \citep{neumann99queen} who did not find an influence on the number of DPQs on a mating station on mating success.)

        \item More promising appears the idea to count groups $\mathcal M$ of DPQs on a mating station as separate entities. This idea goes back to \citet{bienefeld89}, who coined the term \emph{pseudo sires} for such groups. This would lead to a set $\mathfrak M_t$ of pseudo sires for each generation. However, also this approach comes with problems. If a queen $Q\in\mathcal Q_t$ produces a pseudo sire $\mathcal M$ which in turn produces the drones to mate a new queen $NQ\in\mathcal N\mathcal Q_{t+1}\subseteq\mathcal Q_{t+1}$, should $\mathcal M$ then be counted towards generation $\mathcal P_t$ or towards $\mathcal P_{t+1}$? In either case, we would be confronted with (unwanted) parent-offspring relations within one generation without the offspring being a genetic dead end (like worker groups or replacement queens).

        \item Instead, it turns out most practical to consider (groups of) DPQs as outside of the considered population, just as we do not consider groups of drones as part of the population either (Remark~\ref{rmk::nodr}\,\ref{item::nodri}). This approach is justified by the fact that analyses of developments in simulated \citep{plate20} and real \citep{hoppe20} honeybee populations generally focus on breeding queens and not DPQ. Furthermore, since DPQs are generally not phenotyped \citep{basso24}, the estimated breeding values of pseudo sires are only weighted averages of the estimated breeding values of their relatives, without any intrinsic information.

        \item By the choice of not considering DPQs as part of the population, we can leave the question open, which generation should be associated with a pseudo sire $\mathcal M$. In diagrams, we will depict pseudo sires as gray (like drone groups, cf. Remark~\ref{rmk::nodr}\,\ref{item::nodrii}) and place them right on the border between two generations.
    \end{enumerate}
\end{Rmk}

We need to add one further assumption concerning the design of breeding schemes with isolated mating station.

\begin{Rmk}\label{rmk::msassu}
    \begin{enumerate}[label = (\roman*)]
        \item \label{item::msassui} In theory it is possible that a queen $Q$ provides the DPQs for multiple mating stations at the same time $t$. The database \url{www.beebreed.eu} reveals that this is indeed sometimes the case in reality. For our considerations, however, we exclude this possibility. Each queen $Q\in\mathcal Q_t$ may serve as the 4a-queen of at most one mating station at time $t$. Serving as 4a-queen to mating stations at different times than $t$ is allowed.

        We clarify this by figures: The following situation is not allowed:

        \begin{center}
            \begin{tikzpicture}
                \path (0,0) coordinate (Pt0)
                        node[left = 0mm of Pt0] {generation $\mathcal P_t$}
                    coordinate[below = 4cm of Pt0] (Pt1)
                        node[left = 0mm of Pt1] {generation $\mathcal P_{t+1}$}
                    (Pt0) -- (Pt1)
                        coordinate[pos = 0.4] (M)
                    coordinate[right = 6cm of Pt0] (C0)
                    node[above = 1cm of C0, queen] (Q0) {}
                        node[left = 0cm of Q0] {$Q_0$}
                    node[right = 2cm of Q0.center, not considered, group] (D0) {}
                        node[right = 0cm of D0, not considered] {$\mathcal D_0$}
                        -- (D0) pic[not considered] {drones}
                    node[below right = 2cm and 0.5cm of Q0.center, worker group] (W0) {}
                        node[right = 0cm of W0] {$\mathcal W_0$}
                    node[below left = 1cm and 0.5cm of Q0.center, replacement queen] (R0) {}
                        node[left = 0cm of R0] {$R_0$}

                    coordinate[right = 1.7cm of Pt1] (C1)
                    node[above = 1cm of C1, queen] (Q1) {}
                        node[left = 0cm of Q1] {$Q_1$}
                    node[right = 2cm of Q1.center, not considered, group] (D1) {}
                        node[right = 0cm of D1, not considered] {$\mathcal D_1$}
                        -- (D1) pic[not considered] {drones}
                    node[above = 1.4cm of D1.center, group base] (M1prov) {}
                        -- (M1prov.center) node[not considered, group] (M1) {}
                        node[above left = 0cm and 0cm of M1, not considered] {$\mathcal M_a$}
                        -- (M1.center) pic[not considered] {queens}
                    node[below right = 2cm and 0.5cm of Q1.center, worker group] (W1) {}
                        node[right = 0cm of W1] {$\mathcal W_1$}
                    node[below left = 1cm and 0.5cm of Q1.center, replacement queen] (R1) {}
                        node[left = 0cm of R1] {$R_1$}

                    coordinate[right = 7cm of Pt1] (C2)
                    node[above = 1cm of C2, queen] (Q2) {}
                        node[left = 0cm of Q2] {$Q_2$}
                    node[right = 2cm of Q2.center, not considered, group] (D2) {}
                        node[right = 0cm of D2, not considered] {$\mathcal D_2$}
                        -- (D2) pic[not considered] {drones}
                    node[above = 1.4cm of D2.center, group base] (M2prov) {}
                        -- (M2prov.center) node[not considered, group] (M2) {}
                        node[above left = 0cm and 0cm of M2, not considered] {$\mathcal M_b$}
                        -- (M2.center) pic[not considered] {queens}
                    node[below right = 2cm and 0.5cm of Q2.center, worker group] (W2) {}
                        node[right = 0cm of W2] {$\mathcal W_2$}
                    node[below left = 1cm and 0.5cm of Q2.center, replacement queen] (R2) {}
                        node[left = 0cm of R2] {$R_2$};

                \draw[mating, not considered] (D0) -- (Q0)
                    node[gene pass description, not considered] {mate};
                \draw[mating, not considered] (D1) -- (Q1)
                    node[gene pass description, not considered] {mate};
                \draw[mating, not considered] (D2) -- (Q2)
                    node[gene pass description, not considered] {mate};
                \draw[inheritance] (Q0) -- (W0);
                \draw[inheritance] (Q0) -- (R0);
                \draw[inheritance] (Q1) -- (W1);
                \draw[inheritance] (Q1) -- (R1);
                \draw[inheritance] (Q2) -- (W2);
                \draw[inheritance] (Q2) -- (R2);
                \draw[inheritance] (Q0) .. controls ++(-2,-1) .. (M1)
                    node[gene pass description] {4a-path};
                \draw[inheritance] (Q0) .. controls ++(2,-1) .. (M2)
                    node[gene pass description] {4a-path};
                \draw[inheritance, not considered] (M1) -- (D1);
                \draw[inheritance, not considered] (M2) -- (D2);

                \begin{scope}[on background layer]
                    \draw[dashed] (M) ++ (-3cm,0cm) -- ++(13.9cm,0cm);
                \end{scope}

                \path (M) ++ (-1,-0.5) -- ++(2,1) node[fill = white, font = \huge\sffamily, midway,sloped] {\textbf{not allowed!}};

            \end{tikzpicture}
        \end{center}

        The following situation is allowed:

        \begin{center}
            \begin{tikzpicture}
                \path (0,0) coordinate (Pt0)
                        node[left = 0mm of Pt0] {generation $\mathcal P_t$}
                    coordinate[below = 4cm of Pt0] (Pt1)
                        node[left = 0mm of Pt1] {generation $\mathcal P_{t+1}$}
                    (Pt0) -- (Pt1)
                        coordinate[pos = 0.4] (M)
                    coordinate[below = 4cm of Pt1] (Pt2)
                        node[left = 0mm of Pt2] {generation $\mathcal P_{t+2}$}
                    (Pt1) -- (Pt2)
                        coordinate[pos = 0.4] (N)
                    coordinate[right = 6cm of Pt0] (C0)
                    node[above = 1cm of C0, queen] (Q0) {}
                        node[left = 0cm of Q0] {$Q_0$}
                    node[right = 2cm of Q0.center, not considered, group] (D0) {}
                        node[right = 0cm of D0, not considered] {$\mathcal D_0$}
                        -- (D0) pic[not considered] {drones}
                    node[below right = 2cm and 0.5cm of Q0.center, worker group] (W0) {}
                        node[right = 0cm of W0] {$\mathcal W_0$}
                    node[below left = 1cm and 0.5cm of Q0.center, replacement queen] (R0) {}
                        node[left = 0cm of R0] {$R_0$}

                    coordinate[right = 6cm of Pt1] (C00)
                    node[above = 1cm of C00, queen] (Q00) {}
                        node[left = 0cm of Q00] {$Q_0$}
                    node[below right = 2cm and 0.5cm of Q00.center, worker group] (W00) {}
                        node[right = 0cm of W00] {$\mathcal W_0$}
                    node[below left = 1cm and 0.5cm of Q00.center, replacement queen] (R00) {}
                        node[left = 0cm of R00] {$R_0$}

                    coordinate[right = 1.7cm of Pt1] (C1)
                    node[above = 1cm of C1, queen] (Q1) {}
                        node[left = 0cm of Q1] {$Q_1$}
                    node[right = 2cm of Q1.center, not considered, group] (D1) {}
                        node[right = 0cm of D1, not considered] {$\mathcal D_1$}
                        -- (D1) pic[not considered] {drones}
                    node[above = 1.4cm of D1.center, group base] (M1prov) {}
                        -- (M1prov.center) node[not considered, group] (M1) {}
                        node[above left = 0cm and 0cm of M1, not considered] {$\mathcal M_a$}
                        -- (M1.center) pic[not considered] {queens}
                    node[below right = 2cm and 0.5cm of Q1.center, worker group] (W1) {}
                        node[right = 0cm of W1] {$\mathcal W_1$}
                    node[below left = 1cm and 0.5cm of Q1.center, replacement queen] (R1) {}
                        node[left = 0cm of R1] {$R_1$}

                    coordinate[right = 7cm of Pt2] (C2)
                    node[above = 1cm of C2, queen] (Q2) {}
                        node[left = 0cm of Q2] {$Q_2$}
                    node[right = 2cm of Q2.center, not considered, group] (D2) {}
                        node[right = 0cm of D2, not considered] {$\mathcal D_2$}
                        -- (D2) pic[not considered] {drones}
                    node[above = 1.4cm of D2.center, group base] (M2prov) {}
                        -- (M2prov.center) node[not considered, group] (M2) {}
                        node[above left = 0cm and 0cm of M2, not considered] {$\mathcal M_b$}
                        -- (M2.center) pic[not considered] {queens}
                    node[below right = 2cm and 0.5cm of Q2.center, worker group] (W2) {}
                        node[right = 0cm of W2] {$\mathcal W_2$}
                    node[below left = 1cm and 0.5cm of Q2.center, replacement queen] (R2) {}
                        node[left = 0cm of R2] {$R_2$};

                \draw[mating, not considered] (D0) -- (Q0)
                    node[gene pass description, not considered] {mate};
                \draw[mating, not considered] (D1) -- (Q1)
                    node[gene pass description, not considered] {mate};
                \draw[mating, not considered] (D2) -- (Q2)
                    node[gene pass description, not considered] {mate};
                \draw[inheritance] (Q0) -- (W0);
                \draw[inheritance] (Q0) -- (R0);
                \draw[inheritance] (Q00) -- (W00);
                \draw[inheritance] (Q00) -- (R00);
                \draw[inheritance] (Q1) -- (W1);
                \draw[inheritance] (Q1) -- (R1);
                \draw[inheritance] (Q2) -- (W2);
                \draw[inheritance] (Q2) -- (R2);
                \draw[survival] (Q0) -- (Q00)
                    node[gene pass description, rotate = 180] {survival};
                \draw[inheritance] (Q0) .. controls ++(-2,-1) .. (M1)
                    node[gene pass description] {4a-path};
                \draw[inheritance] (Q00) .. controls ++(2,-1) .. (M2)
                    node[gene pass description] {4a-path};
                \draw[inheritance, not considered] (M1) -- (D1);
                \draw[inheritance, not considered] (M2) -- (D2);

                \begin{scope}[on background layer]
                    \draw[dashed] (M) ++ (-3cm,0cm) -- ++(13.9cm,0cm);
                    \draw[dashed] (N) ++ (-3cm,0cm) -- ++(13.9cm,0cm);
                \end{scope}

                \path (N) ++ (-1.5,-0.5) -- ++(2,1) node[fill = white, font = \huge\sffamily, midway,sloped] {\textbf{allowed!}};

            \end{tikzpicture}
        \end{center}

        \item Of course, a mating station will in general be frequented by multiple young queens, so the following situation is not only allowed but frequent.

        \begin{center}
            \begin{tikzpicture}
                \path (0,0) coordinate (Pt0)
                        node[left = 0mm of Pt0] {generation $\mathcal P_t$}
                    coordinate[below = 4cm of Pt0] (Pt1)
                        node[left = 0mm of Pt1] {generation $\mathcal P_{t+1}$}
                    (Pt0) -- (Pt1)
                        coordinate[pos = 0.4] (M)

                    coordinate[right = 7.7cm of Pt0] (C0)
                    node[above = 1cm of C0, queen] (Q0) {}
                        node[left = 0cm of Q0] {$Q_0$}
                    node[right = 2cm of Q0.center, not considered, group] (D0) {}
                        node[right = 0cm of D0, not considered] {$\mathcal D_0$}
                        -- (D0) pic[not considered] {drones}
                    node[below right = 2cm and 0.5cm of Q0.center, worker group] (W0) {}
                        node[right = 0cm of W0] {$\mathcal W_0$}
                    node[below left = 1.6cm and 2cm of C0, group base] (M1prov) {}
                        -- (M1prov.center) node[not considered, group] (M1) {}
                        node[above left = 0cm and 0cm of M1, not considered] {$\mathcal M_0$}
                        -- (M1.center) pic[not considered] {queens}
                    node[below left = 1cm and 0.5cm of Q0.center, replacement queen] (R0) {}
                        node[left = 0cm of R0] {$R_0$}

                    coordinate[right = 1.5cm of Pt1] (C1)
                    node[above = 1cm of C1, queen] (Q1) {}
                        node[left = 0cm of Q1] {$Q_1$}
                    node[right = 2cm of Q1.center, not considered, group] (D1) {}
                        node[right = 0cm of D1, not considered] {$\mathcal D_1$}
                        -- (D1) pic[not considered] {drones}
                    node[below right = 2cm and 0.5cm of Q1.center, worker group] (W1) {}
                        node[right = 0cm of W1] {$\mathcal W_1$}
                    node[below left = 1cm and 0.5cm of Q1.center, replacement queen] (R1) {}
                        node[left = 0cm of R1] {$R_1$}

                    coordinate[right = 5.7cm of Pt1] (C2)
                    node[above = 1cm of C2, queen] (Q2) {}
                        node[left = 0cm of Q2] {$Q_2$}
                    node[right = 2cm of Q2.center, not considered, group] (D2) {}
                        node[right = 0cm of D2, not considered] {$\mathcal D_2$}
                        -- (D2) pic[not considered] {drones}
                    node[below right = 2cm and 0.5cm of Q2.center, worker group] (W2) {}
                        node[right = 0cm of W2] {$\mathcal W_2$}
                    node[below left = 1cm and 0.5cm of Q2.center, replacement queen] (R2) {}
                        node[left = 0cm of R2] {$R_2$};

                \draw[mating, not considered] (D0) -- (Q0)
                    node[gene pass description, not considered] {mate};
                \draw[mating, not considered] (D1) -- (Q1)
                    node[gene pass description, not considered] {mate};
                \draw[mating, not considered] (D2) -- (Q2)
                    node[gene pass description, not considered] {mate};
                \draw[inheritance] (Q0) -- (W0);
                \draw[inheritance] (Q0) -- (R0);
                \draw[inheritance] (Q1) -- (W1);
                \draw[inheritance] (Q1) -- (R1);
                \draw[inheritance] (Q2) -- (W2);
                \draw[inheritance] (Q2) -- (R2);
                \draw[inheritance] (Q0) .. controls ++(-2,-1) .. (M1)
                    node[gene pass description] {4a-path};
                \draw[inheritance, not considered] (M1) -- (D1);
                \draw[inheritance, not considered] (M1) -- (D2);

                \begin{scope}[on background layer]
                    \draw[dashed] (M) ++ (-3cm,0cm) -- ++(13.9cm,0cm);
                \end{scope}

                \path (M) ++ (-1,-0.5) -- ++(2,1) node[fill = white, font = \huge\sffamily, midway,sloped] {\textbf{frequent}};

            \end{tikzpicture}
        \end{center}

        \item We assume that no queen $Q\in\mathcal Q_t$ is part of a mating station $\mathcal M$ and that any two non-identical mating stations are disjoint (also across generations), meaning that no DPQ can be used on more than one mating station or in more than one season.

    \end{enumerate}
\end{Rmk}

\begin{Not}
    Because we do not consider groups $\mathcal M$ of DPQs on a mating station as part of the population, it makes no difference for our purposes if a queen $Q\in\mathcal Q_t$ does not serve as the 4a-queen of any mating station or if she serves as the 4a-queen of a mating station that is not frequented by any queen in $\mathcal N\mathcal Q_{t+1}$. We may therefore assume that every queen $Q\in\mathcal Q_t$ serves as the 4a-queen of a mating station. By Remark~\ref{rmk::msassu}\,\ref{item::msassui}, this mating station is unique per season and we denote it by $\mathcal M_{Q,t}$.
\end{Not}

\begin{Rmk}
    \begin{enumerate}[label = (\roman*)]
        \item Inheritance via the dam path is unchanged in comparison to breeding schemes with single colony insemination. Thus, for a queen $Q\in\mathcal Q_t$, the relative contribution via the dam path, $dc_{Q,t}$ is defined as in Notation~\ref{not::dcbc}\,\ref{item::dc}. Consequently, also the vector $\mathbf {dc}_t\in\mathbb R^{\mathcal Q_t}$ is defined as in Notation~\ref{not::dcbc}\,\ref{item::dcbcvec}.

        \item Furthermore, also the survival path is unchanged, so for each queen $Q\in\mathcal Q_t$, we have the binary survival information $s_{Q,t}$ which results in a survival vector $\mathbf s_t\in\{0,1\}^{\mathcal Q_t}\cong\{0,1\}^{\mathfrak W_t}\cong \{0,1\}^{\mathcal R_t}$ just as introduced in Notation~\ref{not::survi}.

        \item We thus still have
            \[\mathbf 1_t^{\top}\mathbf {dc}_t=1\]
        and
            \[\mathbf 1_t^{\top}\mathbf {s}_t=N^{\mathcal S}_{t+1}.\]
    \end{enumerate}
\end{Rmk}

\begin{Not}
    However, we need to introduce a value for the contribution via the 4a-path. For a queen $Q\in\mathcal Q_t$ we denote by $ac_{Q,t}$ the fraction of queens $NQ\in\mathcal N\mathcal Q_{t+1}$ that went to the mating station $\mathcal M_{Q,t}$ for their mating flights. The resulting vector is denoted $\mathbf {ac}_t\in\mathbb R^{\mathcal Q_t}$.
\end{Not}

\begin{Rmk}\label{rmk::aone}
    Because all newly created queens in $\mathcal N\mathcal Q_{t+1}$ need to visit a mating station for their nuptial flights, we have
        \[\mathbf 1_t^{\top}\mathbf {ac}_{t}=1.\]
\end{Rmk}

\subsubsection{Breeding value development}\label{sec::imsbv}

As for breeding schemes with single colony insemination, we need to calculate the four expectations $\mathbb E\left[\hat{u}_{\mathcal N\mathcal Q_{t+1}}\right]$, $\mathbb E\left[\hat{u}_{\mathcal S\mathcal Q_{t+1}}\right]$, $\mathbb E\left[\hat{u}_{\mathcal N\mathcal R_{t+1}}\right]$, and $\mathbb E\left[\hat{u}_{\mathcal S\mathcal R_{t+1}}\right]$ in order to deduce the desired value of $\mathbb E\bigl[\hat{u}_{\mathcal P^{\ast}_{t+1}}\bigr]$ by means of Remark~\ref{rmk::brokenu}.

\begin{Lem}\label{lem::alltheusms}
    We have
    \begin{align}
        \mathbb E\left[\hat{u}_{\mathcal N\mathcal Q_{t+1}}\right]
            &= \mathbf {dc}_{t}^{\top}\hat{\mathbf u}_{t}^{\mathcal R}, \label{eq::unqms}\\
        \mathbb E\left[\hat{u}_{\mathcal S\mathcal Q_{t+1}}\right]
            &= \frac1{N_{t+1}^{\mathcal S}}\mathbf s_{t}^{\top}\hat{\mathbf u}_{t}^{\mathcal Q}, \label{eq::usqms}\\
        \mathbb E\left[\hat u_{\mathcal N\mathcal R_{t+1}}\right]
            &= \frac12\mathbf {dc}_{t}^{\top}\hat{\mathbf u}_{t}^{\mathcal R}
                + \frac12\mathbf {ac}_{t}^{\top}\hat{\mathbf u}_{t}^{\mathcal R}, \label{eq::unrms}\\
        \mathbb E\left[\hat{u}_{\mathcal S\mathcal R_{t+1}}\right]
            &= \frac1{N_{t+1}^{\mathcal S}}\mathbf s_{t}^{\top}\hat{\mathbf u}_{t}^{\mathcal R}. \label{eq::usrms}
    \end{align}
\end{Lem}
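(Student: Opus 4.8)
The plan is to recognize that Equations~\ref{eq::unqms}, \ref{eq::usqms}, and~\ref{eq::usrms} are literally the same statements as in the single colony insemination case, since they concern only the dam path and the survival path, neither of which is altered by switching to isolated mating stations. I would therefore dispose of them by citing the corresponding arguments from the proof of Lemma~\ref{lem::alltheus}: a newly hatched queen inherits the expected breeding value of its dam's replacement queen (Lemma~\ref{lem::expdaughter}), yielding Equation~\ref{eq::unqms}; and surviving queens, respectively surviving replacement queens, simply carry their own estimated breeding values into generation $\mathcal P_{t+1}$, yielding Equations~\ref{eq::usqms} and~\ref{eq::usrms}.

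The genuine work lies entirely in Equation~\ref{eq::unrms}, where the 1b-path of single colony insemination is replaced by the 4a-path. I would begin, as in the single colony case, from Lemma~\ref{lem::bvrepq}: for a new replacement queen $NR$ with dam $Q(NR)$ mating with a drone group $\mathcal D$, we have $\mathbb E[u_{NR}]=\tfrac12\mathbb E[u_{Q(NR)}]+\mathbb E[u_{\mathcal D}]$. Averaging the first summand over all $NR$ reproduces $\tfrac12\mathbb E[\hat u_{\mathcal N\mathcal Q_{t+1}}]$ exactly as before, so the whole difference between the two regimes is concentrated in the drone term $\mathbb E[u_{\mathcal D}]$.

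Evaluating $\mathbb E[u_{\mathcal D}]$ is the step I expect to be the main obstacle, because the drones are now produced not by a queen of $\mathcal Q_t$ directly but by the DPQ group $\mathcal M$ of the mating station that $Q(NR)$ visited. The derivation must traverse two inheritance layers. First, Lemma~\ref{lem::bvdrgr} gives $\mathbb E[u_{\mathcal D}\mid u_{\mathcal M}]=\tfrac12 u_{\mathcal M}$. Second, since $\mathcal M$ consists of sister daughters of the 4a-queen $Q\in\mathcal Q_t$, Lemma~\ref{lem::expdaughter} together with the identification of worker-group and replacement-queen breeding values from Section~\ref{sec::wgrq} gives $\mathbb E[u_{\mathcal M}]=\hat u_{\mathcal W(Q)}=\hat u_{R(Q)}$. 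Composing the two layers yields $\mathbb E[u_{\mathcal D}]=\tfrac12\hat u_{R(Q)}$; this is precisely why the drone contribution is indexed by the replacement-queen breeding values $\hat{\mathbf u}_t^{\mathcal R}$ rather than by $\hat{\mathbf u}_t^{\mathcal Q}$, in contrast to the 1b-path.

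To conclude, I would average over all $NR\in\mathcal N\mathcal R_{t+1}$, noting that a queen $Q\in\mathcal Q_t$ occurs as the relevant 4a-queen with frequency $ac_{Q,t}$. This gives
\[\mathbb E\left[\hat u_{\mathcal N\mathcal R_{t+1}}\right]=\tfrac12\mathbb E\left[\hat u_{\mathcal N\mathcal Q_{t+1}}\right]+\tfrac12\mathbf{ac}_t^{\top}\hat{\mathbf u}_t^{\mathcal R},\]
and substituting Equation~\ref{eq::unqms} for the first term produces Equation~\ref{eq::unrms}.
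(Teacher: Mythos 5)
Your proposal is correct and follows essentially the same route as the paper's proof: the three equations involving only the dam and survival paths are carried over verbatim from Lemma~\ref{lem::alltheus}, and Equation~\ref{eq::unrms} is derived via the same chain Lemma~\ref{lem::bvrepq} $\to$ Lemma~\ref{lem::bvdrgr} $\to$ Lemma~\ref{lem::expdaughter} with Section~\ref{sec::wgrq}, averaged with frequencies $ac_{Q,t}$ and closed by substituting Equation~\ref{eq::unqms}. Your explicit remark that the two-layer inheritance through $\mathcal M$ is why the 4a-path is indexed by $\hat{\mathbf u}_t^{\mathcal R}$ rather than $\hat{\mathbf u}_t^{\mathcal Q}$ is exactly the point the paper's argument turns on.
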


\begin{proof}
    Equations~\ref{eq::unqms}, \ref{eq::usqms}, and~\ref{eq::usrms} only involve contributions via the dam and survival paths. Therefore, they hold with the exact same arguments as in the case of single colony insemination (Lemma~\ref{lem::alltheus}).
    We show Equation~\ref{eq::unrms}, i.\,e.
        \[\mathbb E\left[\hat u_{\mathcal N\mathcal R_{t+1}}\right]
            = \frac12\mathbf {dc}_{t}^{\top}\hat{\mathbf u}_{t}^{\mathcal R}
                + \frac12\mathbf {ac}_{t}^{\top}\hat{\mathbf u}_{t}^{\mathcal R}.\]
    \begin{center}
        \begin{tikzpicture}
            \path (0,0) coordinate (Pt0)
                    node[left = 0mm of Pt0] {generation $\mathcal P_t$}
                coordinate[below = 4cm of Pt0] (Pt1)
                    node[left = 0mm of Pt1] {generation $\mathcal P_{t+1}$}
                (Pt0) -- (Pt1)
                    coordinate[pos = 0.4] (M)

                coordinate[right = 5.5cm of Pt0] (C0)
                node[above = 0.5cm of C0, queen] (Q) {}
                    node[left = 0cm of Q] {$Q$}
                node[below left = 1cm and 0.5cm of Q.center, replacement queen] (R) {}
                    node[left = 0cm of R] {$R(Q)$}
                node[below = 1.6cm of C0, group base] (M1prov) {}
                    -- (M1prov.center) node[not considered, group] (M1) {}
                    node[below right = 0cm and 0cm of M1, not considered] {$\mathcal M_{Q,t}$}
                    -- (M1.center) pic[not considered] {queens}

                coordinate[right = 3.5cm of Pt1] (C2)
                node[above = 0.5cm of C2, queen] (NQ) {}
                    node[left = 0cm of NQ] {$Q(NR)$}
                node[right = 2cm of NQ.center, not considered, group] (D) {}
                    node[right = 0cm of D, not considered] {$\mathcal D$}
                    -- (D) pic[not considered] {drones}
                node[below left = 1cm and 0.5cm of NQ.center, replacement queen] (NR) {}
                    node[left = 0cm of NR] {$NR$}
                    node[below = 0cm of NR, font = \footnotesize, align = left]
                        {$\mathbb E\left[\hat u_{NR,t+1}\right]=\frac12\mathbb E\left[\hat u_{Q(NR),t+1}\right]
                            + \mathbb E\left[\hat u_{\mathcal D,t+1}\right]$\\
                        $\phantom{\mathbb E\left[\hat u_{NR,t+1}\right]} = \frac12\mathbb E\left[\hat u_{Q(NR),t+1}\right]
                            + \frac12 \mathbb E\left[\hat u_{\mathcal M_{Q,t},t+1}\right]$\\
                        $\phantom{\mathbb E\left[\hat u_{NR,t+1}\right]} = \frac12\mathbb E\left[\hat u_{Q(NR),t+1}\right]
                            + \frac12 \hat u_{R(Q),t}$};

            \draw[mating, not considered] (D) -- (NQ)
                node[gene pass description, not considered] {mate};
            \draw[inheritance] (Q) -- (R);
            \draw[inheritance] (NQ) -- (NR);
            \draw[inheritance] (Q) -- (M1)
                node[gene pass description] {4a-path};
            \draw[inheritance, not considered] (M1) -- (D);

            \begin{scope}[on background layer]
                \draw[dashed] (M) ++ (-3cm,0cm) -- ++(13.9cm,0cm);
            \end{scope}

        \end{tikzpicture}
    \end{center}

    The expected breeding value of a new replacement queen $NR\in\mathcal N\mathcal R_{t+1}$ is half the breeding value of its queen $Q(NR)\in\mathcal Q_{t+1}$ plus the breeding value of the drone group $\mathcal D$ that $Q(NR)$ mated with (Lemma~\ref{lem::bvrepq}). But the expected breeding value of $\mathcal D$ is half the expected breeding value of the group of queens $\mathcal M_{Q,t}$ that produced the drones (Lemma~\ref{lem::bvdrgr}). Let $Q\in\mathcal Q_t$ be the 4a-queen of $\mathcal M_{Q,t}$. Then, the expected breeding value of $\mathcal M_{Q,t}$ is the estimated breeding value $\hat u_{R(Q),t}$ of $Q$'s replacement queen (Lemma~\ref{lem::expdaughter} in combination with Section~\ref{sec::wgrq}). The relative frequencies with which queens in $\mathcal Q_t$ occur as 4a-queens are given by the vector $\mathbf {ac}_{t}\in\mathbb R^{\mathcal Q_t}$. This leads to
        \[\mathbb E\left[\hat u_{\mathcal N\mathcal R_{t+1}}\right] = \frac12\mathbb E\left[\hat{u}_{\mathcal N\mathcal Q_{t+1}}\right]
            + \frac12\mathbf {ac}_{t}^{\top}\hat{\mathbf u}_{t}^{\mathcal R}.\]
    Inserting Equation~\ref{eq::unqms} yields the assertion for $\mathbb E\left[\hat u_{\mathcal N\mathcal R_{t+1}}\right]$.
\end{proof}

By inserting the results of Lemma~\ref{lem::alltheusms} into Lemma~\ref{lem::genred}, we obtain the desired formula for $\mathbb E\bigl[\hat u_{\mathcal P_{t+1}^{\ast}}\bigr]$ in case of mating on isolated mating stations:

\begin{Thm}\label{thm::utms}
    We have
    \begin{align}
        \mathbb E\left[\hat u_{\mathcal Q_{t+1}}\right] &=
            \frac{N_{t+1}^{\mathcal N}}{N_{t+1}}\mathbf {dc}_{t}^{\top}\hat{\mathbf u}_{t}^{\mathcal R}+\frac1{N_{t+1}}\mathbf s_t^{\top}\hat{\mathbf u}_{t}^{\mathcal Q}, \label{eq::utqms}\\
        \mathbb E\left[\hat{u}_{\mathcal R_{t+1}}\right] &=
            \frac{N_{t+1}^{\mathcal N}}{2N_{t+1}}\mathbf {dc}_{t}^{\top}\hat{\mathbf u}_{t}^{\mathcal R}+\frac{N_{t+1}^{\mathcal N}}{2N_{t+1}}\mathbf {ac}_{t}^{\top}\hat{\mathbf u}_{t}^{\mathcal R}+\frac1{N_{t+1}}\mathbf s_t^{\top}\hat{\mathbf u}_{t}^{\mathcal R}, \label{eq::utrms}\\
        \mathbb E\bigl[\hat{u}_{\mathcal P^{\ast}_{t+1}}\bigr] &=
            \frac{3N_{t+1}^{\mathcal N}}{4N_{t+1}}\mathbf {dc}_{t}^{\top}\hat{\mathbf u}_{t}^{\mathcal R}
                + \frac{N_{t+1}^{\mathcal N}}{4N_{t+1}}\mathbf {ac}_{t}^{\top}\hat{\mathbf u}_{t}^{\mathcal R}
                + \frac1{2N_{t+1}}\mathbf s_t^{\top}\left(\hat{\mathbf u}_{t}^{\mathcal R}+\hat{\mathbf u}_{t}^{\mathcal Q}\right).
    \end{align}
\end{Thm}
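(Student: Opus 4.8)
The statement is a direct algebraic consequence of the two preceding lemmas, so the plan is to substitute the four expectations from Lemma~\ref{lem::alltheusms} into the three weighted-average identities of Lemma~\ref{lem::genred} and simplify. No new probabilistic reasoning is required; all of it is already packaged into Lemma~\ref{lem::alltheusms}, and in particular into Equation~\ref{eq::unrms}, where the 4a-path contribution $\frac12\mathbf{ac}_t^{\top}\hat{\mathbf u}_t^{\mathcal R}$ arises from the chain ``drone-group breeding value is half the DPQ-group value'' (Lemma~\ref{lem::bvdrgr}) combined with ``the DPQ-group breeding value equals the 4a-queen's replacement-queen value'' (Lemma~\ref{lem::expdaughter} together with Section~\ref{sec::wgrq}).

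First I would establish Equation~\ref{eq::utqms}. Inserting Equations~\ref{eq::unqms} and~\ref{eq::usqms} into Equation~\ref{eq::fstt}, the factor $N_{t+1}^{\mathcal S}$ multiplying $\mathbb E[\hat u_{\mathcal S\mathcal Q_{t+1}}]$ cancels the $1/N_{t+1}^{\mathcal S}$ therein, leaving $\frac{N_{t+1}^{\mathcal N}}{N_{t+1}}\mathbf{dc}_t^{\top}\hat{\mathbf u}_t^{\mathcal R}+\frac1{N_{t+1}}\mathbf s_t^{\top}\hat{\mathbf u}_t^{\mathcal Q}$. The analogous substitution of Equations~\ref{eq::unrms} and~\ref{eq::usrms} into Equation~\ref{eq::sstt}, distributing $N_{t+1}^{\mathcal N}$ across the two halves of the $\mathcal N\mathcal R$ term, yields Equation~\ref{eq::utrms}.

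For the third identity I would insert the two results just obtained into Equation~\ref{eq::eupast} (equivalently, substitute all four expectations directly into Equation~\ref{eq::tstt}). The only bookkeeping point is that $\mathbf{dc}_t^{\top}\hat{\mathbf u}_t^{\mathcal R}$ enters with weight $1$ through $\mathbb E[\hat u_{\mathcal N\mathcal Q_{t+1}}]$ and with weight $\frac12$ through $\mathbb E[\hat u_{\mathcal N\mathcal R_{t+1}}]$, so it accumulates a coefficient $\frac32 N_{t+1}^{\mathcal N}$ before the division by $2N_{t+1}$, giving $\frac{3N_{t+1}^{\mathcal N}}{4N_{t+1}}$; the $\mathbf{ac}_t$ term contributes $\frac{N_{t+1}^{\mathcal N}}{4N_{t+1}}$, and the two survivor contributions merge into $\frac1{2N_{t+1}}\mathbf s_t^{\top}\bigl(\hat{\mathbf u}_t^{\mathcal R}+\hat{\mathbf u}_t^{\mathcal Q}\bigr)$.

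I do not anticipate any genuine obstacle: the structure is identical to that of Theorem~\ref{thm::ut} for single colony insemination, the sole change being that Equation~\ref{eq::unrms} carries $\mathbf{ac}_t^{\top}\hat{\mathbf u}_t^{\mathcal R}$ in place of the $\mathbf{bc}_t^{\top}\hat{\mathbf u}_t^{\mathcal Q}$ of Equation~\ref{eq::unr}, reflecting that the father drones descend from the DPQ group rather than from a single 1b-queen. The main thing to verify carefully is simply that these coefficient collations are carried out correctly.
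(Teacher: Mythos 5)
Your proposal is correct and matches the paper exactly: the paper's proof is precisely the substitution of Lemma~\ref{lem::alltheusms} into Lemma~\ref{lem::genred}, and your coefficient bookkeeping (weight $1+\frac12$ on the $\mathbf{dc}_t^{\top}\hat{\mathbf u}_t^{\mathcal R}$ term yielding $\frac{3N_{t+1}^{\mathcal N}}{4N_{t+1}}$, weight $\frac12$ on the $\mathbf{ac}_t$ term, and the merged survivor term) checks out. Your side remark on the provenance of the $\frac12\mathbf{ac}_t^{\top}\hat{\mathbf u}_t^{\mathcal R}$ term also agrees with the paper's proof of Lemma~\ref{lem::alltheusms}.
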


\subsubsection{Kinship development}\label{sec::imskin}

As in the case of single colony inseminations, we need to calculate $k_{\mathcal N\mathcal Q_{t+1}, \mathcal N\mathcal Q_{t+1}}$, $k_{\mathcal N\mathcal Q_{t+1}, \mathcal N\mathcal R_{t+1}}$, $k_{\mathcal N\mathcal Q_{t+1}, \mathcal S\mathcal Q_{t+1}}$, $k_{\mathcal N\mathcal Q_{t+1}, \mathcal S\mathcal R_{t+1}}$, $k_{\mathcal N\mathcal R_{t+1}, \mathcal N\mathcal R_{t+1}}$, $k_{\mathcal N\mathcal R_{t+1}, \mathcal S\mathcal Q_{t+1}}$, $k_{\mathcal N\mathcal R_{t+1}, \mathcal S\mathcal R_{t+1}}$, $k_{\mathcal S\mathcal Q_{t+1}, \mathcal S\mathcal Q_{t+1}}$, $k_{\mathcal S\mathcal Q_{t+1}, \mathcal S\mathcal R_{t+1}}$, and $k_{\mathcal S\mathcal R_{t+1}, \mathcal S\mathcal R_{t+1}}$ in order to obtain the average genetic kinship in the next reduced generation, $k_{\mathcal P_{t+1}^{\ast},\mathcal P_{t+1}^{\ast}}$ (Remark~\ref{rmk::brokenk}\,\ref{item::brokenkiii}). This is what we will do in this section.

\begin{Lem}\label{lem::allthekms}
    We have
    \begin{align}
        k_{\mathcal N\mathcal Q_{t+1},\mathcal N\mathcal Q_{t+1}}
            &= \mathbf {dc}_t^{\top}\mathbf K_t^{\mathfrak W\mathfrak W}\mathbf {dc}_t
                + \frac1{N_{t+1}^{\mathcal N}}\mathbf {dc}_t^{\top}\mathrm{diag}\left(\mathbf K_t^{\mathcal R\mathcal R}\right)
                - \frac1{N_{t+1}^{\mathcal N}}\mathbf {dc}_t^{\top}\mathrm{diag}\left(\mathbf K_t^{\mathfrak W\mathfrak W}\right), \label{eq::knqnqms}\\
        k_{\mathcal N\mathcal Q_{t+1},\mathcal S\mathcal Q_{t+1}}
            &= \frac1{N_{t+1}^{\mathcal S}}\mathbf {dc}_{t}^{\top}\mathbf K_t^{\mathcal R\mathcal Q}\mathbf{s}_t, \label{eq::knqsqms}\\
        k_{\mathcal S\mathcal Q_{t+1},\mathcal S\mathcal Q_{t+1}}
            &= \frac1{\left(N_{t+1}^{\mathcal S}\right)^2}\mathbf {s}_{t}^{\top}\mathbf K_t^{\mathcal Q\mathcal Q}\mathbf{s}_t,
                \label{eq::ksqsqms}\\
        k_{\mathcal N\mathcal Q_{t+1},\mathcal N\mathcal R_{t+1}}
            &= \frac12\mathbf {dc}_t^{\top}\mathbf K_t^{\mathfrak W\mathfrak W}\mathbf {dc}_t
                + \frac12\mathbf {dc}_t^{\top}\mathbf K_t^{\mathfrak W\mathfrak W}\mathbf {ac}_t \nonumber \\
                &\hspace{1cm} + \frac1{2N_{t+1}^{\mathcal N}}\mathbf {dc}_t^{\top}\mathrm{diag}\left(\mathbf K_t^{\mathcal R\mathcal R}\right)
                - \frac1{2N_{t+1}^{\mathcal N}}\mathbf {dc}_t^{\top}\mathrm{diag}\left(\mathbf K_t^{\mathfrak W\mathfrak W}\right), \label{eq::knqnrms}\\
        k_{\mathcal N\mathcal Q_{t+1},\mathcal S\mathcal R_{t+1}}
            &= \frac1{N_{t+1}^{\mathcal S}}\mathbf {dc}_{t}^{\top}\mathbf K_t^{\mathfrak W\mathfrak W}\mathbf{s}_t, \label{eq::knqsrms}\\
        k_{\mathcal S\mathcal Q_{t+1},\mathcal N\mathcal R_{t+1}}
            &= \frac1{2N_{t+1}^{\mathcal S}}\mathbf {dc}_{t}^{\top}\mathbf K_t^{\mathcal R\mathcal Q}\mathbf{s}_t
                + \frac1{2N_{t+1}^{\mathcal S}}\mathbf {ac}_t^{\top}\mathbf K_t^{\mathcal R\mathcal Q}\mathbf s_t, \label{eq::ksqnrms}\\
        k_{\mathcal S\mathcal Q_{t+1},\mathcal S\mathcal R_{t+1}}
            &= \frac1{\left(N_{t+1}^{\mathcal S}\right)^2}\mathbf s_t^{\top}\mathbf K_t^{\mathcal Q\mathcal R}\mathbf s_t,
                \label{eq::ksqsrms}\\
        k_{\mathcal N\mathcal R_{t+1},\mathcal N\mathcal R_{t+1}}
            &=\frac14\mathbf {dc}_t^{\top}\mathbf K_{t}^{\mathfrak W\mathfrak W}\mathbf{dc}_t
                + \frac12\mathbf {ac}_t^{\top}\mathbf K_{t}^{\mathfrak W\mathfrak W}\mathbf{dc}_t
                + \frac14\mathbf {ac}_t^{\top}\mathbf K_{t}^{\mathfrak W\mathfrak W}\mathbf{ac}_t\nonumber\\
            &\hspace{1cm}- \frac1{4N_{t+1}^{\mathcal N}}\mathbf{dc}_t^{\top}
                \mathrm{diag}\left(\mathbf K_t^{\mathfrak W\mathfrak W}\right)
                - \frac1{4N_{t+1}^{\mathcal N}}\mathbf{ac}_t^{\top}\mathrm{diag}\left(\mathbf K_t^{\mathfrak W\mathfrak W}\right)
                + \frac1{2N_{t+1}^{\mathcal N}}, \label{eq::knrnrms}\\
        k_{\mathcal N\mathcal R_{t+1},\mathcal S\mathcal R_{t+1}}
            &= \frac1{2N_{t+1}^{\mathcal S}}\mathbf {dc}_{t}^{\top}\mathbf K_t^{\mathfrak W\mathfrak W}\mathbf{s}_t
                    +\frac1{2N_{t+1}^{\mathcal S}}\mathbf {ac}_t^{\top}\mathbf K_{t}^{\mathfrak W\mathfrak W}\mathbf s_t,
                \label{eq::knrsrms}\\
        k_{\mathcal S\mathcal R_{t+1},\mathcal S\mathcal R_{t+1}}
            &= \frac1{\left(N_{t+1}^{\mathcal S}\right)^2}\mathbf s_t^{\top}\mathbf K_t^{\mathcal R\mathcal R}\mathbf s_t.
                \label{eq::ksrsrms}
    \end{align}
\end{Lem}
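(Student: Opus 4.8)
The plan is to follow the proof of Lemma~\ref{lem::allthek} almost line by line, exploiting that the \emph{dam path} (governed by $\mathbf{dc}_t$) and the \emph{survival path} (governed by $\mathbf s_t$) are organized exactly as for single colony insemination; the only genuinely new ingredient is that the \emph{1b-path} is replaced by the \emph{4a-path}.

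First I would observe that the six Equations~\ref{eq::knqnqms}, \ref{eq::knqsqms}, \ref{eq::ksqsqms}, \ref{eq::knqsrms}, \ref{eq::ksqsrms}, and \ref{eq::ksrsrms} are literally identical to Equations~\ref{eq::knqnq}, \ref{eq::knqsq}, \ref{eq::ksqsq}, \ref{eq::knqsr}, \ref{eq::ksqsr}, and \ref{eq::ksrsr}. Each of these kinships is determined purely by the dam and survival paths, which do not depend on the mode of mating control, so the arguments in parts~\ref{item::knqnq}, \ref{item::knqsq}, \ref{item::ksqsq}, \ref{item::knqsr}, \ref{item::ksqsr}, and the final part of the proof of Lemma~\ref{lem::allthek} carry over verbatim and nothing new has to be shown.

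For the four remaining Equations~\ref{eq::knqnrms}, \ref{eq::ksqnrms}, \ref{eq::knrnrms}, and \ref{eq::knrsrms}, each of which involves a newly created replacement queen $NR$ and hence her paternal drone group $\mathcal D$, I would reproduce the allele-drawing arguments of parts~\ref{item::knqnr}, \ref{item::ksqnr}, \ref{item::knrnr}, and \ref{item::knrsr} up to the point where a kinship $k_{X,\mathcal D}$ appears. The single structural change is a two-step substitution: $\mathcal D$ is produced by the DPQ group $\mathcal M_{Q,t}$ of a 4a-queen $Q\in\mathcal Q_t$, so $k_{X,\mathcal D}=k_{X,\mathcal M_{Q,t}}$ by Equation~\ref{eq::kgd} (Lemma~\ref{lem::reldron}); and since $\mathcal M_{Q,t}$ is a group of sister queens with dam $Q$, Corollary~\ref{cor::imp} (respectively Lemma~\ref{lem::kinbet}) lets me replace it by the worker group $\mathcal W(Q)$, giving $k_{X,\mathcal D}=k_{X,\mathcal W(Q)}$. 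Consequently the paternal source enters with frequency $ac_{Q,t}$ instead of $bc_{Q,t}$ and through its worker group (a $\mathfrak W$-index) instead of its own alleles (the $\mathcal Q$-index carried directly by the 1b-queen in the single colony case). This is exactly why each $\mathbf{bc}_t$ tied to a $\mathcal Q$-index in Lemma~\ref{lem::allthek} becomes $\mathbf{ac}_t$ tied to a worker-group index here; when the partner entity is itself reducible to a worker group (a new queen or a replacement queen) the result is the block $\mathbf K_t^{\mathfrak W\mathfrak W}$, whereas when the partner is a surviving queen (as in Equation~\ref{eq::ksqnrms}) the cross block $\mathbf K_t^{\mathcal R\mathcal Q}$ remains.

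The main obstacle, as in the single colony case, is the careful treatment of self-kinships, most prominently in Equation~\ref{eq::knrnrms}. For two \emph{distinct} new replacement queens the averaging delivers the three quadratic terms with coefficients $\tfrac14,\tfrac12,\tfrac14$, but the self-kinship $k_{NR,NR}$ deviates from the naive diagonal value; I would compute it via Remark~\ref{rmk::withcol}\,\ref{item::withcoli} as $k_{NR,NR}=\frac12+\frac12\,k_{Q(NR),\mathcal D}$, trace $\mathcal D$ back to the 4a-queen's worker group as above, and then add the resulting correction for each new replacement queen, weighted by the frequencies $dc_{Q,t}$ and $ac_{Q,t}$. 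This produces the diagonal corrections $-\frac1{4N_{t+1}^{\mathcal N}}\mathbf{dc}_t^{\top}\mathrm{diag}\left(\mathbf K_t^{\mathfrak W\mathfrak W}\right)-\frac1{4N_{t+1}^{\mathcal N}}\mathbf{ac}_t^{\top}\mathrm{diag}\left(\mathbf K_t^{\mathfrak W\mathfrak W}\right)+\frac1{2N_{t+1}^{\mathcal N}}$. Throughout, I must use worker-group kinships rather than replacement-queen kinships wherever two entities may descend from a common queen --- the same point flagged in part~\ref{item::knqnq} --- since $\mathbf K_t^{\mathfrak W\mathfrak W}$ and $\mathbf K_t^{\mathcal R\mathcal R}$ differ precisely on the diagonal.
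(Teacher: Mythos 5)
Your proposal is correct and follows essentially the same route as the paper's own proof: the six dam/survival equations are carried over verbatim from Lemma~\ref{lem::allthek}, and the remaining four are obtained by the same two-step substitution $k_{X,\mathcal D}=k_{X,\mathcal M_{Q,t}}$ (Equation~\ref{eq::kgd}) followed by Corollary~\ref{cor::imp} or Lemma~\ref{lem::kinbet}, with the identical self-kinship correction via Remark~\ref{rmk::withcol}\,\ref{item::withcoli} in Equation~\ref{eq::knrnrms}. You also correctly flag the one subtlety the paper emphasizes, namely using worker-group kinships on the diagonal-sensitive terms and the surviving-queen cross block $\mathbf K_t^{\mathcal R\mathcal Q}$ in Equation~\ref{eq::ksqnrms}.
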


\begin{proof}
    \begin{enumerate}[label = (\roman*)]
        \item Equations~\ref{eq::knqnqms}, \ref{eq::knqsqms}, \ref{eq::ksqsqms}, \ref{eq::knqsrms}, \ref{eq::ksqsrms}, and~\ref{eq::ksrsrms} only concern the dam and survival paths and therefore hold with the same arguments as the corresponding equations in Lemma~\ref{lem::allthek} (Equations~\ref{eq::knqnq}, \ref{eq::knqsq}, \ref{eq::ksqsq}, \ref{eq::knqsr}, \ref{eq::ksqsr}, and~\ref{eq::ksrsr}, respectively). We will show the remaining four identities.

        \item We show Equation~\ref{eq::knqnrms}, i.\,e.
        \begin{align*}
            k_{\mathcal N\mathcal Q_{t+1},\mathcal N\mathcal R_{t+1}}
                &= \frac12\mathbf {dc}_t^{\top}\mathbf K_t^{\mathfrak W\mathfrak W}\mathbf {dc}_t
                    + \frac12\mathbf {dc}_t^{\top}\mathbf K_t^{\mathfrak W\mathfrak W}\mathbf {ac}_t \nonumber \\
                    &\hspace{1cm} + \frac1{2N_{t+1}^{\mathcal N}}\mathbf {dc}_t^{\top}\mathrm{diag}\left(\mathbf K_t^{\mathcal R\mathcal R}\right)
                    - \frac1{2N_{t+1}^{\mathcal N}}\mathbf {dc}_t^{\top}\mathrm{diag}\left(\mathbf K_t^{\mathfrak W\mathfrak W}\right).
        \end{align*}

        \begin{center}
            \begin{tikzpicture}
                \path (0,0) coordinate (Pt0)
                        node[left = 0mm of Pt0] {generation $\mathcal P_t$}
                    coordinate[below = 4cm of Pt0] (Pt1)
                        node[left = 0mm of Pt1] {generation $\mathcal P_{t+1}$}
                    (Pt0) -- (Pt1)
                        coordinate[pos = 0.4] (M)

                    coordinate[right = 2.3cm of Pt0] (C0)
                    node[above = 1cm of C0, queen] (Q0) {}
                        node[left = 0cm of Q0] {$Q_1$}
                    node[below right = 2cm and 1cm of Q0.center, worker group] (W0) {}
                        node[right = 0cm of W0] {$\mathcal W(Q_1)$}
                    node[below left = 1cm and 0.5cm of Q0.center, replacement queen] (R0) {}
                        node[left = 0cm of R0] {$R(Q_1)$}

                    coordinate[right = 6cm of Pt0] (C1)
                    node[above = 1cm of C1, queen] (Q1) {}
                        node[left = 0cm of Q1] {}
                    node[below left = 1cm and 0.5cm of Q1.center, replacement queen] (R1) {}
                        node[left = 0cm of R1] {}

                    coordinate[right = 1.8cm of Pt1] (C3)
                    node[above = 1cm of C3, queen] (Q3) {}
                        node[left = 0cm of Q3] {$NQ_1$}

                    coordinate[right = 5.5cm of Pt1] (C4)
                    node[above = 1cm of C4, queen] (Q4) {}
                        node[left = 0cm of Q4] {$Q(NR_2)$}
                    node[right = 2cm of Q4.center, not considered, group] (D) {}
                        node[right = 0cm of D, not considered] {$\mathcal D_2$}
                        -- (D) pic[not considered] {drones}
                    node[below left = 1cm and 0.5cm of Q4.center, replacement queen] (R4) {}
                        node[right = 0cm of R4] {$NR_2$}

                    coordinate[right = 8cm of Pt0] (C5)
                    node[above = 1cm of C5, queen] (Q5) {}
                        node[left = 0cm of Q5] {$Q_2$}
                    node[below right = 2cm and 1cm of Q5.center, worker group] (W5) {}
                        node[right = 0cm of W5] {$\mathcal W(Q_2)$}
                    node[below left = 1cm and 0.5cm of Q5.center, replacement queen] (R5) {}
                        node[left = 0cm of R5] {$R(Q_2)$}
                    node[below = 1.6cm of C5, group base] (M5prov) {}
                        -- (M5prov.center) node[not considered, group] (M5) {}
                        node[below right = 0cm and 0cm of M5, not considered] {$\mathcal M_{Q_2,t}$}
                        -- (M5.center) pic[not considered] {queens};

                \draw[inheritance] (Q0) -- (R0);
                \draw[inheritance] (Q0) -- (W0);
                \draw[inheritance] (Q1) -- (R1);
                \draw[inheritance] (Q4) -- (R4);
                \draw[inheritance] (Q5) -- (R5);
                \draw[inheritance] (Q5) -- (W5);
                \draw[inheritance] (Q0) -- (Q3);
                \draw[inheritance] (Q1) -- (Q4);
                \draw[inheritance] (Q5) -- (M5);
                \draw[inheritance, not considered] (M5) -- (D);
                \draw[mating, not considered] (D) -- (Q4)
                    node[gene pass description, not considered] {mate};
                \draw[relationship] (Q3) -- (R4)
                    node[relationship description] {$k_{NQ_1,NR_2}=...$};

                \begin{scope}[on background layer]
                    \draw[dashed] (M) ++ (-3cm,0cm) -- ++(13.9cm,0cm);
                \end{scope}
            \end{tikzpicture}
        \end{center}

        Let $NQ_1\in\mathcal N\mathcal Q_{t+1}$ and $NR_2\in\mathcal N\mathcal R_{t+1}$. We fix a locus and draw an allele $A^{1}$ from $NQ_1$ and an allele $A^{2}$ from $NR_2$. The latter allele comes with probability $\frac12$ from $NR_2$'s  dam $Q(NR_2)$ and with probability $\frac12$ from the group $\mathcal D_2$ of drones that $Q(NR_2)$ mated with. Let $Q_2\in\mathcal Q_t$ be the 4a-queen of these drones and $\mathcal M_{Q_2,t}$ the group of DPQs on the corresponding mating station. Then,
            \[k_{NQ_1,NR_2}=\frac12k_{NQ_1,Q(NR_2)}+\frac12k_{NQ_1,\mathcal D_2}\]
        and thus by Equation~\ref{eq::kgd} (Lemma~\ref{lem::reldron})
        \begin{equation} \label{eq::helperms}
            k_{NQ_1,NR_2}=\frac12k_{NQ_1,Q(NR_2)}+\frac12k_{NQ_1,\mathcal M_{Q_2,t}}.
        \end{equation}
        Let $Q_1\in\mathcal Q_t$ be the dam of $NQ_1$. Since $NQ_1\in\mathcal N\mathcal Q_{t+1}$ cannot be an ancestor of $\mathcal M_{Q_2,t}$ and vice versa, we have by Corollary~\ref{cor::imp}
            \[k_{NQ_1,\mathcal M_{Q_2,t}}=k_{\mathcal W(Q_1),\mathcal W(Q_2)}.\]
        Note that once again we need to work with worker groups in order to cover the case $Q_1=Q_2$.
        By inserting this into Equation~\ref{eq::helperms}, we obtain
            \[k_{NQ_1,NR_2}=\frac12k_{NQ_1,Q(NR_2)}+\frac12k_{\mathcal W(Q_1),\mathcal W(Q_2)}.\]
        If now, we take averages over all choices of $NQ_1$ and $NR_2$, any given queen $Q\in\mathcal Q_t$ will occur in the role of $Q_1$ with frequency $dc_{Q,t}$ and in the role of $Q_2$ with frequency $ac_{Q,t}$. By that, we have
        \begin{align*}
            k_{\mathcal N\mathcal Q_{t+1},\mathcal N\mathcal R_{t+1}}
                &= \frac12k_{\mathcal N\mathcal Q_{t+1},\mathcal N\mathcal Q_{t+1}}
                    +\frac12\mathbf {dc}_t^{\top}\mathbf K_t^{\mathfrak W\mathfrak W}\mathbf {ac}_t
        \end{align*}
        and by inserting Equation~\ref{eq::knqnqms} for $k_{\mathcal N\mathcal Q_{t+1},\mathcal N\mathcal Q_{t+1}}$ (shown in part~\ref{item::knqnq} of the proof of Lemma~\ref{lem::allthek}), the assertion follows.

        \item \label{item::ksqnrms} We show Equation~\ref{eq::ksqnrms}, i.\,e.
            \[k_{\mathcal S\mathcal Q_{t+1},\mathcal N\mathcal R_{t+1}} =
                \frac1{2N_{t+1}^{\mathcal S}}\mathbf {dc}_{t}^{\top}\mathbf K_t^{\mathcal R\mathcal Q}\mathbf{s}_t^
                    +\frac1{2N_{t+1}^{\mathcal S}}\mathbf {ac}_t^{\top}\mathbf K_t^{\mathcal R\mathcal Q}\mathbf s_t.\]

        \begin{center}
            \begin{tikzpicture}
                \path (0,0) coordinate (Pt0)
                        node[left = 0mm of Pt0] {generation $\mathcal P_t$}
                    coordinate[below = 4cm of Pt0] (Pt1)
                        node[left = 0mm of Pt1] {generation $\mathcal P_{t+1}$}
                    (Pt0) -- (Pt1)
                        coordinate[pos = 0.4] (M)

                    coordinate[right = 2.5cm of Pt0] (C0)
                    node[above = 1cm of C0, queen] (Q0) {}
                        node[left = 0cm of Q0] {$SQ$}
                    node[below left = 1cm and 0.5cm of Q0.center, replacement queen] (R0) {}
                        node[left = 0cm of R0] {$R(SQ)$}

                    coordinate[right = 7.5cm of Pt0] (C2)
                    node[above = 1cm of C2, queen] (Q2) {}
                        node[right = 0cm of Q2] {$Q$}
                    node[below = 2.6cm of Q2.center, group base] (M2base) {}
                    node[below = 2.6cm of Q2.center, group, not considered] (M2) {}
                        node[below right = 0cm and 0cm of M2, not considered] {$\mathcal M_{Q,t}$}
                        -- (M2.center) pic[not considered]{queens}
                    node[below left = 1cm and 0.5cm of Q2.center, replacement queen] (R2) {}
                        node[left = 0cm of R2] {$R(Q)$}

                    coordinate[right = 2.5cm of Pt1] (C3)
                    node[above = 1cm of C3, queen] (Q3) {}
                        node[left = 0cm of Q3] {$SQ$}

                    coordinate[right = 5.5cm of Pt1] (C4)
                    node[above = 1cm of C4, queen] (Q4) {}
                        node[left = 0cm of Q4] {$Q(NR)$}
                    node[right = 2cm of Q4.center, not considered, group] (D) {}
                        node[right = 0cm of D, not considered] {$\mathcal D$}
                        -- (D) pic[not considered] {drones}
                    node[below left = 1cm and 0.5cm of Q4.center, replacement queen] (R4) {}
                        node[right = 0cm of R4] {$NR$};

                    \draw[inheritance] (Q0) -- (R0);
                    \draw[inheritance] (Q2) -- (R2);
                    \draw[inheritance] (Q4) -- (R4);
                    \draw[inheritance] (Q2) -- (M2);
                    \draw[inheritance, not considered] (M2) -- (D);
                    \draw[survival] (Q0) -- (Q3);
                    \draw[mating, not considered] (D) -- (Q4)
                        node[gene pass description, not considered]{mate};
                    \draw[relationship] (Q3) -- (R4)
                        node[relationship description] {$k_{SQ,NR}=...$};

                \begin{scope}[on background layer]
                    \draw[dashed] (M) ++ (-3cm,0cm) -- ++(13.9cm,0cm);
                \end{scope}

            \end{tikzpicture}
        \end{center}

        Let $SQ\in\mathcal S\mathcal Q_{t+1}\subseteq\mathcal Q_t$ be a survivor queen and let $NR\in\mathcal N\mathcal R_{t+1}$ be a newly hatched replacement queen whose dam $Q(NR)\in\mathcal N\mathcal Q_{t+1}$ mated with a group $\mathcal D$ of drones on a mating station $\mathcal M_{Q,t}$ with 4a-queen $Q\in\mathcal Q_t$. Then
        \begin{align}
            k_{SQ,NR} &= \frac12k_{SQ,Q(NR)}+\frac12k_{SQ,\mathcal D}\nonumber\\
                      &= \frac12k_{SQ,Q(NR)}+\frac12k_{SQ,\mathcal M_{Q,t}}. \label{eq::hhh}
        \end{align}
        By construction, neither a DPQ from $\mathcal M_{Q,t}$ nor the replacement queen $R(Q)$ can be ancestors of $SQ$. Thus, by Lemma~\ref{lem::kinbet},
            \[k_{SQ,\mathcal M_{Q,t}} = k_{SQ,R(Q)}\]
        and insertion into Equation~\ref{eq::hhh} yields
            \[k_{SQ,NR} = \frac12k_{SQ,Q(NR)}+\frac12k_{SQ,R(Q)},\]
        Taking averages, we obtain by the usual arguments
        \begin{align*}
            k_{\mathcal S\mathcal Q_{t+1},\mathcal N\mathcal R_{t+1}}
                &=\frac12k_{\mathcal S\mathcal Q_{t+1},\mathcal N\mathcal Q_{t+1}}
                    +\frac1{2N_{t+1}^{\mathcal S}}\mathbf {ac}_t^{\top}\mathbf K_t^{\mathcal R\mathcal Q}\mathbf s_t.
        \end{align*}
        The assertion follows by inserting Equation~\ref{eq::knqsqms} (shown in part~\ref{item::knqsq} of the proof of Lemma~\ref{lem::allthek}).

        \item \label{item::knrnrms} We show Equation~\ref{eq::knrnrms}, i.\,e.
        \begin{align*}
            k_{\mathcal N\mathcal R_{t+1},\mathcal N\mathcal R_{t+1}}
                &=\frac14\mathbf {dc}_t^{\top}\mathbf K_{t}^{\mathfrak W\mathfrak W}\mathbf{dc}_t
                    + \frac12\mathbf {ac}_t^{\top}\mathbf K_{t}^{\mathfrak W\mathfrak W}\mathbf{dc}_t
                    + \frac14\mathbf {ac}_t^{\top}\mathbf K_{t}^{\mathfrak W\mathfrak W}\mathbf{ac}_t\\
                &\hspace{0.9cm}- \frac1{4N_{t+1}^{\mathcal N}}\mathbf{dc}_t^{\top}
                    \mathrm{diag}\left(\mathbf K_t^{\mathfrak W\mathfrak W}\right)
                    - \frac1{4N_{t+1}^{\mathcal N}}\mathbf{ac}_t^{\top}\mathrm{diag}\left(\mathbf K_t^{\mathfrak W\mathfrak W}\right)
                    + \frac1{2N_{t+1}^{\mathcal N}}.
        \end{align*}

        \begin{center}
            \begin{tikzpicture}
                \path (0,0) coordinate (Pt0)
                        node[left = 0mm of Pt0] {generation $\mathcal P_t$}
                    coordinate[below = 4cm of Pt0] (Pt1)
                        node[left = 0mm of Pt1] {generation $\mathcal P_{t+1}$}
                    (Pt0) -- (Pt1)
                        coordinate[pos = 0.4] (M)

                    coordinate[right = 1.5cm of Pt0] (C0)
                    node[above = 1cm of C0, queen] (Q0) {}
                        node[left = 0cm of Q0] {$Q_1$}
                    node[below right = 2cm and 0.5cm of Q0.center, worker group] (W0) {}
                        node[above right = 0cm and -3mm of W0] {$\mathcal W(Q_1)$}

                    coordinate[right = 4.3cm of Pt0] (C1)
                    node[above = 1cm of C1, queen] (Q1) {}
                        node[right = 0cm of Q1] {$S_1$}
                    node[below left = 2.6cm and 8mm of Q1.center, group base] (M1base) {}
                    node[below left = 2.6cm and 8mm of Q1.center, group, not considered] (M1) {}
                        node[below right = 0cm and 0cm of M1, not considered] {$\mathcal M_{S_1,t}$}
                        -- (M1.center) pic[not considered]{queens}
                    node[below right = 2cm and 0.5cm of Q1.center, worker group] (W1) {}
                        node[above right = 0cm and -3mm of W1] {$\mathcal W(S_1)$}

                    coordinate[right = 6.3cm of Pt0] (C2)
                    node[above = 1cm of C2, queen] (Q2) {}
                        node[right = 0cm of Q2] {$Q_2$}
                    node[below right = 2cm and 0.5cm of Q2.center, worker group] (W2) {}
                        node[above right = 0cm and -3mm of W2] {$\mathcal W(Q_2)$}

                    coordinate[right = 9cm of Pt0] (C3)
                    node[above = 1cm of C3, queen] (Q3) {}
                        node[right = 0cm of Q3] {$S_2$}
                    node[below left = 2.6cm and 7mm of Q3.center, group base] (M3base) {}
                    node[below left = 2.6cm and 7mm of Q3.center, group, not considered] (M3) {}
                        node[below right = 0cm and 0cm of M3, not considered] {$\mathcal M_{S_2,t}$}
                        -- (M3.center) pic[not considered]{queens}
                    node[below right = 2cm and 0.5cm of Q3.center, worker group] (W3) {}
                        node[above right = 0cm and -3mm of W3] {$\mathcal W(S_2)$}

                    coordinate[right = 1.4cm of Pt1] (C4)
                    node[above = 1cm of C4, queen] (Q4) {}
                        node[left = 0cm of Q4] {$Q(NR_1)$}
                    node[right = 2cm of Q4.center, not considered, group] (D1) {}
                        node[below = 0cm of D1, not considered] {$\mathcal D_1$}
                        -- (D1) pic[not considered] {drones}
                    node[below left = 1cm and 0.5cm of Q4.center, replacement queen] (R4) {}
                        node[right = 0cm of R4] {$NR_1$}

                    coordinate[right = 6.3cm of Pt1] (C5)
                    node[above = 1cm of C5, queen] (Q5) {}
                        node[left = 0cm of Q5] {$Q(NR_2)$}
                    node[right = 2cm of Q5.center, not considered, group] (D2) {}
                        node[below = 0cm of D2, not considered] {$\mathcal D_2$}
                        -- (D2) pic[not considered] {drones}
                    node[below left = 1cm and 0.5cm of Q5.center, replacement queen] (R5) {}
                        node[right = 0cm of R5] {$NR_2$};

                \draw[inheritance] (Q0) -- (W0);
                \draw[inheritance] (Q1) -- (W1);
                \draw[inheritance] (Q2) -- (W2);
                \draw[inheritance] (Q3) -- (W3);
                \draw[inheritance] (Q4) -- (R4);
                \draw[inheritance] (Q5) -- (R5);
                \draw[inheritance] (Q0) -- (Q4);
                \draw[inheritance] (Q1) -- (M1);
                \draw[inheritance] (Q2) -- (Q5);
                \draw[inheritance] (Q3) -- (M3);
                \draw[inheritance, not considered] (M1) -- (D1);
                \draw[inheritance, not considered] (M3) -- (D2);
                \draw[mating, not considered] (D1) -- (Q4)
                    node[gene pass description, not considered] {mate};
                \draw[mating, not considered] (D2) -- (Q5)
                    node[gene pass description, not considered] {mate};
                \draw[relationship] (R4) -- (R5)
                    node[relationship description] {$k_{NR_1,NR_2}=...$};

                \begin{scope}[on background layer]
                    \draw[dashed] (M) ++ (-3cm,0cm) -- ++(13.9cm,0cm);
                \end{scope}
            \end{tikzpicture}
        \end{center}

        Let $NR_1,NR_2\in\mathcal N\mathcal R_{t+1}$ be two non-identical (!) replacement queens with dams $Q(NR_1)$ and $Q(NR_2)\in\mathcal N\mathcal Q_{t+1}$. Let $\mathcal M_{S_1,t},\mathcal M_{S_2,t}$ be the respective mating stations (with 4a-queens $S_1,S_2\in\mathcal Q_t$) that $Q(NR_1)$ and $Q(NR_2)$ mated on. Then by the standard argument that any allele drawn from $NR_i$ with $i\in\{1,2\}$ comes with equal probability either from $NQ_i$ or (via the drones) from $\mathcal M_{S_i,t}$, we have
            \[k_{NR_1,NR_2}=\frac14k_{Q(NR_1),Q(NR_2)}+\frac14k_{Q(NR_1),\mathcal M_{S_2,t}}
                + \frac14k_{\mathcal M_{S_1,t},Q(NR_2)}+\frac14k_{\mathcal M_{S_1,t},\mathcal M_{S_2,t}}.\]
        Let $Q_1,Q_2\in\mathcal Q_t$ be the respective dams of $Q(NR_1)$ and $Q(NR_2)$. Then by the replacements according to Corollary~\ref{cor::imp} and Lemma~\ref{lem::reldron}, we have
            \[k_{NR_1,NR_2}=\frac14k_{\mathcal W(Q_1),\mathcal W(Q_2)}+\frac14k_{\mathcal W(Q_1),\mathcal W(S_2)}+\frac14k_{\mathcal W(S_1),\mathcal W(Q_2)}+\frac14k_{\mathcal W(S_1),\mathcal W(S_2)}.\]
        Note that $NR_1\neq NR_2$ implies $Q(NR_1)\neq Q(NR_2)$, so that Corollary~\ref{cor::imp} can be applied.
        Note furthermore that similar to the proof of Equation~\ref{eq::knrnr} (in part~\ref{item::knqnq} of the proof of Lemma~\ref{lem::allthek}), we need to resort to worker groups in order to cover the case $Q_1=Q_2$ correctly.
        The frequencies with which a given queen $Q\in\mathcal Q_t$ occurs in the roles of $Q_1,Q_2,S_1$ and $S_2$ when taking averages are $dc_{Q,t},dc_{Q,t},ac_{Q,t}$, and $ac_{Q,t}$, respectively. From this we deduce the approximation
            \[k_{\mathcal N\mathcal R_{t+1},\mathcal N\mathcal R_{t+1}} \approx
                \frac14\mathbf {dc}_t^{\top}\mathbf K_{t}^{\mathfrak W\mathfrak W}\mathbf{dc}_t
                    + \frac12\mathbf {ac}_t^{\top}\mathbf K_{t}^{\mathfrak W\mathfrak W}\mathbf{dc}_t
                    + \frac14\mathbf {ac}_t^{\top}\mathbf K_{t}^{\mathfrak W\mathfrak W}\mathbf{ac}_t.\]
        This approximation would be an equality if the kinship of $NR_1$ to herself was $\frac14k_{\mathcal W(Q_1),\mathcal W(Q_1)}+\frac12k_{\mathcal W(Q_1),\mathcal W(S_1)}+\frac14k_{\mathcal W(S_1),\mathcal W(S_1)}$, which is not the case.
        Instead, we have by Remark~\ref{rmk::withcol}\,\ref{item::withcoli} in combination with Equation~\ref{eq::kgd} (Lemma~\ref{lem::reldron}) and Corollary~\ref{cor::imp}
        \begin{align*}
            k_{NR_1,NR_1} &= \frac12+\frac12k_{Q(NR_1),\mathcal M_{S_1,t}}\\
                          &= \frac12+\frac12k_{\mathcal W(Q_1),\mathcal W(S_1)}.
        \end{align*}
        So, for each replacement queen $NR_1\in\mathcal N\mathcal R_{t+1}$, we have to add the correction term
        \begin{align*}
            k_{NR_1,NR_1} -& \left(\frac14k_{\mathcal W(Q_1),\mathcal W(Q_1)}+\frac12k_{\mathcal W(Q_1),\mathcal W(S_1)}+\frac14k_{\mathcal W(S_1),\mathcal W(S_1)}\right)\\
                &= \frac12-\frac14k_{\mathcal W(Q_1),\mathcal W(Q_1)}-\frac14k_{\mathcal W(S_1),\mathcal W(S_1)}
        \end{align*}
        A queen $Q\in\mathcal Q_t$ occurs with frequency $dc_{Q,t}$ in the role of $Q_1$ and with frequency $ac_{Q,t}$ in the role of $S_1$.Thus, the term that needs to be added to the approximation is
            \[\frac1{2N_{t+1}^{\mathcal N}}
                - \frac1{4N_{t+1}^{\mathcal N}}\mathbf{dc}_t^{\top}\mathrm{diag}\left(\mathbf K_t^{\mathfrak W\mathfrak W}\right)
                - \frac1{4N_{t+1}^{\mathcal N}}\mathbf{ac}_t^{\top}\mathrm{diag}\left(\mathbf K_t^{\mathfrak W\mathfrak W}\right),\]
        and we end up at the claimed identity.

        \item \label{item::knrsrms} We show Equation~\ref{eq::knrsrms}, i.\,e.
            \[k_{\mathcal N\mathcal R_{t+1},\mathcal S\mathcal R_{t+1}}=
                \frac1{2N_{t+1}^{\mathcal S}}\mathbf {dc}_{t}^{\top}\mathbf K_t^{\mathfrak W\mathfrak W}\mathbf{s}_t
                    +\frac1{2N_{t+1}^{\mathcal S}}\mathbf {ac}_t^{\top}\mathbf K_{t}^{\mathfrak W\mathfrak W}\mathbf s_t.\]

        \begin{center}
            \begin{tikzpicture}
                \path (0,0) coordinate (Pt0)
                        node[left = 0mm of Pt0] {generation $\mathcal P_t$}
                    coordinate[below = 4cm of Pt0] (Pt1)
                        node[left = 0mm of Pt1] {generation $\mathcal P_{t+1}$}
                    (Pt0) -- (Pt1)
                        coordinate[pos = 0.4] (M)

                    coordinate[right = 4.5cm of Pt0] (C0)
                    node[above = 1cm of C0, queen] (Q0) {}
                        node[left = 0cm of Q0] {$Q_1$}
                    node[below right = 2cm and 0.5cm of Q0.center, worker group] (W0) {}
                        node[above right = 0cm and -3mm of W0] {$\mathcal W(Q_1)$}
                    node[below left = 2.6cm and 6mm of Q0.center, group base] (M0base) {}
                    node[below left = 2.6cm and 6mm of Q0.center, group, not considered] (M0) {}
                        node[below right = 0cm and 0cm of M0, not considered] {$\mathcal M_{Q_1,t}$}
                        -- (M0.center) pic[not considered]{queens}
                    node[below left = 1cm and 0.5cm of Q0.center, replacement queen] (R0) {}
                        node[left = 0cm of R0] {$R(Q_1)$}

                    coordinate[right = 8cm of Pt0] (C1)
                    node[above = 1cm of C1, queen] (Q1) {}
                        node[right = 0cm of Q1] {$Q(SR_2)$}
                    node[below right = 2cm and 0.5cm of Q1.center, worker group] (W1) {}
                        node[above right = 0cm and -3mm of W1] {$\mathcal W(SR_2)$}
                    node[below left = 1cm and 0.5cm of Q1.center, replacement queen] (R1) {}
                        node[left = 0cm of R1] {$SR_2$}

                    coordinate[right = 1.9cm of Pt1] (C3)
                    node[above = 1cm of C3, queen] (Q3) {}
                        node[left = 0cm of Q3] {$Q(NR_1)$}
                    node[right = 2cm of Q3.center, not considered, group] (D) {}
                        node[right = 0cm of D, not considered] {$\mathcal D$}
                        -- (D) pic[not considered] {drones}
                    node[below left = 1cm and 0.5cm of Q3.center, replacement queen] (R3) {}
                        node[left = 0cm of R3] {$NR_1$}

                    coordinate[right = 8cm of Pt1] (C4)
                    node[above = 1cm of C4, queen] (Q4) {}
                        node[right = 0cm of Q4] {$Q(SR_2)$}
                    node[below left = 1cm and 0.5cm of Q4.center, replacement queen] (R4) {}
                        node[right = 0cm of R4] {$SR_2$};

                    \draw[inheritance] (Q0) -- (R0);
                    \draw[inheritance] (Q1) -- (R1);
                    \draw[inheritance] (Q0) -- (W0);
                    \draw[inheritance] (Q1) -- (W1);
                    \draw[inheritance] (Q3) -- (R3);
                    \draw[inheritance] (Q4) -- (R4);
                    \draw[inheritance] (Q0) -- (M0);
                    \draw[inheritance, not considered] (M0) -- (D);
                    \draw[survival] (R1) -- (R4);
                    \draw[mating, not considered] (D) -- (Q3)
                        node[gene pass description, not considered]{mate};
                    \draw[relationship] (R3) -- (R4)
                        node[relationship description] {$k_{NR_1,SR_2}=...$};

                \draw[dashed] (M) ++ (-3cm,0cm) -- ++(13.9cm,0cm);
            \end{tikzpicture}
        \end{center}

        Let $NR_1\in\mathcal N\mathcal R_{t+1}$ and $SR_2\in\mathcal S\mathcal R_{t+1}\subseteq\mathcal R_t$ be two replacement queens and let $\mathcal M_{Q_1,t}$ (with 4a-queen $Q_1\in\mathcal Q_t$) be the mating station on which $NR_1$'s dam $Q(NR_1)$ mated. Then $NR_1$ is not an ancestor of $SR_2$ and thus by the standard arguments (including Corollary~\ref{cor::imp}),
        \begin{align*}
            k_{NR_1,SR_2} &= \frac12k_{Q(NR_1),SR_2} + \frac12k_{\mathcal M_{Q_1,t},SR_2}\\
                          &= \frac12k_{Q(NR_1),SR_2} + \frac12k_{\mathcal W(Q_1),\mathcal W(SR_2)}.
        \end{align*}
        When taking averages, a queen $Q\in\mathcal Q_t$ will occur in the role of $Q_1$ with frequency $ac_{Q,t}$ and a replacement queen $R\in\mathcal R_t$ will occur in the role of $SR_2$ with frequency $\frac1{N_{t+1}^{\mathcal S}}s_{R,t}$. This yields
            \[k_{\mathcal N\mathcal R_{t+1},\mathcal S\mathcal R_{t+1}} =
                \frac12k_{\mathcal N\mathcal Q_{t+1},\mathcal S\mathcal R_{t+1}}
                    +\frac1{2N_{t+1}^{\mathcal S}}\mathbf {ac}_t^{\top}\mathbf K_{t}^{\mathfrak W\mathfrak W}\mathbf s_t\]
        The assertion follows by inserting Equation~\ref{eq::knqsrms} (shown in part~\ref{item::knqsr} of the proof of Lemma~\ref{lem::allthek}).

    \end{enumerate}
\end{proof}

With all these terms calculated, we once more insert them into the equations of Remark~\ref{rmk::brokenk}\,\ref{item::tenii}:

\begin{Lem}
    We have
    \begin{align}
        k_{\mathcal Q_{t+1},\mathcal Q_{t+1}}
            &= \left(\frac{N_{t+1}^{\mathcal N}}{N_{t+1}}\right)^2\mathbf{dc}_t^{\top}\mathbf{K}_t^{\mathfrak W\mathfrak W}\mathbf{dc}_t
                + \frac{N_{t+1}^{\mathcal N}}{N_{t+1}^2}\mathbf {dc}_t^{\top}\mathrm{diag}\left(\mathbf K_t^{\mathcal R\mathcal R}\right)
                - \frac{N_{t+1}^{\mathcal N}}{N_{t+1}^2}\mathbf {dc}_t^{\top}\mathrm{diag}\left(\mathbf K_t^{\mathfrak W\mathfrak W}\right) \nonumber\\
            &\hspace{1cm} +\frac{2N_{t+1}^{\mathcal N}}{N_{t+1}^2}\mathbf{dc}_t^{\top}\mathbf K_t^{\mathcal R\mathcal Q}\mathbf s_t
                + \frac1{N_{t+1}^2}\mathbf s_t^{\top}\mathbf K_t^{\mathcal Q\mathcal Q}\mathbf s_t,\\
        k_{\mathcal Q_{t+1},\mathcal R_{t+1}}
            &= \frac{\left(N_{t+1}^{\mathcal N}\right)^2}{2N_{t+1}^2}
                    \mathbf{dc}_t^{\top}\mathbf K_t^{\mathfrak W\mathfrak W}\mathbf{dc}_t
                + \frac{\left(N_{t+1}^{\mathcal N}\right)^2}{2N_{t+1}^2}
                    \mathbf{dc}_t^{\top}\mathbf K_t^{\mathfrak W\mathfrak W}\mathbf{ac}_t \nonumber\\
            &\hspace{1cm}+ \frac{N_{t+1}^{\mathcal N}}{2N_{t+1}^2}
                    \mathbf{1}_t^{\top}\left(\mathbf K_t^{\mathcal R\mathcal R}-\mathbf K_t^{\mathfrak W\mathfrak W}\right)\mathbf{dc}_t
                + \frac{N_{t+1}^{\mathcal N}}{N_{t+1}^2}
                    \mathbf{dc}_t^{\top}\mathbf K_t^{\mathfrak W\mathfrak W}\mathbf{s}_t \nonumber\\
            &\hspace{1cm}+ \frac{N_{t+1}^{\mathcal N}}{2N_{t+1}^2}\mathbf {dc}_t^{\top}\mathbf K_t^{\mathcal R\mathcal Q}\mathbf s_t
                + \frac{N_{t+1}^{\mathcal N}}{2N_{t+1}^2}\mathbf {ac}_t^{\top}\mathbf K_t^{\mathcal R\mathcal Q}\mathbf s_t
                + \frac1{N_{t+1}^2}\mathbf s_t^{\top}\mathbf K_t^{\mathcal Q\mathcal R}\mathbf s_t,\\
        k_{\mathcal R_{t+1},\mathcal R_{t+1}}
            &= \frac{\left(N_{t+1}^{\mathcal N}\right)^2}{4N_{t+1}^2}
                    \mathbf{dc}_t^{\top}\mathbf K_t^{\mathfrak W\mathfrak W}\mathbf{dc}_t
                + \frac{\left(N_{t+1}^{\mathcal N}\right)^2}{2N_{t+1}^2}
                    \mathbf{ac}_t^{\top}\mathbf K_t^{\mathfrak W\mathfrak W}\mathbf{dc}_t
                + \frac{\left(N_{t+1}^{\mathcal N}\right)^2}{4N_{t+1}^2}
                    \mathbf{ac}_t^{\top}\mathbf K_t^{\mathfrak W\mathfrak W}\mathbf{ac}_t \nonumber\\
            &\hspace{1cm} - \frac{N_{t+1}^{\mathcal N}}{4N_{t+1}^2}
                    \mathbf{dc}_t^{\top}\mathrm{diag}\left(K_t^{\mathfrak W\mathfrak W}\right)
                - \frac{N_{t+1}^{\mathcal N}}{4N_{t+1}^2}
                    \mathbf{ac}_t^{\top}\mathrm{diag}\left(K_t^{\mathfrak W\mathfrak W}\right) \nonumber\\
            &\hspace{1cm} + \frac{N_{t+1}^{\mathcal N}}{N_{t+1}^2}\mathbf{dc}_t^{\top}\mathbf{K}_t^{\mathfrak W\mathfrak W}\mathbf s_t
                + \frac{N_{t+1}^{\mathcal N}}{N_{t+1}^2}\mathbf{ac}_t^{\top}\mathbf{K}_t^{\mathfrak W\mathfrak W}\mathbf s_t
                + \frac{N_{t+1}^{\mathcal N}}{2N_{t+1}^2}
                + \frac1{N_{t+1}^2}\mathbf s_t\mathbf K_t^{\mathcal R\mathcal R}\mathbf s_t.
    \end{align}
\end{Lem}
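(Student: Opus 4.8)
The plan is to prove all three identities by direct substitution, since the three equations of Remark~\ref{rmk::brokenk}\,\ref{item::tenii} already express $k_{\mathcal Q_{t+1},\mathcal Q_{t+1}}$, $k_{\mathcal Q_{t+1},\mathcal R_{t+1}}$, and $k_{\mathcal R_{t+1},\mathcal R_{t+1}}$ as fixed weighted averages---with weights assembled from $N_{t+1}^{\mathcal N}$, $N_{t+1}^{\mathcal S}$, and $N_{t+1}$---of the ten block kinships, and Lemma~\ref{lem::allthekms} supplies a closed form for each of those ten quantities. Thus the whole content is to insert the ten expressions and collect terms; no new probabilistic argument is needed.

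First I would handle $k_{\mathcal Q_{t+1},\mathcal Q_{t+1}}$, which is the cleanest: I substitute Equations~\ref{eq::knqnqms}, \ref{eq::knqsqms}, and~\ref{eq::ksqsqms} with their respective weights $\bigl(N_{t+1}^{\mathcal N}/N_{t+1}\bigr)^2$, $2N_{t+1}^{\mathcal N}N_{t+1}^{\mathcal S}/N_{t+1}^2$, and $\bigl(N_{t+1}^{\mathcal S}/N_{t+1}\bigr)^2$. The factors of $N_{t+1}^{\mathcal S}$ cancel cleanly: the weight on $k_{\mathcal N\mathcal Q_{t+1},\mathcal S\mathcal Q_{t+1}}$ absorbs the $1/N_{t+1}^{\mathcal S}$ in Equation~\ref{eq::knqsqms}, and the weight on $k_{\mathcal S\mathcal Q_{t+1},\mathcal S\mathcal Q_{t+1}}$ absorbs the $1/(N_{t+1}^{\mathcal S})^2$ in Equation~\ref{eq::ksqsqms}; inside the first term the $(N_{t+1}^{\mathcal N})^2$ weight reduces the two $1/N_{t+1}^{\mathcal N}$ diagonal corrections to $N_{t+1}^{\mathcal N}/N_{t+1}^2$. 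I would then treat $k_{\mathcal R_{t+1},\mathcal R_{t+1}}$ in exactly the same manner, substituting Equations~\ref{eq::knrnrms}, \ref{eq::knrsrms}, and~\ref{eq::ksrsrms}; here the constant $1/(2N_{t+1}^{\mathcal N})$ in Equation~\ref{eq::knrnrms}, once weighted by $\bigl(N_{t+1}^{\mathcal N}/N_{t+1}\bigr)^2$, produces the lone scalar term $N_{t+1}^{\mathcal N}/(2N_{t+1}^2)$.

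The one step that is not purely mechanical occurs in $k_{\mathcal Q_{t+1},\mathcal R_{t+1}}$, and I expect it to be the main obstacle---though a mild one. Its weighted average carries the single weight $N_{t+1}^{\mathcal N}N_{t+1}^{\mathcal S}/N_{t+1}^2$ on the \emph{sum} $k_{\mathcal N\mathcal Q_{t+1},\mathcal S\mathcal R_{t+1}}+k_{\mathcal S\mathcal Q_{t+1},\mathcal N\mathcal R_{t+1}}$, and these two summands (Equations~\ref{eq::knqsrms} and~\ref{eq::ksqnrms}) are asymmetric in their structure, so I must substitute both and keep the $\mathbf K_t^{\mathfrak W\mathfrak W}$-, $\mathbf K_t^{\mathcal R\mathcal Q}$-, and $\mathbf{ac}_t$-contributions separate. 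More importantly, the diagonal-correction terms coming from Equation~\ref{eq::knqnrms}, namely $\tfrac1{2N_{t+1}^{\mathcal N}}\mathbf{dc}_t^{\top}\mathrm{diag}\bigl(\mathbf K_t^{\mathcal R\mathcal R}\bigr)-\tfrac1{2N_{t+1}^{\mathcal N}}\mathbf{dc}_t^{\top}\mathrm{diag}\bigl(\mathbf K_t^{\mathfrak W\mathfrak W}\bigr)$, must be consolidated (after weighting by $\bigl(N_{t+1}^{\mathcal N}/N_{t+1}\bigr)^2$) into the compact form $\tfrac{N_{t+1}^{\mathcal N}}{2N_{t+1}^2}\mathbf 1_t^{\top}\bigl(\mathbf K_t^{\mathcal R\mathcal R}-\mathbf K_t^{\mathfrak W\mathfrak W}\bigr)\mathbf{dc}_t$ appearing in the statement. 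This rewriting is valid precisely because, by Remark~\ref{rmk::onk}\,\ref{item::onkdiag}, the matrix $\mathbf K_t^{\mathcal R\mathcal R}-\mathbf K_t^{\mathfrak W\mathfrak W}$ is diagonal, so that $\mathbf 1_t^{\top}\bigl(\mathbf K_t^{\mathcal R\mathcal R}-\mathbf K_t^{\mathfrak W\mathfrak W}\bigr)\mathbf{dc}_t=\mathbf{dc}_t^{\top}\bigl(\mathrm{diag}(\mathbf K_t^{\mathcal R\mathcal R})-\mathrm{diag}(\mathbf K_t^{\mathfrak W\mathfrak W})\bigr)$; overlooking the diagonality here is the only genuine trap. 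Collecting the remaining terms then reproduces the stated formula, completing the proof.
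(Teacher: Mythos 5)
Your proposal is correct and follows essentially the same route as the paper, which states this Lemma without a written proof immediately after announcing that the ten block kinships of Lemma~\ref{lem::allthekms} are to be inserted into the weighted averages of Remark~\ref{rmk::brokenk}\,\ref{item::tenii} --- precisely the substitution-and-collection you carry out. Your remark that rewriting the weighted diagonal corrections as $\frac{N_{t+1}^{\mathcal N}}{2N_{t+1}^2}\mathbf 1_t^{\top}\bigl(\mathbf K_t^{\mathcal R\mathcal R}-\mathbf K_t^{\mathfrak W\mathfrak W}\bigr)\mathbf{dc}_t$ is justified only because $\mathbf K_t^{\mathcal R\mathcal R}-\mathbf K_t^{\mathfrak W\mathfrak W}$ is diagonal (Remark~\ref{rmk::onk}\,\ref{item::onkdiag}) correctly supplies the one step the paper leaves implicit.
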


These equations are then again inserted into Equation~\ref{eq::kng} to obtain $k_{\mathcal P_{t+1}^{\ast},\mathcal P_{t+1}^{\ast}}$.

\begin{Thm}\label{thm::burnerms}
    We have
    \begin{align}
        k_{\mathcal P^{\ast}_{t+1},\mathcal P^{\ast}_{t+1}}
            &= \left(\frac{N_{t+1}^{\mathcal N}}{4N_{t+1}}\right)^2\left(
                9\cdot\mathbf{dc}_t^{\top}\mathbf K_t^{\mathfrak W\mathfrak W}\mathbf {dc}_t
                    + 6\cdot\mathbf{dc}_t^{\top}\mathbf K_t^{\mathfrak W\mathfrak W}\mathbf {ac}_t
                    + \mathbf{ac}_t^{\top}\mathbf K_t^{\mathfrak W\mathfrak W}\mathbf {ac}_t
                \right) \nonumber \\
            &\hspace{0.5cm} + \frac{N_{t+1}^{\mathcal N}}{\left(4N_{t+1}\right)^2}\left(
                8\cdot\mathbf{dc}_t^{\top}\mathrm{diag}\left(\mathbf K_t^{\mathcal R\mathcal R}\right)
                    - 9\cdot\mathbf{dc}_t^{\top}\mathrm{diag}\left(\mathbf K_t^{\mathfrak W\mathfrak W}\right)\right. \nonumber \\
                    &\hspace{9cm}\left.- \mathbf{ac}_t^{\top}\mathrm{diag}\left(\mathbf K_t^{\mathfrak W\mathfrak W}\right)
                \right) \nonumber \\
            &\hspace{0.5cm} + \frac{N_{t+1}^{\mathcal N}}{\left(2N_{t+1}\right)^2}\left(
                3\cdot\mathbf{dc}_t^{\top}\mathbf K_t^{\mathcal R\mathcal Q}\mathbf s_t
                    + 3\cdot\mathbf{dc}_t^{\top}\mathbf K_t^{\mathfrak W\mathfrak W}\mathbf s_t
                    + \mathbf{ac}_t^{\top}\mathbf K_t^{\mathfrak W\mathfrak W}\mathbf s_t
                    + \mathbf{ac}_t^{\top}\mathbf K_t^{\mathcal R\mathcal Q}\mathbf s_t
                \right) \nonumber\\
            &\hspace{0.5cm} + \frac1{4N_{t+1}^2}\mathbf s_t^{\top}\left(
                    \mathbf K_t^{\mathcal Q\mathcal Q} + 2\cdot\mathbf K_t^{\mathcal Q\mathcal R} + \mathbf K_t^{\mathcal R\mathcal R}
                \right)\mathbf s_t
                + \frac{N^{\mathcal N}_{t+1}}{8N_{t+1}^2}.
    \end{align}
\end{Thm}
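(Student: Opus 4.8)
The plan is to obtain $k_{\mathcal P^{\ast}_{t+1},\mathcal P^{\ast}_{t+1}}$ purely by substitution, exactly mirroring the route taken for single colony insemination in Theorem~\ref{thm::burner}. The starting point is the weighted decomposition of Equation~\ref{eq::kng},
\[
    k_{\mathcal P^{\ast}_{t+1},\mathcal P^{\ast}_{t+1}}
        = \tfrac14 k_{\mathcal Q_{t+1},\mathcal Q_{t+1}}
        + \tfrac12 k_{\mathcal Q_{t+1},\mathcal R_{t+1}}
        + \tfrac14 k_{\mathcal R_{t+1},\mathcal R_{t+1}},
\]
into which I would insert the three block kinships $k_{\mathcal Q_{t+1},\mathcal Q_{t+1}}$, $k_{\mathcal Q_{t+1},\mathcal R_{t+1}}$, and $k_{\mathcal R_{t+1},\mathcal R_{t+1}}$ supplied by the immediately preceding lemma (which themselves arose from Lemma~\ref{lem::allthekms} via Remark~\ref{rmk::brokenk}\,\ref{item::tenii}). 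No new genetic reasoning is needed; everything reduces to algebraic bookkeeping.

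The core of the work is to group the resulting terms by their bilinear type. I would sort the contributions into four families matching the common prefactors in the claim: the bilinear forms built from $\mathbf K_t^{\mathfrak W\mathfrak W}$ in the vectors $\mathbf{dc}_t$ and $\mathbf{ac}_t$ (prefactor $(N_{t+1}^{\mathcal N}/4N_{t+1})^2$); the diagonal contractions $\mathbf{dc}_t^{\top}\mathrm{diag}(\cdot)$ and $\mathbf{ac}_t^{\top}\mathrm{diag}(\cdot)$ (prefactor $N_{t+1}^{\mathcal N}/(4N_{t+1})^2$); the mixed contractions against the survival vector $\mathbf s_t$ (prefactor $N_{t+1}^{\mathcal N}/(2N_{t+1})^2$); and the pure survival quadratic form together with the lone constant (prefactor $1/4N_{t+1}^2$). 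For each family I would read off the coefficient contributed by each of the three block kinships, multiply by $\tfrac14$, $\tfrac12$, or $\tfrac14$ as appropriate, and sum. For instance, the coefficient of $\mathbf{dc}_t^{\top}\mathbf K_t^{\mathfrak W\mathfrak W}\mathbf{dc}_t$ assembles as $\tfrac14(N^{\mathcal N}/N)^2 + \tfrac12\cdot(N^{\mathcal N})^2/2N^2 + \tfrac14\cdot(N^{\mathcal N})^2/4N^2 = 9(N^{\mathcal N})^2/16N^2$, which is the stated $9\,(N_{t+1}^{\mathcal N}/4N_{t+1})^2$; here the symmetry of $\mathbf K_t^{\mathfrak W\mathfrak W}$ (Remark~\ref{rmk::onk}) is used to merge $\mathbf{ac}_t^{\top}\mathbf K_t^{\mathfrak W\mathfrak W}\mathbf{dc}_t$ with $\mathbf{dc}_t^{\top}\mathbf K_t^{\mathfrak W\mathfrak W}\mathbf{ac}_t$.

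The one step that is not entirely mechanical, and which I expect to be the main subtlety, is the handling of the term $\frac{N_{t+1}^{\mathcal N}}{2N_{t+1}^2}\mathbf 1_t^{\top}\bigl(\mathbf K_t^{\mathcal R\mathcal R}-\mathbf K_t^{\mathfrak W\mathfrak W}\bigr)\mathbf{dc}_t$ appearing in $k_{\mathcal Q_{t+1},\mathcal R_{t+1}}$. Here I would invoke Remark~\ref{rmk::onk}\,\ref{item::onkdiag}: since $\mathbf K_t^{\mathcal R\mathcal R}$ and $\mathbf K_t^{\mathfrak W\mathfrak W}$ coincide off the diagonal, their difference is a \emph{diagonal} matrix, so
\[
    \mathbf 1_t^{\top}\bigl(\mathbf K_t^{\mathcal R\mathcal R}-\mathbf K_t^{\mathfrak W\mathfrak W}\bigr)\mathbf{dc}_t
        = \mathbf{dc}_t^{\top}\mathrm{diag}\bigl(\mathbf K_t^{\mathcal R\mathcal R}\bigr)
        - \mathbf{dc}_t^{\top}\mathrm{diag}\bigl(\mathbf K_t^{\mathfrak W\mathfrak W}\bigr).
\]
This conversion is precisely what feeds the $+8\,\mathbf{dc}_t^{\top}\mathrm{diag}(\mathbf K_t^{\mathcal R\mathcal R})$ and $-9\,\mathbf{dc}_t^{\top}\mathrm{diag}(\mathbf K_t^{\mathfrak W\mathfrak W})$ contributions in the diagonal family; without it the diagonal terms do not assemble into the claimed shape. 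After this rewriting the remaining collection is routine, and checking that each of the four families reproduces the stated coefficients — including the constant $N_{t+1}^{\mathcal N}/8N_{t+1}^2$, which comes solely from $\tfrac14$ times the $N_{t+1}^{\mathcal N}/2N_{t+1}^2$ summand of $k_{\mathcal R_{t+1},\mathcal R_{t+1}}$ — completes the proof.
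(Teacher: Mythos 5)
Your proposal is correct and follows exactly the paper's route: the paper likewise obtains Theorem~\ref{thm::burnerms} by inserting the three block kinships $k_{\mathcal Q_{t+1},\mathcal Q_{t+1}}$, $k_{\mathcal Q_{t+1},\mathcal R_{t+1}}$, and $k_{\mathcal R_{t+1},\mathcal R_{t+1}}$ from the preceding lemma into Equation~\ref{eq::kng} and simplifying, and your coefficient bookkeeping (e.g. $\tfrac14+\tfrac14+\tfrac1{16}=\tfrac9{16}$ for the $\mathbf{dc}_t^{\top}\mathbf K_t^{\mathfrak W\mathfrak W}\mathbf{dc}_t$ term, and the constant $N_{t+1}^{\mathcal N}/8N_{t+1}^2$ arising solely from $\tfrac14$ of the $k_{\mathcal R_{t+1},\mathcal R_{t+1}}$ summand) checks out. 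Your explicit rewriting of $\mathbf 1_t^{\top}\bigl(\mathbf K_t^{\mathcal R\mathcal R}-\mathbf K_t^{\mathfrak W\mathfrak W}\bigr)\mathbf{dc}_t$ as a difference of diagonal contractions via Remark~\ref{rmk::onk}\,\ref{item::onkdiag} is a step the paper leaves implicit, and it is handled correctly.
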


With all that, we are able to formulate the task of OCS for a honeybee population with mating on isolated mating stations.

\begin{Task}\label{task::ms}
    Given a generation $\mathcal P_t=\mathcal Q_t\sqcup\mathfrak W_t\sqcup\mathcal R_t$ of honeybee colonies, and
        \begin{itemize}
            \item vectors $\hat{\mathbf u}_t^{\mathcal Q}\in\mathbb R^{\mathcal Q_t}$ and $\hat{\mathbf u}_t^{\mathcal R}=\hat{\mathbf u}_t^{\mathfrak W}\in\mathbb R^{\mathcal R_t}\cong\mathbb R^{\mathfrak W_t}$ of estimated breeding values,

            \item a survival vector $\mathbf s_t\in\{0,1\}^{\mathcal Q_t}(\cong\{0,1\}^{\mathcal R_t}\cong\{0,1\}^{\mathfrak W_t})$,

            \item a symmetric and positive definite kinship matrix $\mathbf K_t\in\mathbb R^{\mathcal P_t\times\mathcal P_t}$ that falls into the blocks
                \[\mathbf K_t =
                    \begin{pmatrix}
                    \mathbf K_t^{\mathcal Q\mathcal Q}  & \mathbf K_t^{\mathcal Q\mathfrak W}  & \mathbf K_t^{\mathcal Q\mathcal R}\\
                    \mathbf K_t^{\mathfrak W\mathcal Q} & \mathbf K_t^{\mathfrak W\mathfrak W} & \mathbf K_t^{\mathfrak W\mathcal R}\\
                    \mathbf K_t^{\mathcal R\mathcal Q}  & \mathbf K_t^{\mathcal R\mathfrak W}  & \mathbf K_t^{\mathcal R\mathcal R}
                    \end{pmatrix}.\]
            and fulfills the properties listed in Remark~\ref{rmk::onk},

            \item the required number of newly created colonies of the next generation, $N_{t+1}^{\mathcal N}$,

            \item and a maximum acceptable kinship level $k_{t+1}^{\ast}$,
        \end{itemize}
        let $N_{t+1}:=N_{t+1}^{\mathcal N}+\mathbf 1_t^{\top}\mathbf s_t$ and maximize the function
        \begin{align*}
            \mathbb E\bigl[\hat u_{\mathcal P^{\ast}_{t+1}}\bigr]:
                \mathbb R_{\geq0}^{\mathcal Q_t}\oplus\mathbb R_{\geq0}^{\mathcal Q_t} &\to \mathbb R,\\
                \mathbf {dc}_{t}\oplus\mathbf {ac}_{t} &\mapsto
                \frac{3N_{t+1}^{\mathcal N}}{4N_{t+1}}\mathbf {dc}_{t}^{\top}\hat{\mathbf u}_{t}^{\mathcal R}
                + \frac{N_{t+1}^{\mathcal N}}{4N_{t+1}}\mathbf {ac}_{t}^{\top}\hat{\mathbf u}_{t}^{\mathcal R}
                + \frac1{2N_{t+1}}\mathbf s_t^{\top}\left(\hat{\mathbf u}_{t}^{\mathcal R}+\hat{\mathbf u}_{t}^{\mathcal Q}\right)
        \end{align*}

        under the constraints
            \[\mathbf 1_t^{\top}\mathbf {dc}_t=1,\]
            \[\mathbf 1_t^{\top}\mathbf {ac}_t=1,\]
        and
            \[k_{\mathcal P^{\ast}_{t+1},\mathcal P^{\ast}_{t+1}} \leq k_{t+1}^{\ast},\]
        where $k_{\mathcal P^{\ast}_{t+1},\mathcal P^{\ast}_{t+1}}$ denotes the term described in Theorem~\ref{thm::burnerms}.
\end{Task}

\subsection{Mixed strategies of mating control}\label{sec::mix}

Of course, breeding populations of honeybees can be heterogeneous and rely on more than one mode of mating control. Therefore, we finally look at a honeybee population in which both single colony insemination and isolated mating stations are used at the same time. 

Thus, each queen $Q\in\mathcal Q_t$ has four possibilities to contribute to the next generation, namely via the dam path, 1b-path, 4a-path and survival path. All these four paths have previously been considered, just not all at the same time. We can thus still use the notation from previous sections, in particular the vectors $\mathbf {dc}_t, \mathbf {bc}_t, \mathbf {ac}_t, \mathbf {s}_t\in\mathbb R^{\mathcal Q_t}$. 

\begin{Rmk}\label{rmk::abone}
    Every new queen $NQ\in\mathcal N\mathcal Q_{t+1}$ needs to either be inseminated or to mate on a mating station. Thus, instead of $\mathbf 1_t^{\top}\mathbf {bc}_t=1$ (Remark~\ref{rmk::dbone}) and $\mathbf 1_t^{\top}\mathbf {ac}_t=1$ (Remark~\ref{rmk::aone}), we now have
        \[\mathbf 1_t^{\top}\mathbf {bc}_t+\mathbf 1_t^{\top}\mathbf {ac}_t=1.\]
\end{Rmk}

\subsubsection{Breeding value development}\label{sec::mixbv}

The expected average breeding values $\mathbb E\left[\hat u_{\mathcal Q_{t+1}}\right]$, $\mathbb E\bigl[\hat u_{\mathcal R_{t+1}}\bigr]$, and $\mathbb E\bigl[\hat u_{\mathcal P^{\ast}_{t+1}}\bigr]$ in the case of mixed mating control strategies are derived just as in Sections~\ref{sec::scibv} and~\ref{sec::imsbv}. The respective contributions via the 1b-path and via the 4a-path in Theorems~\ref{thm::ut} and~\ref{thm::utms} have to be added. This yields

\begin{Thm}\label{thm::utmix}
    We have
    \begin{align}
        \mathbb E\left[\hat u_{\mathcal Q_{t+1}}\right] &=
            \frac{N_{t+1}^{\mathcal N}}{N_{t+1}}\mathbf {dc}_{t}^{\top}\hat{\mathbf u}_{t}^{\mathcal R}+\frac1{N_{t+1}}\mathbf s_t^{\top}\hat{\mathbf u}_{t}^{\mathcal Q}, \label{eq::utqmix}\\
        \mathbb E\left[\hat{u}_{\mathcal R_{t+1}}\right] &=
            \frac{N_{t+1}^{\mathcal N}}{2N_{t+1}}\mathbf {dc}_{t}^{\top}\hat{\mathbf u}_{t}^{\mathcal R}+\frac{N_{t+1}^{\mathcal N}}{2N_{t+1}}\mathbf {bc}_{t}^{\top}\hat{\mathbf u}_{t}^{\mathcal Q}+\frac{N_{t+1}^{\mathcal N}}{2N_{t+1}}\mathbf {ac}_{t}^{\top}\hat{\mathbf u}_{t}^{\mathcal R}+\frac1{N_{t+1}}\mathbf s_t^{\top}\hat{\mathbf u}_{t}^{\mathcal R}, \label{eq::utrmix}\\
        \mathbb E\bigl[\hat{u}_{\mathcal P^{\ast}_{t+1}}\bigr] &=
            \frac{3N_{t+1}^{\mathcal N}}{4N_{t+1}}\mathbf {dc}_{t}^{\top}\hat{\mathbf u}_{t}^{\mathcal R}
                + \frac{N_{t+1}^{\mathcal N}}{4N_{t+1}}\mathbf {bc}_{t}^{\top}\hat{\mathbf u}_{t}^{\mathcal Q}
                + \frac{N_{t+1}^{\mathcal N}}{4N_{t+1}}\mathbf {ac}_{t}^{\top}\hat{\mathbf u}_{t}^{\mathcal R}
                + \frac1{2N_{t+1}}\mathbf s_t^{\top}\left(\hat{\mathbf u}_{t}^{\mathcal R}+\hat{\mathbf u}_{t}^{\mathcal Q}\right).
    \end{align}
\end{Thm}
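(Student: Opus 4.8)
The plan is to combine the two pure-strategy breeding value formulas (Theorems~\ref{thm::ut} and~\ref{thm::utms}) by a linearity argument, since in the mixed setting each newly created entity still arises from exactly one of the four paths, and the expected breeding value contributed by each path is additive. The key observation is that the paths \emph{dam}, \emph{survival}, \emph{1b}, and \emph{4a} are mutually exclusive ways of populating $\mathcal P^{\ast}_{t+1}$, so I can decompose $\mathbb E[\hat u_{\mathcal Q_{t+1}}]$ and $\mathbb E[\hat u_{\mathcal R_{t+1}}]$ into their per-path contributions and simply sum them.

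First I would re-examine Remark~\ref{rmk::brokenu}, which reduces everything to computing the four expectations $\mathbb E[\hat u_{\mathcal N\mathcal Q_{t+1}}]$, $\mathbb E[\hat u_{\mathcal S\mathcal Q_{t+1}}]$, $\mathbb E[\hat u_{\mathcal N\mathcal R_{t+1}}]$, and $\mathbb E[\hat u_{\mathcal S\mathcal R_{t+1}}]$, after which Lemma~\ref{lem::genred} assembles them into $\mathbb E[\hat u_{\mathcal P^{\ast}_{t+1}}]$. Three of these four are unchanged from both pure cases: $\mathbb E[\hat u_{\mathcal N\mathcal Q_{t+1}}]=\mathbf{dc}_t^{\top}\hat{\mathbf u}_t^{\mathcal R}$ (dam path only), and the two survivor expectations $\mathbb E[\hat u_{\mathcal S\mathcal Q_{t+1}}]$ and $\mathbb E[\hat u_{\mathcal S\mathcal R_{t+1}}]$ depend solely on the survival path. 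These carry over verbatim from Lemmas~\ref{lem::alltheus} and~\ref{lem::alltheusms}.

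The only genuinely new computation is $\mathbb E[\hat u_{\mathcal N\mathcal R_{t+1}}]$. A new replacement queen $NR$ has expected breeding value $\tfrac12\mathbb E[\hat u_{Q(NR)}]+\mathbb E[\hat u_{\mathcal D}]$ by Lemma~\ref{lem::bvrepq}, and the first term is handled exactly as before via $\mathbb E[\hat u_{\mathcal N\mathcal Q_{t+1}}]$. The drone-group term $\mathbb E[\hat u_{\mathcal D}]$ splits according to whether $Q(NR)$ was inseminated (1b-path, contributing $\tfrac12\mathbf{bc}_t^{\top}\hat{\mathbf u}_t^{\mathcal Q}$ by Lemma~\ref{lem::bvinherits}\,\ref{item::rrr}) or mated on a station (4a-path, contributing $\tfrac12\mathbf{ac}_t^{\top}\hat{\mathbf u}_t^{\mathcal R}$ by Lemma~\ref{lem::bvdrgr} together with Lemma~\ref{lem::expdaughter}). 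Since under the mixing constraint $\mathbf 1_t^{\top}\mathbf{bc}_t+\mathbf 1_t^{\top}\mathbf{ac}_t=1$ (Remark~\ref{rmk::abone}) the fractions $bc_{Q,t}$ and $ac_{Q,t}$ directly record how often each queen supplies drones through each channel, the weighted average over all $NR\in\mathcal N\mathcal R_{t+1}$ gives
\[
    \mathbb E\left[\hat u_{\mathcal N\mathcal R_{t+1}}\right]
        = \tfrac12\mathbf{dc}_t^{\top}\hat{\mathbf u}_t^{\mathcal R}
        + \tfrac12\mathbf{bc}_t^{\top}\hat{\mathbf u}_t^{\mathcal Q}
        + \tfrac12\mathbf{ac}_t^{\top}\hat{\mathbf u}_t^{\mathcal R}.
\]

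Finally I would substitute these four expectations into Equations~\ref{eq::fstt}, \ref{eq::sstt}, and~\ref{eq::tstt} of Lemma~\ref{lem::genred} and collect terms to obtain Equations~\ref{eq::utqmix}--\ref{eq::utrmix} and the expression for $\mathbb E[\hat u_{\mathcal P^{\ast}_{t+1}}]$. I do not expect any serious obstacle here: the work is essentially bookkeeping, and the main point requiring care is verifying that the 1b- and 4a-contributions to $\mathbb E[\hat u_{\mathcal N\mathcal R_{t+1}}]$ genuinely add rather than interfere — which follows because each new queen uses exactly one fertilization channel, so summing the per-queen contributions partitions $\mathcal N\mathcal R_{t+1}$ cleanly. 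This is precisely the content of the sentence preceding the theorem: the respective 1b- and 4a-contributions from Theorems~\ref{thm::ut} and~\ref{thm::utms} are simply added.
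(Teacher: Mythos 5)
Your proposal is correct and follows exactly the route the paper takes: the paper's proof of Theorem~\ref{thm::utmix} simply notes that the derivation is as in Sections~\ref{sec::scibv} and~\ref{sec::imsbv}, with the 1b- and 4a-contributions from Theorems~\ref{thm::ut} and~\ref{thm::utms} added. Your write-up merely makes explicit the one genuinely new step — the case split in $\mathbb E\left[\hat u_{\mathcal N\mathcal R_{t+1}}\right]$ over the fertilization channel, justified by the partition of $\mathcal N\mathcal Q_{t+1}$ and the constraint $\mathbf 1_t^{\top}\mathbf{bc}_t+\mathbf 1_t^{\top}\mathbf{ac}_t=1$ — which the paper leaves implicit, and your resulting formulas agree with the theorem.
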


\subsubsection{Kinship development}\label{sec::mixkin}

Once again, we need to calculate the terms $k_{\mathcal N\mathcal Q_{t+1}, \mathcal N\mathcal Q_{t+1}}, ..., k_{\mathcal S\mathcal R_{t+1}, \mathcal S\mathcal R_{t+1}}$ in order to obtain the average genetic kinship in the next reduced generation, $k_{\mathcal P_{t+1}^{\ast},\mathcal P_{t+1}^{\ast}}$ (Remark~\ref{rmk::brokenk}\,\ref{item::brokenkiii}). The following Lemma gives the results.

\begin{Lem}\label{lem::allthekmix}
    We have
    \begin{align}
        k_{\mathcal N\mathcal Q_{t+1},\mathcal N\mathcal Q_{t+1}}
            &= \mathbf {dc}_t^{\top}\mathbf K_t^{\mathfrak W\mathfrak W}\mathbf {dc}_t
                + \frac1{N_{t+1}^{\mathcal N}}\mathbf {dc}_t^{\top}\mathrm{diag}\left(\mathbf K_t^{\mathcal R\mathcal R}\right)
                - \frac1{N_{t+1}^{\mathcal N}}\mathbf {dc}_t^{\top}\mathrm{diag}\left(\mathbf K_t^{\mathfrak W\mathfrak W}\right), \label{eq::knqnqmsix}\\
        k_{\mathcal N\mathcal Q_{t+1},\mathcal S\mathcal Q_{t+1}}
            &= \frac1{N_{t+1}^{\mathcal S}}\mathbf {dc}_{t}^{\top}\mathbf K_t^{\mathcal R\mathcal Q}\mathbf{s}_t, \label{eq::knqsqmix}\\
        k_{\mathcal S\mathcal Q_{t+1},\mathcal S\mathcal Q_{t+1}}
            &= \frac1{\left(N_{t+1}^{\mathcal S}\right)^2}\mathbf {s}_{t}^{\top}\mathbf K_t^{\mathcal Q\mathcal Q}\mathbf{s}_t,
                \label{eq::ksqsqmix}\\
        k_{\mathcal N\mathcal Q_{t+1},\mathcal N\mathcal R_{t+1}}
            &= \frac12\mathbf {dc}_t^{\top}\mathbf K_t^{\mathfrak W\mathfrak W}\mathbf {dc}_t
                + \frac12\mathbf {dc}_t^{\top}\mathbf K_t^{\mathcal R\mathcal Q}\mathbf {bc}_t
                + \frac12\mathbf {dc}_t^{\top}\mathbf K_t^{\mathfrak W\mathfrak W}\mathbf {ac}_t \nonumber \\
                &\hspace{1cm} + \frac1{2N_{t+1}^{\mathcal N}}\mathbf {dc}_t^{\top}\mathrm{diag}\left(\mathbf K_t^{\mathcal R\mathcal R}\right)
                - \frac1{2N_{t+1}^{\mathcal N}}\mathbf {dc}_t^{\top}\mathrm{diag}\left(\mathbf K_t^{\mathfrak W\mathfrak W}\right), \label{eq::knqnrmix}\\
        k_{\mathcal N\mathcal Q_{t+1},\mathcal S\mathcal R_{t+1}}
            &= \frac1{N_{t+1}^{\mathcal S}}\mathbf {dc}_{t}^{\top}\mathbf K_t^{\mathfrak W\mathfrak W}\mathbf{s}_t, \label{eq::knqsrmix}\\
        k_{\mathcal S\mathcal Q_{t+1},\mathcal N\mathcal R_{t+1}}
            &= \frac1{2N_{t+1}^{\mathcal S}}\mathbf {dc}_{t}^{\top}\mathbf K_t^{\mathcal R\mathcal Q}\mathbf{s}_t
                + \frac1{2N_{t+1}^{\mathcal S}}\mathbf {bc}_t^{\top}\mathbf K_t^{\mathcal Q\mathcal Q}\mathbf s_t
                + \frac1{2N_{t+1}^{\mathcal S}}\mathbf {ac}_t^{\top}\mathbf K_t^{\mathcal R\mathcal Q}\mathbf s_t, \label{eq::ksqnrmix}\\
        k_{\mathcal S\mathcal Q_{t+1},\mathcal S\mathcal R_{t+1}}
            &= \frac1{\left(N_{t+1}^{\mathcal S}\right)^2}\mathbf s_t^{\top}\mathbf K_t^{\mathcal Q\mathcal R}\mathbf s_t,
                \label{eq::ksqsrmix}\\
        k_{\mathcal N\mathcal R_{t+1},\mathcal N\mathcal R_{t+1}}
            &=\frac14\mathbf {dc}_t^{\top}\mathbf K_{t}^{\mathfrak W\mathfrak W}\mathbf{dc}_t
                + \frac12\mathbf {bc}_t^{\top}\mathbf K_{t}^{\mathcal Q\mathcal R}\mathbf{dc}_t
                + \frac12\mathbf {ac}_t^{\top}\mathbf K_{t}^{\mathfrak W\mathfrak W}\mathbf{dc}_t \nonumber\\
            &\hspace{1cm}+ \frac14\mathbf {bc}_t^{\top}\mathbf K_{t}^{\mathcal Q\mathcal Q}\mathbf{bc}_t
                + \frac12\mathbf {bc}_t^{\top}\mathbf K_{t}^{\mathcal Q\mathcal R}\mathbf{ac}_t
                + \frac14\mathbf {ac}_t^{\top}\mathbf K_{t}^{\mathfrak W\mathfrak W}\mathbf{ac}_t\nonumber\\
            &\hspace{1cm}- \frac1{4N_{t+1}^{\mathcal N}}\mathbf{dc}_t^{\top}
                \mathrm{diag}\left(\mathbf K_t^{\mathfrak W\mathfrak W}\right)
                - \frac1{4N_{t+1}^{\mathcal N}}\mathbf{bc}_t^{\top}\mathrm{diag}\left(\mathbf K_t^{\mathcal Q\mathcal Q}\right)\nonumber\\
            &\hspace{1cm}- \frac1{4N_{t+1}^{\mathcal N}}\mathbf{ac}_t^{\top}\mathrm{diag}\left(\mathbf K_t^{\mathfrak W\mathfrak W}\right)
                + \frac1{2N_{t+1}^{\mathcal N}}, \label{eq::knrnrmix}\\
        k_{\mathcal N\mathcal R_{t+1},\mathcal S\mathcal R_{t+1}}
            &= \frac1{2N_{t+1}^{\mathcal S}}\mathbf {dc}_{t}^{\top}\mathbf K_t^{\mathfrak W\mathfrak W}\mathbf{s}_t
                    +\frac1{2N_{t+1}^{\mathcal S}}\mathbf {bc}_t^{\top}\mathbf K_{t}^{\mathcal Q\mathcal R}\mathbf s_t
                    +\frac1{2N_{t+1}^{\mathcal S}}\mathbf {ac}_t^{\top}\mathbf K_{t}^{\mathfrak W\mathfrak W}\mathbf s_t,
                \label{eq::knrsrmix}\\
        k_{\mathcal S\mathcal R_{t+1},\mathcal S\mathcal R_{t+1}}
            &= \frac1{\left(N_{t+1}^{\mathcal S}\right)^2}\mathbf s_t^{\top}\mathbf K_t^{\mathcal R\mathcal R}\mathbf s_t.
                \label{eq::ksrsrmix}
    \end{align}
\end{Lem}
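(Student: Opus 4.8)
The plan is to reduce everything to the two pure cases already established in Lemma~\ref{lem::allthek} (single colony insemination) and Lemma~\ref{lem::allthekms} (isolated mating stations), exploiting the fact that in the mixed scheme every new queen still mates in exactly one way --- either via insemination or on a station --- so the 1b-path and 4a-path never overlap for a single mating event. First I would dispose of the six identities that involve only the dam path and the survival path, namely Equations~\ref{eq::knqnqmsix}, \ref{eq::knqsqmix}, \ref{eq::ksqsqmix}, \ref{eq::knqsrmix}, \ref{eq::ksqsrmix}, and~\ref{eq::ksrsrmix}. None of these reference how young queens are mated, so they are verbatim the corresponding identities of Lemma~\ref{lem::allthek} (equivalently Lemma~\ref{lem::allthekms}) and need no new argument.

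For the four remaining identities (Equations~\ref{eq::knqnrmix}, \ref{eq::ksqnrmix}, \ref{eq::knrnrmix}, \ref{eq::knrsrmix}), which all involve a newly hatched replacement queen $NR\in\mathcal N\mathcal R_{t+1}$, I would re-run the allele-drawing arguments of the pure cases while tracking the origin of the drone group $\mathcal D$ that the dam $Q(NR)$ mated with. The key observation is that $\mathcal D$ arises either via the 1b-path --- in which case it came directly from a queen $Q\in\mathcal Q_t$ and, by Equation~\ref{eq::kgd}, any allele of $\mathcal D$ is a random allele of $Q$, contributing a $\mathbf K_t^{\cdot\mathcal Q}$ block --- or via the 4a-path --- in which case it came from a station $\mathcal M_{Q,t}$ and, by Corollary~\ref{cor::imp} together with Lemma~\ref{lem::reldron}, any allele of $\mathcal D$ behaves like a random allele of $\mathcal W(Q)$, contributing a $\mathbf K_t^{\cdot\mathfrak W}=\mathbf K_t^{\cdot\mathcal R}$ block. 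Since the two events are mutually exclusive and a queen $Q$ occupies the paternal role with total frequency $bc_{Q,t}$ through insemination and $ac_{Q,t}$ through the station, with $\mathbf 1_t^{\top}\mathbf{bc}_t+\mathbf 1_t^{\top}\mathbf{ac}_t=1$ by Remark~\ref{rmk::abone}, the averaged paternal contribution is simply the \emph{sum} of the 1b-contribution from Lemma~\ref{lem::allthek} and the 4a-contribution from Lemma~\ref{lem::allthekms}, while the shared dam-path pieces are counted once. This at once yields Equations~\ref{eq::knqnrmix}, \ref{eq::ksqnrmix}, and~\ref{eq::knrsrmix}.

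The main obstacle is Equation~\ref{eq::knrnrmix}, where \emph{both} replacement queens carry a paternal allele. Expanding $k_{NR_1,NR_2}$ (for $NR_1\neq NR_2$) into the four quarter-terms $\tfrac14 k_{Q(NR_1),Q(NR_2)}+\tfrac14 k_{Q(NR_1),\mathcal D_2}+\tfrac14 k_{\mathcal D_1,Q(NR_2)}+\tfrac14 k_{\mathcal D_1,\mathcal D_2}$ forces a further split according to whether each of $\mathcal D_1,\mathcal D_2$ is of 1b- or 4a-type. The genuinely new feature, absent from both pure cases, is the \emph{mixed} pairing where one drone group is inseminated and the other comes from a station: resolving $k_{\mathcal D_1,\mathcal D_2}$ to $k_{S_1,\mathcal W(S_2)}$ via the same block reductions produces the cross-term $\tfrac12\mathbf{bc}_t^{\top}\mathbf K_t^{\mathcal Q\mathcal R}\mathbf{ac}_t$ (the factor $2$ absorbing the two orderings).

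Having averaged the off-diagonal value into the six quadratic forms, I would then add the per-queen self-kinship correction exactly as in parts~\ref{item::knrnr} of the two pure lemmas. By Remark~\ref{rmk::withcol}\,\ref{item::withcoli}, $k_{NR_1,NR_1}=\tfrac12+\tfrac12\,k_{Q(NR_1),\mathcal D_1}$, and $k_{Q(NR_1),\mathcal D_1}$ resolves to $k_{R(Q_1),S_1}$ or to $k_{\mathcal W(Q_1),\mathcal W(S_1)}$ according to the mating mode; subtracting the naive diagonal value leaves the correction $\tfrac12-\tfrac14 k_{\mathcal W(Q_1),\mathcal W(Q_1)}-\tfrac14(\text{paternal self-term})$. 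Because a single queen cannot mate simultaneously via both paths, the cross-term contributes nothing on the diagonal, and averaging over the fraction $\tfrac1{N_{t+1}^{\mathcal N}}$ of self-pairs yields the shared dam diagonal term in $\mathbf{dc}_t$, a 1b diagonal term $-\tfrac1{4N_{t+1}^{\mathcal N}}\mathbf{bc}_t^{\top}\mathrm{diag}(\mathbf K_t^{\mathcal Q\mathcal Q})$, a 4a diagonal term $-\tfrac1{4N_{t+1}^{\mathcal N}}\mathbf{ac}_t^{\top}\mathrm{diag}(\mathbf K_t^{\mathfrak W\mathfrak W})$, and a single constant $\tfrac1{2N_{t+1}^{\mathcal N}}$ (each new queen mates once). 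Assembling the off-diagonal average and this correction gives Equation~\ref{eq::knrnrmix} and completes the lemma; the delicate part is precisely this bookkeeping, ensuring the mixed cross-term enters off-diagonally but is correctly excluded from the diagonal correction.
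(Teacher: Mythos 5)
Your proposal is correct and follows essentially the same route as the paper, whose proof of Lemma~\ref{lem::allthekmix} simply states that one makes a case distinction between the 1b-path and the 4a-path and combines the arguments of Lemmas~\ref{lem::allthek} and~\ref{lem::allthekms}. You have merely written out the details the paper leaves implicit --- in particular the mixed cross-term $\tfrac12\mathbf{bc}_t^{\top}\mathbf K_t^{\mathcal Q\mathcal R}\mathbf{ac}_t$ arising from the two orderings in Equation~\ref{eq::knrnrmix} and the observation that, since each new queen mates by exactly one mode, no mixed term enters the diagonal correction --- and this bookkeeping is accurate.
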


\begin{proof}
    The proof of this Lemma goes in full analogy with the proofs of Lemmas~\ref{lem::allthek} and~\ref{lem::allthekms}. Whenever the proof of an equation in Lemma~\ref{lem::allthek} considers an 1b-path, the proof of the corresponding equation in Lemma~\ref{lem::allthekms} considers an 1a path. In order to show the corresponding equation in the present Lemma~\ref{lem::allthekmix}, one simply has to make a case distinction, keeping in mind that now both paths are possible.
\end{proof}

We thus obtain
\begin{Thm}\label{thm::burnermix}
    We have
    \begin{align}
        k_{\mathcal P^{\ast}_{t+1},\mathcal P^{\ast}_{t+1}}
            &= \left(\frac{N_{t+1}^{\mathcal N}}{4N_{t+1}}\right)^2\left(
                9\cdot\mathbf{dc}_t^{\top}\mathbf K_t^{\mathfrak W\mathfrak W}\mathbf {dc}_t
                    + 6\cdot\mathbf{dc}_t^{\top}\mathbf K_t^{\mathcal R\mathcal Q}\mathbf {bc}_t
                    + 6\cdot\mathbf{dc}_t^{\top}\mathbf K_t^{\mathfrak W\mathfrak W}\mathbf {ac}_t\right.\nonumber\\
            &\hspace{4.5cm}\left.+ \mathbf{bc}_t^{\top}\mathbf K_t^{\mathcal Q\mathcal Q}\mathbf {bc}_t
                    + 2\mathbf{bc}_t^{\top}\mathbf K_t^{\mathcal Q\mathcal R}\mathbf {ac}_t
                    + \mathbf{ac}_t^{\top}\mathbf K_t^{\mathfrak W\mathfrak W}\mathbf {ac}_t
                \right) \nonumber \\
            &\hspace{0.5cm} + \frac{N_{t+1}^{\mathcal N}}{\left(4N_{t+1}\right)^2}\left(
                8\cdot\mathbf{dc}_t^{\top}\mathrm{diag}\left(\mathbf K_t^{\mathcal R\mathcal R}\right)
                    - 9\cdot\mathbf{dc}_t^{\top}\mathrm{diag}\left(\mathbf K_t^{\mathfrak W\mathfrak W}\right)\right. \nonumber \\
                    &\hspace{6cm}\left.- \mathbf{bc}_t^{\top}\mathrm{diag}\left(\mathbf K_t^{\mathcal Q\mathcal Q}\right)
                        - \mathbf{ac}_t^{\top}\mathrm{diag}\left(\mathbf K_t^{\mathfrak W\mathfrak W}\right)
                \right) \nonumber \\
            &\hspace{0.5cm} + \frac{N_{t+1}^{\mathcal N}}{\left(2N_{t+1}\right)^2}\left(
                3\cdot\mathbf{dc}_t^{\top}\mathbf K_t^{\mathcal R\mathcal Q}\mathbf s_t
                    + 3\cdot\mathbf{dc}_t^{\top}\mathbf K_t^{\mathfrak W\mathfrak W}\mathbf s_t
                    + \mathbf{bc}_t^{\top}\mathbf K_t^{\mathcal Q\mathcal Q}\mathbf s_t
                    + \mathbf{bc}_t^{\top}\mathbf K_t^{\mathcal Q\mathcal R}\mathbf s_t\right.\nonumber\\
                    &\hspace{8cm}\left.+ \mathbf{ac}_t^{\top}\mathbf K_t^{\mathfrak W\mathfrak W}\mathbf s_t
                    + \mathbf{ac}_t^{\top}\mathbf K_t^{\mathcal R\mathcal Q}\mathbf s_t
                \right) \nonumber\\
            &\hspace{0.5cm} + \frac1{4N_{t+1}^2}\mathbf s_t^{\top}\left(
                    \mathbf K_t^{\mathcal Q\mathcal Q} + 2\cdot\mathbf K_t^{\mathcal Q\mathcal R} + \mathbf K_t^{\mathcal R\mathcal R}
                \right)\mathbf s_t
                + \frac{N^{\mathcal N}_{t+1}}{8N_{t+1}^2}.
    \end{align}
\end{Thm}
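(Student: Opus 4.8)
The plan is to follow exactly the same substitution chain used to establish Theorems~\ref{thm::burner} and~\ref{thm::burnerms}, only now carrying all four contribution vectors $\mathbf{dc}_t$, $\mathbf{bc}_t$, $\mathbf{ac}_t$, and $\mathbf s_t$ simultaneously. First I would take the ten average kinships supplied by Lemma~\ref{lem::allthekmix} and insert them into the three weighted-average identities of Remark~\ref{rmk::brokenk}\,\ref{item::tenii}, producing closed expressions for $k_{\mathcal Q_{t+1},\mathcal Q_{t+1}}$, $k_{\mathcal Q_{t+1},\mathcal R_{t+1}}$, and $k_{\mathcal R_{t+1},\mathcal R_{t+1}}$ in terms of the blocks of $\mathbf K_t$. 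These three quantities are then substituted into Equation~\ref{eq::kng}, namely $k_{\mathcal P^{\ast}_{t+1},\mathcal P^{\ast}_{t+1}}=\tfrac14 k_{\mathcal Q_{t+1},\mathcal Q_{t+1}}+\tfrac12 k_{\mathcal Q_{t+1},\mathcal R_{t+1}}+\tfrac14 k_{\mathcal R_{t+1},\mathcal R_{t+1}}$, and the result is regrouped into the quadratic block, the diagonal-correction block, the survivor-mixing block, and the additive constant displayed in the statement.

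The computation is organized most cleanly by exploiting the superposition visible across the three theorems. Every term originating purely from the dam path or the survival path -- in particular $9\,\mathbf{dc}_t^{\top}\mathbf K_t^{\mathfrak W\mathfrak W}\mathbf{dc}_t$, the diagonal pieces in $\mathbf{dc}_t$, the survivor forms $\mathbf s_t^{\top}(\cdots)\mathbf s_t$, and the constant $\tfrac{N^{\mathcal N}_{t+1}}{8N_{t+1}^2}$ -- is identical to its counterpart in Theorems~\ref{thm::burner} and~\ref{thm::burnerms} and appears exactly once. The $\mathbf{bc}_t$-dependent terms are then inherited verbatim from the single-colony case (Theorem~\ref{thm::burner}) and the $\mathbf{ac}_t$-dependent terms verbatim from the mating-station case (Theorem~\ref{thm::burnerms}). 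Hence essentially the only genuinely new contribution to track is the mixed cross-term, which arises when one drawn allele descends via the 1b-path (so its drone group reduces to a single queen, contributing through block $\mathbf K_t^{\mathcal Q\mathcal Q}$) while the other descends via the 4a-path (so its drone group reduces to a worker group, contributing through block $\mathbf K_t^{\mathcal R\mathcal R}$); their interaction lands in the off-diagonal block $\mathbf K_t^{\mathcal Q\mathcal R}$. This is precisely the summand $\tfrac12\,\mathbf{bc}_t^{\top}\mathbf K_t^{\mathcal Q\mathcal R}\mathbf{ac}_t$ appearing in Equation~\ref{eq::knrnrmix}, which -- entering $k_{\mathcal R_{t+1},\mathcal R_{t+1}}$ with factor $(N_{t+1}^{\mathcal N}/N_{t+1})^2$ and then Equation~\ref{eq::kng} with weight $\tfrac14$ -- propagates to the term $2\,\mathbf{bc}_t^{\top}\mathbf K_t^{\mathcal Q\mathcal R}\mathbf{ac}_t$ inside the leading $\bigl(\tfrac{N_{t+1}^{\mathcal N}}{4N_{t+1}}\bigr)^2$ bracket.

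The main obstacle is therefore not conceptual but purely a matter of bookkeeping: one must keep track of the prefactors $\bigl(\tfrac{N_{t+1}^{\mathcal N}}{N_{t+1}}\bigr)^2$, $\tfrac{N_{t+1}^{\mathcal N}}{N_{t+1}^2}$, and $\tfrac1{N_{t+1}^2}$ that the three weighting steps attach to each of the ten kinships, and then collect the many bilinear forms across the blocks $\mathbf K_t^{\mathfrak W\mathfrak W}$, $\mathbf K_t^{\mathcal Q\mathcal Q}$, $\mathbf K_t^{\mathcal Q\mathcal R}$, $\mathbf K_t^{\mathcal R\mathcal Q}$, and $\mathbf K_t^{\mathcal R\mathcal R}$ without error. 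The superposition principle above doubles as a strong consistency check: setting $\mathbf{ac}_t=\mathbf 0$ must recover Theorem~\ref{thm::burner} and setting $\mathbf{bc}_t=\mathbf 0$ must recover Theorem~\ref{thm::burnerms}, which already pins down every coefficient except that of the single cross-term, and the latter is fixed by the symmetric appearance of $\mathbf{bc}_t$ and $\mathbf{ac}_t$ in the paternal quadratic form of $k_{\mathcal N\mathcal R_{t+1},\mathcal N\mathcal R_{t+1}}$.
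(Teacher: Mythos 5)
Your proposal is correct and follows essentially the same route as the paper, which likewise obtains Theorem~\ref{thm::burnermix} by inserting the ten kinships of Lemma~\ref{lem::allthekmix} into the weighted averages of Remark~\ref{rmk::brokenk} and then into Equation~\ref{eq::kng}; your superposition bookkeeping (terms inherited verbatim from Theorems~\ref{thm::burner} and~\ref{thm::burnerms}, plus the single new cross-term $\tfrac12\,\mathbf{bc}_t^{\top}\mathbf K_t^{\mathcal Q\mathcal R}\mathbf{ac}_t$ from Equation~\ref{eq::knrnrmix} propagating with the correct factor to $2\,\mathbf{bc}_t^{\top}\mathbf K_t^{\mathcal Q\mathcal R}\mathbf{ac}_t$) is a sound and accurate organization of that computation. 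One parenthetical gloss is slightly off -- the 4a-side drone group reduces to the 4a-queen's worker group and thus contributes through $\mathbf K_t^{\mathfrak W\mathfrak W}$ (not $\mathbf K_t^{\mathcal R\mathcal R}$) in the pure $\mathbf{ac}_t$ terms -- but your identification of the cross-term block $\mathbf K_t^{\mathcal Q\mathcal R}$ and all coefficients is nonetheless correct.
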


Finally, we can formulate the task of OCS for a honeybee population with both instrumental insemination and mating on isolated mating stations.

\begin{Task}\label{task::mix}
    Given a generation $\mathcal P_t=\mathcal Q_t\sqcup\mathfrak W_t\sqcup\mathcal R_t$ of honeybee colonies, and
        \begin{itemize}
            \item vectors $\hat{\mathbf u}_t^{\mathcal Q}\in\mathbb R^{\mathcal Q_t}$ and $\hat{\mathbf u}_t^{\mathcal R}=\hat{\mathbf u}_t^{\mathfrak W}\in\mathbb R^{\mathcal R_t}\cong\mathbb R^{\mathfrak W_t}$ of estimated breeding values,

            \item a survival vector $\mathbf s_t\in\{0,1\}^{\mathcal Q_t}(\cong\{0,1\}^{\mathcal R_t}\cong\{0,1\}^{\mathfrak W_t})$,

            \item a symmetric and positive definite kinship matrix $\mathbf K_t\in\mathbb R^{\mathcal P_t\times\mathcal P_t}$ that falls into the blocks
                \[\mathbf K_t =
                    \begin{pmatrix}
                    \mathbf K_t^{\mathcal Q\mathcal Q}  & \mathbf K_t^{\mathcal Q\mathfrak W}  & \mathbf K_t^{\mathcal Q\mathcal R}\\
                    \mathbf K_t^{\mathfrak W\mathcal Q} & \mathbf K_t^{\mathfrak W\mathfrak W} & \mathbf K_t^{\mathfrak W\mathcal R}\\
                    \mathbf K_t^{\mathcal R\mathcal Q}  & \mathbf K_t^{\mathcal R\mathfrak W}  & \mathbf K_t^{\mathcal R\mathcal R}
                    \end{pmatrix}.\]
            and fulfills the properties listed in Remark~\ref{rmk::onk},

            \item the required number of newly created colonies of the next generation, $N_{t+1}^{\mathcal N}$,

            \item and a maximum acceptable kinship level $k_{t+1}^{\ast}$,
        \end{itemize}
        let $N_{t+1}:=N_{t+1}^{\mathcal N}+\mathbf 1_t^{\top}\mathbf s_t$ and maximize the function
        \begin{align*}
            \mathbb E\bigl[\hat u_{\mathcal P^{\ast}_{t+1}}\bigr]:
                \mathbb R_{\geq0}^{\mathcal Q_t}\oplus\mathbb R_{\geq0}^{\mathcal Q_t}\oplus\mathbb R_{\geq0}^{\mathcal Q_t}
                    &\to \mathbb R,\\
                \mathbf {dc}_{t}\oplus\mathbf {bc}_{t}\oplus\mathbf {ac}_{t} &\mapsto
                \frac{3N_{t+1}^{\mathcal N}}{4N_{t+1}}\mathbf {dc}_{t}^{\top}\hat{\mathbf u}_{t}^{\mathcal R}
                + \frac{N_{t+1}^{\mathcal N}}{4N_{t+1}}\mathbf {bc}_{t}^{\top}\hat{\mathbf u}_{t}^{\mathcal Q}\\
                &\hspace{3cm}+ \frac{N_{t+1}^{\mathcal N}}{4N_{t+1}}\mathbf {ac}_{t}^{\top}\hat{\mathbf u}_{t}^{\mathcal R}
                + \frac1{2N_{t+1}}\mathbf s_t^{\top}\left(\hat{\mathbf u}_{t}^{\mathcal R}+\hat{\mathbf u}_{t}^{\mathcal Q}\right)
        \end{align*}

        under the constraints
            \[\mathbf 1_t^{\top}\mathbf {dc}_t=1,\]
            \[\mathbf 1_t^{\top}\left(\mathbf {bc}_t+\mathbf {ac}_t\right)=1,\]
        and
            \[k_{\mathcal P^{\ast}_{t+1},\mathcal P^{\ast}_{t+1}} \leq k_{t+1}^{\ast},\]
        where $k_{\mathcal P^{\ast}_{t+1},\mathcal P^{\ast}_{t+1}}$ denotes the term described in Theorem~\ref{thm::burnermix}.
\end{Task}

\begin{Rmk}\label{rmk::convert}
    Note that Task~\ref{task::mix} turns into Task~\ref{task::sci} if one imposes the additional condition
        \[\mathbf 1_t^{\top}\mathbf {ac}_t=0\]
    and into Task~\ref{task::ms} if one imposes
        \[\mathbf 1_t^{\top}\mathbf {bc}_t=0.\]
    We may therefore see Tasks~\ref{task::sci} and~\ref{task::ms} as special cases of Task~\ref{task::mix}.
\end{Rmk}

\subsection{OCS with limited eligibility and other variations}

In all the concepts of OCS we have derived so far, it has been assumed that all members of a generation $\mathcal P_t$ (all members of $\mathcal Q_t$ in case of honeybees) are eligible for reproduction. In practice, this is not generally the case. In horses, for example, it is not recommended to foal a mare before she is three years old \citep{panzani07}. Thus, if one considers overlapping generations with time steps of one year, for the first three years of her life, a filly will be part of the population but her contribution to the next generation has to be zero.

In honeybees, one may want to impose a rule that queens have to undergo a complete performance test before they can be selected \citep{dib21}. Thus, a one-year-old queen will be part of the population but should not yet reproduce. Or, as another example, assume that a queen's mating failed in the sense that she likely mated with drones of the wrong subspecies. To avoid hybridization in the population, one would not want such a queen to reproduce anymore via the dam path or the 4a-path. However, it is perceivable to still use such a queen as a 1b-queen because the unsuccessful mating does not affect her drone production.

\begin{Not}\label{not::elig}
    \begin{enumerate}[label = (\roman*)]
        \item \label{item::eligone} In diploids, for each individual $I\in\mathcal P_t$ we introduce its \emph{eligibility} for reproduction as the binary value
            \[e_{I,t}=\begin{cases}
                          1,&\text{if }I\text{ can currently reproduce,}\\
                          0,&\text{otherwise}
                      \end{cases}.
            \]
        This give rise to the eligibility vector $\mathbf e_t\in\{0,1\}^{\mathcal P_t}$. We further introduce the non-eligibility vector $\mathbf {ne}_t$ as
            \[\mathbf {ne}_t=\mathbf 1_t-\mathbf e_t\in\{0,1\}^{\mathcal P_t}.\]

        \item \label{item::hbelig} In honeybees, a queen $Q\in\mathcal Q_t$ can reproduce via three paths (dam path, 1b-path, and 4a-path) and for all three paths, different eligibility criteria may be in place. Thus, in analogy to~\ref{item::eligone} we define separate different binary eligibility values $e^d_{Q,t},e^b_{Q,t},e^a_{Q,t}\in\{0.1\}$ for the dam path, 1b-path and 4a-path, respectively. Accordingly, we obtain three eligibility vectors $\mathbf e^d_t,\mathbf e^b_t,\mathbf e^a_t\in\{0,1\}^{\mathcal Q_t}$ and three non-eligibility vectors $\mathbf {ne}^d_t,\mathbf {ne}^b_t,\mathbf {ne}^a_t\in\{0,1\}^{\mathcal Q_t}$.

        \item In order to incorporate these limited eligibilities of individuals or queens for reproduction, one has to add the further condition
            \[\mathbf {ne}_t^{\top}\mathbf c_t=0\]
        in the case of diploids, and the conditions
            \[\left(\mathbf {ne}^d_t\right)^{\top}\mathbf {dc}_t=\left(\mathbf {ne}^b_t\right)^{\top}\mathbf {bc}_t=\left(\mathbf {ne}^a_t\right)^{\top}\mathbf {ac}_t=0\]
        in the honeybee case.
    \end{enumerate}
\end{Not}

\begin{Rmk}\label{rmk::sel4a}
    A special limitation of eligibility comes in case of isolated mating stations. Without eligibility restrictions, all queens of $\mathcal Q_t$ could in general serve as a 4a-queen, resulting in $N_t$ isolated mating stations. In practice, the number of maintained physical mating stations is generally a predefined number $N^{\mathfrak M}_t\ll N_t$. In addition to general (age-related) eligibility criteria, one would therefore like to impose a condition that at most $N^{\mathfrak M}_t$ entries of $\mathbf {ac_t}$ may be non-zero. However, this constraint turns out to be highly nonlinear (and non-quadratic) and much more complicated than the other constraints we imposed on the maximization problems in our tasks. Therefore, what one will have to do in practice is to preselect the $N^{\mathfrak M}_t$ 4a-queens out of $\mathcal Q_t$ and declare only these queens as eligible for the 4a-path. OCS will thus not tell which queens should be selected as 4a-queens but only how often the respective mating stations of otherwise selected 4a-queens should be frequented. The question how to select the 4a-queens deserves further investigation. Obvious possibilities are to choose the $N^{\mathfrak M}_t$ queens with the highest estimated breeding values, or to do a \emph{within-family selection} approach by avoiding to select sister queens as 4a-queens. Another potentially interesting possibility is to first solve the OCS task without restriction on the number of mating stations and then solve it again but declare only the $N^{\mathfrak M}_t$ queens eligible that were attributed the greatest contribution values in the unrestricted problem.
\end{Rmk}

All the tasks we have derived so far can be altered in multiple ways. For example, \citet{wellmann19optimum} follows a slightly different approach for diploids, predefining different numbers of offspring for different age\,$\times$\,sex-classes. In the remainder of this chapter, we want to present and motivate two noteworthy alternatives for honeybee-specific OCS.

\begin{Rmk}\label{rmk::alternatives}
    \begin{enumerate}[label = (\roman*)]
        \item In our approach of maximizing $\mathbb E\bigl[\hat u_{\mathcal P_{t+1}^{\ast}}\bigr]$, we weighed the average breeding values of queens (i.\,e. $\mathbb E\left[\hat u_{\mathcal Q_{t+1}}\right]$) and replacement queens (i.\,e. $\mathbb E\left[\hat u_{\mathcal R_{t+1}}\right]$) equally. If mating control is organized solely via single colony insemination, this appears justified. Queens can pass on their own breeding value via the 1b-path and the breeding value of their replacement queens via the dam path and both paths should be seen as equally important. If, however, mating control is organized with isolated mating stations, queens only pass on the breeding values of their replacement queens, both via the dam path and via the 4a-path. Thus, it seems reasonable to maximize $\mathbb E\left[\hat u_{\mathcal R_{t+1}}\right]$ instead of $\mathbb E\bigl[\hat u_{\mathcal P_{t+1}^{\ast}}\bigr]$. For mixed strategies, a weighted average between $\mathbb E\left[\hat u_{\mathcal R_{t+1}}\right]$ and $\mathbb E\bigl[\hat u_{\mathcal P_{t+1}^{\ast}}\bigr]$ could be chosen for maximization.

        A counterargument against this approach might be that also with mating stations, phenotypes are still influenced by (the queen effect of) the queen's breeding value and (the worker effect of) the worker group's breeding value.

        \item \label{item::altii} An alternative for the restrictions on average kinships is to replace the single condition
            \[k_{\mathcal P_{t+1}^{\ast},\mathcal P_{t+1}^{\ast}}\leq k_{t+1}^{\ast}\]
        by two separate conditions for the kinships between queens and replacement queens:
        \begin{align*}
            k_{\mathcal Q_{t+1},\mathcal Q_{t+1}}&\leq k_{t+1}^{\mathcal Q,\ast},\\
            k_{\mathcal R_{t+1},\mathcal R_{t+1}}&\leq k_{t+1}^{\mathcal R,\ast}.
        \end{align*}
        for acceptable kinship values $k_{t+1}^{\mathcal Q,\ast}$ and $k_{t+1}^{\mathcal R,\ast}$. By doing so, one drops limitations for the kinships between queens and replacement queens, the significance of which seems unclear.

        \item The OCS tasks corresponding to these alternatives can easily be formulated. We do not see ourselves able to give a definitive judgment on what is the best approach to follow. Likely, it is best to test the alternatives against each other in simulation studies and then opt for the variant with the most promising results.
    \end{enumerate}
\end{Rmk}

\section{Solving the tasks} \label{sec::solvetask}
\subsection{General form} \label{sec::genform}

Finally, we turn to the question of how to solve the different tasks. We start with the observation that all tasks we have introduced have the following form:

\begin{Task}\label{task::general}
    Given a dimension number $N$, and
    \begin{itemize}
        \item a vector $\tilde{\mathbf a}\in\mathbb R^N$ and a scalar $\tilde b\in\mathbb R$,

        \item a number $n\in\mathbb N$, a family of vectors $\tilde{\mathbf e}_i\in\mathbb R^N$ for $1\leq i\leq n$ and a family of scalars $\tilde d_i$ for $1\leq i \leq n$,

        \item a symmetric matrix $\tilde{\mathbf K}\in\mathbb R^{N\times N}$, a vector $\tilde{\mathbf m}\in\mathbb R^N$, and a scalar $\tilde k^{\ast}$,
    \end{itemize}
    maximize the function
    \begin{align*}
        \mathbb R_{\geq0}^{N}\to \mathbb R,\quad \tilde{\mathbf c} \mapsto \tilde{\mathbf a}^{\top}\tilde{\mathbf c} + \tilde b
    \end{align*}

    under the constraints
        \[\tilde{\mathbf e}_i^{\top}\tilde{\mathbf {c}}=\tilde d_i,\quad \text{for }1\leq i\leq n\]
    and
        \[\tilde{\mathbf c}^{\top}\tilde{\mathbf K}\tilde{\mathbf c}+\tilde{\mathbf m}^{\top}\tilde{\mathbf c}\leq\tilde k^{\ast}.\]
\end{Task}

\begin{Rmk}\label{rmk::concretetask}
    \begin{enumerate}[label = (\roman*)]
        \item Our formulation of Task~\ref{task::mix} for OCS with a mixed strategy of mating control can be brought in the general form of Task~\ref{task::general} by choosing $N=3N_t$ and letting $\tilde{\mathbf c}=\begin{pmatrix}\mathbf {dc}_t\\\mathbf {bc}_t\\\mathbf {ac}_t\\\end{pmatrix}\in\mathbb R^{3N_t}$ with the following choices of the remaining variables:
        \begin{align*}
            \tilde{\mathbf a} &= \frac{N^{\mathcal N}_{t+1}}{4N_{t+1}}
                                 \begin{pmatrix}3\hat{\mathbf u}_t^{\mathcal R}\\\hat{\mathbf u}_t^{\mathcal Q}\\\hat{\mathbf u}_t^{\mathcal R}\end{pmatrix}\in\mathbb R^{3N_{t}},\\
            \tilde{b} &= \frac1{2N_{t+1}}\mathbf s_t^{\top}\left(\hat{\mathbf u}_{t}^{\mathcal R}+\hat{\mathbf u}_{t}^{\mathcal Q}\right)\in \mathbb R,\\
            n &= 2,\\
            \tilde{\mathbf e}_1 &= \begin{pmatrix}\mathbf 1_t\\\mathbf 0_t\\\mathbf 0_t\end{pmatrix}\in\mathbb R^{3N_t},\quad 
                \tilde{\mathbf e}_2 = \begin{pmatrix}\mathbf 0_t\\\mathbf 1_t\\\mathbf 1_t\end{pmatrix}\in\mathbb R^{3N_t},\\
            \tilde d_1 &= \tilde d_2 = 1,\\
            \tilde{\mathbf K} &= \left(\frac{N^{\mathcal N}_{t+1}}{4N_{t+1}}\right)^2
                                   \begin{pmatrix}
                                       9\mathbf K_{t}^{\mathfrak W\mathfrak W} & 3\mathbf K_{t}^{\mathcal R\mathcal Q} & 3\mathbf K_{t}^{\mathfrak W\mathfrak W}\\
                                       3\mathbf K_{t}^{\mathcal Q\mathcal R} & \mathbf K_{t}^{\mathcal Q\mathcal Q} & \mathbf K_{t}^{\mathcal Q\mathcal R} \\
                                       3\mathbf K_{t}^{\mathfrak W\mathfrak W} & \mathbf K_{t}^{\mathcal R\mathcal Q} & \mathbf K_{t}^{\mathfrak W\mathfrak W}  
                                   \end{pmatrix} \in\mathbb R^{3N_t\times3N_t},\\
            \tilde{\mathbf m} &= \frac{N^{\mathcal N}_{t+1}}{\left(4N_{t+1}\right)^2}
                       \begin{pmatrix}
                           12\left(\mathbf K_t^{\mathcal R\mathcal Q}+\mathbf K_t^{\mathfrak W\mathfrak W}\right)\mathbf s_t+8\cdot\mathrm{diag}\left(\mathbf K_t^{\mathcal R\mathcal R}\right)
                                       - 9\cdot\mathrm{diag}\left(\mathbf K_t^{\mathfrak W\mathfrak W}\right)\\
                           4\left(\mathbf K_t^{\mathcal Q\mathcal Q}+\mathbf K_t^{\mathcal Q\mathcal R}\right)\mathbf s_t - \mathrm{diag}\left(\mathbf K_t^{\mathcal Q\mathcal Q}\right)\\
                           4\left(\mathbf K_t^{\mathcal R\mathcal Q}+\mathbf K_t^{\mathfrak W\mathfrak W}\right)\mathbf s_t - \mathrm{diag}\left(\mathbf K_t^{\mathfrak W\mathfrak W}\right)
                       \end{pmatrix}\\&\quad\in\mathbb R^{3N_t}\\
            \tilde k^{\ast} &= k_{t+1}^{\ast}
                                 - \frac1{4N_{t+1}^2}
                                     \mathbf s_t^{\top}\left(\mathbf K_t^{\mathcal Q\mathcal Q}+2\mathbf K_t^{\mathcal Q\mathcal R}+\mathbf K_t^{\mathcal R\mathcal R}\right)\mathbf s_t 
                                 - \frac{N^{\mathcal N}_{t+1}}{8N_{t+1}^2}\in\mathbb R.
        \end{align*}

        \item As noted in Remark~\ref{rmk::convert}, in case one relies on only one mode of mating control, one can simply increase $n$ by one and add the variables 
            \[\tilde{\mathbf e}_3=\begin{pmatrix}\mathbf 0_t\\\mathbf 1_t\\\mathbf 0_t\end{pmatrix}\quad\text{and}\quad\tilde d_3= 0\]
        or
            \[\tilde{\mathbf e}_3=\begin{pmatrix}\mathbf 0_t\\\mathbf 0_t\\\mathbf 1_t\end{pmatrix}\quad\text{and}\quad\tilde d_3= 0\]
        depending on which mating control strategy is followed. However, numerically more feasible appears to directly translate Tasks~\ref{task::sci} and~\ref{task::ms} into the form of Task\ref{task::general}.
        
        \item In case there are non-eligible queens for the different paths, one may once more increase $n$ and let
            \[\tilde{\mathbf e}_4=\begin{pmatrix}\mathbf {ne}_t^d\\\mathbf {ne}_t^b\\\mathbf {ne}_t^a\end{pmatrix}\quad\text{and}\quad\tilde d_4= 0,\]
        or, numerically smarter, one may let $\mathbf {dc}_t$, $\mathbf{bc}_t$, and $\mathbf{ac}_t$ only have entries for the respective eligible queens and restrict the other vectors and matrices accordingly.
        
        \item Note that in Task~\ref{task::general}, we did not demand the matrix $\tilde{\mathbf K}$ to be positive definite. Indeed, by our choice of $\tilde{\mathbf K}$, the matrix is only semi-definite. This follows from the fact that the matrix $\begin{pmatrix}9&3\\3&1\end{pmatrix}\in\mathbb R^{2\times2}$ is positive semi-definite and the matrix $\begin{pmatrix}
        9\mathbf K_{t}^{\mathfrak W\mathfrak W} & 3\mathbf K_{t}^{\mathfrak W\mathfrak W}\\3\mathbf K_{t}^{\mathfrak W\mathfrak W} & \mathbf K_{t}^{\mathfrak W\mathfrak W}\end{pmatrix}=\begin{pmatrix}9&3\\3&1\end{pmatrix}\otimes \mathbf K_{t}^{\mathfrak W\mathfrak W}$ is a sub-matrix of matrix $\tilde{\mathbf K}$.
        
        \item Further note that while Task~\ref{task::general} allows an arbitrary number $n$ of linear constraints, there is only one quadratic constraint, namely $\tilde{\mathbf c}^{\top}\tilde{\mathbf K}\tilde{\mathbf c}+\tilde{\mathbf m}^{\top}\tilde{\mathbf c}\leq\tilde k^{\ast}$. Thus, the alternative discussed in Remark~\ref{rmk::alternatives}\,\ref{item::altii} to separately restrict kinships among queens and replacement queens does not directly fall into the scope of the general formulation of Task~\ref{task::general}. 
    \end{enumerate}
\end{Rmk}

\subsection{Implementation} \label{sec::implement}

Several variations of OCS for other farm animals are bundled in the R package 'optiSel' \citep{wellmann19optimum}. But despite the remarkable flexibility of this package, it is not suitable to cover OCS for honeybees as it was derived here. The underlying package behind 'optiSel' is the package 'optiSolve' \citep{wellmann21}. This package allows in general to solve tasks in the form of Task~\ref{task::general}. We wrote the attached R script \texttt{honeybee\_ocs.r} using the package 'optiSolve' to implement an OCS for honeybees. 
The script can be run via the command 
\begin{lstlisting}
    Rscript --vanilla honeybee_ocs.r <arguments>
\end{lstlisting}
where \texttt{<arguments>} specifies the necessary arguments passed to the script. In general, \texttt{<arguments>} consists of up to nine components, named
\begin{itemize}
    \item \texttt{-{}-N\_N}, 
    \item \texttt{-{}-delta\_k},
    \item \texttt{-{}-curr\_gen},
    \item \texttt{-{}-K\_QQ},
    \item \texttt{-{}-K\_RR}, 
    \item \texttt{-{}-K\_QR},
    \item \texttt{-{}-K\_WW},
    \item \texttt{-{}-output\_numbers}, and 
    \item \texttt{-{}-output\_stats}.
\end{itemize}
A possible valid call of the script could thus look as follows: 
\begin{lstlisting}
    Rscript --vanilla honeybee_ocs.r            \
        --N_N 400                               \
        --delta_k 0.5                           \
        --curr_gen ./current_generation.tsv     \
        --K_QQ ./K_QQ.tsv                       \
        --K_RR ./K_RR.tsv                       \
        --K_QR ./K_QR.tsv                       \
        --K_WW ./K_WW.tsv                       \
        --output_numbers ./output_numbers.tsv   \
        --output_stats ./output_stats.tsv
\end{lstlisting}
Below, we will explain these nine arguments in detail.

\subsubsection{Input to \texttt{honeybee\_ocs.r}}

\paragraph{\texttt{-{}-N\_N}}
After \texttt{-{}-N\_N}, the desired value for $N^{\mathcal N}_{t+1}$ is specified, i.\,e. the number of queens that are to be newly produced for generation $\mathcal P_{t+1}$. The value has to be a positive integer. There is no default value, the script will produce an error message if this value is not provided.

\paragraph{\texttt{-{}-delta\_k}}
The value provided after \texttt{-{}-delta\_k} is used to determine the maximum allowed average kinship $k_{t+1}^{\ast}$ in the next reduced generation $\mathcal P_{t+1}^{\ast}$. However, it is not the value $k_{t+1}^{\ast}$ that is to be provided here, but percentage by which the panmictic index $\left(1-k_{\mathcal P_t^{\ast},\mathcal P_t^{\ast}}\right)$ may be reduced. Thus, if a value $\Delta k_{\mathcal P_t^{\ast},\mathcal P_t^{\ast}}$ is provided for \texttt{-{}-delta\_k}, the maximum allowed average kinship for the reduced population $\mathcal P_{t+1}^{\ast}$ is set to
\[k_{t+1}^{\ast}:=k_{\mathcal P_t^{\ast},\mathcal P_t^{\ast}} + \frac {\Delta k_{\mathcal P_t^{\ast},\mathcal P_t^{\ast}}}{100}\cdot\left(1-k_{\mathcal P_t^{\ast},\mathcal P_t^{\ast}}\right).\]
If no \texttt{-{}-delta\_k} is provided, the default value of $\Delta k_{\mathcal P_t^{\ast},\mathcal P_t^{\ast}}=1.0$ is used.

\begin{Rmk}\label{rmk::fao}
    The default value of 1.0 for \texttt{-{}-delta\_k} is motivated by the recommendation that the inbreeding rate should not exceed 1\% per generation \citep{fao13}. As noted in Remark~\ref{rmk::inbkin}\,\ref{item::inbkin}, the increases in inbreeding and average kinship typically show parallel behavior. Note, however, that the FAO recommendation considers discrete generations, so the change from $\mathcal P_t$ to $\mathcal P_{t+1}$ does not mean a \emph{generation} in the sense of the FAO if there are survivors. In that case, the desired inbreeding rate should still be divided by the average generation interval $L$ \citep{wellmann19optimum}. In classical honeybee breeding, where mating is organized on isolated mating stations, it is often assumed that only two-year-old queens are eligible for the dam path and only three-year-old queens are eligible for the 4a-path. This yields an average generation interval of $L=2.5$ years \citep{plate19comparison, uzunov22initiation, brascamp24}. In this situation, the value of $\Delta k_{\mathcal P_t^{\ast},\mathcal P_t^{\ast}}=\frac1{2.5}=0.4$ should be chosen as the value for \texttt{-{}-delta\_k} in order to comply with the FAO recommendation.
\end{Rmk}

\paragraph{\texttt{-{}-curr\_gen}}
After \texttt{-{}-curr\_gen}, a string is to be provided that contains the path to a text file containing information on the current generation $\mathcal P_t$. Not providing such a file will lead to an error. The file itself needs to be structured as follows: It consists of seven tab-separated columns. The first line contains the column headers which are 
\begin{itemize}
    \item \texttt{queen},
    \item \texttt{survival},
    \item \texttt{dam\_candidate}, 
    \item \texttt{one\_b\_candidate},
    \item \texttt{four\_a\_candidate},
    \item \texttt{u\_Q}, and
    \item \texttt{u\_R}.
\end{itemize}
Underneath the respective header, each column contains information about the queens in $\mathcal Q_t$, where each row corresponds to one queen $Q\in\mathcal Q_t$.
\begin{enumerate}[label = (\roman*)]
    \item The column \texttt{queen} needs to contain unique IDs (names) for all queens $Q\in\mathcal Q_t$.
    \item The column \texttt{survival} contains the values $s_{Q,t}$, i.\,e. the information whether $Q\in\mathcal Q_t$ survives to be an element of $\mathcal S\mathcal Q_{t+1}$ (cf. Notation~\ref{not::survi}). The values in this column can be taken either from $\{0,1\}$ or from $\{\mathtt{FALSE},\mathtt{TRUE}\}$.
    \item The columns \texttt{dam\_cand}, \texttt{one\_b\_cand}, and \texttt{four\_a\_cand} contain the values $e_{Q,t}^d$, $e_{Q,t}^b$, and $e_{Q,t}^a$, respectively, i.\,e. the information whether $Q\in\mathcal Q_t$ is eligible as dam, 1b-queen or 4a-queen (cf. Notation~\ref{not::elig}\,\ref{item::hbelig}). The values in these columns can be taken either from $\{0,1\}$ or from $\{\mathtt{FALSE},\mathtt{TRUE}\}$.
    \item The columns \texttt{u\_Q} and \texttt{u\_R} contain the estimated total breeding values $\hat{u}_{Q,t}$ of $Q\in\mathcal Q_t$ and $\hat{u}_{R(Q),t}$ of $Q$'s replacement queen $R(Q)\in\mathcal R_{t}$.
\end{enumerate}

\begin{Ex}
    The first lines of the file provided via \texttt{-{}-curr\_gen} could thus look as follows:
    \begin{lstlisting}
queen survival    dam_cand    four_a_cand one_b_cand  u_Q   u_R
Q_1   TRUE  TRUE  FALSE TRUE  1.0343      1.2297
Q_2   TRUE  FALSE FALSE FALSE 1.5210      2.0226
Q_3   FALSE FALSE TRUE  TRUE  2.5441      2.8841
Q_4   FALSE TRUE  FALSE FALSE 1.7779      1.4900      
    \end{lstlisting} 
\end{Ex}

\begin{Rmk}
    \begin{enumerate}[label = (\roman*)]
         \item The order in which the seven columns are provided is irrelevant. Listing additional columns is not harmful, they will simply be ignored. 
         
         \item The program will calculate optimum contributions according to the mixed strategy expounded in Section~\ref{sec::mix} with mating control via instrumental insemination and mating stations. If one wants calculations to be done according to Section~\ref{sec::sci} (only insemination), one simply has to ensure that all entries in column \texttt{four\_a\_cand} are \texttt{FALSE}. Accordingly, if calculations should be performed according to Section~\ref{sec::ims} (only mating stations), all entries in column \texttt{one\_b\_cand} have to be \texttt{FALSE}.
    \end{enumerate}
\end{Rmk}

\paragraph{\texttt{-{}-K\_QQ}, \texttt{-{}-K\_RR}, \texttt{-{}-K\_QR}, and \texttt{-{}-K\_WW}}
After these arguments, the paths to files need to be provided, which contain information on $\mathbf K_t^{\mathcal Q\mathcal Q}$, $\mathbf K_t^{\mathcal R\mathcal R}$, $\mathbf K_t^{\mathcal Q\mathcal R}$, and $\mathbf K_t^{\mathfrak W\mathfrak W}$, respectively. Not providing these files will result in an error. All four files have the same structure: They consist of a header line with the tab-separated IDs (names) of the queens $Q\in\mathcal Q_t$ as they are listed in column \texttt{queen} of the file provided under \texttt{-{}-curr\_gen}. This header line is followed by $N_t+1$ tab-separated columns. The first column also contains the IDs of the queens $Q\in\mathcal Q_t$ and serves as 'row names'. The following $N_t$ columns are associated with the headers in the first row. For two queens $Q_1, Q_2\in\mathcal Q_t$, the entry belonging to the row associated with $Q_1$ and the column associated with $Q_2$ in the file provided under
\begin{itemize}
     \item \texttt{-{}-K\_QQ} is $k_{Q_1,Q_2}$,
     \item \texttt{-{}-K\_RR} is $k_{R(Q_1),R(Q_2)}$,
     \item \texttt{-{}-K\_QR} is $k_{Q_1,R(Q_2)}$,
     \item \texttt{-{}-K\_WW} is $k_{\mathcal W(Q_1),\mathcal W(Q_2)}$.
\end{itemize}

\begin{Ex}
    The first lines and columns of the file provided under \texttt{-{}-K\_QQ} may thus look as follows
    \begin{lstlisting}
Q_1     Q_2     Q_3     Q_4
Q_1     0.5000  0.0000  0.0012  0.1944
Q_2     0.0000  0.5000  0.0000  0.0000
Q_3     0.0012  0.0000  0.5040  0.0012
Q_4     0.1944  0.0000  0.0012  0.5000      
    \end{lstlisting}
and similar for the other kinship matrices.
\end{Ex}

\begin{Rmk}
    \begin{enumerate}[label = (\roman*)]
        \item Do not enter a tab before the name of the first queen in the first row. The format of the table has to be such that the default behavior of R function \texttt{read.table} recognizes the entries of the first row as headers: 
        \begin{quote}
            `\textbf{header}` is set to `\textbf{TRUE}` if and only if the first row contains one fewer field than the number of columns \citep{readtable}.
        \end{quote}
    
        \item It is not mandatory, that lines and columns list the queens of $\mathcal Q_t$ in the same order.
        
        \item Be aware that the programs of \citet{bernstein18} and \citet{brascamp19software}, which can be used to derive the necessary matrices from honeybee pedigrees, are designed to calculate relationships rather than kinships. Thus, the results provided by these programs need to be divided by two (cf. Remark~\ref{rmk::kinsh}\,\ref{item::relation}).
    \end{enumerate}
\end{Rmk}

\paragraph{\texttt{-{}-output\_numbers} and \texttt{-{}-output\_stats}}
Here, paths to files can be provided to which the output is written. The file specified after \texttt{-{}-output\_numbers} will contain for each queen $Q\in\mathcal Q_t$, how often she should serve as a dam, 1b-queen and 4a-queen respectively. The file specified after \texttt{-{}-output\_stats} will contain statistics resulting from these contributions. These include expected average breeding values and kinships for the next generation. The detailed structure of the output files will be discussed in Section~\ref{sec::output}. If \texttt{-{}-output\_numbers} and \texttt{-{}-output\_stats} are not specified, the information will be written to the default files, \texttt{optimum\_contributions.tsv} and \texttt{stats.tsv}.

\subsubsection{Implementation details}

The input data is used to build a constrained optimization problem with the help of the function \texttt{cop} from the R package 'optiSolve' \citep{wellmann21}. The linear function to be maximized and the constraints passed to this function are calculated as specified in Remark~\ref{rmk::concretetask}. Afterwards, the function \texttt{solvecop} is called to solve the constrained optimization problem. If a solution is found, one then is equipped with vectors $\mathbf{dc}_t$, $\mathbf{bc}_t$, and $\mathbf{ac}_t$ of optimum contributions.

\begin{Rmk}
    The program may fail to find an optimum solution if $k_{t+1}^{\ast}$ is chosen too small.
\end{Rmk}

The vectors $\mathbf{dc}_t$, $\mathbf{bc}_t$, and $\mathbf{ac}_t\in\mathbb R^{\mathcal Q_t}$ specify the relative contributions of the queens in $\mathcal Q_t$ to the next generation. In order to obtain their absolute contributions, these vectors have to be multiplied with the total number $N^{\mathcal N}_{t+1}$ of newly created queens in generation $\mathcal P_{t+1}$. However, there is no guarantee that the resulting absolute contribution numbers are integers, so that the results need to be rounded in a way that the total numbers of contributions via (1.) the dam path and (2.) via the 1b-path and 4a-path combined remain precisely $N^{\mathcal N}_{t+1}$.

\begin{Not}
    For a number $r\in\mathbb R$, we denote by $\lfloor r \rfloor$ the largest integer that is not larger than $r$. We denote the remainder by $\langle r\rangle:=r-\lfloor r\rfloor$.
\end{Not}

\begin{Rmk}\label{rmk::round}
    There are several ways to achieve suitably rounded versions of $N_{t+1}^{\mathcal N}\mathbf{dc}_t$, $N_{t+1}^{\mathcal N}\mathbf{bc}_t$, and $N_{t+1}^{\mathcal N}\mathbf{ac}_t$.
    \begin{enumerate}[label = (\roman*)]
        \item The function \texttt{noffspring} of the package 'optiSel' \citep{wellmann19optimum} has two different options to calculate the absolute number of offspring of an individual $I$ from its relative contribution $c_{I,t}$. The different options are determined by whether the function parameter \texttt{random} is set to \texttt{TRUE} (default) or \texttt{FALSE}. If we translated this function to the honeybee setting, both options would first allow each queen $Q\in\mathcal Q_t$ to serve $\lfloor N_{t+1}^{\mathcal N}dc_{Q,t}\rfloor$ times as a dam, $\lfloor N_{t+1}^{\mathcal N}bc_{Q,t}\rfloor$ times as a 1b-queen, and $\lfloor N_{t+1}^{\mathcal N}ac_{Q,t}\rfloor$ times as a 4a-queen. Because all values have been rounded downwards, this will lead to total numbers of offspring $\leq N_{t+1}^{\mathcal N}$. The parameter \texttt{random} determines, how the remaining $\sum\limits_{Q\in\mathcal Q_t}\langle N_{t+1}^{\mathcal N}dc_{Q,t}\rangle$ usages as dams are to be distributed (and likewise for the usages as 1b-queens or 4a-queens).
        \begin{enumerate}
            \item With the default option \texttt{random=TRUE}, the remaining offspring are distributed to the queens randomly. Hereby, it is guaranteed that each queen may receive not more than one additional offspring and the probabilities for queens to be assigned an extra offspring are weighted by the remainder values $\langle N_{t+1}^{\mathcal N}dc_{Q,t}\rangle$. In our view, this procedure has two disadvantages. First, in view of transparency, repeatability and clearness of selection decisions, it appears unfavorable to include a random element in the selection process. Secondly, by its nature, the solver of the constrained optimization problem calculates a (typically very good) approximate solution. This means, however, that also a bad queen $Q\in\mathcal Q_t$ that clearly should have a contribution of $dc_{Q,t}=0$ may actually be assigned an 'optimum' contribution of, say, $\sim10^{-7}$. With the randomized approach, it is possible (albeit unlikely) that such a queen is suggested to produce one offspring.
            
            \item With the option \texttt{random=FALSE}, the queens are ranked by their remainders $\langle N_{t+1}^{\mathcal N}dc_{Q,t}\rangle$ and the queens with the highest remainder numbers are assigned an additional offspring each until the correct number of offspring is reached.
        \end{enumerate}
        \item In general, the problem of assigning (integer numbers of) offspring to different queens according to their (non-integer) fractions of optimum contributions is similar to the problem of assigning (integer numbers of) seats in a parliament to different parties according to their (non-integer) fractions of relative votes. For these apportionment problems, a number of competing procedures exist. While there is some theory on random apportionment \citep{grimmett04}, these procedures are generally deterministic. 
        \begin{enumerate}
            \item Several apportionment methods, like notably the Jefferson method (sometimes also named D'Hondt method), are biased in favor of greater parties \citep{balinski82}. For our purposes, this appears disadvantageous. The main purpose of OCS is to restrict average kinships. If a method tends to make large sister groups even larger, this will have detrimental effects.
            \item In the context of apportionment theory, the procedure implemented in the 'OptiSel' function \texttt{noffspring} with \texttt{random=FALSE} is called Hamilton's method \citep{balinski82} (in Germany, it is named after Hare/Niemeyer instead \citep{agricola17}). It is generally unbiased but can lead to a number of paradoxes. In particular, when queens are obviously unsuitable for reproduction because of low estimated breeding values, one may preclude them from the list of eligible queens and thereby speed up the optimization algorithm. These queens will then receive optimum contributions of 0. If one leaves them in the list of eligible queens, the algorithm may attribute very small 'optimum' contributions to these queens (say $\sim10^{-8}$). By Hamilton's method, these queens will still be assigned 0 offspring but the numbers of offspring of the other queens may depend on the choice whether or not the hopeless candidates have been included in the procedure.
            \item The method of Webster (also named after Sainte-Laguë) is also unbiased and avoids the aforementioned paradox \citep{balinski82}. Furthermore, there is a modification of this method (typically called \emph{modified Sainte-Laguë method}), that makes it harder for parties to win their first seat in parliament \citep{lijphart03}. While this property is usually sought in order to avoid fragmented parliaments, it is also useful in the context of honeybee breeding. It makes no sense to go through the effort of maintaining a mating station if that mating station is then supposed to be used by only a single queen.
        \end{enumerate}
    \end{enumerate}
    Based on these considerations, we decided to distribute the numbers of usages as dam, 1b-queen, or 4a-queen according to the modified Sainte-Laguë method. For this, we used the function \texttt{seats} of the R package 'electoral' \citep{albuja22}.
\end{Rmk}

\begin{Rmk}\label{rmk::roundcons}
    Because of the rounding procedures described above, the real contributions corresponding to the calculated numbers of usages as dam, 1b-queen or 4a-queen will differ slightly from the calculated optimum contributions. In small populations, this may result in slight violations of the restriction on the average kinships in the next generation, $k_{\mathcal P_{t+1}^{\ast},\mathcal P_{t+1}^{\ast}}$.
\end{Rmk}

\subsubsection{Output}\label{sec::output}
The script \texttt{honeybee\_ocs.r} creates two output files, which by default are named \texttt{optimum\_contributions.tsv} and \texttt{stats.tsv}.
\paragraph{\texttt{optimum\_contributions.tsv}} This file contains seven tab-separated columns with headers \texttt{queen}, \texttt{dc\_opt}, \texttt{n\_dam}, \texttt{bc\_opt}, \texttt{n\_1b}, \texttt{ac\_opt}, and \texttt{n\_4a}.
\begin{itemize}
    \item Column \texttt{queen} lists all queens of $\mathcal Q_t$.
    \item Columns \texttt{dc\_opt}, \texttt{bc\_opt}, and \texttt{ac\_opt} list the optimum (relative) contributions of the queens via the dam path, 1b-path, and 4a-path, respectively. Values are rounded to 5 decimal digits.
    \item Columns \texttt{n\_dam}, \texttt{n\_1b}, and \texttt{n\_4a} translate the optimum relative contributions into numbers of utilizations as dams, 1b-queens, and 4a-queens, respectively.
\end{itemize}

\paragraph{\texttt{stats.tsv}}\label{par::stats} This file consists of two rows, the first row containing headers, the second row containing the corresponding values. In total, there are 31 tab-separated two-elemented columns:
\begin{enumerate}[label = (\roman*)]
    \item Columns \texttt{u\_Q\_curr}, \texttt{u\_Q\_surv}, \texttt{u\_Q\_new}, and \texttt{u\_Q\_next} contain the values $\hat{u}_{\mathcal Q_t}$, $\mathbb E\left[\hat{u}_{\mathcal S\mathcal Q_{t+1}}\right]$, $\mathbb E\left[\hat{u}_{\mathcal N\mathcal Q_{t+1}}\right]$, and , $\mathbb E\left[\hat{u}_{\mathcal Q_{t+1}}\right]$, calculated according to Equations~\ref{eq::uqcurr}, \ref{eq::usq} ($=$\ref{eq::usqms}), \ref{eq::unq} ($=$~\ref{eq::unqms}), and \ref{eq::utq} ($=$~\ref{eq::utqms} $=$~\ref{eq::utqmix}), respectively. In the calculations, the vectors $\mathbf{dc}_t$, $\mathbf{bc}_t$, and $\mathbf{ac}_t$ are not chosen as the direct output of function \texttt{solvecop} but are adjusted according to the rounding procedure explained in Remark~\ref{rmk::round}. This also holds for all other columns.

    \item Columns \texttt{u\_R\_curr}, \texttt{u\_R\_surv}, \texttt{u\_R\_new}, and \texttt{u\_R\_next} contain the values $\hat{u}_{\mathcal R_t}$, $\mathbb E\left[\hat{u}_{\mathcal S\mathcal R_{t+1}}\right]$, $\mathbb E\left[\hat{u}_{\mathcal N\mathcal R_{t+1}}\right]$, and , $\mathbb E\left[\hat{u}_{\mathcal R_{t+1}}\right]$, calculated according to Equations~\ref{eq::urcurr}, \ref{eq::usr} ($=$~\ref{eq::usrms}), \ref{eq::unr} (or~\ref{eq::unrms}), and \ref{eq::utr} (or~\ref{eq::utrms} or~\ref{eq::utrmix}), respectively.

    \item Columns \texttt{u\_Pstar\_curr}, \texttt{u\_Pstar\_surv}, \texttt{u\_Pstar\_new}, and \texttt{u\_Pstar\_next}\\contain the values $\hat{u}_{\mathcal P^{\ast}_t}$, $\mathbb E\left[\hat{u}_{\mathcal S_{t+1}}\right]$, $\mathbb E\left[\hat{u}_{\mathcal N_{t+1}}\right]$, and , $\mathbb E\bigl[\hat{u}_{\mathcal P^{\ast}_{t+1}}\bigr]$.

    \item Columns \texttt{k\_QQ\_curr}, \texttt{k\_QQ\_surv}, \texttt{k\_QQ\_new}, and \texttt{k\_QQ\_next},\\\texttt{k\_RR\_curr}, \texttt{k\_RR\_surv}, \texttt{k\_RR\_new}, and \texttt{k\_RR\_next},\\ \texttt{k\_QR\_curr}, \texttt{k\_QR\_surv}, \texttt{k\_QR\_new}, and \texttt{k\_QR\_next},\\\texttt{k\_Pstar\_curr}, \texttt{k\_Pstar\_surv}, \texttt{k\_Pstar\_new}, and \texttt{k\_Pstar\_next}\\ contain the values of\\ 
    $k_{\mathcal Q_t,\mathcal Q_t}$, $k_{\mathcal S\mathcal Q_{t+1},\mathcal S\mathcal Q_{t+1}}$, $k_{\mathcal N\mathcal Q_{t+1},\mathcal N\mathcal Q_{t+1}}$, and $k_{\mathcal Q_{t+1},\mathcal Q_{t+1}}$,\\ 
    $k_{\mathcal R_t,\mathcal R_t}$, $k_{\mathcal S\mathcal R_{t+1},\mathcal S\mathcal R_{t+1}}$, $k_{\mathcal N\mathcal R_{t+1},\mathcal N\mathcal R_{t+1}}$, and $k_{\mathcal R_{t+1},\mathcal R_{t+1}}$,\\
    $k_{\mathcal Q_t,\mathcal R_t}$, $k_{\mathcal S\mathcal Q_{t+1},\mathcal S\mathcal R_{t+1}}$, $k_{\mathcal N\mathcal Q_{t+1},\mathcal N\mathcal R_{t+1}}$, and $k_{\mathcal Q_{t+1},\mathcal R_{t+1}}$,\\
    $k_{\mathcal P^{\ast}_t,\mathcal P^{\ast}_t}$, $k_{\mathcal S_{t+1},\mathcal S_{t+1}}$, $k_{\mathcal N_{t+1},\mathcal N_{t+1}}$, and $k_{\mathcal P^{\ast}_{t+1},\mathcal P^{\ast}_{t+1}}$,\\ respectively, according to the formulas in Lemma~\ref{lem::allthek} and Theorem~\ref{thm::burnermix}.

    \item Columns \texttt{n\_dam}, \texttt{n\_1b}, and \texttt{n\_4a} specify, how many queens have a non-zero contribution via the dam path, 1b-path and 4a-path respectively.
\end{enumerate}
All values are rounded to five decimal digits.

\section{Demonstration}\label{sec::demo}

We demonstrate two examples to illustrate how OCS in honeybees works. The first example is small: The subsequent generations $\mathcal P_t$ and $\mathcal P_{t+1}$ only comprise three colonies each. It is used, so that several of the underlying calculations can actually be reproduced with pen and paper. In the larger example, $\mathcal P_t$ and $\mathcal P_{t+1}$ comprise 1500 colonies each. OCS is performed for several such populations and compared to other selection strategies in terms of expected breeding value and average kinship development.

\subsection{Small example}

The input files for this example can be found in the folder \texttt{ocs\_small\_example}.
We consider a generation $\mathcal P_t$ consisting of three colonies whose queens, named $A$, $B$, and $C$, are all non-inbred. Queens $A$ and $B$ are siblings, their common dam was mated on an isolated mating station. Without detailed knowledge about the deeper pedigree and the composition of the mating station, it is typically assumed that the kinship between $A$ and $B$ in such a situation is approximately 0.2 (relationship 0.4) \citep{guichard20, bernstein23}. Queen $C$ is unrelated to both~$A$ and~$B$.

\begin{center}
    \begin{tikzpicture}
        \path (0,0) coordinate (M1)
            coordinate[below = 2.4cm of M1] (Ptm1)
                node[left = 0mm of Ptm1] {generation $\mathcal P_{t-1}$}
            coordinate[below = 4cm of Ptm1] (Pt0)
                node[left = 0mm of Pt0] {generation $\mathcal P_{t}$}
            (Ptm1) -- (Pt0)
                coordinate[pos = 0.4] (M2)

                coordinate[right = 2.5cm of Pt0] (A)
                node[above = 1cm of A, queen] (QA) {}
                node[left = 0cm of QA] {$A$}

                coordinate[right = 4.5cm of Pt0] (B)
                node[above = 1cm of B, queen] (QB) {}
                node[right = 0cm of QB] {$B$}
                
                coordinate[right = 8cm of Pt0] (C)
                node[above = 1cm of C, queen] (QC) {}
                node[right = 0cm of QC] {$C$}

                coordinate[right = 3.5cm of Ptm1] (D)
                node[above = 1cm of D, queen] (QD) {}
                    node[above right = 1.4cm and 2cm of QD.center, group base] (MDbase) {}
                    node[above right = 1.4cm and 2cm of QD.center, group, not considered] (MD) {}
                        node[below right = 0cm and 0cm of MD, not considered] {$\mathcal M$}
                        -- (MD.center) pic[not considered]{queens}
                    node[right = 2cm of QD.center, not considered, group] (DD) {}
                        node[right = 0cm of DD, not considered] {}
                        -- (DD) pic[not considered] {drones}

                coordinate[right = 8cm of Ptm1] (D2)
                node[above = 1cm of D2, queen] (QD2) {}
                    node[right = 2cm of QD2.center, not considered, group] (DD2) {}
                        node[right = 0cm of DD2, not considered] {}
                        -- (DD2) pic[not considered] {drones};

                    \draw[inheritance] (QD) -- (QA);
                    \draw[inheritance] (QD) -- (QB);
                    \draw[inheritance] (QD2) -- (QC);
                    \draw[inheritance, not considered] (MD) -- (DD);
                    \draw[mating, not considered] (DD) -- (QD)
                        node[gene pass description, not considered]{mate};
                    \draw[mating, not considered] (DD2) -- (QD2)
                        node[gene pass description, not considered]{mate};
                        
                    \draw[relationship] (QA) -- (QB)
                        node[relationship description] {$k_{A,B}=0.2$};  
                    \path (QA) -- (QB)
                        coordinate[midway] (QAB);
                    \draw[relationship] (QAB) ++(0,-1) -- ++(4.5,0)
                        node[relationship description] {$k_{A,C}=k_{B;C}=0.0$};                        
                \begin{scope}[on background layer]
                    \draw[dashed] (M1) ++ (-3cm,0cm) -- ++(13.9cm,0cm);
                    \draw[dashed] (M2) ++ (-3cm,0cm) -- ++(13.9cm,0cm);
                \end{scope}
            \end{tikzpicture}
        \end{center}

Accordingly, the file \texttt{ocs\_small\_example/K\_QQ.tsv}, containing $\mathbf K_{t}^{\mathcal Q\mathcal Q}$, looks as follows:
    \begin{lstlisting}
Queen_A Queen_B Queen_C
Queen_A 0.5     0.2     0.0
Queen_B 0.2     0.5     0.0
Queen_C 0.0     0.0     0.5      
    \end{lstlisting}

Furthermore, queens $A$ and $B$ were instrumentally inseminated with many drones from the same colony, while queen~$C$ was inseminated with drones from an entirely unrelated colony. 

\begin{center}
    \begin{tikzpicture}
        \path (0,0) (Ptm1)
                node[left = 0mm of Ptm1] {generation $\mathcal P_{t-1}$}
            coordinate[below = 4cm of Ptm1] (Pt0)
                node[left = 0mm of Pt0] {generation $\mathcal P_{t}$}
            (Ptm1) -- (Pt0)
                coordinate[pos = 0.4] (M)

                coordinate[right = 2.5cm of Pt0] (A)
                node[above = 1cm of A, queen] (QA) {}
                node[above = 0cm of QA] {$A$}
                node[right = 2cm of QA.center, not considered, group] (DA) {}
                    node[right = 0cm of DA, not considered] {}
                    -- (DA) pic[not considered] {drones}
                node[below right = 2cm and 0.5cm of QA.center, worker group] (WA) {}
                    node[right = 0mm of WA] {$\mathcal W(A)$}
                node[below left = 1cm and 0.5cm of QA.center, replacement queen] (RA) {}
                    node[left = 0cm of RA] {$R(A)$}

                coordinate[right = 5.5cm of Pt0] (B)
                node[above = 1cm of B, queen] (QB) {}
                node[above = 0cm of QB] {$B$}
                node[right = 2cm of QB.center, not considered, group] (DB) {}
                    node[right = 0cm of DB, not considered] {}
                    -- (DB) pic[not considered] {drones}
                node[below right = 2cm and 0.5cm of QB.center, worker group] (WB) {}
                    node[right = 0mm of WB] {$\mathcal W(B)$}
                node[below left = 1cm and 0.5cm of QB.center, replacement queen] (RB) {}
                    node[left = 0cm of RB] {$R(B)$}
                
                coordinate[right = 8.5cm of Pt0] (C)
                node[above = 1cm of C, queen] (QC) {}
                node[above = 0cm of QC] {$C$}
                node[right = 2cm of QC.center, not considered, group] (DC) {}
                    node[right = 0cm of DC, not considered] {}
                    -- (DC) pic[not considered] {drones}
                node[below right = 2cm and 0.5cm of QC.center, worker group] (WC) {}
                    node[right = 0mm of WC] {$\mathcal W(C)$}
                node[below left = 1cm and 0.5cm of QC.center, replacement queen] (RC) {}
                    node[left = 0cm of RC] {$R(C)$}

                coordinate[right = 6cm of Ptm1] (D)
                node[above = 1cm of D, queen] (QD) {}
                coordinate[right = 10.5cm of Ptm1] (E)
                node[above = 1cm of E, queen] (QE) {};

                    \draw[inheritance] (QD) -- (DA);
                    \draw[inheritance] (QD) -- (DB);
                    \draw[inheritance] (QE) -- (DC);
                    \draw[inheritance] (QA) -- (RA);
                    \draw[inheritance] (QA) -- (WA);
                    \draw[inheritance] (QB) -- (RB);
                    \draw[inheritance] (QB) -- (WB);
                    \draw[inheritance] (QC) -- (RC);
                    \draw[inheritance] (QC) -- (WC);
                    \draw[mating, not considered] (DA) -- (QA)
                        node[gene pass description, not considered]{mate};
                    \draw[mating, not considered] (DB) -- (QB)
                        node[gene pass description, not considered]{mate};
                    \draw[mating, not considered] (DC) -- (QC)
                        node[gene pass description, not considered]{mate};
                        
                \begin{scope}[on background layer]
                    \draw[dashed] (M) ++ (-3cm,0cm) -- ++(13.9cm,0cm);
                \end{scope}
            \end{tikzpicture}
        \end{center}

Suitable files containing $\mathbf K_t^{\mathcal R\mathcal R}$, $\mathbf K_t^{\mathcal Q\mathcal R}$, and $\mathbf K_t^{\mathfrak W\mathfrak W}$ are then

\begin{itemize}
    \item \texttt{ocs\_small\_example/K\_RR.tsv},
        \begin{lstlisting}
Queen_A Queen_B Queen_C
Queen_A 0.5     0.175   0.0
Queen_B 0.175   0.5     0.0
Queen_C 0.0     0.0     0.5      
        \end{lstlisting}
        
    \item \texttt{ocs\_small\_example/K\_QR.tsv},
        \begin{lstlisting}
Queen_A Queen_B Queen_C
Queen_A 0.25    0.1     0.0
Queen_B 0.1     0.25    0.0
Queen_C 0.0     0.0     0.25      
        \end{lstlisting}
        
    \item and \texttt{ocs\_small\_example/K\_WW.tsv}
        \begin{lstlisting}
Queen_A Queen_B Queen_C
Queen_A 0.26    0.175   0.0
Queen_B 0.175   0.26    0.0
Queen_C 0.0     0.0     0.26      
        \end{lstlisting}
\end{itemize}

Of the three colonies in generation $\mathcal P_t$, only the one headed by queen $A$ survives to the next generation, whereas queens $B$ and $C$ die. All three queens are eligible as dams, 1b-queens and 4a-queens. The estimated breeding values of the queens and replacement queens are
\begin{align*}
    \hat{u}_{A,t} = 8.34, &\quad \hat{u}_{R(A),t} = 7.62,\\
    \hat{u}_{B,t} = 6.98, &\quad \hat{u}_{R(B),t} = 7.55,\\
    \hat{u}_{C,t} = 4.60, &\quad \hat{u}_{R(C),t} = 3.03.
\end{align*}

Accordingly, the file \texttt{ocs\_small\_example/curr\_gen.tsv} with the information on the current generation looks as follows:

\begin{lstlisting}
queen   survival  dam_cand  four_a_cand one_b_cand  u_Q   u_R
Queen_A TRUE      TRUE      TRUE        TRUE        8.34  7.62
Queen_B TRUE      TRUE      TRUE        TRUE        6.98  7.55
Queen_C FALSE     TRUE      TRUE        TRUE        4.60  3.03
\end{lstlisting} 
        
The next generation $\mathcal P_{t+1}$ shall again consist of three queens. Since only queen $A$ survives, this means that two new colonies must be generated, $N_{t+1}^{\mathcal N}=2$. We try and call \texttt{honeybee\_ocs.r} with different values for \texttt{-{}-delta\_k}, i.\,e. different maximum acceptable average kinship levels $k_{t+1}^{\ast}$.

\begin{Rmk}
    The average estimated breeding value in the reduced generation $\mathcal P_{t}^{\ast}$ is
        \[\hat{u}_{\mathcal P_t^{\ast}}\approx6.3533,\]
    the average kinship is
        \[k_{\mathcal P_t^{\ast},\mathcal P_t^{\ast}}\approx0.1569.\]
\end{Rmk}

\begin{Ex}
    \begin{enumerate}[label = (\roman*)]
        \item We first try and pass the value $\Delta k_{\mathcal P_t^{\ast},\mathcal P_t^{\ast}}=100$ to \texttt{-{}-delta\_k}, which results in $k_{t+1}^{\ast}=1$. By its nature as a probability (cf. Definition~\ref{def::grpkinbee}), the value for $k_{\mathcal P_{t+1}^{\ast},\mathcal P_{t+1}^{\ast}}$ can never exceed the value 1, so that, effectively, there is no restriction on the average kinship of the next partial generation $\mathcal P_{t+1}^{\ast}$.
    
        In the case of a diploid monoecious population with selfing, we had seen in Remark~\ref{rmk::noconstr} that the best strategy is to let only the individual with the highest estimated breeding value reproduce via selfing. Translating this to our example, we expect that queen $A$, who has the highest estimated breeding values, should be the dam of both newly generated queens and also be responsible for the drones to fertilize the new queens. If $A$ provides the drones via the 1b-path, she will pass her own estimated breeding value $\hat{u}_{A,t}$, if she provides drones via the 4a-path, she will pass the estimated breeding value $\hat{u}_{R(A),t}$ of her replacement queen. Since $\hat{u}_{A,t}=8.34>7.62=\hat{u}_{R(A),t}$, she should be used as a 1b-queen. We check by calling the script:
    
        \begin{lstlisting}
Rscript --vanilla honeybee_ocs.r               \
    --N_N 2                                    \
    --delta_k 100                              \
    --curr_gen ocs_small_example/curr_gen.tsv  \
    --K_QQ ocs_small_example/K_QQ.tsv          \
    --K_RR ocs_small_example/K_RR.tsv          \
    --K_QR ocs_small_example/K_QR.tsv          \
    --K_WW ocs_small_example/K_WW.tsv
        \end{lstlisting}

        Since no output files are specified by this call, we find the relevant information in the default files. A look into \texttt{optimum\_contributions.tsv} reveals
        \begin{lstlisting}
queen   dc_opt  n_dam   bc_opt  n_1b    ac_opt  n_4a
Queen_A 0.99999 2       0.99999 2       0       0
Queen_B 0.00001 0       0       0       0       0
Queen_C 0       0       0       0       0       0
        \end{lstlisting}
        Up to an error of order $10^{-5}$, the optimum contributions were indeed calculated correctly, resulting in two daughters of queen $A$ which are to be inseminated with drones from $A$'s colony.
    
        A look into \texttt{stats.tsv} reveals that the expected average breeding value of the next generation is $\mathbb E\left[\hat{u}_{\mathcal P_{t+1}^{\ast}}\right] = 7.86$, which means an improvement of 1.5067 units compared to $\hat{u}_{\mathcal P_{t}^{\ast}}$. The average coancestry in the next generation is $k_{\mathcal P_{t+1}^{\ast}, \mathcal P_{t+1}^{\ast}}=0.3328$. Coming from $k_{\mathcal P_t^{\ast}, \mathcal P_t^{\ast}}=0.1569$, this means an increase of 20.86\%.

        \item We lower the allowed percentage of increase in average kinship to $\Delta k_{\mathcal P_t^{\ast},\mathcal P_t^{\ast}}=13\%$, i.\,e., we call
        
        \begin{lstlisting}
Rscript --vanilla honeybee_ocs.r               \
    --N_N 2                                    \
    --delta_k 13                               \
    --curr_gen ocs_small_example/curr_gen.tsv  \
    --K_QQ ocs_small_example/K_QQ.tsv          \
    --K_RR ocs_small_example/K_RR.tsv          \
    --K_QR ocs_small_example/K_QR.tsv          \
    --K_WW ocs_small_example/K_WW.tsv
        \end{lstlisting}
        
        Looking at \texttt{optimum\_contributions.tsv} shows
        \begin{lstlisting}
queen   dc_opt  n_dam   bc_opt  n_1b    ac_opt  n_4a
Queen_A 0       0       0.96847 2       0.00001 0
Queen_B 1       2       0       0       0.00926 0
Queen_C 0       0       0.02225 0       0       0
        \end{lstlisting}
        
        Queen $A$ is no longer used as a dam. This is not surprising. Letting $A$ serve as both dam and sire as in the previous example leads to an average kinship $k_{\mathcal P_{t+1}^{\ast}, \mathcal P_{t+1}^{\ast}}$ that is no longer acceptable -- particularly because $A$ is also the only survivor. Looking at the replacement queens' breeding values (which are passed via the dam path), we see that $B$ is only marginally worse than $A$ ($\hat{u}_{R(A),t} = 7.62$ vs. $\hat{u}_{R(B),t} = 7.55$). Thus, by letting $B$ rather than $A$ serve as dam, not much is lost in terms of genetic progress. On the other hand, by letting the new queens be nieces rather than daughters of the surviving queen $A$, the average kinship can be lowered considerably. Furthermore, since $A$ still produces all the drones via the 1b-path, new queens are no longer inseminated with sperm from their own brothers but rather from their cousins. 
        
        Looking into \texttt{stats.tsv}, we find that $\mathbb E\left[\hat{u}_{\mathcal P_{t+1}^{\ast}}\right]$ is now 7.825, only marginally lower than in the unrestricted case of the previous example. The average coancestry in the next generation is $k_{\mathcal P_{t+1}^{\ast}, \mathcal P_{t+1}^{\ast}}=0.2686$. The resulting increase of 13.25\% slightly exceeds the 13\% we had allowed for. This is a consequence of the fact that the relative optimum contributions cannot fully be represented by the integer numbers of offspring (cf. Remark~\ref{rmk::roundcons}). In situations with realistic population sizes these violations become negligible.
        
        \item We further lower the value after \texttt{-{}-delta\_k} to $\Delta k_{\mathcal P_t^{\ast},\mathcal P_t^{\ast}}=9\%$ and call
        
        \begin{lstlisting}
Rscript --vanilla honeybee_ocs.r               \
    --N_N 2                                    \
    --delta_k 9                                \
    --curr_gen ocs_small_example/curr_gen.tsv  \
    --K_QQ ocs_small_example/K_QQ.tsv          \
    --K_RR ocs_small_example/K_RR.tsv          \
    --K_QR ocs_small_example/K_QR.tsv          \
    --K_WW ocs_small_example/K_WW.tsv
        \end{lstlisting}
        
        Now, \texttt{optimum\_contributions.tsv} looks as follows:
        \begin{lstlisting}
queen   dc_opt  n_dam   bc_opt  n_1b    ac_opt  n_4a
Queen_A 0.00001 0       0.50542 1       0.00002 0
Queen_B 0.99999 2       0       0       0.00003 0
Queen_C 0       0       0.49453 1       0       0
        \end{lstlisting}
        
        Letting the siblings $A$ and $B$ be solely responsible for the next generation $\mathcal P_{t+1}$ is no longer acceptable. Instead, also $C$ needs to be included in the reproduction strategy -- despite her markedly lower estimated breeding values.
        
        According to \texttt{stats.tsv}, $\mathbb E\bigl[\hat{u}_{\mathcal P_{t+1}^{\ast}}\bigr]$ is lowered to 7.5133 and $k_{\mathcal P_{t+1}^{\ast}, \mathcal P_{t+1}^{\ast}}=0.2325$, meaning an increase of average kinship by 8.97\%.
        
        \item With a further reduced allowed increase in average kinship of $\Delta k_{\mathcal P_t^{\ast},\mathcal P_t^{\ast}}=2.4\%$, \texttt{optimum\_contributions.tsv} shows:
        \begin{lstlisting}
queen   dc_opt  n_dam   bc_opt  n_1b    ac_opt  n_4a
Queen_A 0       0       0.00001 0       0.00001 0
Queen_B 0.59266 1       0.07734 0       0.1349  0
Queen_C 0.40734 1       0.67668 2       0.11106 0
        \end{lstlisting}
        
        Now, the focus is clearly to avoid kinships as much as possible. Despite the low estimated breeding value of her replacement queen ($\hat{u}_{R(C),t}=3.03$), queen~$C$ now also serves as a dam. As a result, $\mathbb E\bigl[\hat{u}_{\mathcal P_{t+1}^{\ast}}\bigr]$ now only amounts to 6.0717. The value for $k_{\mathcal P_{t+1}^{\ast}, \mathcal P_{t+1}^{\ast}}$ is 0.1757, meaning an increase of average kinship by 2.23\%.
        
        \item Finally, calling the program with \texttt{-{}-delta\_k 2} reveals that an increase in average kinship of only 2\% is not possible.
        
        \begin{lstlisting}
Error: No optimal solution could be found. Possibly, delta_k was chosen too small.
        \end{lstlisting}

    \end{enumerate}
\end{Ex}
 
\subsection{Larger example}

To see how OCS for honeybees works on a bigger scale, we used the simulation program BeeSim \citep{plate19comparison} to create 100 ``current generations'' of $N_t=1500$ queens each. From there, we applied one single generation of selection to compare different selection strategies, including OCS.
In the folder \texttt{ocs\_large\_example}, we provide the input files to run \texttt{honeybee\_ocs.r} on one of the 100 ``current generations''.

\subsubsection{Setting}

A population under selection was simulated with 100 repetitions under identical simulation parameters. It consisted of 500 colonies per year with phenotypes that were shaped by a queen effect genetic variance of $\sigma_{A,Q}^2=1$, a worker effect genetic variance of $\sigma_{A,W}^2=2$, a covariance between the genetic effects of $\sigma_{A,QW}=-0.75$, and a residual variance of $\sigma_E^2=4$. All newly created queens mated with 12 drones on one of 20 isolated mating stations consisting of 8 DPQs each. Each year, a BLUP breeding value estimation was performed and the 100 best two-year-old queens were selected as dams, producing five daughter queens each. Similarly, the 20 highest rated three-year-old queens were selected to serve as 4a-queen of a mating station. All populations were simulated for 15 years. Colonies born in years 13 to 15 were then chosen as an instance of a ``current generation'' $\mathcal P_t$.

Those colonies in $\mathcal P_t$ that were born in years 14 or 15 were considered to survive to the next generation $\mathcal P_{t+1}$. Queens born in year 14 were eligible as dams and queens born in year 13 or 14 were eligible as 1b-queens. The 4a-queens need to be chosen before performing OCS (Remark~\ref{rmk::sel4a}) and we selected them among the queens born in year 13 according to the estimated breeding values of their replacement queens. However, no two 4a-queens were allowed to have the same dam (within-family selection). 

From this data, we calculated estimated average breeding values and kinships for a next generation $\mathcal P_{t+1}$ (i.\,e. year 16) according to ten different selection strategies: We tested OCS according to Tasks~\ref{task::sci},~\ref{task::ms}, and~\ref{task::mix}, i.\,e. with instrumental insemination only, mating stations only and the combination of mating stations and insemination. We thereby allowed for a generational increase of $k_{\mathcal P_t^{\ast},\mathcal P_t^{\ast}}$ of 1\% or 0.5\% (corresponding to options \texttt{-{}-delta\_k 0.4} and \texttt{-{}-delta\_k 0.2} according to Remark~\ref{rmk::fao}).

\begin{Not}
    At times, we will write OCS-0.4 and OCS-0.2 to indicate the choice of \texttt{-{}-delta\_k}. 
\end{Not}

In addition to these OCS strategies, we also considered four classical selection strategies: across-family selection and within-family selection with mating either via insemination or on mating stations. In all classical strategies, 100 queens were selected as dams and were assigned five offspring each. In across-family selection strategies, the chosen dams were the two-year-old queens with the highest estimated breeding values of their replacement queens, in within-family selection strategies, the 100 queens were also selected based on the estimated breeding values of their replacement queens, but no two selected dams were allowed to have the same dam. When mating was organized via insemination, the 40 queens aged two or three with the highest estimated breeding values were chosen as 1b-queens -- either with the restriction that no two selected queens may share a common dam (within-family selection) or without such restrictions (across-family selection). Each 1b-queen was considered to be used equally often to inseminate newly generated queens. When mating was organized via mating stations, all 20 mating stations were considered to be frequented equally often.

\begin{Rmk}
    \begin{enumerate}[label = (\roman*)]
        \item The numbers of 100 dams and 40 1b-queens or 20 4a-queens per year have earlier been found optimal for populations of 500 queens per year under classical selection strategies\citep{plate20, du23}. 

        \item Selection according to the 10 strategies was not explicitly carried out in simulations. Instead, the resulting values were merely calculated according to the formulas named in the paragraph on \texttt{stats.tsv} in Section~\ref{par::stats}.
    \end{enumerate}
\end{Rmk}

\subsubsection{Results and Discussion}

\begin{Not}\label{not::bar}
    We equip variables with bars to indicate that they report averages over the 100 repetitions. For example, we had $\bar{k}_{\mathcal P_{t}^{\ast},\mathcal P_{t}^{\ast}}=0.0301$ and $\bar{\hat{u}}_{\mathcal P_t^{\ast}}=4.932$.
\end{Not}

Regarding the results, we mainly focus on the increases in average kinship, $\Delta k_{\mathcal P_t^{\ast},\mathcal P_t^{\ast}}=\frac{k_{\mathcal P_{t+1}^{\ast},\mathcal P_{t+1}^{\ast}}-k_{\mathcal P_t^{\ast},\mathcal P_t^{\ast}}}{1 - k_{\mathcal P_t^{\ast},\mathcal P_t^{\ast}}}$, the expected increases in average breeding values, $\mathbb E\left[\Delta \hat{u}_{\mathcal P_t^{\ast}}\right]=\mathbb E\bigl[\hat{u}_{\mathcal P_{t+1}^{\ast}}\bigr]-\hat{u}_{\mathcal P_t^{\ast}}$, and the numbers $N^{\mathrm{dam}}_t$, $N^{\mathrm{1b}}_t$, and $N^{\mathrm{4a}}_t$ of queens that were selected for the different purposes.

The following table gives a survey regarding the averages of these values.\\

\begin{tabularx}{\textwidth} {l >{\centering\arraybackslash}X >{\centering\arraybackslash}X >{\centering\arraybackslash}Xcc}
     strategy & $\bar{N}^{\mathrm{dam}}_t$ & $\bar{N}^{\mathrm{1b}}_t$ & $\bar{N}^{\mathrm{4a}}_t$ & $\overline{\Delta k}_{\mathcal P_t^{\ast},\mathcal P_t^{\ast}}$ & $\overline{\mathbb E\left[\Delta \hat{u}_{\mathcal P_t^{\ast}}\right]}$ \\
     \hline
     within-family sel., insemination    & 100  & 40   & 0    & 0.279 & 0.399 \\
     within-family sel., mating stations & 100  & 0    & 20   & 0.250 & 0.410 \\
     across-family sel., mating stations & 100  & 0    & 20   & 0.334 & 0.544 \\
     across-family sel., insemination    & 100  & 40   & 0    & 0.407 & 0.547 \\[1ex]
     OCS-0.2, insemination               & 17.0 & 14.0 & 0    & 0.200 & 0.646 \\
     OCS-0.2, mating stations            & 19.1 & 0    & 5.55 & 0.200 & 0.657 \\
     OCS-0.2, combination                & 17.6 & 5.72 & 3.69 & 0.200 & 0.668 \\[1ex]
     OCS-0.4, insemination               & 12.0 & 9.49 & 0    & 0.400 & 0.706 \\
     OCS-0.4, mating stations            & 12.0 & 0    & 4.08 & 0.400 & 0.725 \\
     OCS-0.4, combination                & 11.9 & 3.23 & 2.86 & 0.400 & 0.732 \\
    \hline
\end{tabularx}\\[2ex]

\begin{Rmk}\label{rmk::finrm}
    \begin{enumerate}[label = (\roman*)]
        \item The OCS strategies with $\Delta k_{\mathcal P_t^{\ast},\mathcal P_t^{\ast}}=0.2$ yielded higher genetic gain than all classical selection strategies with lower increases in average kinships. OCS-0.4 strategies yielded even higher genetic gain but also had higher average kinship rates than most classical strategies.
        
        \item Remarkably, on average only 17 to 19 dams were needed with strategy OCS-0.2 and also the number of 1b-queens and 4a-queens was drastically reduced in comparison to classical strategies. Partly this is made possible by allowing for large inbreeding coefficients which are balanced by particularly small kinships between entities in $\mathcal S_{t+1}$ and $\mathcal N_{t+1}$. If one does not trust in these small numbers of dams and sires, one may consider to add a further linear restrictions to the OCS task which puts upper limits on the values of $dc_{Q,t}$, $bc_{Q,t}$, and $ac_{Q,t}$. Thereby, one can restrict the maximum number of offspring per selected queen.
        
        \item The differences between pure insemination strategies and pure mating station strategies in terms of genetic progress are very small. At first glance, this is in contradiction with the results of \citet{du23} who found much higher genetic progress for instrumental insemination breeding schemes than for breeding with isolated mating stations. However, the differences in genetic gain between the strategies in \citep{du23} are particularly attributed to more accurately estimated breeding values due to more precise pedigrees. Such effects do not occur for a single round of selection based on identical estimated breeding values as in the example presented here.
        
        \item \label{item::finrmiv} Our results base on a single round of OCS for a population that was hitherto selected with a classical selection strategy. Population dynamics resulting from multiple years of OCS in honeybees cannot be inferred from our data. 
    \end{enumerate}
\end{Rmk}

The following figure provides a visual impression of the outcomes of the different selection strategies.
Each mark corresponds to the results from one of the ten selection strategies in one of the 100 populations.\\[1em]
\begin{tikzpicture}
    \begin{axis}[
        xlabel=$\Delta k_{\mathcal P_t^{\ast},\mathcal P_t^{\ast}}$ (in percent),
        ylabel={expected genetic gain, $\mathbb E\left[\Delta \hat{u}_{\mathcal P_t^{\ast}}\right]$},
        ymin = 0, 
        xmin = -0.199, xmax = 1.39,
        grid = major,
        legend style={
            cells={anchor=west},
            legend pos=outer north east,
            nodes={inner sep=3pt,text depth=0.15em}
        },
    ]
        \addplot[
            scatter/classes={
                ocs_0={mark=x, thick, magenta},
                ocs_0_ii={mark=x, thick, red},
                ocs_0_ms={mark=x, thick, black!50},
                classic_ii={mark = x, thick, green},
                classic_ms={mark = x, thick, blue},
                classic_ii_wf={mark = x, thick, orange},
                classic_ms_wf={mark = x, thick, black},
                ocs_1_ms={mark = x, thick, black!50},
                ocs_1_ii={mark = x, thick, red},
                ocs_1={mark = x, thick, magenta}
            },
            scatter, only marks,
            scatter src=explicit symbolic,
        ]
            table [x=delta_k, y=genetic_progress, meta = strat] {large_example.tsv};
            \legend{{\ OCS, combination},
                    {\ OCS, insemination},
                    {\ OCS, mating stations},
                    {\ across-family, insemination},
                    {\ across-family, mating stations},
                    {\ within-family, insemination},
                    {\ within-family, mating stations}}
    \end{axis}
\end{tikzpicture}

\section{Conclusion}
In this manuscript, we have derived a suitable version of OCS to use in honeybee breeding. Particularly the larger simulation example gives hope for its practicability in practice. However, many theoretical questions remain still open and we have indicated them in the manuscript (e.\,g. Remarks~\ref{rmk::alternatives} and~\ref{rmk::finrm}\,\ref{item::finrmiv}). And of course, a practical application of OCS with real animals and real breeders generally comes with its own set of problems \citep{kohl17}. A we see it, we have opened a playground for much further research and hope that many researchers will frequent and enjoy it.
\newpage

\bibliographystyle{plainnat} 
\bibliography{lit.bib}      

\begin{thebibliography}{75}
\providecommand{\natexlab}[1]{#1}
\providecommand{\url}[1]{\texttt{#1}}
\expandafter\ifx\csname urlstyle\endcsname\relax
  \providecommand{\doi}[1]{doi: #1}\else
  \providecommand{\doi}{doi: \begingroup \urlstyle{rm}\Url}\fi

\bibitem[Agricola et~al.(2017)Agricola, Pukelsheim, and Horst]{agricola17}
I.~Agricola, F.~Pukelsheim, and F.~Horst.
\newblock {{Niemeyer und das Proportionalverfahren}}.
\newblock \emph{Math. Semesterber.}, 64:\penalty0 129--146, 2017.
\newblock \doi{10.1007/s00591-017-0201-8}.

\bibitem[Albuja(2022)]{albuja22}
J.~Albuja.
\newblock {{electoral: allocating seats methods and party system scores}},
  2022.
\newblock URL \url{https://CRAN.R-project.org/package=electoral}.

\bibitem[Andonov et~al.(2019)Andonov, Costa, Uzunov, Bergomi, Lourenco, and
  Misztal]{andonov19}
S.~Andonov, C.~Costa, A.~Uzunov, P.~Bergomi, D.~Lourenco, and I.~Misztal.
\newblock {{Modeling honey yield, defensive and swarming behaviors of Italian
  honey bees (\textit{Apis mellifera ligustica}) using linear threshold
  approaches}}.
\newblock \emph{BMC Genet.}, 20:\penalty0 78, 2019.
\newblock \doi{10.1186/s12863-019-0776-2}.

\bibitem[Balinski and Young(1982)]{balinski82}
M.~L. Balinski and H.~P. Young.
\newblock \emph{{{Fair representation: meeting the ideal of one man, one
  vote}}}.
\newblock Yale University Press, New Haven, CT, United States, 1982.

\bibitem[Basso et~al.(2024)Basso, Kistler, and Phocas]{basso24}
B.~Basso, T.~Kistler, and F.~Phocas.
\newblock {{Genetic parameters, trends, and inbreeding in a honeybee breeding
  program for royal jelly production and behavioral traits}}.
\newblock \emph{Apidologie}, 55\penalty0 (11), 2024.
\newblock \doi{10.1007/s13592-023-01055-3}.

\bibitem[Bernstein et~al.(2018)Bernstein, Plate, Hoppe, and
  Bienefeld]{bernstein18}
R.~Bernstein, M.~Plate, A.~Hoppe, and K.~Bienefeld.
\newblock {{Computing inbreeding coefficients and the inverse numerator
  relationship matrix in large populations of honey bees}}.
\newblock \emph{J. Anim. Breed. Genet.}, 135:\penalty0 323--332, 2018.
\newblock \doi{10.1111/jbg.12347}.

\bibitem[Bernstein et~al.(2023)Bernstein, Du, Du, Strauss, Hoppe, and
  Bienefeld]{bernstein23}
R.~Bernstein, M.~Du, Z.~G. Du, A.~S. Strauss, A.~Hoppe, and K.~Bienefeld.
\newblock {{First large-scale genomic prediction in the honey bee}}.
\newblock \emph{Heredity}, 130:\penalty0 320--328, 2023.
\newblock \doi{10.1038/s41437-023-00606-9}.

\bibitem[Bienefeld and Pirchner(1990)]{bienefeld90}
K.~Bienefeld and F.~Pirchner.
\newblock {{Heritabilities for several colony traits in the honeybee
  (\textit{Apis mellifera carnica})}}.
\newblock \emph{Apidologie}, 21:\penalty0 175--183, 1990.
\newblock \doi{10.1051/apido:19900302}.

\bibitem[Bienefeld et~al.(1989)Bienefeld, Reinhardt, and Pirchner]{bienefeld89}
K.~Bienefeld, F.~Reinhardt, and F.~Pirchner.
\newblock {{Inbreeding effects of queen and workers on colony traits in the
  honey bee}}.
\newblock \emph{Apidologie}, 20:\penalty0 439--450, 1989.
\newblock \doi{10.1051/apido:19890509}.

\bibitem[Bienefeld et~al.(2007)Bienefeld, Ehrhardt, and Reinhardt]{bienefeld07}
K.~Bienefeld, K.~Ehrhardt, and F.~Reinhardt.
\newblock {{Genetic evaluation in the honey bee considering queen and worker
  effects -- A BLUP-Animal Model approach}}.
\newblock \emph{Apidologie}, 38:\penalty0 77--85, 2007.
\newblock \doi{10.1051/apido:2006050}.

\bibitem[Bigio et~al.(2014)Bigio, Toufailia, Hughes, and Ratnieks]{bigio14the}
G.~Bigio, H.~Al Toufailia, W.~O.~H. Hughes, and F.~L.~W. Ratnieks.
\newblock {{The effect of one generation of controlled mating on the expression
  of hygienic behaviour in honey bees}}.
\newblock \emph{J. Apicult. Res.}, 53:\penalty0 563--8, 2014.
\newblock \doi{10.3896/IBRA.1.53.5.07}.

\bibitem[Brascamp and Bijma(2014)]{brascamp14methods}
E.~W. Brascamp and P.~Bijma.
\newblock {{Methods to estimate breeding values in honey bees}}.
\newblock \emph{Genet. Sel. Evol.}, 46:\penalty0 53, 2014.
\newblock \doi{10.1186/s12711-014-0053-9}.

\bibitem[Brascamp and Bijma(2019{\natexlab{a}})]{brascamp19a}
E.~W. Brascamp and P.~Bijma.
\newblock {{A note on genetic parameters and accuracy of estimated breeding
  values in honey bees}}.
\newblock \emph{Genet. Sel. Evol.}, 51:\penalty0 71, 2019{\natexlab{a}}.
\newblock \doi{10.1186/s12711-019-0510-6}.

\bibitem[Brascamp and Bijma(2019{\natexlab{b}})]{brascamp19software}
E.~W. Brascamp and P.~Bijma.
\newblock {{Software to facilitate estimation of genetic parameters and
  breeding values for honey bees}}.
\newblock In: \emph{Proceedings of the 46th International Apicultural Congress,
  Apimondia}. Montréal, Canada, 2019{\natexlab{b}}.

\bibitem[Brascamp et~al.(2016)Brascamp, Willam, Boigenzahn, Bijma, and
  Veerkamp]{brascamp16}
E.~W. Brascamp, A.~Willam, C.~Boigenzahn, P.~Bijma, and R.~F. Veerkamp.
\newblock {{Heritabilities and genetic correlations for honey yield,
  gentleness, calmness and swarming behaviour in Austrian honey bees}}.
\newblock \emph{Apidologie}, 47:\penalty0 739--748, 2016.
\newblock \doi{10.1007/s13592-016-0427-9}.

\bibitem[Brascamp et~al.(2024)Brascamp, Uzunov, Bijma, and Du]{brascamp24}
E.~W. Brascamp, A.~Uzunov, P.~Bijma, and M.~Du.
\newblock {{Genetics of selection in honeybees}}.
\newblock Wageningen, The Netherlands, 2024.
\newblock \doi{10.18174/677857}.

\bibitem[Bruckner et~al.(2023)Bruckner, Wilson, Aurell, Rennich, vanEngelsdorp,
  Steinhauer, and Williams]{bruckner23}
S.~Bruckner, M.~Wilson, D.~Aurell, K.~Rennich, D.~vanEngelsdorp, N.~Steinhauer,
  and G.~R. Williams.
\newblock {{A national survey of managed honey bee colony losses in the USA:
  results from the Bee Informed Partnership for 2017–18, 2018–19, and
  2019–20}}.
\newblock \emph{J. Apicult. Res.}, 62:\penalty0 439--443, 2023.
\newblock \doi{10.1080/00218839.2022.2158586}.

\bibitem[Brückner(1978)]{bruckner78}
D.~Brückner.
\newblock {{Why are there inbreeding effects in haplo-diploid systems?}}
\newblock \emph{Evolution}, 32:\penalty0 456--458, 1978.
\newblock \doi{10.1080/00218839.2022.2158586}.

\bibitem[Büchler et~al.(2024)Büchler, Andonov, Bernstein, Bienefeld, Costa,
  Du, Gabel, Given, Hatjina, Harpur, Hoppe, Kezic, Kovačić, Kryger, Mondet,
  Spivak, Uzunov, Wegener, and Wilde]{buchler24}
R.~Büchler, S.~Andonov, R.~Bernstein, K.~Bienefeld, C.~Costa, M.~Du, M.~Gabel,
  K.~Given, F.~Hatjina, B.~A. Harpur, A.~Hoppe, N.~Kezic, M.~Kovačić,
  P.~Kryger, F.~Mondet, M.~Spivak, A.~Uzunov, J.~Wegener, and J.~Wilde.
\newblock {{Standard methods for rearing and selection of \textit{Apis
  mellifera} queens 2.0}}.
\newblock \emph{J. Apicult. Res.}, 2024.
\newblock \doi{10.1080/00218839.2023.2295180}.

\bibitem[{Deutscher Imkerbund}(2021)]{dib21}
{Deutscher Imkerbund}.
\newblock {Richtlinien für das Zuchtwesen des Deutschen Imkerbundes (ZRL)},
  2021.

\bibitem[Druml et~al.(2023)Druml, Putz, Rubinigg, Kärcher, Neubauer, and
  Boigenzahn]{druml23}
T.~Druml, A.~Putz, M.~Rubinigg, M.~H. Kärcher, K.~Neubauer, and C.~Boigenzahn.
\newblock {{Founder gene pool composition and genealogical structure in two
  populations of Austrian Carniolan honey bees (\textit{Apis mellifera
  carnica}) as derived from pedigree analysis}}.
\newblock \emph{Apidologie}, 54:\penalty0 24, 2023.
\newblock \doi{10.1007/s13592-023-00999-w}.

\bibitem[Du et~al.(2021{\natexlab{a}})Du, Bernstein, Hoppe, and
  Bienefeld]{du21a}
M.~Du, R.~Bernstein, A.~Hoppe, and K.~Bienefeld.
\newblock {{A theoretical derivation of response to selection with and without
  controlled mating in honeybees}}.
\newblock \emph{Genet. Sel. Evol.}, 53:\penalty0 17, 2021{\natexlab{a}}.
\newblock \doi{10.1186/s12711-021-00606-5}.

\bibitem[Du et~al.(2021{\natexlab{b}})Du, Bernstein, Hoppe, and
  Bienefeld]{du21shortterm}
M.~Du, R.~Bernstein, A.~Hoppe, and K.~Bienefeld.
\newblock {{Short-term effects of controlled mating and selection on the
  genetic variance of honeybee populations}}.
\newblock \emph{Heredity}, 162:\penalty0 733--747, 2021{\natexlab{b}}.
\newblock \doi{10.1038/s41437-021-00411-2}.

\bibitem[Du et~al.(2023)Du, Bernstein, and Hoppe]{du23}
M.~Du, R.~Bernstein, and A.~Hoppe.
\newblock {{The potential of instrumental insemination for sustainable honeybee
  breeding}}.
\newblock \emph{Genes}, 14:\penalty0 1799, 2023.
\newblock \doi{10.3390/genes14091799}.

\bibitem[Du et~al.(2024{\natexlab{a}})Du, Bernstein, and Hoppe]{du24comparison}
M.~Du, R.~Bernstein, and A.~Hoppe.
\newblock {{Comparison of pooled semen insemination and single colony
  insemination as sustainable honeybee breeding strategies}}.
\newblock \emph{R. Soc. Open Sci.}, 11:\penalty0 231556, 2024{\natexlab{a}}.
\newblock \doi{10.1098/rsos.231556}.

\bibitem[Du et~al.(2024{\natexlab{b}})Du, Bernstein, and Hoppe]{du24the}
M.~Du, R.~Bernstein, and A.~Hoppe.
\newblock {{The number of drones to inseminate a queen with has little
  potential for optimization of honeybee breeding programs}}.
\newblock \emph{Hereditas}, 161:\penalty0 28, 2024{\natexlab{b}}.
\newblock \doi{10.1186/s41065-024-00332-0}.

\bibitem[Fisher(1918)]{fisher18}
R.~A. Fisher.
\newblock {{The correlations between relatives on the supposition of Mendelian
  inheritance}}.
\newblock \emph{Trans. Roy. Soc. Edinb.}, 52\penalty0 (2):\penalty0 321--341,
  1918.
\newblock \doi{10.1017/S0080456800012163}.

\bibitem[FAO(2013)]{fao13}
{Food and Agriculture Organization of the United Nations (FAO)}.
\newblock \emph{{{Draft guidelines on in vivo conservation of animal genetic
  resources}}}.
\newblock Number~14 in {{FAO Animal Production and Health Guidelines}}. Rome,
  Italy, 2013.

\bibitem[Gallais(2003)]{gallais03}
A.~Gallais.
\newblock \emph{{{Quantitative genetics and breeding methods in autopolyploid
  plants}}}.
\newblock INRA Editions, Paris, France, 2003.

\bibitem[Gervan et~al.(2005)Gervan, Winston, Higo, and Hoover]{gervan05}
N.~L. Gervan, M.~L. Winston, H.~A. Higo, and S.~E.~R. Hoover.
\newblock {{The effects of honey bee (\textit{Apis mellifera}) queen mandibular
  pheromone on colony defensive behaviour}}.
\newblock \emph{J. Apicult. Res.}, 44:\penalty0 175--179, 2005.
\newblock \doi{10.1080/00218839.2005.11101175}.

\bibitem[Gray et~al.(2023)Gray, Adjlane, Arab, Ballis, Brusbardis, {Bugeja
  Douglas}, Cadahía, Charrière, Chlebo, Coffey, Cornelissen, {Amaro da
  Costa}, Danneels, Danihlík, Dobrescu, Evans, Fedoriak, Forsythe, Gregorc,
  {Ilieva Arakelyan}, Johannesen, Kauko, Kristiansen, Martikkala,
  Martín-Hernández, Mazur, Medina-Flores, Mutinelli, Omar, Patalano,
  Raudmets, {San Martin}, Soroker, Stahlmann-Brown, Stevanovic, Uzunov,
  Vejsnaes, Williams, and Brodschneider]{gray23}
A.~Gray, N.~Adjlane, A.~Arab, A.~Ballis, V.~Brusbardis, A.~{Bugeja Douglas},
  L.~Cadahía, J.-D. Charrière, R.~Chlebo, M.~F. Coffey, B.~Cornelissen,
  C.~{Amaro da Costa}, E.~Danneels, J.~Danihlík, C.~Dobrescu, G.~Evans,
  M.~Fedoriak, I.~Forsythe, A.~Gregorc, I.~{Ilieva Arakelyan}, J.~Johannesen,
  L.~Kauko, P.~Kristiansen, M.~Martikkala, R.~Martín-Hernández, E.~Mazur,
  C.~A. Medina-Flores, F.~Mutinelli, E.~M. Omar, S.~Patalano, A.~Raudmets,
  G.~{San Martin}, V.~Soroker, P.~Stahlmann-Brown, J.~Stevanovic, A.~Uzunov,
  F.~Vejsnaes, A.~Williams, and R.~Brodschneider.
\newblock {{Honey bee colony loss rates in 37 countries using the COLOSS survey
  for winter 2019–2020: the combined effects of operation size, migration and
  queen replacement}}.
\newblock \emph{J. Apicult. Res.}, 62:\penalty0 204--210, 2023.
\newblock \doi{10.1080/00218839.2022.2113329}.

\bibitem[Grimmett(2004)]{grimmett04}
G.~Grimmett.
\newblock {{Stochastic apportionment}}.
\newblock \emph{Am. Math. Monthly}, 111:\penalty0 299--307, 2004.
\newblock \doi{10.1080/00029890.2004.11920078}.

\bibitem[Guichard et~al.(2020)Guichard, Neuditschko, Soland, Fried, Grandjean,
  Gerster, Dainat, Bijma, and Brascamp]{guichard20}
M.~Guichard, M.~Neuditschko, G.~Soland, P.~Fried, M.~Grandjean, S.~Gerster,
  B.~Dainat, P.~Bijma, and E.~W. Brascamp.
\newblock {{Estimates of genetic parameters for production, behaviour, and
  health traits in two Swiss honey bee populations}}.
\newblock \emph{Apidologie}, 51:\penalty0 876--891, 2020.
\newblock \doi{10.1007/s13592-020-00768-z}.

\bibitem[Gutiérrez-Reinoso et~al.(2022)Gutiérrez-Reinoso, Aponte, and
  García-Herreros]{gutierrezreinoso22}
M.~A. Gutiérrez-Reinoso, P.~M. Aponte, and M.~García-Herreros.
\newblock {{A review of inbreeding depression in dairy cattle: current status,
  emerging control strategies, and future prospects}}.
\newblock \emph{J. Dairy Res.}, 89:\penalty0 3--12, 2022.
\newblock \doi{10.1017/S0022029922000188}.

\bibitem[Harbo(1999)]{harbo99the}
J.~R. Harbo.
\newblock {{The value of single-drone inseminations in selective breeding of
  honey bees}}.
\newblock In L.~Connor and R.~Hoopingarner, editors, \emph{{{ Apiculture for
  the 21st Century}}}, pages 1--5. Wicwas Press, Cheshire, United States, 1999.

\bibitem[Henderson(1975)]{henderson75best}
C.~R. Henderson.
\newblock {{Best linear unbiased estimation and prediction under a selection
  model}}.
\newblock \emph{Biometrics}, 31:\penalty0 423--447, 1975.
\newblock \doi{10.2307/2529430}.

\bibitem[Henryon et~al.(2015)Henryon, Ostersen, Ask, Sørensen, and
  Berg]{henryon15}
M.~Henryon, T.~Ostersen, B.~Ask, A.~C. Sørensen, and P.~Berg.
\newblock {{Most of the long-term genetic gain from optimum-contribution
  selection can be realised with restrictions imposed during optimisation}}.
\newblock \emph{Genet. Sel. Evol.}, 47:\penalty0 21, 2015.
\newblock \doi{10.1186/s12711-015-0107-7}.

\bibitem[Hoppe et~al.(2020)Hoppe, Du, Bernstein, Tiesler, Kärcher, and
  Bienefeld]{hoppe20}
A.~Hoppe, M.~Du, R.~Bernstein, F.-K. Tiesler, M.~Kärcher, and K.~Bienefeld.
\newblock {{Substantial genetic progress in the international \textit{Apis
  mellifera carnica} population since the implementation of genetic
  evaluation.}}
\newblock \emph{Insects}, 11:\penalty0 768, 2020.
\newblock \doi{10.3390/insects11110768}.

\bibitem[Jiménez-Mena et~al.(2016)Jiménez-Mena, Schad, Hanna, and
  Lacy]{jimenezmena16}
B.~Jiménez-Mena, K.~Schad, N.~Hanna, and R.~C. Lacy.
\newblock {{Pedigree analysis for the genetic management of group-living
  species}}.
\newblock \emph{Ecol. Evol.}, 6:\penalty0 3067--3078, 2016.
\newblock \doi{10.1002/ece3.1831}.

\bibitem[Kerr et~al.(1998)Kerr, Goddard, and Jarvis]{kerr98}
R.~J. Kerr, M.~E. Goddard, and S.~F. Jarvis.
\newblock {{Maximising genetic response in tree breeding with constraints on
  group coancestry}}.
\newblock \emph{Silv. Genet.}, 47:\penalty0 165--173, 1998.

\bibitem[Kerr et~al.(2012)Kerr, Li, Tier, Dutkowski, and McRae]{kerr12}
R.~J. Kerr, L.~Li, B.~Tier, G.~W. Dutkowski, and T.~A. McRae.
\newblock {{Use of the numerator relationship matrix in genetic analysis of
  autopolyploid species}}.
\newblock \emph{Theor. Appl. Genet.}, 124:\penalty0 1271--1282, 2012.
\newblock \doi{10.1007/s00122-012-1785-y}.

\bibitem[Kistler et~al.(2021)Kistler, Basso, and Phocas]{kistler21}
T.~Kistler, B.~Basso, and F.~Phocas.
\newblock {{A simulation study of a honeybee breeding scheme accounting for
  polyandry, direct and maternal effects on colony performance}}.
\newblock \emph{Genet. Sel. Evol.}, 53\penalty0 (71), 2021.
\newblock \doi{10.1186/s12711-021-00665-8}.

\bibitem[Kohl and Herold(2017)]{kohl17}
S.~Kohl and P.~Herold.
\newblock {{Problemanalyse zur Implementierung der Selektion nach optimierten
  Genbeiträgen in kleinen Populationen}}.
\newblock \emph{Züchtungskunde}, 89:\penalty0 345--358, 2017.

\bibitem[Lange(1997)]{lange97mathematical}
K.~Lange.
\newblock \emph{Mathematical and statistical methods for genetic analysis}.
\newblock Springer, New York, 1997.
\newblock \doi{10.1007/978-1-4757-2739-5}.

\bibitem[Lewis(1942)]{lewis42}
D.~Lewis.
\newblock {{The evolution of sex in flowering plants}}.
\newblock \emph{Biol. Rev.}, 17:\penalty0 46--67, 1942.
\newblock \doi{10.1111/j.1469-185X.1942.tb00431.x}.

\bibitem[Lijphart(2003)]{lijphart03}
A.~Lijphart.
\newblock Degrees of proportionality of proportional representation formulas.
\newblock In B.~Grofman and A.~Lijphart, editors, \emph{Electoral laws and
  their political consequences}, pages 170--179. Algora Publishing, New York,
  NY, United States, 2003.

\bibitem[Lush(1937)]{lush37}
J.~L. Lush.
\newblock \emph{Animal breeding plans}.
\newblock Iowa State Public Press, Ames, 1937.

\bibitem[Lynch and Walsh(1998)]{lynch98}
M.~Lynch and B.~Walsh.
\newblock \emph{{{Genetics and Analysis of Quantitative Traits}}}, volume~1.
\newblock Sinauer, Sunderland, MA, 1998.

\bibitem[Mackensen(1967)]{mackensen67}
O.~Mackensen.
\newblock {{Breeding and genetics of bees}}.
\newblock In {US Agricultural Research Service}, editor, \emph{{{Beekeeping in
  the United States}}}, volume 335 of \emph{Agriculture Handbook}, pages
  68--76. US Government Printing Office, Washington D.C., United States, 1967.

\bibitem[Malécot(1948)]{malecot48}
G.~Malécot.
\newblock \emph{{{Les mathématiques de l'hérédité}}}.
\newblock Masson, Paris, France, 1948.

\bibitem[Mattila and Seeley(2007)]{mattila07}
H.~R. Mattila and T.~D. Seeley.
\newblock {{Genetic diversity in honey bee colonies enhances productivity and
  fitness}}.
\newblock \emph{Science}, 317:\penalty0 362--364, 2007.
\newblock \doi{10.1126/science.1143046}.

\bibitem[Meuwissen(1997)]{meuwissen97}
T.~H.~E. Meuwissen.
\newblock {{Maximizing the response of selection with a predefined rate of
  inbreeding}}.
\newblock \emph{J. Anim. Sci.}, 75\penalty0 (4):\penalty0 934--940, 1997.
\newblock \doi{10.2527/1997.754934x}.

\bibitem[Meuwissen and Sonesson(1998)]{meuwissen98}
T.~H.~E. Meuwissen and A.~K. Sonesson.
\newblock {{Maximizing the response of selection with a predefined rate of
  inbreeding: overlapping generations}}.
\newblock \emph{J. Anim. Sci.}, 76:\penalty0 2575--2583, 1998.
\newblock \doi{10.2527/1998.76102575x}.

\bibitem[Neumann et~al.(1999)Neumann, Moritz, and van Praagh]{neumann99queen}
P.~Neumann, R.~F.~A. Moritz, and J.~van Praagh.
\newblock {{Queen mating frequency in different types of honey bee mating
  apiaries}}.
\newblock \emph{J. Apicult. Res.}, 38:\penalty0 11--18, 1999.
\newblock \doi{10.1080/00218839.1999.11100990}.

\bibitem[Panzani et~al.(2007)Panzani, Rota, Pacini, Vannozzi, and
  Camillo]{panzani07}
D.~Panzani, A.~Rota, M.~Pacini, I.~Vannozzi, and F.~Camillo.
\newblock {{One year old fillies can be successfully used as embryo donors}}.
\newblock \emph{Theriogenology}, 67:\penalty0 367--371, 2007.
\newblock \doi{10.1016/j.theriogenology.2006.08.004}.

\bibitem[Pernal et~al.(2012)Pernal, Sewalem, and Melathopoulos]{pernal12}
S.~F. Pernal, A.~Sewalem, and A.~P. Melathopoulos.
\newblock {{Breeding for hygienic behaviour in honeybees (Apis mellifera) using
  free-mated nucleus colonies}}.
\newblock \emph{Apidologie}, 43:\penalty0 403--16, 2012.
\newblock \doi{10.1007/s13592-011-0105-x}.

\bibitem[Plate et~al.(2019{\natexlab{a}})Plate, Bernstein, Hoppe, and
  Bienefeld]{plate19comparison}
M.~Plate, R.~Bernstein, A.~Hoppe, and K.~Bienefeld.
\newblock {{Comparison of infinitesimal and finite locus models for long-term
  breeding simulations with direct and maternal effects at the example of
  honeybees}}.
\newblock \emph{PLOS One}, 14:\penalty0 e0213270, 2019{\natexlab{a}}.
\newblock \doi{10.1371/journal.pone.0213270}.

\bibitem[Plate et~al.(2019{\natexlab{b}})Plate, Bernstein, Hoppe, and
  Bienefeld]{plate19the}
M.~Plate, R.~Bernstein, A.~Hoppe, and K.~Bienefeld.
\newblock {{The importance of controlled mating in honeybee breeding}}.
\newblock \emph{Genet. Sel. Evol.}, 51:\penalty0 74, 2019{\natexlab{b}}.
\newblock \doi{10.1186/s12711-019-0518-y}.

\bibitem[Plate et~al.(2020)Plate, Bernstein, Hoppe, and Bienefeld]{plate20}
M.~Plate, R.~Bernstein, A.~Hoppe, and K.~Bienefeld.
\newblock {{Long-term evaluation of breeding scheme alternatives for endangered
  honeybee subspecies}}.
\newblock \emph{Insects}, 11:\penalty0 404, 2020.
\newblock \doi{10.3390/insects11070404}.

\bibitem[{R Core Team}(2019)]{readtable}
{R Core Team}.
\newblock \emph{read.table: Data Input}.
\newblock R Foundation for Statistical Computing, Vienna, Austria, 2019.
\newblock URL
  \url{https://www.rdocumentation.org/packages/utils/versions/3.6.2/topics/read.table}.

\bibitem[Tang et~al.(2023)Tang, Ji, Shi, Su, Xue, Xu, Chen, Zhao, and
  Chen]{tang23}
J.~Tang, C.~Ji, W.~Shi, S.~Su, Y.~Xue, J.~Xu, X.~Chen, Y.~Zhao, and C.~Chen.
\newblock {{Survey results of honey bee colony losses in winter in China
  (2009–2021)}}.
\newblock \emph{Insects}, 14:\penalty0 554, 2023.
\newblock \doi{10.3390/insects14060554}.

\bibitem[Tarpy et~al.(2013)Tarpy, vanEngelsdorp, and Pettis]{tarpy13}
D.~R. Tarpy, D.~vanEngelsdorp, and J.~S. Pettis.
\newblock {{Genetic diversity affects colony survivorship in commercial honey
  bee colonies}}.
\newblock \emph{Naturwissenschaften}, 100:\penalty0 723--728, 2013.
\newblock \doi{10.1007/s00114-013-1065-y}.

\bibitem[Tiesler et~al.(2016)Tiesler, Bienefeld, and Büchler]{tiesler16}
F.~K. Tiesler, K.~Bienefeld, and R.~Büchler.
\newblock \emph{{{Selektion bei der Honigbiene}}}.
\newblock Buschhausen, Herten, Germany, 2016.

\bibitem[Uzunov et~al.(2022{\natexlab{a}})Uzunov, Andonov, Dahle, Kovačić,
  Prešern, Aleksovski, Jaman, Jovanovska, Pavlov, Puškadija, Wegener, and
  Büchler]{uzunov22evaluation}
A.~Uzunov, S.~Andonov, B.~Dahle, M.~Kovačić, J.~Prešern, G.~Aleksovski,
  F.~Jaman, M.~Jovanovska, B.~Pavlov, Z.~Puškadija, J.~Wegener, and
  R.~Büchler.
\newblock {{Evaluating the potential for mating control in honey bee breeding
  in three SE European countries (preliminary results)}}.
\newblock In: \emph{Proceedings of the 12th World Congress on Genetics Applied
  to Livestock Production}. Rotterdam, The Netherlands, 2022{\natexlab{a}}.

\bibitem[Uzunov et~al.(2022{\natexlab{b}})Uzunov, Brascamp, Du, and
  Büchler]{uzunov22initiation}
A.~Uzunov, E.~W. Brascamp, M.~Du, and R.~Büchler.
\newblock {{Initiation and implementation of honey bee breeding programs}}.
\newblock \emph{Bee World}, 99:\penalty0 50--55, 2022{\natexlab{b}}.
\newblock \doi{10.1080/0005772X.2022.2031545}.

\bibitem[Uzunov et~al.(2022{\natexlab{c}})Uzunov, Brascamp, Du, and
  Büchler]{uzunov22the}
A.~Uzunov, E.~W. Brascamp, M.~Du, and R.~Büchler.
\newblock {{The relevance of mating control for successful implementation of
  honey bee breeding programs}}.
\newblock \emph{Bee World}, 99:\penalty0 94--98, 2022{\natexlab{c}}.
\newblock \doi{10.1080/0005772X.2022.2088166}.

\bibitem[Uzunov et~al.(2023)Uzunov, Brascamp, Du, Bijma, and
  Büchler]{uzunov23}
A.~Uzunov, E.~W. Brascamp, M.~Du, P.~Bijma, and R.~Büchler.
\newblock {{Breeding values in honey bees}}.
\newblock \emph{Bee World}, 100:\penalty0 9--14, 2023.
\newblock \doi{10.1080/0005772X.2023.2166737}.

\bibitem[Wang et~al.(2017)Wang, Bennewitz, and Wellmann]{wang17}
Y.~Wang, J.~Bennewitz, and R.~Wellmann.
\newblock {{Novel optimum contribution selection methods accounting for
  conflicting objectives in breeding programs for livestock breeds with
  historical migration}}.
\newblock \emph{Genet. Sel. Evol.}, 49\penalty0 (45), 2017.
\newblock \doi{10.1186/s12711-017-0320-7}.

\bibitem[Wellmann(2019)]{wellmann19optimum}
R.~Wellmann.
\newblock {{Optimum contribution selection for animal breeding and
  conservation: the R package optiSel}}.
\newblock \emph{BMC Bioinformatics}, 20:\penalty0 25, 2019.
\newblock \doi{10.1186/s12859-018-2450-5}.

\bibitem[Wellmann(2021)]{wellmann21}
R.~Wellmann.
\newblock {{optiSolve: linear, quadratic, and rational optimization}}, 2021.
\newblock URL \url{https://CRAN.R-project.org/package=optiSolve}.

\bibitem[Wellmann and Bennewitz(2019)]{wellmann19key}
R.~Wellmann and J.~Bennewitz.
\newblock {{Key genetic parameters for population management}}.
\newblock \emph{Front. Genet.}, 10:\penalty0 667, 2019.
\newblock \doi{10.3389/fgene.2019.00667}.

\bibitem[Wellmann and Pfeiffer(2009)]{wellmann09}
R.~Wellmann and I.~Pfeiffer.
\newblock {{Pedigree analysis for conservation of genetic diversity and
  purging}}.
\newblock \emph{Genet. Res., Camb.}, 91:\penalty0 209--219, 2009.
\newblock \doi{10.1017/S0016672309000202}.

\bibitem[Woyke(1965)]{woyke65}
J.~Woyke.
\newblock {{Genetic proof of the origin of drones from fertilized eggs of the
  honeybee}}.
\newblock \emph{J. Apicult. Res.}, 4:\penalty0 7--11, 1965.
\newblock \doi{10.1080/00218839.1965.11100095}.

\bibitem[Wright(1922)]{wright22}
S.~Wright.
\newblock {{Coefficients of inbreeding and relationship}}.
\newblock \emph{Am. Nat.}, 56:\penalty0 330--338, 1922.
\newblock \doi{10.1086/279872}.

\bibitem[Zayed and Packer(2005)]{zayed05}
A.~Zayed and L.~Packer.
\newblock {{Complementary sex determination substantially increases extinction
  proneness of haplodiploid populations}}.
\newblock \emph{Proc. Natl. Acad. Sci. USA}, 102:\penalty0 10742--10746, 2005.
\newblock \doi{10.1073/pnas.0502271102}.

\end{thebibliography}

\end{document}